\definecolor{codegray}{rgb}{0.5,0.5,0.5}
\lstdefinestyle{mystyle}{
  keywordstyle=\color{blue},
  numberstyle=\tiny\color{codegray},
  basicstyle=\ttfamily\footnotesize,
  captionpos=b,                   
}
\DeclareMathOperator{\diag}{diag}
\DeclareMathOperator{\loc}{loc}
\DeclareMathOperator{\id}{id}
\DeclareMathOperator{\sgn}{sgn}
\DeclareMathOperator{\spn}{span}
\DeclareMathOperator{\supp}{supp}
\DeclareMathOperator{\tr}{Tr}
\DeclareMathOperator{\spec}{spec}
\DeclareMathOperator{\res}{res}
\newcommand{\HS}{\mathrm{HS}}
\newcommand{\one}{\mathbbm{1}}
\newcommand{\acts}[1]{\overset{#1}{\curvearrowright}}
\newcommand{\C}{\mathbb C}
\newcommand{\Z}{\mathbb Z}
\newcommand{\A}{\mathcal{A}}
\newcommand{\B}{\mathcal{B}}
\newcommand{\M}{\mathbb{M}}
\newcommand{\N}{\mathbb N}
\DeclareMathOperator{\diam}{diam}
\renewcommand{\P}{\mathbb{P}}
\newcommand{\E}{\mathbb{E}}
\newcommand{\ca}{C(\Omega, \mathcal{A})}
\newcommand{\caz}{C(\Omega, \mathscr{A}_{\mathbb{Z}})}
\DeclareMathOperator{\Aut}{Aut}
\newcommand{\<}{\langle}
\renewcommand{\>}{\rangle}
\newcommand{\ket}[1]{|#1\>}
\newcommand{\bra}[1]{\<#1|}
\newcommand{\calq}{\mathcal{Q}}
\newcommand{\cale}{\mathcal{E}}
\let\emptyset\varnothing
\newtheorem{thm}{Theorem}[section]
\newtheorem{thmx}{Theorem}
\newtheorem{prop}[thm]{Proposition}
\newtheorem{lem}[thm]{Lemma}
\newtheorem{cor}[thm]{Corollary}
\newtheorem*{thm*}{Theorem}
\theoremstyle{definition}
\newtheorem{define}[thm]{Definition}
\newtheorem{exmp}[thm]{Example}
\newtheorem{rmk}[thm]{Remark}
\newtheorem{ass}{Assumption}
\newtheorem{note}[thm]{Notation}
\newtheorem{claim}{Claim}[thm]
\numberwithin{equation}{section}
\numberwithin{thm}{section}
\begin{document}
\title{\huge\textbf{Finitely Correlated States Driven by Topological Dynamics}}
\author{Eric B. Roon\orcidlink{0009-0000-7566-5734}\thanks{\url{rooneric@msu.edu}}\quad\&\quad Jeffrey H. Schenker\orcidlink{0000-0002-1171-7977}\thanks{\url{schenke6 @ msu.edu}}\\Department of Mathematics\\ Michigan State University\\ East Lansing, MI., U.S.A.}
\date{\today}                     
\setcounter{Maxaffil}{0}
\renewcommand\Affilfont{\itshape\small}
\maketitle

\begin{abstract}
    Let $(\Omega, \P)$ be a standard probability space and let $\vartheta:\Omega \to \Omega$ be a measure preserving ergodic homeomorphism. Let $\mathcal{A}$ be a $C^*$-algebra with a unit and let $\mathcal{A}_\Z$ be the quasi-local algebra associated to the spin chain with one-site algebra $\mathcal{A}$. Equip $\A_\Z$ with the group action of translation by $k$-units, $\tau_k\in Aut(\A_{\Z})$ for $k\in \Z$. We study the problem of finding a disordered matrix product state decomposition for disordered states $\psi(\omega)$ on $\mathcal{A}_\Z$ with the covariance symmetry condition $\psi(\omega) \circ \tau_k = \psi(\vartheta^k \omega)$. This can be seen as an ergodic generalization of the results of Fannes, Nachtergaele, and Werner \cite{FannesNachtergaeleWerner}. To reify our structure theory, we present a disordered state $\nu_\omega$ obtained by sampling the AKLT model \cite{AKLT} in parameter space. We go on to show that $\nu_\omega$ has a nearest-neighbor parent Hamiltonian, its bulk spectral gap closes, but it has almost surely exponentially decaying correlations, and finally, that $\nu_\omega$ is time-reversal invariant with a Tasaki index of $-1$ almost surely. 
\end{abstract}

\tableofcontents

\section{Introduction}
Studying the effects of disorder on quantum many-body systems has been a topic of great interest in recent years yielding a vast array of results (e.g. see \cite{Abanin_et_al, BolsDeRoeck, DeRoeckHuveneerset_al, HamzaSimsStolz, NachtergaeleReschke}) and has taken great inspiration from the disordered single-body literature. For example, the study of zero-velocity Lieb-Robinson bounds is analogous to the study of (complete) Anderson localization \cite{AizenmanWarzel, HamzaSimsStolz} (see the books and surveys \cite{AizenmanWarzel, BugerolLacroix, CarmonaLacroix, Stollmann, Stolz}). 

Understanding the structure of disordered states in the quantum spin system context is a crucial task on the way to gaining deeper insight about disordered interacting systems and their thermodynamic properties. Therefore, obtaining access to a structure theory for disordered quantum spin states would be a large step forward in understanding disordered quantum systems on the whole.  In general, however, understanding disordered states has many challenges, even for the single-body case, and there is not a clear analogy from single-particle dynamics to inform what methods one might try in the interacting many-body case. 

In the deterministic many-body setting, great progress has been made in the last three decades by utilizing the structure of the underlying lattice to reduce the properties of a translation invariant bulk state to properties witnessed by a smaller ancilla. Specifically, the \emph{matrix product states} formalism, pioneered by \cite{FannesNachtergaeleWerner} in the early 1990's, allows one to model important properties of a deterministic, finitely-correlated, translation-invariant state on a quantum spin chain via linear dynamics on a finite-dimensional ancilla. Out of the MPS theory grew the so-called tensor network theory as a multi-dimensional generalization of this important class of states \cite{BridgemanChubb, Cirac_et_al}.

Recent steps toward a theory of \emph{disordered} MPS have been made in \cite{MovassaghSchenker_PRX, MovassaghSchenker}, and \cite{NelsonRoon}. These works focus on constructing MPS-like states that are  \emph{translation covariant}, i.e., translation invariant in distribution. The aim of this paper is to show, conversely, that disordered states with finite correlations and translation covariance admit a kind of disordered MPS factorization. Furthermore, we show that states with finite correlations are weak* dense in the set of all translation covariant states. Therefore, we establish a robust machinery to study TCVS that parallels the development of MPS for translation invariant states. 

As an example, we perform an in-depth investigation into a disordered deformation of the famed AKLT state \cite{AKLT}, which is well-known to belong to a parameterized family of MPS representations which admit a nearest-neighbor parent Hamiltonian where the MPS is the unique gapped ground state. Additionally, the AKLT model has a nontrivial Time-Reversal symmetry index. In our disordered deformation, we show that there is a disordered nearest-neighbor parent Hamiltonian for which our disordered state is the ground state and that there is a nontrivial time-reversal index almost surely. However, in contrast with the original AKLT state, the correlation structure of the disordered-AKLT type state is such that the bulk gap necessarily vanishes for \textit{any} finite range, translation covariant disordered parent Hamiltonian.  We believe that the low-lying excited states of this model represent a ``quasi-gap'' somewhat analogous to a mobility gap in disordered Fermion systems. We comment on this further in the open problems section~\ref{sec:conclusion} below. 

\subsection{Background and Relation to Other Works}
Since their inception in the early 1990's \cite{FannesNachtergaeleWerner}, MPS have served as an important tool to construct and approximate ground states of local Hamiltonians on  infinite quantum spin chains. These models provide a rich insight into the structure of quantum states in one spatial dimension, and furthermore such MPS serve as useful approximates of more general states \cite{FannesNachtergaeleWerner_abundance, VerstraeteCirac}. After an observation by Vidal \cite{Vidal} that finite-chain MPS are computationally efficient, their utility for simulating quantum many body systems in finite chains cemented the MPS formalism as a standard tool for the modern condensed matter theorist. As such, a large volume of work has been devoted to understanding MPS and their properties since then. See the non-exhaustive list \cite{Beaudry_et_al, Cirac_et_al, FannesNachtergaeleWerner_abundance, FannesNachtergaeleWerner_entropy, FannesNachtergaeleWerner, FannesNachtergaeleWerner_pure,   FernandezGonzalez_et_al, Heikkinen, Nachtergaele, Perez-Garcia_et_al, Souissi}. 

Recently, various authors have begun to explore \emph{disordered} MPS on both the finite and infinite chain, \cite{Chen_et_al, LancienPerezGarcia, MovassaghSchenker_PRX, MovassaghSchenker,   Souissi}, as well as random tensor networks (see e.g. \cite{Hayden_et_al, HuNechita}). Much of the random MPS literature thus far aims to understand probability distributions on MPS \textemdash \ that is, a translation-invariant MPS selected at random \textemdash \ and their relevant properties up to some large probability \cite{JauslinLemm, LancienPerezGarcia}. Another facet of the probabilistic interpretation of these works is their connection to Hidden Markov Models (see \cite{Fanizzaet_al, Souissi, SouissiAndolsi, SouissiAndolsi_II}).

Recent work by Movassagh and Schenker introduced the \emph{ergodic matrix product state} \cite{MovassaghSchenker_PRX, MovassaghSchenker} which is \emph{not} translation invariant. By contrast to questions of typicality for translation-invariant states, Movassagh and Schenker aimed to study a homogeneous system with extensive disorder, resulting in a thermodynamic limit which does not have translation symmetry. A generalization of \cite{MovassaghSchenker} to the case when the on-site algebra is a kind of infinite-dimensional operator algebra was completed by Nelson and Roon in \cite{NelsonRoon}. One of the key properties that both sets of work centered on is the notion of \emph{translation covariance} with respect to an ergodic map. More specifically, if $(\Omega, \P)$ denotes a probability space equipped with a measure-preserving ergodic map $\vartheta:\Omega \to \Omega$, and $\omega \mapsto \psi_\omega^{[-N, N]}$ is a disordered state on the interval $[-N, N] \cap \Z$ which is generated by an ergodic MPS representation, then in the thermodynamic limit $N\to \infty$ we are interested in studying states which satisfy translation covariance:
    \begin{equation}\label{eqn:translation_covariance}
        \psi^{\Z}_{\omega} \circ \tau_k = \psi^{\Z}_{\vartheta^k \omega}\,.
    \end{equation} 
In words, equation~(\ref{eqn:translation_covariance}) says that translation of an observable leads to a corresponding shift in the disorder with \emph{statistically} identical expectation values, but not necessarily \emph{trajectory-wise} identical values. Indeed the construction of the relevant thermodynamic limiting state in both \cite{MovassaghSchenker, NelsonRoon} shows~(\ref{eqn:translation_covariance}) is an emergent property from the local, ergodic MPS data. This sets ergodic MPS apart from the states studied in \cite{JauslinLemm, LancienPerezGarcia}, since there, disorder is introduced when picking the MPS representation, but the same representation is repeated throughout the lattice. Translation co-variance leads to a distinct type of global symmetry for disordered states. A quick note on the terminology: in the works \cite{EkbladMorenoNadalesRoonSchenker, RoonSchenker_BGA} the authors use the term `ergodic,' to align more closely with the Schr\"odnger operators literature, however in this work our results in Theorem~\ref{thmx:Fundamental} and~\ref{thmx:w*dense} rely only on covariance with respect to a homeomorphism, so the latter term is more accurate.

Both \cite{MovassaghSchenker} and \cite{NelsonRoon} took inspiration from \cite{FannesNachtergaeleWerner}, but each is only a one-way construction of a translation covariant state from local data. By comparison, the authors in \cite{FannesNachtergaeleWerner} give a complete classification of translation invariant states in terms of the linear dynamics on an auxiliary $C^*$-algebra. In the case this auxiliary is finite-dimensional, one recovers the notion of a $C^*$-Finitely Correlated State (abbreviated $C^*$FCS). In general, $C^*$-FCS are the mixed state versions of MPS in the thermodynamic limit (cf. Proposition 3.1 of  \cite{FannesNachtergaeleWerner}).  

Part of the appeal of MPS in the first place is that they have a relatively simple description and properties of the thermodynamic limit can be calculated using the matrices from which the state is comprised. For example, topological indices have been used to study the symmetry properties of MPS \cite{Ogata_z2,Ogata_cmp, Ogata_cdm,  Ogata_icm, Pollman_et_al_z2, Pollmann_et_al_entanglement, Tasaki_prs, Tasaki_jmp}. Of particular interest to us is the \emph{Tasaki index} originating in \cite{Tasaki_prs} which is a $\Z_2$-valued index, $\sigma_{TR}$, that in the deterministic MPS case separates the AKLT state from the trivial phase \cite[Corollary 2]{Tasaki_prs}. This index is defined for time-reversal invariant states, where the symmetry is implemented by a local spin flip $S_j^{(\alpha)} \longmapsto -S^{(\alpha)}_j$ for all $j\in \Z$ and $\alpha =1,2,3$. Tasaki calculated his index by taking the large-volume limit of expectations of a nonlocal unitary operator $T_L$ called the \emph{Affleck-Lieb Twist} operator defined below in \eqref{eqn:Affleck-Lieb_Twist}.

It is currently an open question as to whether or not Tasaki's index agrees with the index defined by Ogata in \cite{Ogata_z2}, while it is known that Ogata's index is mathematically equivalent to the index defined in Pollmann et al. \cite{Pollman_et_al_z2}. The upshot of the aforementioned works is that they each show that $\sigma_{TR} = -1$ for the AKLT ground state, distinguishing it from the ``trivial phase'' of states with $\sigma_{TR}=1$. Another way symmetries of MPS have been studied is with the appearance of nontrivial string order parameters \cite{PG_StringOrder}, which we do not investigate in this work.

\subsection{Main results}
We restrict our attention to models of disorder which can be described by a suitable probability space. Our working assumption for the rest of the paper is as follows. 

\begin{ass}\label{Assumption:cptH}
The underlying probability space $(\Omega, \mathcal{F}, \P)$ is a compact Hausdorff space, $\mathcal{F}$ is the Borel $\sigma$-algebra generated by open sets, and $\P$ is a Radon measure. Furthermore, $\Omega$ is equipped with a measure-preserving homeomorphism $\vartheta:\Omega \to \Omega$. 
\end{ass}
\begin{rmk}
For example, one could take $\Omega=K^{\mathbb{Z}}$ with $K$ a compact set,  $\vartheta$ the bilateral shift, and $\mathbb{P}=\bigotimes_{j\in \mathbb{Z}} \mu$ with $\mu$ fixed Radon measure on $K$.  This would model independent, identically-distributed disorder at each site $j\in \mathbb{Z}$.
\end{rmk}
That is, we aim to investigate the structure of translation covariant states which are `driven' by a topological dynamical system. This is actually quite a general assumption as we discuss in Section~\ref{sec:prob} in the preliminaries. More specifically, the Jewett-Krieger theorem (see~\ref{thm:JK_ergodic} below) shows that (up to a.e. equal versions) every ergodic dynamical system on a \emph{standard} probability space can be modeled by a probability space satisfying Assumption~\ref{Assumption:cptH}.  See Lemma~\ref{lem:IID_cts} below.

To continue our discussion, let us briefly recall the construction of a quantum spin chain. Let $\mathcal{A}$ be a $C^*$-algebra with a unit $\one$, and consider the integer lattice $\Z$. To each $z\in \Z$ we associate an isomorphic copy $\mathcal{A}_{\{z\}}$ of $\mathcal{A}$.  For each finite subset $\Lambda \subset \Z$, form the local algebra 
\[
    \mathcal{A}_{\Lambda} = \bigotimes_{z\in \Lambda} \mathcal{A}_{\{z\}}\,.
\] For most models of interest, the one-site algebra is a matrix algebra $\A_{\{x\}}\cong \M_n$, for which the usual tensor product is sufficient. However, our results in section~\ref{sec:Structure_Theorem} hold in more generality, so we note that in the case when $\A_{\{x\}}$ is a general $C^*$-algebra, the tensor product is taken to be the minimal tensor product (see \cite{BrownOzawa}).  By taking $\Lambda \uparrow \Z$ and the norm closure, one forms the quasi-local algebra $\mathcal{A}_\Z$ for which $\mathcal{A}_{\Lambda} \subset \mathcal{A}_\Z$ for all $\Lambda \Subset \Z$ (see the Section~\ref{subsubsec:quasilocal} in the preliminaries for more details). The weakly* compact set of states on $\mathcal{A}_\Z$ is written $\mathcal{S}(\mathcal{A}_{\Z})$. 

In our analysis, we consider the bi-partition of the system into left $\mathcal{L}$ and right-hand $\mathcal{R}$ sides, corresponding to observables supported on $\{\dots, -2, -1\}$ and $ \{0, 1, 2,\dots\}$ respectively. Note that $\mathcal{A}_{\Z} = \mathcal{L} \otimes \mathcal{R}$. Given a state $\psi \in \mathcal{S}(\mathcal{A}_{\Z})$ and an element $b\in \mathcal{L}$ one obtains a linear functional on $\mathcal{R}$ by taking $$ a \mapsto \psi(b\otimes a) \ .$$

Consider a function
    \begin{equation}
    \begin{split}
        \psi:&\Omega \to \mathcal{S}(\mathcal{A}_{\Z})\,,\\
        &\omega \longmapsto \psi_\omega\,,
    \end{split}
    \end{equation} which is weakly* continuous in $\omega$. We regard $\psi$ as a disordered state supported on the bulk $\mathcal{A}_\Z$.

\begin{thmx}[Theorem~\ref{thm:Fundamental} Informal]\label{thmx:Fundamental}
    Let $\psi_\omega$ be a weakly* continuous function of states. Suppose further that $\psi_\omega$ is translation covariant, in the sense that 
    \[
        \psi_\omega\circ \tau = \psi_{\vartheta \omega}\,,
    \] where $\tau$ is the shift on $\A_{\Z}$. Then the following are equivalent: 
    \begin{enumerate}[label = \alph*.)]
        \item The linear space $\{ \psi(b\otimes ( \, \cdot \, ) ) \ : \ b\in \mathcal{L}\}\subset \mathcal{R}^*$  is finite-dimensional almost surely. 
        \item There exists an auxiliary bundle $\mathcal{B} = \bigsqcup_{\omega\in \Omega} \mathcal{B}_\omega$ where each $\B_\omega$ is a finite dimensional vector space with an associated family of linear maps $\{E_{a,\omega}:B_\omega \to B_{\vartheta^{-1}\omega} \colon a\in \mathcal{A}\}$, a linear functional $\varrho_\omega \in \B_\omega^*$, and a distinguished element $e(\omega) \in \B_\omega$ such that the following factorization holds:
        \begin{equation}\label{eqn:fundamental_factor}
            \psi_\omega(a_m \otimes a_{m+1}\otimes \cdots \otimes a_{n}) = \varrho_{\vartheta^{m-1}\omega}\circ E_{a_m, \vartheta^m \omega} \circ \cdots E_{a_n, \vartheta^n\omega}(e(\vartheta^n\omega))\,.
        \end{equation}
    \end{enumerate}
\end{thmx}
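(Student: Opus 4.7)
The plan is to imitate the Fannes--Nachtergaele--Werner construction fiberwise in $\omega$, with translation covariance $\psi_\omega\circ\tau_k=\psi_{\vartheta^k\omega}$ serving as the glue between fibers.

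Direction (b)$\Rightarrow$(a) is immediate from the factorization. For any $\ell=a_m\otimes\cdots\otimes a_{-1}\in\mathcal{L}$, the functional $r\mapsto\psi_\omega(\ell\otimes r)$ decomposes as $u_\ell\bigl(\Phi_\omega(r)\bigr)$, where $u_\ell:=\varrho_{\vartheta^{m-1}\omega}\circ E_{a_m,\vartheta^m\omega}\circ\cdots\circ E_{a_{-1},\vartheta^{-1}\omega}\in\mathcal{B}_{\vartheta^{-1}\omega}^*$ and $\Phi_\omega(r)\in\mathcal{B}_{\vartheta^{-1}\omega}$ depends only on $r$. Extending by linearity and weak* continuity, $\psi_\omega(\mathcal{L}\otimes(\cdot))$ lies inside the image of the dual map $\Phi_\omega^*$, a space of dimension at most $\dim\mathcal{B}_{\vartheta^{-1}\omega}<\infty$.

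For the converse (a)$\Rightarrow$(b), I would define $W_\omega\colon\mathcal{R}\to\mathcal{L}^*$ by $W_\omega(r)(\ell):=\psi_\omega(\ell\otimes r)$ and set $\mathcal{B}_\omega:=W_\omega(\mathcal{R})$. By the duality of the bilinear form $\psi_\omega|_{\mathcal{L}\times\mathcal{R}}$, $\dim\mathcal{B}_\omega=\dim\psi_\omega(\mathcal{L}\otimes(\cdot))<\infty$ almost surely. Take $e(\omega):=W_\omega(\one_\mathcal{R})$ and $\varrho_\omega(\gamma):=\gamma(\one_\mathcal{L})$, and for $a\in\mathcal{A}$ define the transfer operator $E_{a,\omega}\colon\mathcal{B}_\omega\to\mathcal{B}_{\vartheta^{-1}\omega}$ by prepending $a$ to the right tail at the appropriate site and invoking covariance to shift the base point to $\vartheta^{-1}\omega$. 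Well-definedness on $\mathcal{R}/\ker W_\omega$ follows by specializing the identity $W_\omega(r)=W_\omega(r')$ to left tails of the form $\tau_{-1}\tilde\ell\otimes a\in\mathcal{L}$. The factorization is then verified by induction on $n-m$: each $E_{a_j,\vartheta^j\omega}$ absorbs $a_j$ while shifting the base from $\vartheta^j\omega$ to $\vartheta^{j-1}\omega$; pairing with $\varrho_{\vartheta^{m-1}\omega}$ yields $\psi_{\vartheta^{m-1}\omega}$ evaluated on a shifted product, which returns to $\psi_\omega(a_m\otimes\cdots\otimes a_n)$ by one final use of covariance.

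The main obstacle is the analytic and measure-theoretic content of assembling the pointwise-defined objects $\mathcal{B}_\omega,\,E_{a,\omega},\,\varrho_\omega,\,e(\omega)$ into a bundle with sufficient regularity. The function $\omega\mapsto\dim\mathcal{B}_\omega$ is $\vartheta$-invariant and hence almost surely constant by ergodicity, once its measurability is verified (which should follow from the weak* continuity of $\omega\mapsto\psi_\omega$). Propagating this continuity through $W_\omega$ to the structural maps on a full-measure invariant subset, and choosing compatible trivializations so that $(\omega,a)\mapsto E_{a,\omega}$ is jointly measurable, is the delicate step I expect to occupy most of the technical work.
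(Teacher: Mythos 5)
Your algebraic skeleton is essentially the paper's: the fiberwise quotient $\mathcal{B}_\omega \cong \mathcal{A}^+/\ker W_\omega$ (the paper's $\mathcal{B}^+_{x,\omega}$), the unit section $e(\omega)=[\one,\omega]$, the functional $\varrho_\omega = \psi_\omega(\one\otimes\cdot)$, transfer operators defined by prepending $a$ and shifting the base point via covariance, well-definedness checked exactly as in the paper's Lemma~\ref{lem:pre-fundamental}, and the converse by factoring $\psi_\omega(\ell\otimes\cdot)$ through the finite-dimensional fiber (where your appeal to "weak* continuity" is loose but repairable, since finite-dimensional subspaces of $(\mathcal{A}^+)^*$ are weak*-closed; the paper instead proves boundedness of the induced maps $Q^\pm_\omega$ using the minimality theorem and the norm identity of Lemma~\ref{lem:n-norm}).

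The genuine gap is precisely the part you defer, which in the paper is where almost all of the proof lives, and your proposed route to it would not work. First, $\omega\mapsto\dim\mathcal{B}_\omega$ is \emph{not} $\vartheta$-invariant: covariance gives an isometry $\mathcal{B}^+_{x,\vartheta\omega}\cong\mathcal{B}^+_{x+1,\omega}$ (a shift of the cut site, Lemma~\ref{lem:bundlemap-covariance}), not an isomorphism of fibers over $\omega$ and $\vartheta\omega$ at the same cut; the paper explicitly notes (Remark~\ref{rmk:dimension_may_vary}) that fiber dimensions may vary with $\omega$, and all one has is lower semicontinuity of the dimension. So ergodicity does not give an a.s.\ constant dimension, and there is no reason to expect local trivializations at all -- the relevant bundles are generically non-locally-trivial Banach bundles. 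The paper's resolution is different in kind: it topologizes $\bigsqcup_\omega\mathcal{B}_\omega$ via the Fell--Doran theorem (Theorem~\ref{thm:Banach_Bundle}), taking the sections $\omega\mapsto[A(\omega),\omega]$ with $A\in C(\Omega,\mathcal{A}^+)$, and the real work is to show $\omega\mapsto\|[A(\omega),\omega]\|$ is continuous. That is done by identifying the quotient norm with the supremum of the seminorms $\sup_{c\in(\mathcal{A}^-)_1}|\psi_\omega(c\otimes\cdot)|$ -- an identification (Proposition~\ref{prop:n-norm_equals_quot}) which itself \emph{uses} finite-dimensionality of the fibers through the nondegenerate pairing -- combined with an equicontinuity argument (Lemma~\ref{lem:n-norm}); continuity of the transfer maps as $\vartheta$-covariant bundle homomorphisms then follows via Lemma~\ref{lem:bundle_convergence}. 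Without this step your construction produces only a set-indexed family of fibers and maps, which verifies the informal factorization but not the continuous transfer apparatus that Theorem~\ref{thm:Fundamental} actually asserts, and the measurability/trivialization strategy you sketch cannot be completed as stated.
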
 This result can be viewed as a  disordered generalization of Proposition 2.1 in \cite{FannesNachtergaeleWerner} and is the starting point for our analysis. In fact, under Assumption~\ref{Assumption:cptH}, we show there is a topology on $\mathcal{B}$ making the aforementioned \textbf{transfer apparatus} $(E_{a,\omega}, \varrho_\omega, e(\omega))$ continuous in $\omega$ (see Theorem~\ref{thm:Fundamental} below). If a state $\psi_\omega$ satisfies the conditions in Theorem~\ref{thmx:Fundamental}, we say it has \textbf{small correlations}.

To prove Theorem~\ref{thmx:Fundamental}, we utilize the theory of Banach bundles which was pioneered by J. M. G. Fell in the 1980's. These are the Banach-space analog of vector bundles \cite{Husemoller, Gierz}, and allow us to define a class of dynamical maps which depend continuously on a disorder parameter. Banach bundles have slightly more flexibility than vector bundles since they do not require local triviality as part of the definition \cite{DiximierDouady,  DupreGillette, FellDoran}. A key result is the Fell-Doran theorem (see Theorem~\ref{thm:Banach_Bundle} below), which allows one to construct a Banach bundle from a disjoint set of Banach spaces once a vector space of sections is specified \cite{FellDoran}. In fact the transfer maps defined in part b.) of Theorem~\ref{thmx:Fundamental} are \emph{homomorphisms} of Banach bundles in the sense of \cite{Gutman_1,GutmanKoptev}. The general structure of Banach bundles is still an active topic of investigation (see e.g. \cite{Chirvasitu}).

Our construction of the Banach bundle $\B$ follows that of \cite{FannesNachtergaeleWerner}, where they start with a trivial transfer apparatus factorization and obtain results if the fibers are finite dimensional.  However for us, the finite-dimensionality assumed in part a.) of Theorem~\ref{thmx:Fundamental} seems to be necessary for the Banach bundle structure to hold with the specific equivalence relation we use to define $\B_\omega$. In general if the fibers $\B_\omega$ are infinite dimensional, the best we can say is that they form a \emph{normed quotient bundle} in the sense of \cite{ResendeSantos}.

Another key result of \cite{FannesNachtergaeleWerner} is that MPS are weakly*-dense in the convex set of translation-invariant states. Our next main result is that the same is true for translation covariant states. 

\begin{thmx}[Theorem~\ref{thm:w*_dense} Informal]\label{thmx:w*dense}
     Let $\mathfrak{C}$ denote the set of translation covariant states on $\A_\Z$ and denote the subset of those translation covariant states with small correlations by $\mathfrak{K}\subset \mathfrak{C}$. One has that $\mathfrak{K}$ is weakly-* dense in $\mathfrak{C}$ uniformly in $\omega$. 
\end{thmx}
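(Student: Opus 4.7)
My plan is to adapt the classical Fannes--Nachtergaele--Werner density construction (Proposition 2.5 of \cite{FannesNachtergaeleWerner}) to the covariant setting, via a block-product-then-average scheme. Fix $n \geq 1$. For each $\omega$, define a block-product state $\Psi_\omega^{(n)}$ on $\A_\Z$ by requiring that its restriction to each block $\A_{[kn, (k+1)n-1]}$ agree with the corresponding restriction of $\psi_\omega$, and that distinct blocks factorize tensor-wise. Using translation covariance of $\psi$, a direct reindexing computation yields the partial covariance $\Psi_\omega^{(n)} \circ \tau_n = \Psi_{\vartheta^n \omega}^{(n)}$. Full covariance is then restored by averaging:
$$\psi_\omega^{(n)} := \frac{1}{n} \sum_{j=0}^{n-1} \Psi_{\vartheta^{-j}\omega}^{(n)} \circ \tau_j,$$
and one verifies $\psi_\omega^{(n)} \circ \tau_1 = \psi_{\vartheta\omega}^{(n)}$ by another index shift using $\Psi_{\vartheta^{1-n}\omega}^{(n)}\circ \tau_n = \Psi_{\vartheta\omega}^{(n)}$.

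Each $\Psi_\omega^{(n)}$ is tensor-factorized over blocks, hence admits a trivial small-correlation structure with finite-dimensional ancilla dual to $\A_{[0,n-1]}$; the convex combination $\psi_\omega^{(n)}$ then acquires small correlations through a direct-sum ancilla, and continuity of the transfer apparatus in $\omega$ follows from the norm continuity of $\omega \mapsto \psi_\omega|_{\A_{[0,n-1]}}$ (which in turn follows from weak-* continuity on the fixed finite-site algebra). This places $\psi_\omega^{(n)} \in \mathfrak{K}$.

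For the approximation, let $a \in \A_\Lambda$ be local with $|\Lambda| = m$. After translating we may assume $\Lambda = [0,m-1]$; the shifted observable $\tau_j a$ is supported in $[j,j+m-1]$, which fits inside a single block $[0,n-1]$ for all $j \in \{0,\ldots,n-m\}$, i.e., for all but at most $m-1$ indices. For those good $j$, covariance of $\psi$ gives $\Psi_{\vartheta^{-j}\omega}^{(n)}(\tau_j a) = \psi_{\vartheta^{-j}\omega}(\tau_j a) = \psi_\omega(a)$, so that
$$\sup_{\omega \in \Omega} \bigl|\psi_\omega^{(n)}(a) - \psi_\omega(a)\bigr| \;\leq\; \frac{2(m-1)\|a\|}{n} \;\xrightarrow{n\to\infty}\; 0.$$
An $\varepsilon/3$ argument combined with norm-density of local observables in $\A_\Z$ (and the uniform bound $\|\psi_\omega^{(n)}\|,\|\psi_\omega\|\leq 1$) upgrades this to the claimed weak-* convergence uniform in $\omega$.

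The main obstacle I anticipate is not the convergence estimate, which is essentially combinatorial, but ensuring that the convex-combination construction legitimately produces a member of $\mathfrak{K}$ in the strict sense of Theorem~\ref{thmx:Fundamental}(b), i.e., with a Banach bundle structure on $\bigsqcup_\omega \B_\omega^{(n)}$ that is continuous in $\omega$. This requires explicitly identifying the direct-sum ancilla, building the transfer maps $E_{a,\omega}^{(n)}$ as bundle homomorphisms in the sense of \cite{Gutman_1, GutmanKoptev}, and verifying that the Fell--Doran topology generated by the constant section space of $\A_{[0,n-1]}^*$-valued sections matches the continuity of $\omega \mapsto \psi_\omega^{(n)}$. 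Once this bookkeeping is in place, the density claim follows from the uniform estimate above.
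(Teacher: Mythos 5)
Your route is the same one the paper takes for Theorem~\ref{thm:w*_dense}: restrict $\psi_\omega$ to a length-$p$ block, form the $\vartheta^p$-covariant product state over blocks (Proposition~\ref{prop:existence_TCPS} together with Example~\ref{exmp:covariant_product_state}), average over the $p$ translates to restore $\vartheta$-covariance, and finish with the uniform bound $|\psi_\omega(a)-\bar\phi_\omega(a)|\le 2\|a\|(j+m)/p$; your covariance check and approximation estimate match the paper's essentially verbatim.

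The gap is precisely the step you set aside as ``bookkeeping'': showing that the averaged state lies in $\mathfrak{K}$. In the paper this is Lemma~\ref{lem:aromatic_average}, and it is the heart of the proof, not a formality. Your direct-sum reasoning does not go through as stated, for two reasons. First, the summands $\Psi^{(n)}_{\vartheta^{-j}\omega}\circ\tau_j$ are not themselves $\vartheta$-covariant states (they are only covariant for $\tau_n$ together with $\vartheta^n$), so they carry no individual $\vartheta$-covariant transfer apparatus and the convexity argument of Lemma~\ref{lem:convex} (direct sums of apparatuses) cannot be invoked. Second, your proposed ancilla ``dual to $\A_{[0,n-1]}$'' is finite-dimensional only when the one-site algebra is finite-dimensional, whereas the theorem is stated for an arbitrary unital $C^*$-algebra; indeed, across a cut that falls inside a block the correlation space of a single summand need not be finite-dimensional at all. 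The paper circumvents both issues with the aromatic bundle: its fiber at $\omega$ is not a full direct sum of ancillas but the span of \emph{synchronized} tuples $\bigoplus_{n=0}^{p-1}[\one^{\otimes n}\otimes\tau_n(\xi),\vartheta^{-n}\omega]$ indexed by a single $\xi$, which has dimension at most $p$ because the block product state has one-dimensional correlation fibers (Example~\ref{exmp:covariant_product_state}); the single-site transfer operators act by shifting the block offset, and proving they are well-defined, bounded, $\vartheta^{-1}$-covariant homomorphisms (Claim~\ref{claim:Fa_well_defined}) requires genuine estimates -- positivity of $\phi_\omega(c\otimes\cdot)$, Cartesian decompositions, and Lemma~\ref{lem:n-norm} -- yielding $\|F_a\|_\infty\le 8\|a\|$, after which Fell--Doran and Theorem~\ref{thm:Fundamental}(b) give membership in $\mathfrak{K}$. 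Until you supply such an apparatus, the claim $\psi^{(n)}_\omega\in\mathfrak{K}$ is unproven. A minor further point: weak-* continuity of $\psi_\omega$ does not imply norm continuity of $\omega\mapsto\psi_\omega|_{\A_{[0,n-1]}}$ when $\mathcal{A}$ is infinite-dimensional; fortunately only weak-* continuity is needed for the covariant product state and its small-correlations property.
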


In fact, more structure is present as there exists an ordering on the $\B_\omega$'s for which all maps of the form $E_{a^*a, \omega}$ are completely positive. This makes each $\B_\omega$ into an \emph{operator system}, and in fact $\B_\omega$ is an \emph{operator system quotient} in the sense of \cite{KavrukPaulsenTodorov}. Interestingly, one can prove a variant of Theorem~\ref{thmx:Fundamental} for which all the fibers are $C^*$-algebras by considering the GNS representations of $\mathcal{A}_\Z$ with respect to $\psi_\omega$ indexed by the disorder parameter. We show a similar decomposition to equation~(\ref{eqn:fundamental_factor})  holds on this $C^*$-algebraic bundle $\{\pi_\omega(\mathcal A_{\Z})\}_{\omega \in \Omega}$ (see Proposition~\ref{prop:C*Bundle} below). At this stage, it is not clear which presentation of the underlying bundle is more advantageous.

To reify our theoretical results, we now present a disordered model that admits a parent Hamiltonian  which is almost surely gapless in the thermodynamic limit. Before we describe our model, we note that due to the work of \cite{FernandezGonzalez_et_al} in the deterministic case, one can always construct a frustration-free parent Hamiltonian for an injective MPS whose gap closes in the bulk. However unlike \cite{FernandezGonzalez_et_al}, our disordered parent Hamiltonian arises by following the typical construction as in \cite{FannesNachtergaeleWerner, Perez-Garcia_et_al}. By contrast the authors of \cite{FernandezGonzalez_et_al} construct the so-called uncle Hamiltonian as a perturbation of the usual parent Hamiltonian. 

For our construction, we make use of the following one-parameter family of isometries $V_{\alpha} : \C^2 \to \C^3 \otimes \C^2$ 
    \[
        V_{\alpha}^* = \begin{bmatrix}
                           0 & \cos(\alpha) & -\sin(\alpha) & 0 & 0 & 0 \\
                           0 & 0 & 0 & \sin(\alpha) & -\cos(\alpha) & 0
                        \end{bmatrix}; \quad \alpha\in [0, \pi)\,,
    \] which is introduced in \cite{FannesNachtergaeleWerner} in order to  generalize the AKLT model \cite{AKLT}. We construct a state as follows. Set $E_{a,\omega} = V_\omega^*(a \otimes (\,\cdot \,))V_{\omega}:\M_2 \to \M_2$ for $a\in \M_3$ the $3\times 3$ matrices. Then, for any interval $[-m, n]$ define 
        \[
            \nu^{[m,n]}_{\omega} (a_{m}\otimes \cdots \otimes a_n) = \frac12 \tr \{ E_{a_m, \omega_m} \circ \cdots \circ E_{a_n, \omega_n}(\one_{2\times 2})\}\,,
        \] where $\omega = (\omega_j)_{j\in \Z}\in [0, \pi)^{\Z}$. By Takeda's theorem (see Theorem~\ref{thm:Takeda} below), these finite volume data define a global state $\nu^\Z_{\omega}$ for each $\omega\in [0, \pi)^\Z$. By equipping $[0, \pi)^{\Z}$ with a product measure, we can use this space to model a bi-infinite sequence of independent, identically distributed random parameters $(\omega_j:j\in \Z)$ taking values in $[0, \pi)$, such that $\omega_j = (S\omega)_{j-1}$ where $S$ is the shift operator. We then have the following result. 

\begin{thmx}[Theorem~\ref{thm:IID_AKLT_Gapless}, Informal]\label{thmx:Gapless_Hamiltonian}
    Suppose $(\omega_j: j\in \Z)$ is a family of IID random variables taking values in $[0, \frac{\pi}{4}]$ such that $\P[\omega_0=0]=0$ but $\P[\omega_0 <\delta] >0$ for all $\delta>0$. Then the following hold for the disordered state $\nu^\Z_\omega$ as constructed above. 
    \begin{enumerate}[label = \roman*.)]
        \item There is a nearest-neighbor parent Hamiltonian $\{h_{j, j+1}(\omega)\}_{j\in \Z}$ for which $\nu^{\Z}_\omega$ is the bulk ground state of the GNS Hamiltonian.  
        \item Furthermore, $\nu^{\Z}_\omega$ is $\P$-a.s. pure.
        \item With probability one, there is a sequence of intervals $I_k \subset \Z$ with $|I_k| \ge k$ and such that correlations of $\nu^{\Z}_{\omega}$ do not decay over $I_k$.
        \item Therefore, the GNS Hamiltonian $H^\Z_\omega$ associated to $\nu^{\Z}_{\omega}$ is almost surely gapless. In particular, \emph{any} translation covariant finite ranged Hamiltonian for which $\nu_\omega$ is the ground state must have a vanishing spectral gap. 
        \item Nevertheless, with probability one, $\nu^{\Z}_{\omega}$ has exponentially decaying correlations.
    \end{enumerate} 
\end{thmx}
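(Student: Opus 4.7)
The plan is to attack the five conclusions in order, using the disordered MPS factorization~(\ref{eqn:fundamental_factor}) together with direct spectral analysis of the transfer operator $E_{a,\omega}(x) = V_\omega^*(a\otimes x)V_\omega$ on $\M_2$.

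For (i), I would follow the standard parent-Hamiltonian construction from \cite{FannesNachtergaeleWerner, Perez-Garcia_et_al}: take $h_{j,j+1}(\omega)$ to be the orthogonal projection of $\C^3\otimes\C^3$ onto the complement of the two-site range of the MPS tensor $(\one_3\otimes V_{\vartheta^{j+1}\omega})V_{\vartheta^j\omega}$. Translation covariance of $\{h_{j,j+1}\}$ follows from covariance of $\omega\mapsto V_\omega$, and frustration-freeness $\nu^\Z_\omega(h_{j,j+1}(\omega))=0$ is immediate from~(\ref{eqn:fundamental_factor}), placing $\nu^\Z_\omega$ in the kernel of every local term. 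For (ii), purity comes from an injectivity argument: the two-site MPS tensor degenerates only on a $\P$-null set of $\omega$, and the classical FNW intersection argument \cite{FannesNachtergaeleWerner_pure} then reduces the infinite-volume ground-state space to a singleton.

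The substantive content lies in (iii)--(v), all of which rest on the spectrum of $E_{\one,\omega}$. A direct computation at the boundary point $\omega=0$ gives
\begin{equation*}
    E_{\one,0}(x)\;=\;\begin{pmatrix} x_{11} & 0 \\ 0 & x_{00} \end{pmatrix},
\end{equation*}
whose spectrum on $\M_2$ is $\{1,-1,0,0\}$ with eigenvectors $\one,\sigma_z,\sigma_x,\sigma_y$. The $-1$ eigenvalue on $\sigma_z$ is the non-contracting mode that drives long-range correlations. For (iii), by continuity of $\omega\mapsto E_{\one,\omega}$, I pick $\epsilon_k>0$ small enough that on $[0,\epsilon_k]$ the map $E_{\one,\omega}$ differs from $E_{\one,0}$ by less than $(2k)^{-1}$ in operator norm. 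Applying the second Borel--Cantelli lemma to the independent events $\{\omega_{(k+1)i},\ldots,\omega_{(k+1)i+k}\in[0,\epsilon_k]\}$ over $i\in\N$ yields, $\P$-a.s., an interval $I_k\subset\Z$ of length $\geq k$ on which every $\omega_j\in[0,\epsilon_k]$. On such an interval the composed operator $\prod_{j\in I_k}E_{\one,\omega_j}$ acts on $\sigma_z$ as multiplication by a scalar of modulus at least $1/2$, so choosing boundary observables $A,B$ whose MPS images have non-trivial $\sigma_z$-component yields a correlation across $I_k$ bounded below by a positive universal constant. Part (iv) is then immediate from the Hastings--Koma exponential clustering theorem applied to any finite-range translation covariant Hamiltonian whose ground state is $\nu^\Z_\omega$: ground-state correlations of a gapped, finite-range Hamiltonian decay exponentially at a rate depending only on the gap and the range, so the correlation bound established in (iii) forbids such a gap.

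Finally, for (v), I would apply the Oseledets multiplicative ergodic theorem to the i.i.d.\ cocycle $(E_{\one,\omega_j})_{j\in\Z}$ on $\M_2$. The top Lyapunov exponent is $\lambda_1=0$, attained by the fixed direction $\one$; the claim is that the second Lyapunov exponent
\begin{equation*}
    \lambda_2 \;=\; \lim_{n\to\infty}\frac{1}{n}\log \bigl\|E_{\one,\omega_n}\circ\cdots\circ E_{\one,\omega_1}\big|_{\M_2\ominus\C\one}\bigr\|
\end{equation*}
is strictly negative $\P$-a.s. Granted this, equation~(\ref{eqn:fundamental_factor}) bounds two-point correlations of $\nu^\Z_\omega$ at distance $n$ by a constant times this operator norm, giving the claimed almost-sure exponential decay at rate $e^{\lambda_2 n}$. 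The hardest step is establishing $\lambda_2<0$ strictly: although $E_{\one,\omega}$ has spectral radius strictly less than one on $\C\one^\perp$ for every $\omega\in(0,\pi/4)$, the operator norm need not be, so one cannot simply integrate the log spectral radius against the law of $\omega_0$. I would instead invoke Furstenberg's positivity criterion on the projective action of the restricted cocycle on $\C\one^\perp$, using the non-atomicity of the law of $\omega_0$ together with the explicit form of $V_\omega$ to verify strong irreducibility and contraction.
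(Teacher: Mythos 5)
Your parts (i)--(iv) follow essentially the paper's route: projection parent Hamiltonian, FNW-style purity, rare long blocks of small angles giving a non-decaying correlation, and exponential clustering for the contradiction. Two details you gloss over do need attention. First, for the parent Hamiltonian and purity with \emph{site-dependent} tensors, two-site injectivity alone is not quoted off the shelf in the paper; the frustration-free intersection property $\mathcal{G}_{j,j+1}\otimes\C^3\cap\C^3\otimes\mathcal{G}_{j+1,j+2}=\mathcal{G}_{j,j+1,j+2}$ is verified by an explicit (computer-assisted) computation (Lemma~\ref{lem:delta_MPS_injectivity}), and it is this, via the condition-number bound, that drives the purity proof. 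Second, the clustering theorem controls \emph{truncated} correlations, so your boundary observables must have vanishing one-point functions; the paper uses $S^z$ precisely because $\nu_\omega(S^z_x)=\tfrac12\cos^2(\omega_x)\tr\sigma^z=0$, and it also records that the GNS spectrum is a.s.\ constant (Corollary~\ref{cor:spec_const}), though your pointwise contradiction (constant lower bound vs.\ $3e^{-\mu(\omega)k}$ for all $k$) works for a.e.\ fixed $\omega$ as well. Your Borel--Cantelli block argument for (iii) is a legitimate alternative to the paper's ergodicity argument (shift-invariant union of positive-probability events).

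The genuine gap is in (v). Furstenberg's positivity criterion requires strong irreducibility of the projective action of the restricted cocycle on $\C\one^{\perp}$, and that hypothesis fails here: every $E_{\one,\omega}$ is diagonal in the fixed Hilbert--Schmidt basis $\{\one,\sigma^z,\sigma^+,\sigma^-\}$ with eigenvalues $1,\ \sin^2\omega-\cos^2\omega,\ -\sin^2\omega,\ -\sin^2\omega$, so the restricted cocycle preserves three fixed lines, the generated semigroup is commutative, and the projective action is reducible. As proposed, that step would not go through. The good news is that the same structure that kills Furstenberg makes the claim elementary: the norm of the product on $\C\one^{\perp}$ is \emph{exactly} $\max\bigl\{\prod_j|\cos(2\omega_j)|,\ \prod_j\sin^2(\omega_j)\bigr\}$ (this is the paper's Lemma~\ref{lem:translation_operator_strictly_positive}, $\|\Phi_{m,n}-T\|$), and Birkhoff's ergodic theorem applied to $\log|\cos(2\omega_0)|$ and $\log\sin^2(\omega_0)$ (integrable under the bounded-density assumption) gives an almost-sure exponential rate $\max\{\E\log|\cos 2\omega_0|,\,2\E\log\sin\omega_0\}<0$. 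This is precisely the paper's argument; no Oseledets or strong irreducibility is needed, and your worry that one cannot integrate the log spectral radius is moot because spectral radius and operator norm coincide on each factor in this simultaneously diagonal basis.
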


\begin{rmk}
    The parent Hamiltonians $h_{j, j+1}(\omega)$ we construct are projection-valued and also translation co-variant with respect to the disorder in the sense that  \[
    h_{j,j+1}(\omega) = h_{j+1, j+2}(\vartheta^{-1}\omega)\,.
    \] Furthermore they are frustration free for $\nu_\omega$ in the sense that $\nu_\omega(h_{j,j+1}(\omega)) \equiv 0$ for all $j\in \Z$. 
\end{rmk}

 Owing to the translation covariance, we conclude that the spectrum of the GNS Hamiltonian for $\nu^{\Z}_\omega$ is deterministic in Corollary~\ref{cor:spec_const} below (see \cite{RoonSchenker_BGA} for a more general statement). If there were a nonzero bulk gap, the exponential clustering theorem \cite{NachtergaeleSims} would imply that the two point correlator $|\nu^{\Z}_\omega(S_{x_k}^z S_{x_k+k}^z)|$ is decaying exponentially fast uniformly at all scales with a necessarily deterministic rate. The key step in the proof that the bulk-gap closes, is to isolate rare regions where the values of $\omega$ are close to zero for long stretches. In such a rare region, we prove a lower bound on the correlation function $|\nu^{\Z}_\omega(S_{x_k}^z S_{x_k+k}^z)|\ge f(k)$ for which $\limsup_{k\to \infty} f(k) = 1$. This violates the bound provided by the exponential clustering theorem and thus contradicts the existence of a bulk gap. 

Based on the transfer apparatus, we find explicit formulas for vector linear functionals $\langle \Gamma_L| \cdot |\Gamma_L\>$ on $\A_{[-L, L+1]}$ which approximate $\nu^{\Z}_\omega$ as $L\to \infty$ almost surely. Given these expansions, we show explicitly that $\nu^{\Z}_\omega$ has non-trivial Tasaki index. Recall \cite{Ogata_z2}, time-reversal symmetry is an automorphism of $\A_{\Z}$ sending local spin matrices $S^w_m \mapsto -S^{w}_m$ for all $w\in\{x,y,z\}$ and $m\in \Z$.  The state-of-the-art methods for matrix product states in the deterministic case, involve computing a unitary that locally implements the symmetry \cite{Ogata_cdm,Ogata_icm,Ogata_z2, Tasaki_book}. However, these results rely on a strong uniqueness theorem for injective matrix product states, namely the Fundamental Theorem of Injective Matrix product states \cite{Cirac_et_al}, which does not yet have a disordered analog. 

Therefore, we turn to the computational methods of \cite{Tasaki_prs}, which uses the Affleck-Lieb Twist Operator defined as 
\begin{equation}\label{eqn:Affleck-Lieb_Twist}
    T_L := \bigotimes_{x=-L}^{L+1}\exp\left\{-2\pi i \frac{x+L}{2L+1}\,S^{z}_x \right\}\in \A_{[-L,L+1]}\,.\end{equation} 
In the deterministic case of a gapped ground state $\phi$ with time-reversal symmetry, \cite{Tasaki_prs} argued that the sign of the expectation value $\mathscr{O}(\phi):=\lim_{L\to \infty}\sgn(\phi(T_L)) = \pm 1$, and this determines a $\Z_2$-index. Namely, \cite{Tasaki_prs} shows that the AKLT state has an index value equal to $-1$. By contrast, we show the following.

\begin{thmx}[Theorem~\ref{thm:Z2-index}, Informal]\label{thmx:TasakiIndex}
    Let $\nu^\Z_\omega$ be the disordered deformation of the AKLT model described in Theorem~\ref{thmx:Gapless_Hamiltonian}. Then $\nu^{\Z}_\omega$ is invariant under time-reversal almost surely. Furthermore,
    \[
       \P\left[ \lim_{L\to \infty} \nu^{\Z}_\omega(T_L) = -1\right] =1\,.
    \] That is, the value of the index $\mathscr{O}(\nu^{\Z}_\omega)$ as defined by \cite{Tasaki_prs}, converges to $-1$ for $\P$-a.e. realization of the disorder. 
\end{thmx}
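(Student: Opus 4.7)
The plan is to handle (i) almost-sure time-reversal invariance and (ii) the $L^1(\P)$ limit $\nu^{\Z}_\omega(T_L)\to -1$ separately.

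For (i), I would exploit that $V_\omega^*$ has only real entries. Taking $\Theta = e^{-i\pi S^y}K$ on $\C^3$ and $\theta = -i\sigma^y K$ on the ancilla $\C^2$, a direct check on the four basis vectors yields the intertwining $(\Theta\otimes\theta)V_\omega = V_\omega\theta$ for every $\omega$. Adjointing this and using that $\tfrac12\tr$ on $\M_2$ is invariant under $b\mapsto\theta b\theta^{-1}$, one finds that the finite-volume definition
\[
\nu^{[m,n]}_{\omega}(a_m\otimes\cdots\otimes a_n) = \tfrac12\tr\{E_{a_m,\omega_m}\circ\cdots\circ E_{a_n,\omega_n}(I_2)\}
\]
is invariant under coordinatewise TR pointwise in $\omega$, and TR-invariance of $\nu^{\Z}_\omega$ follows almost surely by passing to the thermodynamic limit.

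For (ii), I would plug in the definition: with $u_x = e^{-i\theta_x S^z}$ and $\theta_x = 2\pi(x+L)/(2L+1)$,
\[
\nu^{\Z}_\omega(T_L) = \tfrac12\tr\{E_{u_{-L},\omega_{-L}}\circ\cdots\circ E_{u_{L+1},\omega_{L+1}}(I_2)\}.
\]
Since each $u_x$ is diagonal in the $S^z$-basis and $V_\omega$ pairs definite $S^z$-values with definite ancilla components, $E_{u_x,\omega_x}$ leaves the diagonal of $\M_2$ invariant; on this diagonal its matrix factors as $M_x = R(\theta_x)\,N_x\,R(-\theta_x)$, where $N_x := \sin^2\omega_x\,I + \cos^2\omega_x\,\sigma^x$ and $R(\theta) := e^{-i\theta\sigma^z/2}$. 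In the full product $M_{-L}\cdots M_{L+1}$ the interior rotations telescope into $R(\Delta)$ with $\Delta := 2\pi/(2L+1)$, while the residual boundary factor is $R(\theta_{-L})\,R(-\theta_{L+1}) = R(-2\pi) = -I$. This is the topologically significant step: the $2\pi$ rotation of the half-integer ancilla supplies the $-1$. One arrives at
\[
\nu^{\Z}_\omega(T_L) = -\tfrac12\,\bigl\langle \vec 1,\, N_{-L}\,R(\Delta)\,N_{-L+1}\cdots R(\Delta)\,N_{L+1}\,\vec 1\bigr\rangle,\qquad \vec 1 := (1,1)^T,
\]
and the task reduces to showing the bilinear form converges to $1$.

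Passing to the $\sigma^x$-eigenbasis gives $\vec 1 = \sqrt{2}\,|+\rangle$, $N_x = \diag(1,\gamma_x)$ with $\gamma_x := -\cos 2\omega_x \in (-1,0)$ a.s., and writing the iterate as $a_k|+\rangle + b_k|-\rangle$ produces the random linear recursion
\[
a_{k+1} = \cos(\Delta/2)\,a_k - i\sin(\Delta/2)\,b_k,\qquad b_{k+1} = \gamma_{L+1-k}\bigl[\cos(\Delta/2)\,b_k - i\sin(\Delta/2)\,a_k\bigr],
\]
with $(a_0,b_0) = (1,0)$. A contraction-plus-driving argument (iterating the $b$-equation and using $|\gamma_x|<1$ a.s.) gives $|b_k| \le C_\omega \Delta$ uniformly in $k$, whence $|a_{2L+1}-1| = O(L\Delta^2) = O(1/L)$ and $\nu^{\Z}_\omega(T_L)\to -1$ almost surely; unitarity of $T_L$ gives the deterministic bound $|\nu^{\Z}_\omega(T_L)+1|\le 2$, so dominated convergence upgrades the a.s.\ statement to $L^1(\P)$. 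The main obstacle will be controlling this random product through the same rare regions of small $\omega_x$ that force the bulk-gap closure in Theorem~\ref{thmx:Gapless_Hamiltonian}: there $|\gamma_x|$ approaches $1$ and the $b$-recursion loses its contraction. The argument must show, via the i.i.d.\ hypothesis together with $\P[\omega_0=0]=0$ (Borel--Cantelli or a direct tail estimate), that the accumulated loss of contraction over $[-L,L+1]$ grows only sublinearly in $L$, preserving $|b_k| = O(\Delta)$ with probability one.
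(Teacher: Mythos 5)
Your part (i) and your algebraic reduction for part (ii) are both sound. The intertwining $(\Theta\otimes\theta)V_\omega=V_\omega\theta$ does hold for every $\omega$ (the entries of $V_\omega$ are real, and your $\theta=-i\sigma^yK$ is exactly the paper's matrix $Z$ together with complex conjugation), so your local-symmetry argument for time-reversal invariance is a valid repackaging of Corollary~\ref{cor:TR_symmetry}, which verifies the same fact by checking equality of the MPS coefficients under $\sigma\mapsto-\sigma$. Likewise, the twist unitaries are diagonal, the maps $E_{u_x,\omega_x}$ do preserve the diagonal of $\M_2$, the factorization $M_x=R(\theta_x)N_xR(-\theta_x)$ with $N_x=\sin^2\omega_x\,\one+\cos^2\omega_x\,\sigma^x$ is correct, the telescoping produces the boundary factor $R(-2\pi)=-\one$, and your formula $\nu_\omega(T_L)=-\tfrac12\langle\vec 1,\,N_{-L}R(\Delta)N_{-L+1}\cdots R(\Delta)N_{L+1}\vec 1\rangle$ is exact, with no need for the $\Gamma_L$ approximation. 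This is a genuinely different route from the paper: you aim at an almost-sure limit of a random transfer-matrix product followed by dominated convergence, whereas the paper never proves an almost-sure statement at all --- it expands $\E\langle\Gamma_L|T_L+\one|\Gamma_L\rangle$ in spin configurations, maps the expectation to a renewal process of IID geometric random variables (Lemma~\ref{lem:simulation}), controls the several regimes with Cram\'er's theorem, and only at the end transfers to $\nu_\omega$ via $\E\|\Phi_{[-L,L+1]}-\mathcal{T}\|\to0$. If your route were completed it would yield a strictly stronger (almost sure) version of Theorem~\ref{thm:Z2-index}.

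The gap is exactly the step you flag, and as written it is not a minor loose end but the heart of part (ii). The bound $|b_k|\le C_\omega\Delta$ with $C_\omega$ independent of $L$ does not follow from ``$|\gamma_x|<1$ a.s.'': iterating the $b$-recursion gives a constant of order $\bigl(1-\max_{|x|\le L}|\gamma_x|\bigr)^{-1}$, and under the hypotheses of Theorem~\ref{thm:IID_AKLT_Gapless} (namely $\P[\omega_0<\delta]>0$ for every small $\delta$) one has $\max_{|x|\le L}|\cos 2\omega_x|\to 1$ almost surely, so this constant diverges with $L$; these are the same rare regions that close the bulk gap, and your contraction-plus-driving estimate simply loses its contraction there. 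To repair it you would need a quantitative statement, for instance: since $\E\cos 2\omega_0<1$ (because $\P[\omega_0=0]=0$), a Chernoff bound plus Borel--Cantelli shows that almost surely, for all large $L$, every window of length at least $C\log L$ inside $[-L,L+1]$ satisfies $\prod_x|\cos 2\omega_x|\le\tfrac12$; this weakens your uniform bound to $\sup_k|b_k|=O(\Delta\log L)$ but still gives $|a_{2L+1}-1|=O(\log L/L)\to0$, which suffices. (Alternatively one can exploit the two-step near-cancellation $N R(\Delta)N R(\Delta)\approx\one$ when $\gamma_x\approx-1$, so that nearly degenerate stretches are nearly inert rather than merely non-contracting.) Without some such argument, the convergence $a_{2L+1}\to1$ --- and hence $\nu_\omega(T_L)\to-1$ --- is asserted rather than proved, so the proposal as it stands does not yet establish the theorem.
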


\subsection{Remarks and Open Questions}\label{sec:conclusion}
In this work, we build on the results of \cite{MovassaghSchenker} to provide a complete structure theory for the emerging field of Ergodic Matrix Product States. We expect our mathematical structure will serve as a convenient language to describe abstract results about this new class of disordered states, and will help provide a framework from which deeper results can be wrought. We have also exhibited a  deformation of the famed AKLT state which cannot be a gapped ground state of any local Hamiltonian with the same `statistical' translation symmetry. Nevertheless, due to the strict positivity of the ergodic quantum process which implements translation inside the state, we find that it is still exponentially clustering almost surely. Among the many interesting properties of this disordered state, we find that it retains some understanding of the symmetry protection index enjoyed by its deterministic gapped cousins. 

While our results are new, we note that there are echoes of our work to be found in the literature about models of independent Fermion states in a mobility gap \cite{AizenmanGraf}. Namely, such states retain exponential clustering despite closure of the gap, and they also have deterministic indices which can differentiate phases. Our example suggests the definition of a new class of states we shall describe as \textbf{quasi-gapped ground states}. 

Let us recall that gapped ground states of quasi-local interactions obey exponential clustering in the following way \cite{NachtergaeleSims}. Suppose $a_i$ and $a_j$ are single-site observables located at sites $i$ and $j$ respectively, and $\psi$ is a deterministic gapped ground state of a quasilocal interaction (i.e. the interaction is gapped and has a finite $F$-norm). Then, after some rewriting, Theorem 2 of \cite{NachtergaeleSims} says 
\begin{equation}\label{eqn:exponential_clustering}
    |\psi([a_i,a_j])| \le 2 \|a_i\| \|a_j\| e^{-\mu\,(|i-j| - \ell_0)}\,,  
\end{equation} 
where $\mu>0$ is a function of the gap size. Here we have suggestively absorbed the pre-factors appearing in the original estimate from \cite{NachtergaeleSims} into the constant $\ell_0 >0$, which we call the \emph{onset length} for decay. Observe that $\ell_0$ is a lower bound on the minimal distance between $i$ and $j$ so that decay of correlations at exponential rate $\mu$ can be observed; indeed, for $|i-j|< \ell_0$, eq. \eqref{eqn:exponential_clustering} is weaker than the trivial estimate $|\psi([a_i,a_j])| \le 2 \|a_i\| \|a_j\|$. Thus we can replace eq. \eqref{eqn:exponential_clustering} with the stronger bound
\begin{equation}\label{eqn:exponential_clustering_positive_part}
    |\psi([a_i,a_j])| \le 2 \|a_i\| \|a_j\| e^{-\mu\,(|j| - \ell_0)_+}\,,  
\end{equation} 
where $(x)_+ = \frac{1}{2}(x+|x|)$ denotes the positive part of a real number $x$.  Note that although the state $\psi$ and the interaction were not assumed to be translation invariant, the assumption of a gap provides a translation invariant upper bound on correlations.

By contrast, Ergodic Matrix Product States --- which our group has recently shown are always ergodic ground states of a canonical parent Hamiltonian \cite{EkbladMorenoNadalesRoonSchenker} --- obey a kind of randomized exponential clustering \cite[Theorem 2]{MovassaghSchenker} where now there is a \emph{random onset length} $L_0(\omega)$, that is finite almost surely. In symbols, if $\psi_\omega$ is an ergodic MPS ground state, then, there exists $\mu>0$ and $L_0(\omega)$ so that, with probability one,
\begin{equation}\label{eqn:quasigap}
   |\psi_\omega ( [a_0, a_j])| \le 2\|a_0\| \|a_j\| e^{ - \mu\, (|j| - L_0(\omega))_+}  
\end{equation}
for all single site observables $a_0$ and $a_j$ located at $0$ and $j$, respectively. Note that we have localized this estimate to $i=0$.  Indeed, a key feature of ergodic MPS is that, by the translation covariance, the onset length may be spatially dependent. That is, with probability one,
\begin{equation}\label{eqn:quasigap_translated}
      |\psi_\omega ( [a_i, a_j])| \le 2\|a_{i}\| \|a_{j}\| e^{ - \mu\, (|i-j| - L_0(\vartheta^i\omega))_+}.
\end{equation}
for all single site observables $a_{i}$ and $a_{j}$ located at sites $i$ and $j$ respectively.

We venture the preliminary definition that an ergodic ground state $\psi_\omega$ is \emph{quasigapped} if \eqref{eqn:quasigap} holds but \eqref{eqn:exponential_clustering} fails for all \emph{deterministic} onset lengths, where a full definition should require considering correlations between any two local observables, not just those supported at a single site. In words, quasigapped states retain almost sure exponential clustering in the presence of a closed bulk spectral gap, but allow for varying onset-length along the chain.

In fact our results hint at the possibility of two types of translation covariant models. In our recent pre-print \cite{RoonSchenker_BGA}, we show that the bulk spectrum of any global Hamiltonian which is translation covariant is necessarily deterministic. It would appear, then, that there ought to be deformed AKLT models where the gap actually remains open almost surely, and some where the gap will close. In fact, we provide a class of such gapped examples in the upcoming work \cite{EkbladMorenoNadalesRoonSchenker}. We conjecture that the distinction can be detected in the clustering properties of the state in question.

As yet several very interesting questions remain open.  
\begin{enumerate}
 \item Does a theory of Symmetry Protected Topological phases hold for quasigapped states? Namely, if one begins with a quasigapped state and connects via homotopy to another quasigapped state which does not close the \emph{quasi}-gap, can we say these states belong to the same phase?  If a state belongs to, say, a parameterized phase in some neighborhood $(a,b)$ with a phase transition at $b$, will disordered sampling of the MPS outside of $(a,b)$ destroy the phase information? 
    \item In the single body literature, there is a robust machinery around the notion of a mobility gap, and our disordered states seem to be many-body analogues of topological insulators. If this analogy is correct, can we use information about the Banach bundle to deduce phase information about the state? Do the excited states of such ergodic MPS models obey a kind of spatial localization like those in a mobility gap?
    \item Are there examples of ergodic MPS whose small correlations bundle is nontrivial? If so, does this nontriviality portend a closure of the spectral gap in the thermodynamic limit? 
    \item What are the minimal conditions on a small correlations bundle that will ensure the resulting state is pure almost surely? Are these conditions analogous to those in \cite{FannesNachtergaeleWerner}, and if so, are the pure states weak-* dense like they were shown to be in the MPS case \cite{FannesNachtergaeleWerner_abundance}? 
    
\end{enumerate}
We mention these with the hope to inspire the reader and other members of the quantum spin systems community to investigate this interesting family of disordered states.
\subsection{Organization of the Paper}
We will now outline the structure of our paper for the reader's convenience. 

In section~\ref{sec:Prelims}, we collect results which are more or less tacitly known by experts in the community, but which we shall use frequently. This section is purely for the reader's convenience and she/he/they may skip it if desired. In Section~\ref{sec:Structure_Theorem} we discuss our main results and describe the structure theory of translation covariant states in terms of Banach bundles. In Section~\ref{sec:application}, we study a version of the parameterized Hamiltonian introduced in \cite{FannesNachtergaeleWerner} where the parameter is taken to be an IID random variable at each site. The resulting bulk state is shown to have an almost surely gapless parent Hamiltonian while simultaneously being described by our Banach bundle structure. Furthermore, this disordered state is shown to be invariant under time-reversal symmetry with an almost sure Tasaki index equal to $-1$. We conclude with an appendix~\ref{apdx:mathematica} wherein we record our computational results.  

\section*{Acknowledgments}
EBR would like to thank the Operator Algebras Reading Group of Michigan State University for patiently listening to his expository talk on Banach bundles. He also thanks the other members of Prof. Schenker's research group Messrs. Owen Ekblad and Eloy Moreno-Nadales for the productive and enjoyable Thursday-morning discussions. The work of JHS was supported by the National Science Foundation under Grant No. 2153946. We would also like to thank Dr. Jacob Shapiro, Dr. Bruno Nachtergaele, and Dr. Daniel Spiegel for the helpful discussions and suggestions which improved this work. 

\section{Preliminaries}\label{sec:Prelims}
In this section, we recall some preliminary facts that we will use frequently throughout the rest of the exposition. Some of the results of this section are standard, while others are more specialized to our purposes. We aim to provide proofs when necessary, and otherwise we will provide detailed references for the reader's convenience. 

\subsection{Probability Theory}\label{sec:prob}
We will denote by $(\Omega, \mathcal{F}, \P)$ a probability space with $\sigma$-algebra $\mathcal{F}$ and probability measure $\P$. A map $T:\Omega \to \Omega$ is \textit{measure preserving} whenever $A\in \mathcal{F}$ implies $\P(T^{-1}(A))=\P(A)$ and is further called \textit{ergodic} if $T^{-1}(A)\subset A$ implies $\P(A) \in \{0, 1\}$. In general, not every measure preserving map has a measure preserving inverse. If $T$ is a bijection and both $T$ and $T^{-1}$ are measure-preserving, then $T$ is called an \textit{automorphism} of $(\Omega, \mathcal{F}, \P)$. 

\begin{rmk}\label{rmk:ergodic_inverse}
    If $T$ is an ergodic automorphism of $\Omega$, then $T^{-1}$ is necessarily ergodic. This can be seen by noting that for any $A\in \mathcal{F}$ with $T(A)\subset A$, the set $B:= \bigcup_{n=1}^\infty T^{-n}(A)$ satisfies $T^{-1}(B)\subset B$ and $\P[A] = \P[B]\in \{0,1\}$ by continuity from below. 
\end{rmk}

Throughout this paper, we assume that $\Omega$ is a compact Hausdorff space and that $\mathcal{F}$ is the Borel $\sigma$-algebra.  Furthermore, we will assume a given measure-preserving homeomorphism $T$ on $\Omega$; that is, $T$ is a continuous bijection with a continuous inverse. There are two standard examples of this case which motivate our study.

\begin{exmp}\label{exmp:Quasiperiodic_rotation}
Let $\Omega = [0,\pi)$ the unit circle equipped with the normalized Lebesgue measure $\P = \frac{dx}{\pi}$. It is well known (see \cite[Example 6.1.3]{Durrett}) that for any irrational number $\alpha\in (0,\pi)$, there is an associated measure preserving ergodic map $\vartheta:\omega \mapsto \omega + \alpha \mod{\pi}$ which is a homeomorphism with an ergodic inverse. 
\end{exmp}

Another standard example comes in the form of independent identically distributed random variables taking values in a compact Hausdorff space. 

\begin{lem}\label{lem:IID_cts}
    Let $(\Omega, \mathcal{F}, \P)$ be a (not necessarily compact Hausdorff) probability space. Let $\{X_z:z\in \mathbb{Z}\}$ be a bi-infinite sequence of IID random variables on $\Omega$. Then, there exist a compact Hausdorff probability space $(S, \mathcal{G}, \mathbb{Q})$ and a family of random variables $\{Y_z:z\in \Z\}$ and an ergodic automorphism $\vartheta \in \Aut(S)$ so that the following hold
        \begin{enumerate}[label = \alph*.)]
            \item Each $\{Y_z : z\in \Z\}$ are IID.
            \item The automorphism $\vartheta$ is a homeomorphism.
            \item For all $z\in \Z$ one has $Y_z = Y_{z-1}\circ \vartheta = Y_{z+1}\circ \vartheta^{-1}$.
            \item The finite dimensional distributions agree 
            \[
                \mathbb{Q}[Y_{z_1}\le x_1, Y_{z_2}\le x_2, \dots, Y_{z_n}\le x_n] = \P[X_{z_1}\le x_1, X_{z_2}\le x_2, \dots, X_{z_n}\le x_n] \ .
            \]
        \end{enumerate} 
\end{lem}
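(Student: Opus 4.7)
The plan is to realize the IID sequence on a canonical Bernoulli shift space, which is the construction alluded to in Remark 0.2. First, let $\mu$ denote the common law of $X_0$. If $\mu$ is a Borel probability measure on $\R$, I would embed $\R$ into a compact Hausdorff space $K$ (e.g., the extended real line $[-\infty,+\infty]$, or more generally any compact metrizable $K$ containing $\R$ as a Borel subset) and view $\mu$ as a Borel probability measure on $K$ via pushforward. More generally, if the $X_z$ take values in a Polish space, one embeds it into the Hilbert cube.

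Next, I would set $S := K^\Z$ with the product topology, so that $S$ is compact Hausdorff (and in fact metrizable) by Tychonoff's theorem. Let $\mathcal{G}$ be the Borel $\sigma$-algebra on $S$ and let $\mathbb{Q} := \bigotimes_{z\in\Z} \mu$ be the product probability measure, whose existence is guaranteed by the Kolmogorov extension theorem. Define the coordinate projections $Y_z : S \to K$ by $Y_z((s_j)_{j\in\Z}) = s_z$. Because $\mathbb{Q}$ is a product measure, the $Y_z$ are independent with common law $\mu$, establishing (a); moreover their finite-dimensional distributions agree with those of the $X_z$ by independence plus equality of marginals, giving (d).

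Now define $\vartheta : S \to S$ by $(\vartheta s)_j := s_{j+1}$, the left shift. This is a coordinate permutation of the product space, hence is a continuous bijection whose inverse is the right shift, yielding (b). Invariance $\vartheta_\ast \mathbb{Q} = \mathbb{Q}$ holds on cylinder sets and therefore on all of $\mathcal{G}$. Ergodicity is the classical Bernoulli shift result: any $\vartheta$-invariant $A \in \mathcal{G}$ is measurable with respect to the tail $\sigma$-algebra of the independent sequence $(Y_z)$, which is $\mathbb{Q}$-trivial by Kolmogorov's 0--1 law, so $\mathbb{Q}(A)\in\{0,1\}$. By Remark 0.7 above, $\vartheta^{-1}$ is then automatically ergodic as well. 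Finally, (c) follows directly from the definitions: $Y_{z-1}(\vartheta s) = (\vartheta s)_{z-1} = s_z = Y_z(s)$, and analogously $Y_{z+1}\circ \vartheta^{-1} = Y_z$.

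There is no real obstacle here beyond bookkeeping; the only mild care required is the compactification step, which must be performed so that $\R$ sits as a full-measure Borel subset of $K$ (so that the $Y_z$ are a.s. real-valued and one can identify them with genuine random variables rather than $K$-valued maps). The ergodicity of the Bernoulli shift and the invariance of the product measure are standard and invoked as black boxes.
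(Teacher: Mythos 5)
Your construction is exactly the canonical shift-space model the paper has in mind: the paper's own ``proof'' is the one-line remark that this is a standard application of Kolmogorov's Extension Theorem, with the bilateral shift on $K^{\Z}$ and the product measure playing the roles of $\vartheta$ and $\mathbb{Q}$, precisely as you set up. Your additional details (compactifying the value space, checking (a)--(d) on coordinates, and invoking ergodicity of the Bernoulli shift together with the remark on ergodicity of the inverse) are correct and fill in the bookkeeping the paper leaves implicit.
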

\begin{rmk}\label{rmk:IID_cts}
    In words, the previous lemma states that we can always find a model for a sequence of IID random variables in which the underlying probability space is a compact Hausdorff space with an ergodic homeomorphism.  The lemma may be proved by pushing forward the probability measure $\mathbb{P}$ to the sequence space $\dot{\mathbb{R}}^\mathbb{Z}$ where $\dot{\mathbb{R}}$ denotes the one-point compactification of $\mathbb{R}$. 
\end{rmk}

A much more powerful result holds which characterizes measure-preserving transformations on the Lebesgue space $([0,1], dx)$ in terms of ergodic homeomorphisms. Recall the following from \cite[Corollary 12.30]{Eisner_et_al} or \cite[Section 4.4]{Petersen}.

\begin{thm}[Jewett-Krieger Theorem]\label{thm:JK_ergodic}
    If  $T$ is an ergodic, measure preserving transformation on a Lebesgue space $(X, \mathscr{B}, \mu)$ where $\mathscr{B}$ is the Borel $\sigma$-algebra and $\mu$ is a probability measure, then $(X,T,\mu)$ is isomorphic (up to sets of measure zero) to an ergodic homeomorphism of the Cantor set. 
\end{thm}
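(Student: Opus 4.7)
The plan is to reduce the problem to constructing a topological model, specifically a strictly ergodic subshift on a Cantor-type space, which is measure-theoretically isomorphic to $(X, T, \mu)$. Broadly this proceeds in two stages: first, realize $(X, T, \mu)$ as a symbolic shift system via a finite generating partition; second, replace the resulting measure-theoretic shift by a uniquely ergodic homeomorphism having the same measure-theoretic structure.

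For the first stage, I would invoke Krieger's generator theorem, which guarantees that an ergodic invertible system on a Lebesgue space of finite entropy admits a finite generating partition $\mathcal{P} = \{P_1,\dots,P_k\}$ (the infinite-entropy case is reduced to the finite-entropy setting by approximation with finite truncations of a countable generator). The coding map $\Phi : X \to \{1,\dots,k\}^{\mathbb{Z}}$ defined by $\Phi(x)_n = i$ iff $T^n x \in P_i$ is a measurable conjugacy of $T$ to the shift $\sigma$ on $\{1,\dots,k\}^{\mathbb{Z}}$. The push-forward $\nu := \Phi_* \mu$ is a shift-invariant Borel probability measure on the closed orbit closure $Y := \overline{\Phi(X)}$, which is a subshift, i.e.\ a closed shift-invariant subset of a Cantor space.

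The second and harder stage is to replace $(Y, \sigma, \nu)$ by a strictly ergodic topological model $(Z, \sigma|_Z, \nu')$ which is still measure-theoretically isomorphic to $(X, T, \mu)$. The Jewett--Krieger construction proceeds by building inductively a sequence of Rokhlin towers of heights $h_n \uparrow \infty$, cutting the base into sub-bases whose labels (the \emph{names} of the tower columns) encode $\mu$-typical $\mathcal{P}$-itineraries with ever finer precision. By carefully gluing these tower-names into longer admissible blocks one produces a subshift $Z$ on which every orbit realizes the same empirical frequencies for every cylinder set; this uniform convergence of Birkhoff averages forces both minimality of $Z$ and unique ergodicity with respect to the limiting measure $\nu'$. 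The resulting $Z$, being an infinite minimal subshift with no isolated points, is homeomorphic to the Cantor set.

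The main obstacle is precisely the combinatorial engineering in the second stage: one must simultaneously preserve the measure-theoretic isomorphism class while killing every invariant measure other than $\nu'$. Jewett's original construction handled the weakly mixing case through explicit block constructions, and Krieger's extension to arbitrary ergodic systems requires a more delicate inductive scheme that controls both the faithful copying of typical $\mathcal{P}$-names and the suppression of exceptional orbits inside $Y$. Verifying minimality and unique ergodicity of the limiting subshift in tandem is the technical heart of the argument; rather than reproduce the combinatorics here I would follow the presentations in Petersen \cite{Petersen} or Eisner et al.\ \cite{Eisner_et_al}.
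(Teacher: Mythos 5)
The paper does not actually prove this statement: it cites \cite[Corollary 12.30]{Eisner_et_al} and \cite[Section 4.4]{Petersen} and explicitly declines to recount the argument, calling it ``deep and difficult.'' So there is no in-paper proof to compare against; your outline is a summary of exactly the route taken in those references (code by a generating partition, then rebuild a strictly ergodic topological model by an inductive Rokhlin-tower/block construction), and in that sense it is consistent with what the paper relies on. However, since you defer the entire combinatorial core to Petersen/Eisner et al., your proposal is a road map rather than a proof, i.e.\ it contains no more mathematical content than the paper's citation does.

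Beyond that, one step of your reduction is genuinely flawed as stated. You claim the infinite-entropy case ``is reduced to the finite-entropy setting by approximation with finite truncations of a countable generator,'' and your second stage produces a strictly ergodic subshift $Z$ over the finite alphabet $\{1,\dots,k\}$. A subshift on a finite alphabet has topological entropy at most $\log k$, and by the variational principle every invariant measure on it has finite entropy; hence no finite-alphabet model (strictly ergodic or not) can be measure-theoretically isomorphic to an ergodic system of infinite entropy. The truncation idea does not repair this by itself: truncating the generator changes the isomorphism class, so one must instead build a refining sequence of finite partitions, produce compatible strictly ergodic finite-alphabet models for each, and pass to an inverse limit, verifying that unique ergodicity, minimality, the Cantor property, and the measure isomorphism all survive the limit. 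This is precisely how the general (Krieger) version is handled in the literature, and it is an essential additional layer, not a routine remark. A smaller technical point: $\Phi$ is only defined $\mu$-almost everywhere and is merely measurable, so ``$\overline{\Phi(X)}$'' is not well defined as a topological object; the correct subshift to take is the topological support of the push-forward measure $\nu$. Neither issue invalidates the overall strategy, but as written the argument would fail for infinite-entropy systems and is incomplete even in the finite-entropy case.
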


The proof of this is deep and difficult, so we will not recount it here. However, in the context of our results, the above theorem says that one can interpret the measure theoretic data of a large array of probabilistic models in terms of topological dynamics, so that (in some sense) topological dynamics are an appropriate setting for the discussion of disorder. 
\subsection{Banach Bundles}\label{sec:Bundles}
In this section we will recall some facts about Banach Bundles which will be relevant to our quotient construction in section~\ref{sec:Bundles} below. More specifically, we find that if the fibers are finite dimensional, our family of quotient spaces forms a Banach bundle in Theorem~\ref{thm:quotients_are_bbundled} below. Thus it is necessary to recount the relevant details.

The theory of Banach bundles goes back to at least the 1960's with \cite{DiximierDouady}, and the book by Fell and Doran \cite{FellDoran} serves as an informative reference. As there is some variance in the basic definitions of these objects, we point out that our working definition is that of Fell and Doran given in \cite[Definition 13.1]{FellDoran} (also written as (F)-Banach Bundles in \cite{Chirvasitu, DupreGillette}). 

\subsubsection{Definition and Properties}
Let $\Omega$ be an Hausdorff topological space and let $B$ be an Hausdorff space equipped with a continuous, open surjection $\pi: B\to \Omega$. If $U\subset \Omega$ is open, the subset $B_U:= \pi^{-1}(U)$ is the \textit{restriction} of $B$. The pre-image $B_x := \pi^{-1}(\{x\})$ is called the \textit{fiber} with base-point $x$.   We say that $B$ is a \textit{Banach Bundle} whenever each $B_x$ is equipped with the structure of a complex Banach space and the following conditions are satisfied: 
\begin{enumerate}[label = (\roman*)]
    \item The mapping $e \mapsto \|e\|$, the Banach space norm of $e$ in $B_{\pi(e)}$, is a continuous map $B \to [0, \infty)$.
    \item The binary operation $(e,f) \mapsto e+f$ is a continuous map from $\Delta(B^2):=\{(e,f): \pi(e) = \pi(f)\}$ to $B$, where the domain $\Delta(B^2)$ is taken with the subspace topology inherited from the product topology on $B^2$.
    \item For each $\lambda \in \mathbb{C}$, the mapping $m_\lambda :B \to B$ via $e\mapsto \lambda e$ is continuous. 
    \item For every $x\in X$, the sets of the form 
        \[
            V_{U, \epsilon}:= \{e\in B \colon \pi(e) \in U, \|e\|<\epsilon\} \quad \text{so that } x\in U\subset \Omega \text{ is open and }\epsilon >0,
        \] form a neighborhood basis around the origin $0_x$ of the fiber $B_x$. 
\end{enumerate} 

\begin{rmk}
Item (iv) is equivalent to the following: 

\indent (iv)' If $x\in X$ and $(e_{j})_{j\in \mathcal{J}}\subset B$ is a net so that $\pi(e_j) \to x$ and $\|e_j\| \to 0$, then $e_j \to 0_x$ in $B$. 

\end{rmk}

We recall that a \textit{section} of $\pi$ is a continuous function $f:\Omega\to B$ so that $\pi \circ f = \id_\Omega$.  A Banach bundle is said to \textit{have enough sections} if for every $e\in B$, there is a section $f$ so that $f(\pi(e)) = e$. We denote the set of continuous sections by $\Gamma(B)$. Let us recall a fundamental result.

\begin{thm}[Theorem C.16 \cite{FellDoran}]
    Let $B$ be a Banach bundle over a locally compact Hausdorff space $\Omega$. Then $B$ has enough continuous sections. 
\end{thm}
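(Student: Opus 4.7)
The plan is to prove the theorem in two stages: first establish a \emph{local} existence of continuous sections through any prescribed point of $B$, and then globalize via a partition-of-unity argument available over locally compact Hausdorff bases.

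For the local stage, fix $e_0 \in B_{x_0}$ and aim to produce a continuous section $\sigma$ defined on some open neighborhood $U$ of $x_0$ with $\sigma(x_0) = e_0$. I would proceed by successive approximation. First, I would establish the weaker \emph{approximation lemma}: for every $\epsilon > 0$ there exist an open neighborhood $U_\epsilon$ of $x_0$ and a continuous section $\tau_\epsilon$ on $U_\epsilon$ with $\|\tau_\epsilon(x_0) - e_0\| < \epsilon$. Granted the lemma, one iterates: take $\epsilon_n = 2^{-n}$ and construct sections $\tau_n$ on shrinking neighborhoods of $x_0$ with $\|\tau_n(x_0) - (e_0 - \sum_{k<n}\tau_k(x_0))\| < 2^{-n}$, using axiom (i) (continuity of the fiberwise norm) to further shrink the domain of $\tau_n$ so that $\|\tau_n(x)\| < 2^{-n+1}$ throughout. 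Axioms (ii) and (iii) then make the partial sums $S_N := \sum_{n \le N} \tau_n$ into continuous sections on the common intersection, and axiom (iv) guarantees uniform convergence of $S_N$ on a neighborhood of $x_0$ to a continuous section $\sigma$ with $\sigma(x_0) = e_0$.

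For the global stage, I would cover $X$ by a locally finite open cover $\{U_i\}$ equipped with local continuous sections $\sigma_i : U_i \to B$ supplied by the local stage, arranged so that for a distinguished index $i_0$ one has $x_0 \in U_{i_0}$ and $\sigma_{i_0}(x_0) = e_0$. Local compactness together with Hausdorffness provides a subordinate partition of unity $\{\phi_i\}$ with $\phi_{i_0}(x_0) = 1$ and $\phi_i(x_0) = 0$ for $i \ne i_0$. Setting $f(x) := \sum_i \phi_i(x) \sigma_i(x)$, interpreted fiberwise with the convention that the $i$-th summand is the fiber zero $0_x$ outside $\supp \phi_i$, produces a locally finite sum that is well defined at every $x$. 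Joint continuity of scalar multiplication (axiom (iii)), fiberwise addition (axiom (ii)), and local finiteness of the cover then combine to make $f$ a continuous global section with $f(x_0) = \sigma_{i_0}(x_0) = e_0$.

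The principal obstacle is the approximation lemma itself, since nothing in the axioms of a Banach bundle asserts directly the existence of a continuous local section through or even near a prescribed point. The key idea is to translate axiom (iv) from the zero of a fiber to $e_0$: by joint continuity of addition on $\Delta(B^2)$, restricting any open neighborhood of $e_0$ in $B$ to the fiber $B_{x_0}$ recovers the usual norm neighborhoods of $e_0$, and openness of $\pi$ then allows one to transport such neighborhoods to nearby fibers. Local compactness of $X$ is critical here, since it lets one work inside a compact neighborhood of $x_0$ where the approximants can be chosen with uniformly bounded norm — precisely what is needed to make the iterated series argument in the local stage converge. Once the approximation lemma is secured, the rest of the proof is the soft bundle-theoretic patching sketched above.
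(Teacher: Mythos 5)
The paper does not prove this statement at all: it is quoted from Fell--Doran (Theorem C.16), whose proof is the Douady--Dal Soglio-H\'erault construction. Your outline reproduces the outer skeleton of that proof --- an approximation lemma (continuous local sections passing within $\epsilon$ of a prescribed $e_0$), a successive-approximation series using completeness of the fibers together with axioms (i)--(iv), and a soft globalization by a bump function subordinate to a compact neighborhood --- and those two outer layers are essentially fine as you describe them.

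The genuine gap is the approximation lemma itself, which is the whole content of the theorem, and the sketch you offer for it would not go through. Openness of $\pi$ only says that for a bundle-neighborhood $W$ of $e_0$ the set $\pi(W)$ is open, i.e.\ every nearby fiber $B_x$ contains \emph{some} point of $W$; it provides no continuous selection $x \mapsto \tau(x) \in B_x \cap W$, and producing such a selection is precisely what the theorem asserts. Moreover, the idea of ``transporting norm neighborhoods of $e_0$ to nearby fibers'' is circular: there is no intrinsic way to compare elements of distinct fibers in norm before sections exist --- indeed in concrete situations the basic neighborhoods of the bundle topology are themselves described through sections, as in the Fell--Doran construction theorem (Theorem 13.18) --- so ``$\tau(x)$ is within $\epsilon$ of $e_0$'' has no meaning your axioms can certify. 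Nothing in (i)--(iv) hands you even one continuous local section defined on a neighborhood of $x_0$, let alone one passing near a prescribed $e_0$; this is why the existence-of-sections theorem is regarded as deep. The actual argument of Douady and Dal Soglio-H\'erault (reproduced in Appendix C of Fell--Doran) is a substantially more involved construction, roughly proceeding by studying the set of elements of a fiber that are approximable by values of continuous local sections, showing it is a closed subspace, and using local compactness in an essential gluing/exhaustion argument to show it is all of the fiber --- local compactness is not used merely to bound the approximants, as your sketch suggests. Since your proposal defers exactly this step, it does not constitute a proof of the theorem.
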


When equipped with the uniform norm, the linear space $\Gamma(B)$ of continuous sections of $(B, \pi, \Omega)$ forms a Banach space equipped with a natural multiplication by the algebra of $\C$-valued functions $C(\Omega)$. Moreover, if $f\in \Gamma(B)$ and $\lambda \in C_0(\Omega)$, one clearly has $\|\lambda f\|_{\Gamma(B)} \le \|\lambda\|_{C(\Omega)} \cdot \|f\|_{\Gamma(B)}$. This makes $\Gamma(B)$ into a \textit{Banach $C(\Omega)$-module} \cite[Chapter 2]{DupreGillette} (see also \cite{Chirvasitu}). 

\begin{lem}\label{lem:bundle_convergence}
    Let $B$ be a Banach bundle over a compact Hausdorff space $\Omega$. Then, a net $(e_j)_{j\in \mathcal{J}}$ converges to a point $e$ in the topology of $B$ if and only if the net of base-points $\pi(e_j) \to \pi(e)$ in $\Omega$ and one of the following equivalent criteria are met 
        \begin{enumerate}[label = (\roman*)]
            \item For every $f\in \Gamma(B)$ one has 
            \[
                \|e_j - f(\pi(e_j))\| \to \|e- f(\pi(e))\|\,,
            \] in $\mathbb{R}$. 
            \item Let $W\subset \Gamma(B)$ be a dense $C(\Omega)$-sub-module. For every $w\in W$, one has 
            \[
                \|e_j - w(\pi(e_j))\| \to \|e- w(\pi(e))\|\,.
            \]
        \end{enumerate}
\end{lem}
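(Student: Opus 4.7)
The plan is to establish both implications by leveraging the bundle axioms (i)--(iv) in tandem with the fact that $B$ has enough continuous sections, which applies here since $X$ is compact Hausdorff (Theorem~C.16 of Fell--Doran, as recalled above).

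For the forward direction, suppose $e_j \to e$ in $B$. Continuity of $\pi$ immediately gives $\pi(e_j) \to \pi(e)$. For any fixed $f \in \Gamma(B)$, the pair $(e_j, f(\pi(e_j)))$ lies in $\Delta(B^2)$ and converges to $(e, f(\pi(e)))$ there, since $f$ and $\pi$ are continuous. Continuity of subtraction on $\Delta(B^2)$ (obtained by combining axioms (ii) and (iii)) then yields $e_j - f(\pi(e_j)) \to e - f(\pi(e))$ in $B$, and applying the continuous norm from axiom (i) gives criterion (i). The implication (i) $\Rightarrow$ (ii) is immediate.

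For the converse, the crucial step is an auxiliary claim: for any $f \in \Gamma(B)$ and any $x_0 \in X$, the sets
\[
    N(U, \epsilon) := \{b \in B \colon \pi(b) \in U,\ \|b - f(\pi(b))\| < \epsilon\},
\]
with $U$ ranging over open neighborhoods of $x_0$ and $\epsilon > 0$, form a neighborhood basis of $f(x_0)$ in $B$. I would prove this by observing that $h(b) := b + f(\pi(b))$ is a self-homeomorphism of $B$, with inverse $b \mapsto b - f(\pi(b))$ (both maps being continuous by the same reasoning as in the forward direction), and that $h$ carries the basic zero-neighborhood $V_{U,\epsilon}$ of axiom (iv) at $x_0$ bijectively onto $N(U,\epsilon)$ at $f(x_0)$. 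Granting this, use enough sections to pick $f \in \Gamma(B)$ with $f(\pi(e)) = e$, and fix any $\epsilon > 0$ and open neighborhood $U$ of $\pi(e)$. Density of $W$ in the sup norm furnishes $w \in W$ with $\|f - w\|_\infty < \epsilon/3$; then $\|e - w(\pi(e))\| \le \epsilon/3$, and hypothesis (ii) forces $\|e_j - w(\pi(e_j))\| < 2\epsilon/3$ eventually. Since $\|(f - w)(\pi(e_j))\| < \epsilon/3$ for all $j$, the triangle inequality yields $\|e_j - f(\pi(e_j))\| < \epsilon$ eventually, and combined with $\pi(e_j) \in U$ eventually this places $e_j \in N(U, \epsilon)$, giving $e_j \to e$ in $B$.

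The main obstacle I anticipate is the auxiliary tube-neighborhood claim, since axiom (iv) only directly describes neighborhoods of the zero vectors in each fiber; the translation-by-a-section trick is standard in the Banach bundle literature but must be verified carefully using joint continuity of addition and section evaluation. The rest of the argument is a routine $\epsilon/3$ approximation powered by density of $W \subset \Gamma(B)$.
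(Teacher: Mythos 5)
Your proposal is correct and runs on essentially the same ingredients as the paper's proof: enough continuous sections, axiom (iv), continuity of the fiberwise operations, and an $\epsilon/3$ density argument for the submodule $W$. The only difference is packaging — you convert axiom (iv) into the statement that the tube sets $N(U,\epsilon)$ form a neighborhood basis at $f(x_0)$ via the translation homeomorphism $b\mapsto b+f(\pi(b))$ and then conclude directly from criterion (ii), whereas the paper uses the net reformulation (iv)$'$ (so that $\|e_j-g(\pi(e_j))\|\to 0$ forces $e_j-g(\pi(e_j))\to 0_{\pi(e)}$, then adds the section back) and handles (ii) by a separate reverse-triangle-inequality reduction to (i); both are sound.
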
  
\begin{proof}
    $(i)$: The forward direction follows from continuity. For the converse, note there is some $g\in \Gamma(B)$ with $g(\pi(e)) = e$. Now, $\|e - g(\pi(e))\| = 0$, and by continuity, this means 
    $\|e_j - g(\pi(e_j))\| \to 0.$ By property (iv)', this implies that $e_j - g(\pi(e_j)) \to 0_{\pi(e)}$, from which we conclude that $\lim e_j = \lim g(\pi(e_j)) = e$, since addition is continuous. $(iii)$: This follow by a straightforward $\epsilon/3$ argument. 
\end{proof}

We now recall a useful lemma. The original argument seems to go back to \cite{DiximierDouady}, but we record the proof below in English for the convenience of the reader.
\begin{lem}
    Let $B$ be a Banach bundle over $\Omega$ with a space of continuous sections $\Gamma$. Then, the set 
        \[
            \Omega_{\ge n}:=\{x \colon \dim B_x \ge n\} \,,
        \] is open for each $n\ge 0$
\end{lem}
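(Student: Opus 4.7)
The plan is to exploit the existence of enough continuous sections (Theorem C.16 of \cite{FellDoran}, recalled just above) together with the fact that linear independence is preserved under small continuous perturbations. Fix $x_0 \in X_{\ge n}$ and pick linearly independent vectors $v_1,\ldots,v_n \in B_{x_0}$. Using that $B$ has enough sections, choose $f_i \in \Gamma$ with $f_i(x_0)=v_i$. I want to show that $f_1(y),\ldots,f_n(y)$ remain linearly independent for every $y$ in some neighborhood of $x_0$, which immediately gives $\dim B_y \ge n$ on that neighborhood and hence openness.

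The key quantitative tool is the map
\[
F : X \times S \longrightarrow [0,\infty), \qquad F(y,c) := \Bigl\|\sum_{i=1}^n c_i f_i(y)\Bigr\|,
\]
where $S := \{c \in \C^n : \max_i |c_i| = 1\}$ is compact. The crucial technical step is joint continuity of $F$. Continuity of the norm $B \to [0,\infty)$ and of addition on $\Delta(B^2)$ come from axioms (i) and (ii). However, axiom (iii) only asserts separate continuity of scalar multiplication, so I would first upgrade this to joint continuity of $(\lambda,e) \mapsto \lambda e$: for nets $\lambda_k \to \lambda$ in $\C$ and $e_k \to e$ in $B$, decompose $\lambda_k e_k = (\lambda_k-\lambda)e_k + \lambda e_k$; axiom (iii) handles the second summand, while $\|(\lambda_k-\lambda)e_k\| = |\lambda_k-\lambda|\cdot\|e_k\| \to 0$ combined with axiom (iv)' (noting $\pi((\lambda_k-\lambda)e_k) = \pi(e_k) \to \pi(e)$) forces $(\lambda_k-\lambda)e_k \to 0_{\pi(e)}$. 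Continuity of addition then yields $\lambda_k e_k \to \lambda e$, giving joint continuity. With this in hand, $F$ is a composition of jointly continuous maps and hence continuous.

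To conclude, observe that by linear independence of the $v_i$'s at $x_0$, the continuous function $c \mapsto F(x_0,c)$ is strictly positive on the compact sphere $S$, and so attains a minimum $\delta > 0$. The open set $F^{-1}\bigl((\delta/2,\infty)\bigr)$ contains the compact slice $\{x_0\} \times S$; by the tube lemma, it contains an entire tube $U \times S$ for some open neighborhood $U$ of $x_0$. For every $y \in U$ and every nonzero $c \in \C^n$ one then has $\|\sum_i c_i f_i(y)\| > 0$ after normalizing so $\max_i|c_i|=1$, so $f_1(y),\ldots,f_n(y)$ are linearly independent in $B_y$, giving $U \subset X_{\ge n}$.

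The main obstacle is really the promotion of scalar multiplication from separately to jointly continuous; once that is established, the rest is a standard open-set-meets-compact-fiber argument. As a sanity check on the outline, the argument uses only local compactness (to invoke enough sections) and the compactness of the coefficient sphere, both of which are readily available.
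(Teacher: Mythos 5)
Your proof is correct and follows essentially the same route as the paper's: choose sections through a linearly independent family at $x_0$, use joint continuity of $(y,\vec c)\mapsto \bigl\|\sum_i c_i f_i(y)\bigr\|$, and exploit compactness of the coefficient sphere to obtain a neighborhood on which independence persists. The only differences are cosmetic — you invoke the tube lemma where the paper asserts continuity of $\eta(y)=\inf_{\|\vec c\|=1}\|s(y,\vec c)\|$, and you explicitly upgrade axiom (iii) to joint continuity of scalar multiplication, a step the paper leaves implicit.
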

\begin{proof}
    Suppose $x_0 \in \Omega_n$ and find $s_1,\dots s_n \in \Gamma$ so that $\{s_j(x_0)\}_{j=1}^n$ is linearly independent in $B_{x_0}$. Consider the map 
        \[
            s(y, \vec{c}) := \sum_{j=1}^n c_j \cdot s_j(y)\,,
        \] which is a jointly continuous map from $X\times \C^n \to B$. Then, the function $\eta(y) = \inf_{\|\vec{c}\|=1}\|s(y, \vec{c})\|$ is a continuous mapping from $X \to [0, \infty)$. Since the unit sphere of $\C^n$ is compact, $\eta(x_0)>0$ by linear independence. Therefore, there is a neighborhood $V\ni x_0$ so that $\eta(y) >0$ for all $y\in V$. Equivalently, $\{s_j(y)\}_{j=1}^n$ are linearly independent in $B_y$. Thus $V\subset \Omega_n$ as required.
\end{proof} 

In other words, the above lemma says that the dimension function $x \mapsto \dim B_x$ is lower semi-continuous in $x$. The next theorem is a simplified version of the Stone-Weierstra{\ss} theorem for Banach bundles Theorem 4.2 in \cite{Gierz}. 
\begin{thm}[Stone-Weierstra{\ss} theorem for Banach Bundles]
    Let $\Omega$ be a compact Hausdorff space and let $B$ be a Banach bundle over $X$. A $C(\Omega)$-submodule $W$ of the continuous sections $\Gamma(B)$ is dense in $\Gamma(B)$ if and only if for every $x\in \Omega$, the set $\{s(x) : s\in W\}\subset B_x$ is dense.
\end{thm}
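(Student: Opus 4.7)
The plan is to prove the two directions separately. For the easy direction, assume $W$ is dense in $\Gamma(B)$. Fix $x \in X$ and $e \in B_x$. By the theorem of Fell and Doran that Banach bundles over locally compact Hausdorff spaces have enough continuous sections, there exists $f \in \Gamma(B)$ with $f(x) = e$. Given $\epsilon > 0$, density produces $s \in W$ with $\|f - s\|_{\Gamma(B)} < \epsilon$. The evaluation at $x$ is a contraction (since $\|g(x)\| \le \|g\|_{\Gamma(B)}$ for any $g \in \Gamma(B)$), so $\|e - s(x)\| = \|f(x) - s(x)\| < \epsilon$, showing $\{s(x) : s \in W\}$ is dense in $B_x$.

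For the converse, the plan is a localize-and-glue argument using partitions of unity. Fix $f \in \Gamma(B)$ and $\epsilon > 0$; we want to produce $g \in W$ with $\sup_{y \in X}\|f(y) - g(y)\| < \epsilon$. For each $x \in X$, use the fiberwise density hypothesis to select $s_x \in W$ with $\|f(x) - s_x(x)\| < \epsilon/2$. Because $f - s_x \in \Gamma(B)$ and the bundle norm is continuous on $B$, the real-valued function $y \mapsto \|f(y) - s_x(y)\|$ is continuous, so there is an open neighborhood $U_x \ni x$ on which it stays below $\epsilon$. By compactness of $X$, extract a finite subcover $U_{x_1}, \dots, U_{x_n}$. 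Since $X$ is compact Hausdorff (hence normal), one obtains a continuous partition of unity $\phi_1, \dots, \phi_n \in C(X)$ subordinate to this cover. Define
\[
g := \sum_{i=1}^n \phi_i \cdot s_{x_i},
\]
which lies in $W$ because $W$ is a $C(X)$-submodule. For any $y \in X$, only those indices with $y \in U_{x_i}$ contribute to the sum, and for those indices $\|f(y) - s_{x_i}(y)\| < \epsilon$. Using $\sum_i \phi_i(y) = 1$ and the triangle inequality on the fiber $B_y$,
\[
\|f(y) - g(y)\| = \Bigl\|\sum_{i=1}^n \phi_i(y)\bigl(f(y) - s_{x_i}(y)\bigr)\Bigr\| \le \sum_{i=1}^n \phi_i(y)\,\|f(y) - s_{x_i}(y)\| < \epsilon.
\]
Taking the supremum over $y$ yields $\|f - g\|_{\Gamma(B)} \le \epsilon$, and the density of $W$ in $\Gamma(B)$ follows.

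The main obstacles here are mild bookkeeping points rather than genuine difficulties. First, one must verify that $y \mapsto \|f(y) - s_x(y)\|$ is genuinely continuous at $x$, which follows from axiom (i) of the Banach bundle structure and the continuity of addition (axiom (ii)) applied to the section $f - s_x$. Second, one needs the partition of unity step, which requires only that $X$ be compact Hausdorff (hence normal), so finite open covers admit subordinate continuous partitions of unity. Third, the crucial algebraic input is that $W$ is a $C(X)$-submodule rather than merely a linear subspace, which is exactly what lets $\sum_i \phi_i s_{x_i}$ remain in $W$. No approximation by polynomials or algebra of functions is required, in contrast to the classical Stone--Weierstrass theorem, which makes the bundle version somewhat cleaner.
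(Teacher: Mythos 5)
Your proposal is correct and follows essentially the same route as the paper: the easy direction uses enough sections plus contractivity of evaluation, and the converse uses fiberwise density, continuity of the pointwise norm of $f - s_x$, a finite subcover of the resulting neighborhoods, and a subordinate partition of unity to glue the local approximants inside the $C(X)$-module $W$. No substantive differences.
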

\begin{proof}
    For the forward direction, note that since $\Omega$ is compact, it has enough sections \cite[Theorem C.11]{FellDoran}. So for any $e\in B_x$, there is a continuous section $s\in \Gamma(B)$ so that $s(x) = e$. Therefore, for any $\epsilon >0$ one may find $w\in W$ so that $\|w(x) - s(x)\| \le \|w - s\| <\epsilon$ as required. 

    Conversely, pick $s\in \Gamma(B)$. Fix $x_0 \in \Omega$. Since $W$ is fiber-wise dense, there exists a section $w_{x_0} \in W$ so that $\|w_0(x_0) - s(x_0)\| < \epsilon$. Since the mapping $X\ni x\mapsto \|w(x) - s(x)\|$ is continuous, there is a neighborhood $U_{x_0}\ni x_0$ so that for all $y\in U_{x_0}$ one has $\|w_0(y) - s(y)\|<\epsilon$. \textit{Nota bene} $\{U_{x} : x\in \Omega\}$ (constructed as above) forms an open cover of $\Omega$ from which we can extract a finite sub-cover $\{U_{x_1}, U_{x_2}, \dots, U_{x_k}\}$. Find a partition of unity $\{f_1:X \to [0,1]\colon 1\le j \le k\}\subset C(\Omega)$ subordinate to this sub-cover, and define $w = \sum_{1}^k f_j w_{x_j}\in W$ since $W$ is a $C(\Omega)$-module. 

    Hence, for any $y\in X$, 
        \begin{align*}
            \|w(y) - s(y)\| \le \sum_{j=1}^k f_j(y) \|w_j(y) - s_j(y)\| &\le \sum_{\substack{j_*:\\ y\in U_{x_{j_*}}}} |f_{j_*}(y)| \|w_{j_*}(y) - s_{j_*}(y)\|< \epsilon \sum f_j(y) = \epsilon.\qedhere
        \end{align*} 
\end{proof}

Moreover, \cite{FellDoran} give a general means of constructing Banach bundles from families of Banach spaces. 
\begin{thm}[Theorem 13.18 \cite{FellDoran}]\label{thm:Banach_Bundle}
    Let $\Omega$ be an Hausdorff space. Let $A$ be a set equipped with a surjection $\varrho :A \to \Omega$, so that $A_x := \varrho^{-1}(\{x\})$ is a complex Banach space. Suppose $\Gamma$ is a complex linear space of sections of $\varrho$ so that 
    \begin{enumerate}[label = (\alph*)]
        \item for all $f\in \Gamma$, the mapping $x\mapsto \|f(x)\|_{A_x}$ is continuous,
        \item for all $x\in X$, the set $\{f(x)\}_{f\in \Gamma} \subset A_x$ is dense. 
    \end{enumerate} 
Then, there exists a unique topology $\tau$ making $A$ a Banach bundle so that every element of  $\Gamma$ is a continuous section of $\varrho$. In particular, a topological basis for $\tau$ is given by sets of the following form: 
    \begin{equation}
        W(f, U, \epsilon) = \{a\in A \colon \varrho(a) \in U,\, \|a- f(\varrho(a))\|<\epsilon\}\,,
    \end{equation} 
where $\epsilon >0$, $U\subset \Omega$ is an open set, and $f\in \Gamma$. 
\end{thm}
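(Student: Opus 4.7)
The plan is to adopt the proposed collection $\{W(f,U,\epsilon)\}$ as a subbasis (in fact basis) for the sought topology $\tau$ on $A$, then verify (a) that these sets actually form a basis, (b) that the bundle axioms (i)--(iv) from Section~\ref{sec:Bundles} hold, (c) that every $f\in \Gamma$ is continuous as a map $X\to A$, and (d) uniqueness. Throughout, the two hypotheses do distinct work: fiberwise density (b) provides nearby sections used to test the topology, while continuity of $x\mapsto \|f(x)\|$ for $f\in \Gamma$ provides the quantitative estimate that glues nearby fibers together.

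The first technical step is the basis-intersection property. Suppose $a_0 \in W(f_1,U_1,\epsilon_1)\cap W(f_2,U_2,\epsilon_2)$ with $x_0 = \varrho(a_0)$, and set $\delta_i = \epsilon_i - \|a_0 - f_i(x_0)\| > 0$. By hypothesis (b) we select $f_3 \in \Gamma$ with $\|a_0 - f_3(x_0)\| < \min\{\delta_1,\delta_2\}/3$. Since $\Gamma$ is a linear space, the differences $f_3 - f_1$ and $f_3 - f_2$ lie in $\Gamma$, so by hypothesis (a) their norms are continuous on $X$. Choosing an open neighborhood $U_3 \subset U_1 \cap U_2$ of $x_0$ on which $\|f_3(x) - f_i(x)\|$ stays within $\delta_i/3$ of its value at $x_0$, and setting $\epsilon_3 = \min\{\delta_1,\delta_2\}/3$, the reverse triangle inequality yields $W(f_3,U_3,\epsilon_3)\subset W(f_1,U_1,\epsilon_1)\cap W(f_2,U_2,\epsilon_2)$. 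Fiberwise density guarantees that the $W(f,U,\epsilon)$ also cover $A$, so $\tau$ is a well-defined topology.

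Next I check the four Banach bundle axioms. For continuity of $\|\cdot\|\colon A\to [0,\infty)$, write $\bigl|\|a\| - \|a_0\|\bigr| \le \|a - f(\varrho(a))\| + \bigl|\|f(\varrho(a))\| - \|f(\varrho(a_0))\|\bigr| + \|f(\varrho(a_0))-a_0\|$ for any approximating section $f$ with $f(x_0)$ close to $a_0$; hypothesis (a) handles the middle term. For continuity of $+\colon \Delta(A^2)\to A$, given $(a_0,b_0)$ with common base $x_0$, pick sections $f,g\in \Gamma$ with $f(x_0),g(x_0)$ close to $a_0,b_0$; then $f+g \in \Gamma$, and the basic neighborhood $W(f+g, U, \epsilon)$ around $a_0+b_0$ is hit whenever $(a,b)$ lies in a product of smaller basic neighborhoods around $(a_0,b_0)$. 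Scalar multiplication is handled identically via $\lambda f \in \Gamma$. Axiom (iv) about neighborhoods of $0_x$ is almost free: $0 \in \Gamma$, so $W(0,U,\epsilon) = V_{U,\epsilon}$ are basic open sets, and any open neighborhood of $0_x$ contains some $W(f,U,\epsilon)\ni 0_x$ which forces $\|f(x)\| <\epsilon$ and hence (after shrinking $U$) $W(0,U',\epsilon - \|f\|_{U'})\subset W(f,U,\epsilon)$. Continuity of each $f\in \Gamma$ as a section is immediate, since the preimage $f^{-1}(W(g,U,\epsilon)) = \{x\in U : \|f(x)-g(x)\| < \epsilon\}$ is open by hypothesis (a). Openness of $\varrho$ follows because $\varrho(W(f,U,\epsilon)) = U$, again by fiberwise density.

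Finally, for uniqueness, suppose $\tau'$ is any Banach bundle topology on $A$ in which every $f\in \Gamma$ is a continuous section. Then each map $a\mapsto \|a - f(\varrho(a))\|$ is continuous (combining axioms (i) and (ii) with the continuity of $f\circ \varrho$), so every $W(f,U,\epsilon)$ is $\tau'$-open; hence $\tau \subset \tau'$. Conversely, any $\tau'$-open set is a union of sets of the form $\{a: \|a - f(\varrho(a))\| < \epsilon,\, \varrho(a)\in U\}$ by axiom (iv) applied to translates $a - f(\varrho(a))$, together with the fact that $\Gamma$-sections separate points of each fiber (hypothesis (b)); this gives $\tau'\subset \tau$. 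The main obstacle in the argument is the basis-intersection step above, because it is the one place where both hypotheses must be invoked simultaneously and in a quantitatively controlled way — the rest of the verification is essentially bookkeeping off the shape of the basic open sets.
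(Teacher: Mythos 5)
This statement is quoted from Fell--Doran (their Theorem II.13.18) and the paper offers no proof of it, so there is nothing internal to compare against; what you have written is a reconstruction of the standard construction from the cited source, and in outline it is correct: take the sets $W(f,U,\epsilon)$ as a basis, use fiberwise density to produce comparison sections and hypothesis (a) applied to differences $f-g\in\Gamma$ to run the $\epsilon/3$ estimates, verify axioms (i)--(iv), observe that continuity of each $f\in\Gamma$ and openness of $\varrho$ are immediate from the shape of the basic sets, and get uniqueness by showing each $W(f,U,\epsilon)$ is open in any competing bundle topology. Your basis-intersection argument and the verification of (i)--(iv) are sound as written.

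Two places are thinner than the rest and deserve explicit repair. First, the inclusion $\tau'\subset\tau$ in your uniqueness argument is not quite a direct application of axiom (iv) ``to translates $a-f(\varrho(a))$'': translation by a section is a $\tau'$-homeomorphism, but it carries $a_0$ to $a_0-f(\varrho(a_0))$, which by hypothesis (b) can only be made \emph{close} to $0_{x_0}$, not equal to it, so (iv) does not apply verbatim. The standard fix is the net-convergence criterion valid in any Banach bundle (Fell--Doran II.13.12, essentially the paper's Lemma~\ref{lem:bundle_convergence}): if $\varrho(a_i)\to x_0$ and for every $\epsilon>0$ there is $f\in\Gamma$ with $\|f(x_0)-a_0\|<\epsilon$ and $\|a_i-f(\varrho(a_i))\|<\epsilon$ eventually, then $a_i\to a_0$ in $\tau'$; this is proved by combining continuity of addition at $(0_{x_0},a_0)$ with axiom (iv), and it converts $\tau$-convergence into $\tau'$-convergence, giving $\tau'\subset\tau$. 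Second, the paper's definition requires the total space to be Hausdorff (and $\varrho$ continuous); both are easy but should be recorded: continuity of $\varrho$ because $\varrho^{-1}(U)$ is a union of basic sets, and Hausdorffness by separating points over distinct base points via $\varrho$, and points $a\neq b$ in the same fiber by choosing $f,g\in\Gamma$ with $f(x_0)\approx a$, $g(x_0)\approx b$ and shrinking $U$ so that $\|f-g\|$ stays bounded below, which makes $W(f,U,\delta)$ and $W(g,U,\delta)$ disjoint. With these two points supplied, your proof is complete and is the same route as the cited source.
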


We will sometimes refer to Theorem~\ref{thm:Banach_Bundle} as the \textit{Fell-Doran Theorem} for simplicity. 

 Recall that a Banach bundle is \textit{locally trivial} if for every point $x\in \Omega$, there exists a neighborhood $U\ni x$ and a fixed Banach space $F$, so that the restriction $B_U$ is homeomorphic to $F\times U$ in the product topology. In contrast to the typical definition of a vector bundle, Banach bundles make \emph{no assumptions} about local triviality, which leads to some interesting behavior.

 \begin{exmp}
     Let $\Omega = [0,1]$ be the unit interval equipped with the usual topology. Define $V_x=\C$ for all $x\in [0,1]\setminus \{1/2\}$ and $V_{1/2} = \{0\}$. Let $\Gamma$ be the subspace of $C([0,1])$ with $f(1/2) = 0$. Then by the Fell-Doran Theorem (Theorem~\ref{thm:Banach_Bundle}) there is a unique topology on $\mathcal{V}:=\{V_x\}_{x\in [0,1]}$ making $(\mathcal{V}, \pi, [0,1])$ into a Banach bundle with $\Gamma$ as a dense $C([0,1])$-submodule of the continuous sections by the Stone-Weierstra{\ss} theorem for Banach bundles. In fact, the resulting bundle can be realized as the set of continuous functions on $[0,1]$ that vanish at $\frac{1}{2}$. 
 \end{exmp}

\subsubsection{Homomorphisms}
We now recapitulate the theory of mappings between Banach bundles. It seems that the first formulation of notion of a \emph{homomorphism of Banach bundles} is due to \cite{Gutman_1}, and these were used in \cite{Gutman_1,GutmanKoptev} to investigate the notion of a \textit{dual bundle}. In view of Theorem~\ref{thm:Banach_Bundle}, one might hypothesize that the natural definition of the `dual' of a Banach bundle $X$ would be to take the dual spaces of each fiber $\{X_\omega^*: \omega \in \Omega\}$ together with the natural surjection, and topologize it via the homomorphisms into the trivial bundle which form a natural class of sections thereof. However, as is shown in Examples 2.13 of \cite{GutmanKoptev}, even in the case that the fibers of a given bundle are finite-dimensional, it is \textit{not} the case that the point-wise norm of every homomorphism is continuous -- a requirement of Theorem~\ref{thm:Banach_Bundle}.  

In general, a drop in dimension allows for the operator norm of a linear functional to have a discontinuity in its base points. A pointwise construction of a dual bundle is actually quite subtle for this reason. However, for practical purposes it will suffice to study \textit{homomorphisms} between bundles. In particular, we relax the requirement that the linear functionals form a bundle in and of themselves, to requiring that they are homomorphisms into the trivial bundle.

We introduce a slightly modified version of the definition of a homomorphism. 
\begin{define}\label{def:bundle_hom}
    Let $X$ and $Y$ be Banach bundles over the same topological base space $\Omega$, with modules of sections $\Gamma(X)$ and $\Gamma(Y)$ respectively. Suppose further that $\Omega$ is equipped with an homeomorphism $\vartheta:\Omega \to \Omega$. A function $H: \Omega \to \bigsqcup_{\omega \in \Omega} B(X_\omega, Y_{\vartheta \omega})$ via $\omega \mapsto H_\omega$ is a ($\vartheta$-)\textbf{covariant homomorphism} of $X$ to $Y$ if the associated mapping of the bundles $H:X\to Y$ via $e\mapsto H_{\pi(e)}e\in Y_{\vartheta \omega}$ is continuous. If $\vartheta$ is the identity map, then $H$ is simply referred to as a \textit{homomorphism}. 
\end{define}

\begin{note}
    The $C(\Omega)$-module of covariant homomorphisms will be written $\hom^\vartheta(X,Y)$, with the $\vartheta$-dropped if we are referring to the identity map on $\Omega$. If the pointwise norm function $\omega \mapsto \|H_\omega\|_{B(X_\omega, Y_{\vartheta\omega})}$ is bounded, then we say $H$ is a \textit{bounded homomorphism}. The submodule of bounded homomorphisms will be written $\hom^\vartheta_b(X,Y)$. Note that every covariant isometric isomorphism of Banach bundles is a covariant homomorphism. 
\end{note}

There is a natural action of the homomorphisms $\hom^\vartheta(X,Y)$ on the Banach modules of continuous sections $\Gamma(X)$ and $\Gamma(Y)$, as we note in the following lemma. 

\begin{lem}\label{lem:hom_action_sections}
    Let $\Omega$ be a compact Hausdorff space equipped with an homeomorphism $\vartheta$. Let $(X, \pi_X, \Omega)$ and $(Y, \pi_Y, \Omega)$ be two Banach bundles over $\Omega$ with $C(\Omega)$-modules of sections $\Gamma(X)$ and $\Gamma(Y)$ respectively. Let $H:\Omega \to \bigsqcup_{\omega\in \Omega}B(X_\omega, Y_{\vartheta \omega})$ be a $\vartheta$-covariant homomorphism. Then, for every section $\sigma \in \Gamma(X)$, the function $H\cdot \sigma: \Omega \to Y$ via $(H \cdot \sigma)(\omega) := H_\omega \sigma(\omega) \in Y_{\vartheta \omega}$ is continuous.
\end{lem}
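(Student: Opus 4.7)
The plan is to observe that the statement reduces immediately to the composition of two continuous maps between topological spaces. First, the section $\sigma \in \Gamma(X)$ is by definition a continuous map $\sigma : \Omega \to X$. Second, by Definition~\ref{def:bundle_hom}, the $\vartheta$-covariant homomorphism $H$ is precisely the datum of a continuous map of total spaces $H : X \to Y$ sending $e \in X_\omega$ to $H_\omega e \in Y_{\vartheta\omega}$. Consequently, the composition $H \circ \sigma : \Omega \to Y$ is continuous.

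Next I would verify that this composition is exactly the function $H \cdot \sigma$ in the statement. Since $\sigma$ is a section, $\pi_X(\sigma(\omega)) = \omega$ for every $\omega \in \Omega$, so
\[
    (H \circ \sigma)(\omega) \;=\; H_{\pi_X(\sigma(\omega))}\, \sigma(\omega) \;=\; H_\omega\, \sigma(\omega) \;=\; (H \cdot \sigma)(\omega) \;\in\; Y_{\vartheta\omega}.
\]
In particular the target fiber lands in $Y_{\vartheta\omega}$, i.e.\ $\pi_Y \circ (H\cdot\sigma) = \vartheta$, so $H\cdot\sigma$ is a $\vartheta$-twisted section in the natural sense.

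If one wanted a more hands-on sanity check (bypassing the total-space continuity of $H$), one could appeal to Lemma~\ref{lem:bundle_convergence}. Given a net $\omega_j \to \omega$ in $\Omega$, the base points $\vartheta \omega_j \to \vartheta\omega$ by continuity of $\vartheta$, the elements $\sigma(\omega_j) \to \sigma(\omega)$ in $X$ by continuity of $\sigma$, and then $H_{\omega_j}\sigma(\omega_j) \to H_\omega \sigma(\omega)$ in $Y$ by continuity of $H$ as a bundle map. Testing against any section $g \in \Gamma(Y)$ and using continuity of the bundle norm yields the convergence criterion required by Lemma~\ref{lem:bundle_convergence}.

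There is no substantive obstacle here: the lemma expresses the equivalence between the total-space formulation of continuity (built into Definition~\ref{def:bundle_hom}) and the fiberwise-families-acting-on-sections formulation that is more convenient in later applications. The only point that requires a moment of care is keeping the twist $\vartheta$ consistent, which is handled automatically because $H_\omega$ is defined to take values in $Y_{\vartheta\omega}$ and the base map $\vartheta$ is itself continuous.
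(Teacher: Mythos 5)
Your proof is correct and is essentially the paper's argument: the paper verifies continuity of $H\cdot\sigma$ via a net and the section/norm convergence criterion, but the decisive step there is exactly your observation that $H\cdot\sigma = H\circ\sigma$ with $\sigma:\Omega\to X$ continuous and the total-space map $H:X\to Y$ continuous by Definition~\ref{def:bundle_hom}. Your streamlined composition argument, together with the check that $\pi_Y\circ(H\cdot\sigma)=\vartheta$, covers the same content.
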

\begin{proof}
    Let $(\omega_{\lambda})_{\lambda \in \Lambda}\subset \Omega$ be a net converging to $\omega$. Note that 
    \[
        \pi_Y(H_{\omega_\lambda}\sigma(\omega_\lambda)) = \vartheta \circ \pi_X(\sigma(\omega_\lambda)) = \vartheta \omega_\lambda \,,
    \] which converges to $\vartheta \omega = \pi(H_\omega \sigma(\omega))$. Now, let $f\in \Gamma(Y)$ be a continuous section, and note that in this case, the net $f(\vartheta (\omega_\lambda)) \to f(\vartheta( \omega))$, and similarly $\sigma(\omega_\lambda) \to \sigma(\omega)$. Whence $\|H_{\omega_\lambda} \sigma(\omega_\lambda) - f(\vartheta \omega)\| \to \|H_\omega \sigma(\omega) - f(\vartheta \omega)\|$ since $H$ is continuous in the bundle topology. Thus, $H\cdot \sigma$ is a continuous mapping into $Y$. 
\end{proof}

The following lemma uses the central idea in the proof of \cite[Theorem 1.4.5]{Gutman_1} to show that every covariant homomorphism of bundles over a compact Hausdorff space has a uniformly bounded point-wise norm. 

\begin{lem}
        Let $\Omega$ be a compact Hausdorff space equipped with an homeomorphism $\vartheta$. Let $(X, \pi_X, \Omega)$ and $(Y, \pi_Y, \Omega)$ be two Banach bundles over $\Omega$ with $C(\Omega)$-modules of sections $\Gamma(X)$ and $\Gamma(Y)$ respectively. Let $H\in \hom^\vartheta(X,Y)$. Then, $H$ is bounded, i.e.,  \[\|H\|_\infty := \sup_{\omega \in \Omega} \|H_\omega\|_{B(X_\omega, Y_{\vartheta \omega})}<\infty.\]
\end{lem}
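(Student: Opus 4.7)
The plan is to run a Baire-category / uniform-boundedness argument on the Banach space $\Gamma(X)$ of continuous sections equipped with the supremum norm, and then transfer the global estimate to a pointwise bound on $\|H_\omega\|$. First, I observe that for every $\sigma\in\Gamma(X)$ the map $\omega\mapsto \|H_\omega\sigma(\omega)\|_{Y_{\vartheta\omega}}$ is continuous on the compact space $\Omega$ (by Lemma~\ref{lem:hom_action_sections} together with continuity of the norm on $Y$), so the assignment
\[
p(\sigma)\;:=\;\sup_{\omega\in\Omega}\|H_\omega\sigma(\omega)\|_{Y_{\vartheta\omega}}
\]
is finite for each $\sigma$, and is manifestly subadditive and positively homogeneous, hence a seminorm on the Banach space $\Gamma(X)$.

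Next, I would check that $p$ is lower semi-continuous with respect to the uniform norm. For any fixed $\omega$, the fiberwise restriction $H_\omega:X_\omega\to Y_{\vartheta\omega}$ is continuous (property (iv) forces norm convergence inside a single fiber to coincide with convergence in the bundle topology, so $H_\omega$ inherits continuity from $H$), and being linear it is bounded. Thus if $\sigma_k\to\sigma$ uniformly, then $\|H_\omega\sigma(\omega)\|=\lim_k\|H_\omega\sigma_k(\omega)\|\le\liminf_k p(\sigma_k)$ pointwise, and taking the supremum over $\omega$ gives lower semi-continuity of $p$. Consequently each sublevel set $E_n=\{\sigma\in\Gamma(X):p(\sigma)\le n\}$ is closed, and $\Gamma(X)=\bigcup_{n\in\N}E_n$. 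Since $\Gamma(X)$ is a Banach space, the Baire category theorem produces $n_0$, $\sigma_0\in\Gamma(X)$, and $r>0$ with $\sigma_0+B_r(0)\subset E_{n_0}$; subadditivity of $p$ then gives $p(\tau)\le n_0+p(\sigma_0)$ for all $\|\tau\|_\infty<r$, and positive homogeneity upgrades this to a global bound $p(\sigma)\le C\|\sigma\|_\infty$ for some constant $C<\infty$.

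Finally, I would convert this into a uniform fiberwise bound. Fix $\omega\in\Omega$ and $e\in X_\omega$ with $\|e\|=1$. Because $\Omega$ is compact Hausdorff, $X$ has enough sections (Theorem~C.16 of \cite{FellDoran}), so there exists $\sigma_0\in\Gamma(X)$ with $\sigma_0(\omega)=e$. Given $\epsilon>0$, the set $U=\{\omega':\|\sigma_0(\omega')\|<1+\epsilon\}$ is an open neighborhood of $\omega$, and Urysohn's lemma on the normal space $\Omega$ yields $f\in C(\Omega,[0,1])$ with $f(\omega)=1$ and $\supp f\subset U$. Since $\Gamma(X)$ is a $C(\Omega)$-module, $\sigma:=f\sigma_0\in\Gamma(X)$, with $\sigma(\omega)=e$ and $\|\sigma\|_\infty\le 1+\epsilon$. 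Therefore
\[
\|H_\omega e\|\;=\;\|(H\cdot\sigma)(\omega)\|\;\le\;p(\sigma)\;\le\;C(1+\epsilon),
\]
and letting $\epsilon\downarrow 0$, then taking the supremum over unit $e\in X_\omega$ and over $\omega\in\Omega$, one obtains $\|H\|_\infty\le C<\infty$.

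The main obstacle I expect is the third step: namely, upgrading the abstract ``enough sections'' property to the sharper statement that sections realizing a prescribed fiber value with nearly-minimal supremum norm exist. Without this sharpening, the global seminorm bound on $\Gamma(X)$ would not translate to the desired pointwise operator-norm bound. The cut-off construction via Urysohn's lemma is what bridges this gap, and it crucially uses both compactness of $\Omega$ (for completeness of $\Gamma(X)$ and existence of sections) and its Hausdorff/normality (for the Urysohn cut-off).
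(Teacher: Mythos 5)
Your proof is correct and follows essentially the same route as the paper: pointwise finiteness of $\sup_\omega\|H_\omega\sigma(\omega)\|$ via continuity and compactness, a uniform bound on the Banach space $\Gamma(X)$, and then the transfer to a fiberwise operator-norm bound using enough sections together with a Urysohn cut-off, exactly as in the paper. The only cosmetic difference is that you re-derive the uniform boundedness principle by hand (lower semicontinuous seminorm plus Baire category) where the paper simply cites Banach--Steinhaus for the family $T_\omega\sigma:=H_\omega\sigma(\omega)$; note also that boundedness of each fiber map $H_\omega$ is already built into Definition~\ref{def:bundle_hom}, so your detour through property (iv) is unnecessary though harmless.
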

\begin{proof}
    Recall that $\Gamma(X)$ and $\Gamma(Y)$ form Banach spaces when endowed with the uniform norm. By Lemma~\ref{lem:hom_action_sections}, the assignment $\omega \mapsto \|H_\omega \sigma(\omega)\|$ is continuous for every $\sigma\in \Gamma(X)$. Therefore, $\sup_{\omega} \|H_\omega\sigma(\omega)\| := K_\sigma <\infty$ since $\Omega$ is compact. 

    Setting $T_\omega:\Gamma \to \ell^1(\bigsqcup Y_\omega)$, via $T_\omega \sigma := H_\omega \sigma(\omega)\in Y_{\vartheta\omega}$, the above observation yields 
        \[\sup_{\omega} \|T_\omega \sigma\|_{Y_{\vartheta\omega}}= K_\sigma <\infty\,, \] 
    for all $\sigma \in \Gamma$. Therefore, by the uniform boundedness principle (\cite[Theorem 2.6]{Rudin}), $M:=\sup_{\omega} \|T_\omega\|<\infty$. We are now ready to estimate $\|H_\omega e\|$ for $e\in X_\omega$. Our aim is to express $\|H_\omega e\|$ in terms of $T_\omega \sigma(\omega)$, for a suitable section $\sigma$. Recall that $\Omega$ being compact implies $X$ has enough sections: for all $e\in X_\omega$, there is some $\sigma_e\in \Gamma$ with $\sigma(\omega) = e$.  We need to modify $\sigma_e$ a bit in order to control its norm. 
    
    Let $\epsilon >0$. By  Urysohn's Lemma (\cite[Theorem 33.1]{Munkres}), there is a continuous function $f_\epsilon:\Omega \to [0,1]$ so that for $A = \{\pi(e)\}$ and $B = \{\omega: \|\sigma_e(\omega)\| \ge (1+\epsilon)\|e\|\}$ one has $f(A) = \{1\}$ and $f(B) = \{0\}$. Then, taking $\sigma_\epsilon = f\cdot \sigma_e$, one has $\|\sigma_\epsilon\|_\infty \le (1+\epsilon)\|e\|$, and $\sigma_\epsilon(\pi(e)) = e$. Then, 
        \[
            \|H_\omega e\|_{Y_{\vartheta \omega}} = \|H_\omega \sigma_\epsilon (\omega)\| = \|T_\omega \sigma_e\| \le M \|\sigma_e\|_\infty \le M (1+\epsilon) \|e\|\,,
        \] and we are done. 
\end{proof}

The following is a specialization of \cite[Theorem 1.4.4]{Gutman_1}
\begin{lem}
    Let $\Omega$ be a compact Hausdorff space equipped with an homeomorphism $\vartheta$. Let $(X, \pi_X, \Omega)$ and $(Y, \pi_Y, \Omega)$ be two Banach bundles over $\Omega$ with $C(\Omega)$-modules of sections $\Gamma(X)$ and $\Gamma(Y)$ respectively. Let $H:\Omega \to \bigsqcup_{\omega\in \Omega}B(X_\omega, Y_{\vartheta \omega})$. The following are equivalent:
    \begin{enumerate}[label = (\roman*)]
        \item The function $H$ is a $\vartheta$-covariant homomorphism. 
        \item For every section $\sigma \in \Gamma(X)$, the functions $H\cdot \sigma: \Omega \to Y$ via $(H \cdot \sigma)(\omega) := H_\omega \sigma(\omega) \in Y_{\vartheta \omega}$ are continuous.
        \item For every section $\sigma$ in a dense $C(\Omega)$-submodule $W\le \Gamma(X)$, the function $H\cdot \sigma$ is a continuous. 
    \end{enumerate}  
\end{lem}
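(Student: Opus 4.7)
The plan is to prove the chain $(i)\Rightarrow(ii)\Rightarrow(iii)\Rightarrow(ii)\Rightarrow(i)$, where the first and third arrows are essentially free — $(i)\Rightarrow(ii)$ is exactly Lemma~\ref{lem:hom_action_sections}, and $(ii)\Rightarrow(iii)$ is trivial because $W\subset \Gamma(X)$ — so the substantive content is $(iii)\Rightarrow(ii)$ and the closing step $(ii)\Rightarrow(i)$.

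For $(iii)\Rightarrow(ii)$ my plan is a two-stage approximation. First, I want to extract a uniform bound $M:=\sup_\omega\|H_\omega\|<\infty$ by a Banach-Steinhaus argument: set $T_\omega\in B(\Gamma(X), Y_{\vartheta\omega})$ by $T_\omega\sigma := H_\omega\sigma(\omega)$, verify (via the Urysohn cutoff already exhibited in the preceding boundedness lemma) that $\|T_\omega\| = \|H_\omega\|$, and observe that $(iii)$ together with compactness of $\Omega$ yields $\sup_\omega\|T_\omega\sigma\|=\|H\cdot\sigma\|_\infty<\infty$ for each $\sigma\in W$. The closed sets $C_n:=\{\sigma\in\Gamma(X):\sup_\omega\|T_\omega\sigma\|\le n\}$ then cover $W$, and a Baire category argument in the Banach space $\Gamma(X)$ — leveraging the $C(\Omega)$-module structure of $W$ to promote dense coverage to a genuine interior point — forces some $C_{n_0}$ to contain a ball, which yields $M<\infty$ by the standard uniform boundedness conclusion. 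Once $M$ is in hand, the extension from $W$ to $\Gamma(X)$ is a uniform-norm density argument: for arbitrary $\sigma\in\Gamma(X)$ choose $\sigma_k\in W$ with $\|\sigma_k-\sigma\|_\infty\to 0$, and then
\[
\|H\cdot\sigma_k - H\cdot\sigma\|_\infty \;\le\; M\,\|\sigma_k-\sigma\|_\infty \;\to\; 0,
\]
so $H\cdot\sigma$ is a uniform limit of continuous sections of the pullback bundle $\vartheta^*Y$, hence continuous.

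For $(ii)\Rightarrow(i)$ the plan is to take a convergent net $e_\lambda\to e_0$ in $X$ with base-points $\omega_\lambda:=\pi_X(e_\lambda)\to \omega_0:=\pi_X(e_0)$ and prove $H(e_\lambda)\to H(e_0)$. Continuity of $\vartheta$ handles the base convergence $\pi_Y(H(e_\lambda))=\vartheta\omega_\lambda\to\vartheta\omega_0$. Since $\Omega$ is compact Hausdorff, $X$ has enough sections, so I can select $\sigma\in\Gamma(X)$ with $\sigma(\omega_0)=e_0$ exactly; then $(ii)$ gives $(H\cdot\sigma)(\omega_\lambda)\to(H\cdot\sigma)(\omega_0)=H(e_0)$ in $Y$. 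The residual $r_\lambda:=H_{\omega_\lambda}(e_\lambda-\sigma(\omega_\lambda))$ has $\pi_Y(r_\lambda)=\vartheta\omega_\lambda\to\vartheta\omega_0$, and by Lemma~\ref{lem:bundle_convergence} applied to $X$, $\|e_\lambda-\sigma(\omega_\lambda)\|\to\|e_0-\sigma(\omega_0)\|=0$, so $\|r_\lambda\|\le M\,\|e_\lambda-\sigma(\omega_\lambda)\|\to 0$; criterion $(iv)'$ of the Banach bundle axioms then gives $r_\lambda\to 0_{\vartheta\omega_0}$ in $Y$. Continuity of fiberwise addition $\Delta(Y^2)\to Y$ yields $H(e_\lambda)=(H\cdot\sigma)(\omega_\lambda)+r_\lambda\to H(e_0)+0_{\vartheta\omega_0}=H(e_0)$, as required.

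The principal obstacle is the uniform boundedness step, and it is the one place where the proof is genuinely finer than the preceding boundedness lemma. Standard Banach-Steinhaus supplies boundedness only from pointwise boundedness on a Banach space, and a dense subspace need not be of second category; the $C(\Omega)$-module structure of $W$ is indispensable for forcing the $C_n$ to exhaust all of $\Gamma(X)$ rather than just $W$. This is essentially the content of the specialization of \cite[Theorem~1.4.4]{Gutman_1} that we are invoking, and everything downstream — continuity of $H\cdot\sigma$ for all $\sigma\in\Gamma(X)$ and continuity of $H$ as a bundle map — becomes a routine density and bundle-topology computation once that bound is secured.
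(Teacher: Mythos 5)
Your overall architecture is sound and two of the three nontrivial legs are fine: $(i)\Rightarrow(ii)$ is indeed Lemma~\ref{lem:hom_action_sections}, and your $(ii)\Rightarrow(i)$ argument (exact local section through $e_0$, residual $r_\lambda$, axiom (iv)$'$, continuity of fiberwise addition) is correct and in fact cleaner than the paper's direct $(iii)\Rightarrow(i)$ net computation against all test sections $f\in\Gamma(Y)$; note that in that leg the bound $M$ is legitimately available, because under $(ii)$ the proof of the preceding boundedness lemma runs verbatim: $\omega\mapsto\|H_\omega\sigma(\omega)\|$ is continuous for \emph{every} $\sigma\in\Gamma(X)$, so compactness gives pointwise boundedness on all of $\Gamma(X)$, and Banach--Steinhaus plus the Urysohn cutoff gives $M<\infty$. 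The genuine gap is the first stage of your $(iii)\Rightarrow(ii)$: the claim that a Baire category argument, ``leveraging the $C(\Omega)$-module structure,'' upgrades pointwise boundedness on the dense submodule $W$ to $\sup_\omega\|H_\omega\|<\infty$. The sets $C_n$ are closed, but they are only known to cover $W$, and a dense submodule may well be meager; you give no mechanism by which the module structure produces an interior point, and none can exist, because condition $(iii)$ by itself does not imply uniform boundedness (nor does it imply $(ii)$).

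Concretely, take $\Omega=\{0\}\cup\{1/n\colon n\in\mathbb{N}\}$, $\vartheta=\id$, $X=Y=\Omega\times\ell^2$ the trivial bundle, and set $H_0=0$, $H_{1/n}=n\,\langle\,\cdot\,,e_n\rangle\,e_n$. The set $W=\{w\in C(\Omega,\ell^2)\colon n\langle w(1/n),e_n\rangle\to0\}$ is a $C(\Omega)$-submodule, and it is fiberwise dense (at an isolated point $1/n$ use the section supported on $\{1/n\}$; at $0$ use $w_v(1/m)=v-\langle v,e_m\rangle e_m$, $w_v(0)=v$), hence dense by the bundle Stone--Weierstra{\ss} theorem; moreover $H\cdot w$ is continuous for every $w\in W$, since $\|H_{1/n}w(1/n)\|=n|\langle w(1/n),e_n\rangle|\to0$ and axiom (iv)$'$ applies. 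Yet for the continuous constant section $\sigma\equiv\sum_n n^{-1}e_n$ one has $H\cdot\sigma(1/n)=e_n$, of norm one, while $H\cdot\sigma(0)=0$, so by continuity of the norm $H\cdot\sigma$ is not continuous at $0$: here $(iii)$ holds, $(ii)$ and $(i)$ fail, and $\|H_{1/n}\|=n$ is unbounded. So the uniform bound must be an input to the $(iii)\Rightarrow(ii)$ leg, not an output of Baire category; this is in effect what the paper's proof does, since it fixes $\|H\|_\infty$ and chooses $w$ with $\|w-\sigma_0\|_\infty<\epsilon/(3\|H\|_\infty)$, i.e.\ the criterion is really being applied to bounded $H$ (cf.\ $\hom_b^\vartheta$). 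Repair your write-up either by adding the hypothesis $\sup_\omega\|H_\omega\|<\infty$ to the $(iii)\Rightarrow(ii)$ step (after which your density estimate and the rest of your argument are correct), or by invoking boundedness only where it is provable, namely under $(ii)$, and routing the quantitative use of $M$ through the $(ii)\Rightarrow(i)$ leg; the Baire-category derivation of $M$ from $(iii)$ alone must be dropped.
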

\begin{proof}
    $(i) \Rightarrow (ii):$ This is Lemma~\ref{lem:hom_action_sections}. $(ii)\Rightarrow (iii):$ This is obvious. $(iii) \Rightarrow (i)$: Let $(e_\lambda)_{\lambda \in \Lambda}\subset X$ be a net converging to $e_0$ in the bundle topology. Let $\omega_{\lambda}:= \pi(e_{\lambda}) \to \omega_0 := \pi(e_0)$ be the associated net of base points which we know converges. Let $\epsilon >0$. Since $X$ has enough continuous sections, we may find $\sigma_0\in \Gamma(X)$ so that $\sigma_0(\omega_0) = e$. Furthermore, as $W$ is a dense sub-module of sections, we may find $w\in W$ with $\|w - \sigma_0\|_{\infty}<\frac{\epsilon}{3\|H\|_\infty}$. 

    Now, let $f\in \Gamma(Y)$. Note that $\| f\circ \vartheta - H\cdot w \|$ is a continuous function in $\omega$, and that moreover by the reverse triangle inequality,
    \[
        \left| \|f(\vartheta \omega_0) - H_{\omega_0}w(\omega_0)\| - \|H_{\omega_0}e - f(\vartheta \omega_0)\| \right| \le \| H_{\omega_0}\| \|e - w(\omega_0)\| = \| H_{\omega_0}\| \|\sigma_0(\omega_0) - w(\omega_0)\|\le \frac{\epsilon}{3}\,.
    \] Therefore, for any $\eta >0$ there is $\lambda$ large enough so that
    \begin{equation}\label{pf:hom_continuity}
        \left| \|f(\vartheta \omega_\lambda) - H_{\omega_\lambda}w(\omega_\lambda)\| - \|H_{\omega_0}e - f(\vartheta \omega)\| \right| \le \frac{\epsilon}{3} + \eta\,.
    \end{equation}

The following bound holds by combining the triangle inequality and the reverse triangle inequality: 
    \begin{align*}
        \left| \|H_{\omega_\lambda}e_\lambda - f(\vartheta \omega) \| - \|H_\omega e - f(\vartheta \omega) \| \right|&\le \left| \|H_{\omega_\lambda}e_\lambda - f(\vartheta \omega_\lambda) \| - \|f(\vartheta \omega_{\lambda}) - H_{\omega_\lambda} \sigma_0(\omega_{\lambda}) \| \right|\\
        &\text{\hspace{8mm}}+\left| \|f(\vartheta \omega_{\lambda}) - H_{\omega_\lambda} \sigma_0(\omega_{\lambda}) \| - \|f(\vartheta \omega_{\lambda}) - H_{\omega_\lambda}w(\omega_\lambda) \| \right|\\
        &\text{\hspace{8mm}}+\left| \|f(\vartheta \omega_{\lambda}) - H_{\omega_{\lambda}}w(\omega_\lambda)\| - \| H_{\omega}w(\omega) - f(\vartheta \omega)\| \right|\\
        &\le \|H\|_{\infty}\left( \|e_\lambda - \sigma_0(\omega_{\lambda})\| +\|\sigma_0 - w\|_{\infty}\right) +\frac{\epsilon}{3} + \eta\\
        &\le \epsilon +\eta \text{ for $\lambda$ large enough.}
    \end{align*} Note we have used~(\ref{pf:hom_continuity}) to obtain the estimate on the third summand in the second inequality. Since $\eta$ and $\epsilon$ were chosen independently, this completes the proof.
\end{proof}

\subsection{$C^*$-algebras and Quasi-local Algebras}\label{sec:C*algs}

Let us recall here that a $C^*$-algebra, $\mathcal{A}$ is a complete normed algebra with norm $\|\cdot \|$ satisfying the $C^*$-condition $\|a^*a\| = \|a\|^2$. We say that $\A$ is unital if it possesses a unit element $\one$, and we say that $\mathcal{A}$ is separable if, with the topology generated by $\|\cdot \|$, $\mathcal{A}$ contains a countable, dense subset.

In view of the following theorem, we never loose generality by assuming a $C^*$-algebra is contained in the bounded operators on a Hilbert space. This is particularly useful for some constructions, but we aim to present our results abstractly wherever possible.

\begin{thm}[Theorem 7.11 \cite{Conway_OT}]
    Every $C^*$-algebra has a faithful representation $\pi$ into the bounded operators on a Hilbert space $\mathcal{H}$.
\end{thm}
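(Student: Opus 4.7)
The plan is to execute the classical Gelfand-Naimark construction, building a faithful representation by gluing together enough cyclic representations obtained from the states on $\mathcal{A}$. The starting point is the GNS construction: given any state $\phi \in \mathcal{S}(\mathcal{A})$, the left kernel $N_\phi := \{a \in \mathcal{A} : \phi(a^*a) = 0\}$ is a closed left ideal, the sesquilinear form $\langle [a], [b]\rangle := \phi(a^*b)$ descends to a genuine inner product on $\mathcal{A}/N_\phi$, and completion yields a Hilbert space $\mathcal{H}_\phi$ on which left multiplication defines a bounded $*$-representation $\pi_\phi(a)[b] := [ab]$ with a cyclic vector $\xi_\phi$ satisfying $\phi(a) = \langle \xi_\phi, \pi_\phi(a)\xi_\phi\rangle$. (If $\mathcal{A}$ is non-unital, first unitize to $\widetilde{\mathcal{A}}$, construct the GNS data there, and restrict.)

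The second ingredient is that states separate the elements of $\mathcal{A}$. For this I would show that for every nonzero $a \in \mathcal{A}$, there exists a state $\phi$ with $\phi(a^*a) \neq 0$. Because $a^*a$ is self-adjoint and positive with spectral radius $\|a^*a\| = \|a\|^2 > 0$ by the $C^*$-identity, a Hahn-Banach argument applied to the real-linear functional $t \mapsto t \|a^*a\|$ on $\mathbb{R}\cdot a^*a \subset \mathcal{A}_{\mathrm{sa}}$, together with domination by the order unit (after unitizing), extends to a positive functional $\phi$ on $\mathcal{A}$ with $\phi(a^*a) = \|a^*a\|$; one then normalizes to obtain a state.

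Given these two pieces, I would assemble the universal representation
\begin{equation*}
\pi := \bigoplus_{\phi \in \mathcal{S}(\mathcal{A})} \pi_\phi \quad \text{on} \quad \mathcal{H} := \bigoplus_{\phi \in \mathcal{S}(\mathcal{A})} \mathcal{H}_\phi,
\end{equation*}
which is a $*$-representation since each $\pi_\phi$ is and the norms are uniformly bounded by $\|a\|$. Faithfulness is then immediate: if $\pi(a) = 0$ then $\pi_\phi(a)\xi_\phi = 0$ for every state, so $\phi(a^*a) = \|\pi_\phi(a)\xi_\phi\|^2 = 0$ for all $\phi$, which by the separation property forces $a^*a = 0$ and hence $a = 0$ via the $C^*$-identity.

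The main obstacle is the existence of enough states, specifically the Hahn-Banach extension step, because one has to verify that a positive functional defined on the one-dimensional subspace spanned by $a^*a$ genuinely extends to a positive functional on the whole algebra while preserving its norm. In the unital case, this reduces cleanly to extension of an order-unit-dominated functional; in the non-unital case, an extra approximate-identity or unitization argument is needed to control the norm of the extension. Once this step is dispatched, the remainder of the proof is essentially bookkeeping on the direct sum.
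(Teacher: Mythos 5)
The paper does not prove this statement at all; it is quoted verbatim from the cited reference (Conway, \emph{A Course in Operator Theory}, Theorem 7.11), and the proof there is exactly the one you outline: GNS construction for each state, existence of enough states via a Hahn--Banach/order-unit extension (with positivity of the extension following from norm attainment at the unit after unitizing), and faithfulness of the universal representation $\bigoplus_{\phi\in\mathcal{S}(\mathcal{A})}\pi_\phi$ from the $C^*$-identity. Your proposal is correct and is essentially the same standard argument.
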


In fact, more is true once one specifies a reference state. 

\begin{thm}[Gelfand-Naimark-Segal Theorem]
    Let $A$ be a unital $C^*$-algebra and $\psi$ be a state on $A$. There exists a Hilbert space $\mathcal{H}$, a $*$-representation $\pi$, and a normalized cyclic vector $\xi\in \mathcal{H}$ so that $\psi(a) = \<\xi, \pi(a)\xi\>$. 
\end{thm}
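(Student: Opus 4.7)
The plan is to carry out the standard GNS construction, building the Hilbert space as the completion of a quotient of $A$ by the null ideal of $\psi$.

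First, I would define a sesquilinear form on $A$ by $\langle a, b\rangle_\psi := \psi(a^*b)$. The state axioms (linearity, positivity, and normalization $\psi(\one)=1$) give that this form is conjugate-linear in the first slot, linear in the second, and positive semidefinite since $\psi(a^*a) \ge 0$. The Cauchy--Schwarz inequality $|\psi(a^*b)|^2 \le \psi(a^*a)\psi(b^*b)$ follows by the usual trick of expanding $\psi((a+\lambda b)^*(a+\lambda b)) \ge 0$ and optimizing over $\lambda \in \mathbb{C}$.

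Next, I would set $N_\psi := \{a \in A : \psi(a^*a) = 0\}$ and show it is a closed left ideal. Closure under addition and scalar multiplication comes from Cauchy--Schwarz. To see $N_\psi$ is a left ideal, pick $a \in N_\psi$ and $b \in A$; applying Cauchy--Schwarz to $\psi((ba)^*(ba)) = \psi(a^* (b^*b) a)$ together with the standard inequality $b^*b \le \|b\|^2 \one$ (which yields $\psi(a^* b^* b a) \le \|b\|^2 \psi(a^*a)=0$) does the job. Therefore $\langle\,\cdot\,,\,\cdot\,\rangle_\psi$ descends to a genuine inner product on the quotient $A/N_\psi$, whose completion I call $\mathcal{H}_\psi$.

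Then I would define the representation. For $a \in A$, the map $\pi_\psi(a)\colon A/N_\psi \to A/N_\psi$ given by $\pi_\psi(a)(b + N_\psi) := ab + N_\psi$ is well defined precisely because $N_\psi$ is a left ideal. Using $b^*a^*ab \le \|a\|^2 b^*b$ together with positivity of $\psi$ shows $\|\pi_\psi(a)(b+N_\psi)\|^2 \le \|a\|^2 \|b + N_\psi\|^2$, so $\pi_\psi(a)$ is bounded by $\|a\|$ and extends uniquely to a bounded operator on $\mathcal{H}_\psi$. A direct check on the dense subspace $A/N_\psi$ verifies $\pi_\psi$ is linear, multiplicative, and $*$-preserving, hence a $*$-representation.

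Finally, take $\xi := \one + N_\psi \in \mathcal{H}_\psi$. This is normalized since $\langle \xi,\xi\rangle = \psi(\one^*\one) = 1$. It is cyclic because $\pi_\psi(a)\xi = a + N_\psi$, so the orbit $\pi_\psi(A)\xi$ is exactly the dense subspace $A/N_\psi$. And the reconstruction identity is immediate: $\langle \xi, \pi_\psi(a)\xi\rangle = \psi(\one^* a \one) = \psi(a)$. The only subtle step is verifying $N_\psi$ is a left ideal, which in an abstract (possibly non-represented) $C^*$-algebra requires the inequality $b^*b \le \|b\|^2 \one$; this in turn rests on the existence of the continuous functional calculus (equivalently, on spectral considerations in $A$), which is available for any unital $C^*$-algebra.
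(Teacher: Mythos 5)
Your construction is correct and is the standard GNS argument (quotient by the null left ideal $N_\psi$, complete, represent by left multiplication, take $\xi = \one + N_\psi$), including the key points: the left-ideal property via $b^*b \le \|b\|^2\one$, boundedness $\|\pi_\psi(a)\| \le \|a\|$, and the reconstruction identity. The paper states this theorem as standard background without proof, so there is nothing to compare beyond noting that your proof is the canonical one found in the references the paper relies on (e.g.\ Murphy, Takesaki, Bratteli--Robinson).
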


Let $\mathcal{A},\mathcal{B}$ be two unital $C^*$-algebras acting on the Hilbert spaces $H$ and $K$ respectively. One may form their algebraic tensor product in the standard way: 
\[
    \mathcal{A}\odot \mathcal{B} = \left\{\sum_{k=1}^n a_k \otimes b_k \colon n \text{ is finite}\right\}\,,
\]  (where the sums are taken modulo the usual bilinear relations). which is easily checked to be a $*$-subalgebra of $B(H\otimes K)$. We recall the minimal (a.k.a. spatial) tensor product of $\mathcal{A}$ and $\mathcal{B}$ is $\mathcal{A}\otimes \mathcal{B}= \overline{\mathcal{A}\odot \mathcal{B}}^{\|\cdot \|}$. The norm of an arbitrary element $x\in \mathcal{A}\odot \mathcal{B}$ is therefore given by 
    \[
        \|x\|_{\min} = \left\| \sum_{k=1}^n a_k \otimes b_k\right \|\,.
    \] A fundamental result is that the value of this norm is independent of the choice of faithful representations of $\mathcal{A}$ and $\mathcal{B}$ \cite[Proposition 3.3.11]{BrownOzawa}. In particular, the norm on the minimal tensor product is a (in fact, the smallest) \textit{cross norm} in the sense that $\|a\otimes b\|_{\min} = \|a\|_\mathcal{A} \|b\|_\mathcal{B}$ which also satisfies the $C^*$-property \cite[Theorem IV.4.19]{TakesakiI}. Moreover, $A\otimes_{\min} B \cong B\otimes_{\min} A$. 
    
Following the convention of Kadison and Ringrose \cite{KadisonRingroseI} among others, we shall denote by $(\mathcal{A})_\alpha$ the ball of radius $\alpha\ge0$ centered at the origin in $\mathcal{A}$. In particular, $(\mathcal{A})_1 = \{a\in \mathcal{A} \colon \|a\|\le 1\}$. Moreover, we will denote the positive semidefinite elements of a $C^*$-algebra as $(\mathcal{A})_+$.


\subsubsection{Weakly*-Continuous State-valued Functions}
Throughout, $\Omega$ will be a compact Hausdorff space, and $\A$ will be a unital $C^*$ algebra. We will primarily be concerned with the set of continuous functions $A:\Omega \mapsto \A$ which we shall write as $C(\Omega, \A)$. As the later sections will deal with comparing functions and their evaluations, we shall endeavor to always write an element of $\ca$ with capital Latin letters and elements of $\A$ with lowercase Latin letters. 

\begin{lem}\label{lem:ctsinA}
    Let $\Omega$ be a compact Hausdorff space, and let $\A$ be a $C^*$-algebra with a unit. Denote by $C(\Omega, \A)$ the set of continuous functions $F:\Omega \to \A$. The following hold. 
    \begin{enumerate}
        \item If a function $F:\Omega \to \A$ is continuous then the mapping $\omega \mapsto \|F(\omega)\|\in \mathbb{R}^+$ is continuous.
        \item Equipped with pointwise operations, $C(\Omega, \A)$ is a $C^*$-algebra under the norm 
            \[
                \|F\| := \sup_{\omega \in \Omega} \|F(\omega)\|\,.
            \] Moreover, the norm is always achieved at some $\omega_0$. 
        \item The algebraic tensor product $C(\Omega) \odot A$ can be isometrically identified with a dense $*$-subalgebra of $C(\Omega, \A)$. In particular, $C(\Omega)\otimes_{\min} A \cong C(\Omega, A)$. 
    \end{enumerate}
\end{lem}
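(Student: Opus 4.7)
For part (1), the plan is to use the reverse triangle inequality for the $C^*$-norm, which gives $\bigl|\|F(\omega)\| - \|F(\omega_0)\|\bigr| \le \|F(\omega) - F(\omega_0)\|$; thus $\omega \mapsto \|F(\omega)\|$ is the composition of the continuous map $F$ with the $1$-Lipschitz norm $\|\cdot\|$, so continuity is immediate. The fact that the supremum is attained then follows from continuity of $\omega \mapsto \|F(\omega)\|$ together with compactness of $\Omega$.

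For part (2), I would first note that pointwise addition, multiplication, involution, and scalar multiplication make $C(\Omega, \A)$ into a $*$-algebra (continuity of the resulting function is preserved because the algebraic operations on $\A$ are jointly continuous). Completeness in $\|\cdot\|$ is a standard uniform-limit argument: a Cauchy sequence $(F_n)$ converges pointwise to some $F:\Omega \to \A$, and uniform convergence on $\Omega$ together with continuity of each $F_n$ gives continuity of $F$. Finally, the $C^*$-identity is verified fiberwise:
\[
\|F^*F\| = \sup_{\omega \in \Omega} \|F(\omega)^*F(\omega)\|_{\A} = \sup_{\omega \in \Omega} \|F(\omega)\|_{\A}^2 = \|F\|^2,
\]
where the middle equality uses the $C^*$-identity in $\A$.

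For part (3), I would define $\Phi : C(\Omega) \odot \A \to C(\Omega, \A)$ on elementary tensors by $\Phi(f \otimes a)(\omega) = f(\omega)\, a$ and extend linearly. This is plainly a well-defined $*$-homomorphism into $C(\Omega, \A)$. The main step is density, for which I would use a partition-of-unity argument: given $F \in C(\Omega, \A)$ and $\epsilon > 0$, uniform continuity of $F$ on the compact space $\Omega$ (via continuity plus compactness) yields a finite open cover $\{U_i\}_{i=1}^k$ and points $\omega_i \in U_i$ such that $\|F(\omega) - F(\omega_i)\| < \epsilon$ for $\omega \in U_i$. Choosing a partition of unity $\{f_i\} \subset C(\Omega)$ subordinate to $\{U_i\}$ and setting $G = \sum_i f_i \otimes F(\omega_i)$, a pointwise estimate gives $\|F - \Phi(G)\| < \epsilon$, so the image is uniformly dense.

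The last thing to verify is that $\Phi$ is isometric when $C(\Omega) \odot \A$ is given the minimal norm. The cleanest route is to appeal to nuclearity of commutative $C^*$-algebras: since $C(\Omega)$ is nuclear (or one may cite \cite[Thm.~IV.4.14]{TakesakiI}), there is a unique $C^*$-cross norm on $C(\Omega) \odot \A$, namely $\|\cdot\|_{\min}$. Because the sup norm inherited from $C(\Omega, \A)$ is easily checked to be a $C^*$-cross norm on $C(\Omega) \odot \A$, it must coincide with $\|\cdot\|_{\min}$, and $\Phi$ therefore extends to an isometric $*$-isomorphism $C(\Omega) \otimes_{\min} \A \cong C(\Omega, \A)$. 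The main obstacle is this last identification, and the cleanest avoidance is to invoke nuclearity; a self-contained alternative would be to realize $\A \subset B(H)$, note that $C(\Omega, \A) \hookrightarrow B(L^2(\Omega,\mu) \otimes H)$ for any faithful Radon measure, and compute the spatial norm directly, but this is technically heavier.
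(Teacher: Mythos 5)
Your proposal is correct and follows essentially the same route as the paper, which only sketches the argument: the reverse triangle inequality for (1), a standard uniform-limit/fiberwise $C^*$-identity argument for (2), and for (3) exactly the content of the Takesaki references the paper cites (partition-of-unity density of $C(\Omega)\odot\A$ plus uniqueness of the $C^*$-cross norm on a tensor product with a commutative, hence nuclear, factor). The only detail worth making explicit is the injectivity of $\Phi$ on $C(\Omega)\odot\A$ (so that the sup norm restricts to a genuine $C^*$-norm there), which is immediate by evaluating at points of $\Omega$ against a linearly independent set in $\A$.
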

\begin{proof}[Sketch of the proof]
    1. This follows from the reverse triangle inequality. 2. Is a standard exercise. 3. We refer the reader to \cite[Theorems IV.4.14 and IV.7.3]{TakesakiI} for the details. 
\end{proof}

Let $S(\A)$ denote the state space of $\A$, i.e., a linear functional $\psi \in S(\A)$ if and only if $\psi(\one) =1$ and $\psi(a^*a) \ge 0$ for all $a\in A$.  We will be considering maps $\psi:\Omega \to S(\A)$. Let us recall some basic information here. 

\begin{lem}
    Let $\A$ be a unital $C^*$-algebra and let $S(\A)$ denote its state space. Then, $S(\A)$ is a norm-closed convex subset of the unit ball of $\A^*$. Moreover by an application of the Krein-Milman theorem, $S(\A)$ is the weak-* closure of the convex hull of its extreme points. In particular, when $\A$ is separable, $S(\A)$ is a closed, compact metrizable space. 
\end{lem}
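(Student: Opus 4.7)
The plan is to verify each claim by standard $C^*$-algebraic arguments and then combine Banach-Alaoglu with Krein-Milman to obtain the compactness and extremal structure. First, I would establish that $S(\mathcal{A})$ sits inside the closed unit ball $(\mathcal{A}^*)_1$. The key fact is that every positive linear functional on a unital $C^*$-algebra satisfies $\|\psi\| = \psi(\one)$ (see, e.g., \cite[Theorem 4.3.2]{KadisonRingroseI}); since states are by definition normalized so that $\psi(\one) = 1$, this yields $\|\psi\| = 1$, placing $S(\mathcal{A})$ inside the unit sphere of $\mathcal{A}^*$. Convexity is immediate: a convex combination $t\psi_1 + (1-t)\psi_2$ with $t \in [0,1]$ preserves both $\psi(\one) = 1$ and positivity on elements of the form $a^*a$. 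Norm-closedness is likewise straightforward, as uniform convergence $\psi_n \to \psi$ implies pointwise convergence, and both defining conditions pass to the limit.

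Next, I would upgrade norm-closedness to weak-* closedness inside $(\mathcal{A}^*)_1$. Each defining condition of a state, namely $\psi(\one) = 1$ and $\psi(a^*a) \ge 0$ for every $a \in \mathcal{A}$, is the preimage of a closed set under a weak-* continuous evaluation map $\psi \mapsto \psi(x)$. Therefore
\[
S(\mathcal{A}) \;=\; \eval_{\one}^{-1}(\{1\}) \;\cap\; \bigcap_{a\in \mathcal{A}} \eval_{a^*a}^{-1}([0,\infty)) \;\cap\; (\mathcal{A}^*)_1
\]
is weak-* closed. Banach-Alaoglu then supplies the weak-* compactness of $(\mathcal{A}^*)_1$ (see \cite[Theorem 3.15]{Rudin}), so $S(\mathcal{A})$ is weak-* compact and convex. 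Applying Krein-Milman yields that $S(\mathcal{A})$ coincides with the weak-* closed convex hull of its extreme points.

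Finally, for the metrizability claim in the separable case, I would appeal to the standard fact that the closed unit ball of the dual of a separable Banach space is weak-* metrizable. Concretely, if $\{a_n\}_{n\ge 1}$ is a norm-dense sequence in $(\mathcal{A})_1$, then
\[
d(\phi, \psi) \;:=\; \sum_{n=1}^{\infty} 2^{-n}\, |\phi(a_n) - \psi(a_n)|
\]
defines a metric on $(\mathcal{A}^*)_1$ which induces the weak-* topology. Restricting $d$ to $S(\mathcal{A})$ shows the state space is metrizable, and the previous step gives compactness. The main obstacle, such as it is, lies in justifying that the metric $d$ induces the correct topology, which reduces to a routine equicontinuity argument using the density of $\{a_n\}$ and the uniform bound $\|\phi\|, \|\psi\| \le 1$. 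None of these steps require ingredients beyond those already cited in the preliminaries.
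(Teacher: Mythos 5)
Your argument is correct, and it is exactly the standard proof that the paper omits and defers to its references (\cite{Conway_FA, TakesakiI}): norm of a positive functional equals its value at $\one$ on a unital algebra, weak-* closedness of $S(\A)$ inside $(\A^*)_1$ via evaluation maps, Banach--Alaoglu for compactness, Krein--Milman for the extreme-point statement, and the usual countable-sum metric for weak-* metrizability of the dual ball when $\A$ is separable. No gaps worth noting; the only place the unital hypothesis enters is the identity $\|\psi\|=\psi(\one)$, which you cite correctly.
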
 This information is standard and can be found in \cite{Conway_FA, TakesakiI} among other sources. A nice fact we shall use with some frequency is the following:

\begin{lem}\label{lem:weak*_cts_on_C(A)}
    Let $\Omega$ be a compact Hausdorff space and suppose that $\omega\mapsto \psi_\omega$ is a weakly*-continuous mapping into the state space $S(\A)$ with respect to the subspace topology induced by the norm on $\A^*$. Then, for any $A\in C(\Omega,\A)$, the map $\omega \mapsto \psi_\omega(A(\omega))\in \mathbb C$ is continuous. 
\end{lem}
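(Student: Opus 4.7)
The plan is to use a standard triangle-inequality splitting that separates the $\omega$-dependence of $\psi_\omega$ from that of $A(\omega)$. Fix $\omega_0 \in \Omega$ and let $(\omega_\lambda)_\lambda$ be a net in $\Omega$ converging to $\omega_0$. I want to show $\psi_{\omega_\lambda}(A(\omega_\lambda)) \to \psi_{\omega_0}(A(\omega_0))$ in $\mathbb{C}$.

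First I would write
\[
    \bigl|\psi_{\omega_\lambda}(A(\omega_\lambda)) - \psi_{\omega_0}(A(\omega_0))\bigr|
    \le \bigl|\psi_{\omega_\lambda}\bigl(A(\omega_\lambda) - A(\omega_0)\bigr)\bigr| + \bigl|\psi_{\omega_\lambda}(A(\omega_0)) - \psi_{\omega_0}(A(\omega_0))\bigr|.
\]
The first summand is handled by the fact that every state has operator norm $1$, so that it is bounded by $\|A(\omega_\lambda) - A(\omega_0)\|_{\A}$. Continuity of $A : \Omega \to \A$ (in the norm topology on $\A$) then forces this quantity to zero along the net. The second summand goes to zero directly from the hypothesis that $\omega \mapsto \psi_\omega$ is weakly*-continuous, applied to the fixed element $A(\omega_0) \in \A$.

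Combining these two estimates shows $\psi_{\omega_\lambda}(A(\omega_\lambda)) \to \psi_{\omega_0}(A(\omega_0))$, so $\omega \mapsto \psi_\omega(A(\omega))$ is continuous at $\omega_0$. Since $\omega_0$ was arbitrary, the claim follows. There is no real obstacle here; the only subtlety worth noting is the uniform bound $\|\psi_{\omega_\lambda}\| = 1$, which is what lets the norm continuity of $A$ pass through the dual pairing without any uniform equicontinuity assumption on the family $\{\psi_\omega\}$ (beyond their automatic normalization as states).
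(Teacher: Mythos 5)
Your proof is correct and is exactly the ``straightforward $\epsilon/2$ argument'' the paper invokes (the paper only sketches it): split via the triangle inequality, use $\|\psi_{\omega_\lambda}\|=1$ with norm continuity of $A$ for one term, and weak*-continuity at the fixed element $A(\omega_0)$ for the other. Nothing further is needed.
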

\begin{proof}[Sketch of the proof] This follows from a straightforward $\epsilon/2$ argument.
   \end{proof}

\subsubsection{Quasilocal Algebras and Quantum Spin Chains}\label{subsubsec:quasilocal}
Now, we recall the construction of the local and quasilocal algebras of observables. For a more detailed discussion of this construction, we refer the reader to Chapter 2 in \cite{BratteliRobinsonI}. Consider the integer lattice $\Z$ equipped its usual metric $d(n,m) = |n-m|$. To each site $n\in \mathbb{Z}$, we associate a unital $C^*$-algebra $\A_n$ which is an isomorphic copy of a fixed unital $C^*$-algebra, $\A$. I.e. $\A_n \cong \A$ for all $n\in \mathbb{Z}$. Given a finite subset $\Lambda \subset \mathbb{Z}$ one forms the \textit{local algebra (of observables)} 
\[
    \A_{\Lambda} := \bigotimes_{n\in \Lambda} \A_n\,,
\] where the tensor product is the minimal tensor product. We will write $\Lambda \Subset \Z$ to indicate that $\Lambda$ is a finite subset. 

Now, given any two finite volumes $\Lambda_0 \subset \Lambda_1$, there is a canonical inclusion $\iota_{\Lambda_0\to \Lambda}:\A_{\Lambda_0} \hookrightarrow \A_{\Lambda_1}$ given by $a\mapsto a\otimes \one_{\Lambda_1 \setminus \Lambda_0}$ where the latter factor indicates a $\one$ at any site in $\Lambda_1\setminus \Lambda_0$. In this way, one identifies $A_{\Lambda_0}$ with its inclusion into $\A_{\Lambda_1}$, and therefore, we may consider $\A_{\Lambda_0}\subset \A_{\Lambda_1}$ whenever $\Lambda_0 \subset \Lambda_1$. The family of local algebras $\{ \A_{\Lambda} \colon \Lambda \subset \mathbb{Z} \text{ is finite}\}$ is therefore partially ordered by inclusion. Thus, the union forms a $*$-algebra called the algebra of \textit{local observables}
    \[
        \A_{\mathbb{Z}}^{\loc}:= \bigcup_{\Lambda \Subset  \mathbb{Z}} \A_{\Lambda}\,,
    \] which is equipped with a canonical $C^*$-semi-norm $\|\cdot \|$ given by the norms on the $\A_{\Lambda}$'s. Completing $\A_{\mathbb{Z}}^{\loc}$ with respect to this seminorm forms the \textit{algebra of quasilocal observables} $\A_{\mathbb{Z}} := \overline{\bigcup_{\Lambda \Subset \mathbb{Z}} \A_{\Lambda}}^{\|\cdot \|}$. One may regard the quasilocal algebra as the inductive limit of a sequence of local algebras as well (see \cite[Chapter 6]{Murphy}).

    The following theorem was noted by Takeda in \cite{Takeda} (see also \cite{Spiegel}). 

    \begin{thm}[Takeda's Theorem]\label{thm:Takeda}
        Let $\mathbb{Z}$ be the integer lattice and consider the set $\mathcal{P}$ of all finite subsets $\Lambda \subset \mathbb{Z}$. Associate to each site a replica of the unital $C^*$-algebra $\mathcal{A}$, and form the local algebras $\A_{\Lambda} = \bigotimes_{x\in \Lambda}\mathcal{A}_x$, and quasilocal algebra $\A_{\Z}$ as above. Suppose $\psi_{\Lambda}\in \mathcal{S(\A_{\Lambda})}$ is a family of states satisfying the compatibility condition that whenever $\Lambda_0 \subset \Lambda \Subset \mathbb{Z}$ one has
        \begin{equation}
            \psi_{\Lambda} \circ \iota_{\Lambda_0 \to \Lambda} = \psi_{\Lambda_0}\,.
        \end{equation} Then, there is a uniquely defined state $\psi\in \mathcal{S}(\A_{\Z})$ so that $\psi|_{\A_{\Lambda}} = \psi_{\Lambda}$ for all $\Lambda \Subset \mathbb{Z}$.
    \end{thm}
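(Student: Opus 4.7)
The plan is to construct $\psi$ first on the dense $*$-subalgebra of local observables $\A_\Z^{\loc} = \bigcup_{\Lambda \Subset \Z} \A_\Lambda$, verify it is a bounded linear functional there with norm $1$, then extend by continuity to all of $\A_\Z$ using the bounded linear transformation (BLT) theorem, and finally check that the extension is still a state and is unique.

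First I would define $\psi$ on $\A_\Z^{\loc}$ as follows: for $a \in \A_\Z^{\loc}$, pick any $\Lambda \Subset \Z$ with $a \in \A_\Lambda$ and set $\psi(a) = \psi_\Lambda(a)$. The well-definedness is the first sanity check: if $a \in \A_{\Lambda_1} \cap \A_{\Lambda_2}$, then $\Lambda_0 := \Lambda_1 \cup \Lambda_2$ is finite and contains both, so by the compatibility hypothesis applied to the pairs $\Lambda_1 \subset \Lambda_0$ and $\Lambda_2 \subset \Lambda_0$, we get $\psi_{\Lambda_1}(a) = \psi_{\Lambda_0}(\iota_{\Lambda_1 \to \Lambda_0}(a)) = \psi_{\Lambda_0}(a) = \psi_{\Lambda_2}(a)$. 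Linearity is similar: for $a, b \in \A_\Z^{\loc}$ with $a \in \A_{\Lambda_1}$, $b \in \A_{\Lambda_2}$, both lie in $\A_{\Lambda_1 \cup \Lambda_2}$ where $\psi_{\Lambda_1 \cup \Lambda_2}$ is linear. Unitality follows because each $\psi_\Lambda$ sends $\one_\Lambda$ to $1$, and the $\one_\Lambda$'s are all identified under the canonical inclusions with the unit of $\A_\Z^{\loc}$.

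Next I would check boundedness. Each $\psi_\Lambda$ is a state on $\A_\Lambda$ hence has norm $1$, so $|\psi(a)| = |\psi_\Lambda(a)| \leq \|a\|_{\A_\Lambda} = \|a\|_{\A_\Z}$ (the last equality because the norm on $\A_\Z^{\loc}$ is induced from the $C^*$-norms on each local factor). Thus $\psi$ is a norm-one linear functional on the dense subspace $\A_\Z^{\loc} \subset \A_\Z$, and the BLT theorem produces a unique bounded linear extension, still denoted $\psi$, to $\A_\Z$ with $\|\psi\| \leq 1$. Since $\psi(\one) = 1$, we in fact have $\|\psi\| = 1$.

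Finally, I would verify that the extension is positive and thus a state, and prove uniqueness. For positivity, let $a \in (\A_\Z)_+$, so $a = b^*b$ for some $b \in \A_\Z$. Approximate $b$ by a sequence $b_n \in \A_\Z^{\loc}$ with $b_n \to b$ in norm; then $b_n^* b_n \to b^* b$ in norm (multiplication is jointly norm-continuous on bounded sets). Each $b_n^* b_n$ lies in some local algebra $\A_{\Lambda_n}$ and $\psi(b_n^* b_n) = \psi_{\Lambda_n}(b_n^* b_n) \geq 0$, so by continuity $\psi(a) = \lim_n \psi(b_n^* b_n) \geq 0$. For uniqueness, any state $\psi'$ on $\A_\Z$ satisfying $\psi'|_{\A_\Lambda} = \psi_\Lambda$ agrees with $\psi$ on the norm-dense subspace $\A_\Z^{\loc}$; since both are continuous, they agree everywhere. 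There is no real obstacle here; the only subtlety worth flagging is that the minimal tensor product norm is compatible with the canonical inclusions $\iota_{\Lambda_0 \to \Lambda}$ (a standard fact used implicitly in identifying $\A_{\Lambda_0}$ isometrically with its image in $\A_\Lambda$, and ultimately in $\A_\Z$), which is what makes the norm estimate $|\psi_\Lambda(a)| \leq \|a\|_{\A_\Z}$ unambiguous.
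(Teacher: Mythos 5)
Your proof is correct and is the standard inductive-limit argument; the paper itself does not prove this statement but simply cites Takeda (and Spiegel), and those references argue exactly along your lines: define the functional consistently on $\A_\Z^{\loc}$ via the compatibility condition, extend by continuity using $\|\psi_\Lambda\|=1$, and deduce positivity and uniqueness from density. Your flagged subtlety — that the inclusions $\iota_{\Lambda_0\to\Lambda}$ are isometric for the minimal tensor product, so the norm estimate is unambiguous — is the right one to note; alternatively, positivity of the extension follows immediately from the general fact that a norm-one linear functional with $\psi(\one)=1$ on a unital $C^*$-algebra is automatically a state, which lets you skip the $b_n^*b_n$ approximation step.
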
 In particular, Takeda's theorem provides a rigorous justification for the existence of a product state in the thermodynamic limit. 

    We introduce some further notation for the canonical translation symmetry on $\A_{\Z}$. Let $\tau_1:\A_{\Z} \to \A_{\Z}$ denote right translation which acts on pure tensors as follows.  Given an interval $[n,n+m] \subset \Z$  and a collection $a_n,\ldots,a_{n+m} \in \A$, let $a_n\otimes \cdots \otimes a_{n+m}$ denote the element of $\A_{\Z}^{\loc}$ with the operator $a_j$ at position $j$ for each $j=n,\ldots,n+m$.  Then 
    \[
        \tau_1(a_n \otimes \cdots \otimes a_{n+m} ) = b_{n+1}\otimes \cdots \otimes b_{n+m+1}\,\forall n\in \mathbb{Z},\, m\in \mathbb{N},
    \] where $b_j=a_{j-1}$ for each $j=n+1,\ldots,n+m+1$. This definition is clearly consistent with the inclusion mappings in the sense that  $\tau_1(\A_{\Lambda_0}) = \A_{\Lambda_0 +1}$ whenever $\Lambda_0 \Subset \mathbb{Z}$. Note $\tau_k:= (\tau_1)^k$ extends to an automorphism of $\mathcal{A}_{\Z}$ for all $k\in \Z$. 

    In \cite{FannesNachtergaeleWerner}, the authors study the structure of states which are $\tau$-invariant by examining a specific quotient of the quasilocal algebra. Our aim is now to give a similar structure theorem in the case that $\psi$ is a weakly* continuous function of states which co-varies with respect to $\tau$. This leads us to the following section.

\section{Structure of Translation Covariant States}\label{sec:Structure_Theorem}
In this section, we state and prove our main structure theorem, Theorem~\ref{thm:Fundamental}, for the following class of states.

\begin{define}
    Let $\Omega$ be a compact Hausdorff space equipped with a  homeomorphism $\vartheta$. Let $\omega \mapsto \psi_\omega \in \mathcal{S}(\A_{\Z})$ be a weakly*-continuous function. We say that $\psi$ is a \textbf{translation covariant state} (TCVS) if  the following holds
        \begin{equation}
            \psi_{\omega}\circ \tau_{k} = \psi_{\vartheta^{k} \omega}\quad \forall k\in \mathbb{Z};\, \omega\in \Omega\,.
        \end{equation} 
    If additionally $\vartheta$ is an ergodic automorphism of $\Omega$ with respect to a given Radon probability measure on $\Omega$, then we say $\psi$ is an \textbf{ergodic TCVS}.
\end{define}

Speaking colloquially, Theorem~\ref{thm:Fundamental} below generalizes Proposition 2.1 of \cite{FannesNachtergaeleWerner}, showing that (in a sense we shall make rigorous) any translation covariant state factors into iterations of a composition of \textit{random linear maps} in the sense that
    \[
        \psi_\omega(a_1 \otimes \cdots \otimes a_n) = \varrho_\omega \circ E_{a_1, \vartheta \omega} \circ \cdots E_{a_n, \vartheta^n \omega}(e_{\vartheta^{n+1}\omega})\,,
    \] where the $E_{a, \omega}$'s are linear maps in an appropriate spaces $\B_\omega^+$ and $e_{\omega}$ is a random element of this space, and $\varrho_\omega$ is a random linear functional in $(\B_\omega^+)^*$.
\subsection{Setup and Basic Properties}
Let $\mathcal{A}$ be a unital $C^*$-algebra and consider the associated spin chain algebra $\A_{\mathbb{Z}}$ equipped with the group action of translation $\Z \ni k \mapsto \tau_k \in \Aut(\A_{Z})$ as in Section~\ref{sec:C*algs}. Let $\Omega$ be a compact Hausdorff space equipped with a Radon probability measure $\P$ and an ergodic measure preserving homeomorphism $\vartheta$. As we will be considering mappings from $\Omega \to \mathcal{A}$, we recall our convention that elements of $C(\Omega, \mathcal{A})$ are written with capital Latin letters and elements of $\mathcal{A}$ with lower case Latin letters, the exception being the unit, which we write as $\one$. 

 Let us record an interesting consequence of Takeda's theorem: the existence of a co-variant product state. 
\begin{prop}\label{prop:existence_TCPS}
    Let $\Omega$ be a compact Hausdorff space equipped with an (ergodic) homeomorphism $\vartheta$, and suppose that $\omega \mapsto \eta_\omega \in \mathcal{S}(\A)$ is a norm-continuous mapping. Form $\xi^{\Lambda}_\omega := \bigotimes_{x\in \Lambda} \eta_{\vartheta^x \omega}\in \mathcal{S}(\A_{\Lambda})$ for all $\Lambda \Subset \mathbb{Z}$. Then, there is a unique state $\xi_\omega$ extending the $\xi_\omega^{\Lambda}$'s that is weakly*-continuous in $\omega$, translation co-variant, 
    $$ \xi_\omega(\tau_1(a))=\xi_{\vartheta \omega}(a)\,,$$
    and has the property that for any $X,Y\Subset \mathbb{Z}$ with $X\cap Y = \emptyset$, one has
        \begin{equation}\label{eqn:TCPS_factor}
            \xi_{\omega}(a_Xb_Y) = \xi_{\omega}(a_X) \xi_{\omega}(b_Y),
        \end{equation} for all $a_X\in \A_X$ and all $B_Y\in \A_Y$. 
\end{prop}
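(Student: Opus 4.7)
The plan is to produce $\xi_\omega$ by applying Takeda's theorem (Theorem~\ref{thm:Takeda}) to the net $\{\xi^{\Lambda}_\omega\}_{\Lambda\Subset\Z}$ of finite-volume product states, and then verify continuity, covariance, and factorization by working first on the local algebra $\A_\Z^{\loc}$ and extending by density.

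First I would check the compatibility hypothesis of Takeda's theorem. If $\Lambda_0\subset\Lambda$ and $a\in\A_{\Lambda_0}$, then $\iota_{\Lambda_0\to\Lambda}(a)=a\otimes\one_{\Lambda\setminus\Lambda_0}$, and because each $\eta_{\vartheta^x\omega}$ is unital, evaluating $\bigotimes_{x\in\Lambda}\eta_{\vartheta^x\omega}$ on this element simply kills the factors outside $\Lambda_0$ and returns $\xi^{\Lambda_0}_\omega(a)$. Thus Takeda's theorem furnishes a unique state $\xi_\omega\in\mathcal{S}(\A_\Z)$ with $\xi_\omega|_{\A_\Lambda}=\xi^\Lambda_\omega$ for every $\Lambda\Subset\Z$, settling existence and uniqueness once the other properties are established.

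Factorization is immediate on the local algebra: given $X,Y\Subset\Z$ with $X\cap Y=\emptyset$, choose $\Lambda\supset X\cup Y$; since $\xi^\Lambda_\omega$ is a product state, it multiplies on elements supported in the disjoint subsets $X$ and $Y$, which yields~(\ref{eqn:TCPS_factor}) for local $a_X,b_Y$, hence for all quasilocal $a_X\in\A_X$, $b_Y\in\A_Y$ by continuity of $\xi_\omega$ in each slot. Translation covariance is similarly checked first on pure tensors $a=a_n\otimes\cdots\otimes a_{n+m}$: one has $\tau_1(a)=a_n\otimes\cdots\otimes a_{n+m}$ shifted by one, so
\[
\xi_\omega(\tau_1(a)) \;=\; \prod_{j=n}^{n+m}\eta_{\vartheta^{j+1}\omega}(a_j) \;=\; \prod_{j=n}^{n+m}\eta_{\vartheta^{j}(\vartheta\omega)}(a_j) \;=\; \xi_{\vartheta\omega}(a),
\]
and this identity extends by linearity and density to all of $\A_\Z$ since both sides are bounded linear functionals of norm one in $a$.

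The main obstacle, and the place I would spend the most care, is the weak*-continuity of $\omega\mapsto\xi_\omega$. On a pure tensor $a=a_{n_1}\otimes\cdots\otimes a_{n_k}\in\A_\Lambda$, the map $\omega\mapsto\xi_\omega(a)=\prod_i\eta_{\vartheta^{n_i}\omega}(a_{n_i})$ is a finite product of continuous $\C$-valued functions (using that $\vartheta$ is a homeomorphism and $\omega\mapsto\eta_\omega$ is norm-continuous, hence weak*-continuous), so it is continuous. By linearity and by the cross-norm property of the minimal tensor product recalled in Section~\ref{sec:C*algs}, continuity extends to all of $\A_\Lambda$ via a standard $\epsilon/3$ approximation by finite sums of pure tensors, using that $\|\xi^\Lambda_\omega\|=1$. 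Finally, for an arbitrary $a\in\A_\Z$ and $\epsilon>0$, pick $a'\in\A_{\Lambda'}^{\loc}$ with $\|a-a'\|<\epsilon/3$; since every $\xi_\omega$ has norm $1$, the estimate
\[
|\xi_\omega(a)-\xi_{\omega_0}(a)| \;\le\; |\xi_\omega(a-a')| + |\xi_\omega(a')-\xi_{\omega_0}(a')| + |\xi_{\omega_0}(a'-a)| \;<\; \tfrac{2\epsilon}{3} + |\xi_\omega(a')-\xi_{\omega_0}(a')|
\]
reduces continuity at $\omega_0$ to the local case already handled, completing the proof.
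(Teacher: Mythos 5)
Your proposal is correct and follows essentially the same route as the paper: existence via Takeda's theorem, covariance and factorization verified on pure tensors and extended by density, and weak*-continuity established on local elements and then extended using $\sup_\omega\|\xi_\omega\|=1$. The only cosmetic difference is that the paper proves continuity on local elements via an explicit telescoping estimate, whereas you invoke continuity of finite products plus an $\epsilon/3$ density argument, which amounts to the same thing.
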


\begin{define}\label{def:TCPS}
    We call the state $\xi_\omega$ defined as in Proposition~\ref{prop:existence_TCPS} a \textbf{translation covariant product state}. 
\end{define}

\begin{proof}[Proof of Proposition~\ref{prop:existence_TCPS}]
    For fixed $\omega\in \Omega$, we observe that $\xi_\omega$ exists and is a state on $\A_{\Z}$ by Takeda's theorem. From the definition of $\xi_\omega$ it follows that $\xi_\omega(a)=\xi_\omega^\Lambda(a)$ for any observable $a\in \A_\Lambda$. Translation co-variance, the factor property, and weak-$*$ continuity follow by direct computation using sufficiently large finite volumes.\qedhere
\end{proof} 

In view of Lemma~\ref{lem:ctsinA}, if $\mathcal{A}_0$ is some $C^*$-sub-algebra of $\A_{\mathbb{Z}}$, we may regard the continuous functions into $\mathcal{A}_0$ as a sub-algebra of $C(\Omega,\A_{\Z})$, namely by identifying $C(\Omega) \otimes_{\min} \mathcal{A}_0$ with the appropriate sub-algebra in $C(\Omega, \A_{\Z})$. With this in mind, it will be convenient for us to introduce the notion of a \textbf{cut site} as follows: let $x\in \Z+ \frac12$ be a half-integer and let $\A^{+}_{x} = \A_{[x,\infty)\cap \Z}$ denote the sub-algebra of $\A_\Z$ associated to making a cut at $x$ and including all sites greater than $x$. Similarly, let $\A^-_x = \A_{(-\infty, x)\cap \mathbb{Z}}$ be the sub-algebra of $\A_Z$ associated to sites less than $x$. For example, $\A_{1/2}^+ = \A_{\N}$ and $\A^{-}_{1/2} = \A_{-\N \cup \{0\}}$. Below we will introduce a similar equivalence relation to the one defined in \cite{FannesNachtergaeleWerner}. We will focus on the relation with respect to $\A_x^+$, but one can define analogous relations for $\A_x^-$. 

Now, let $\psi$ be a translation co-variant state as above, and consider the continuous functions $C(\Omega, \A^{+}_x)$ for a fixed $x$. Declare two functions $A,B \in C(\Omega, \A^{+}_{x})$ equivalent, written $A\sim_x B$, if and only if for all $\omega \in \Omega$ one has 
    \begin{equation}
        \psi_{\omega}( C^-(\omega) \otimes (A(\omega)-B(\omega)))=0,\,\, \forall C^-\in C(\Omega, \A^{-}_x)\,.
    \end{equation} 
    
    \textit{A priori}, these equivalence classes of functions depend on $x$. We shall write $[A]^+_x$ for the equivalence class of $A\in C(\Omega, \A^+_x)$. Furthermore, for any function $C^-\in C(\Omega, \A^-_{x})$, we note that the set\\ $\mathfrak{G}_{C^{-}}:=\{F\in C(\Omega, \A^+_x) \colon \psi(C^{-}\otimes F) = 0\}$ is a closed subspace of $C(\Omega, \A^+_x)$. This follows from the fact that $C^- \otimes F$ is a continuous mapping into $\A_{\Z}$ and the images of $\psi$ are bounded linear operators. Hence, the intersection 
        \begin{equation}
            I_{x}:= \bigcap_{C\in C(\Omega, \A_{x}^-)} \mathfrak{G}_{C} \,,
        \end{equation} is also a closed subspace. Let us write $\B^+_x:= C(\Omega, \A^+_x) / I_x$ which is a Banach space. 

    Note that we can take a point-wise approach as well. Given $a,b\in \A^+_x$, we define the equivalence relation $a\sim_{x,\omega} b$ if and only if $\psi_\omega(c\otimes (a-b)) = 0$ for all $c\in \A^{-}_x$. By essentially the same argument as above, the set $\mathfrak{g}_{c,\omega} = \{a\in \A^+_x\colon \psi_\omega(c\otimes a) =0\}$ is a closed subspace of $\A^{+}_x$ and so is the common intersection 
    \begin{equation}
        I_{x,\omega} := \bigcap_{c\in \A_{x}^-} \mathfrak{g}_{c,\omega}\,.
    \end{equation} Form the vector space $\B^{+}_{x,\omega} = \A^{+}_x/I_{x,\omega}$ which we note is a Banach space as well. 
    
    \begin{note}\label{note:fiber_notation}
        For an equivalence class in $\B^{+}_{x,\omega}$, as constructed above, we shall write $[a,\omega]_x$. 
    \end{note}

\begin{lem}\label{lem:fibers_nonzero}
    For every $\omega \in \Omega$, the dimension of $\mathcal{B}_{x,\omega}^+$ is at least one. 
\end{lem}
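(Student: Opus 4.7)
The plan is to exhibit a single element of $\A^+_x$ whose equivalence class in $\B^+_{x,\omega}$ is nonzero; the natural candidate is the unit $\one_{\A^+_x}$. Recall that $[\one_{\A^+_x}, \omega]_x = 0$ in $\B^+_{x,\omega}$ would mean $\one_{\A^+_x} \in I_{x,\omega}$, i.e., $\psi_\omega(c \otimes \one_{\A^+_x}) = 0$ for every $c \in \A^-_x$. To refute this, I simply choose the test element $c = \one_{\A^-_x}$.

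The key computation is then that under the identification $\A_\Z \cong \A^-_x \otimes \A^+_x$ coming from the bipartition, one has $\one_{\A^-_x} \otimes \one_{\A^+_x} = \one_{\A_\Z}$. This is a direct consequence of the construction of the quasi-local algebra as the completion of the inductive system of local algebras, where the inclusion maps $\iota_{\Lambda_0 \to \Lambda_1}$ adjoin units at sites in $\Lambda_1 \setminus \Lambda_0$ (see Section~\ref{subsubsec:quasilocal}). Since $\psi_\omega$ is a state, we have $\psi_\omega(\one_{\A_\Z}) = 1$, hence
\[
    \psi_\omega(\one_{\A^-_x} \otimes \one_{\A^+_x}) = 1 \neq 0.
\]
This shows $\one_{\A^+_x} \notin \mathfrak{g}_{\one_{\A^-_x}, \omega}$ and therefore $\one_{\A^+_x} \notin I_{x,\omega}$, so $[\one_{\A^+_x}, \omega]_x$ is a nonzero vector in $\B^+_{x,\omega}$, which gives $\dim \B^+_{x,\omega} \geq 1$ as claimed.

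There is no substantive obstacle in this argument; it is purely a matter of unwinding the definitions and using the normalization property of a state. The only point that warrants care is the identification of the units across the bipartition tensor factorization, but this is guaranteed by the construction of $\A_\Z$ via the inductive limit. Note that an entirely analogous argument, swapping the roles of $\A^+_x$ and $\A^-_x$, shows that the corresponding quotient space on the left half-chain also has dimension at least one, which will be useful in subsequent considerations.
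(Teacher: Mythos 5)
Your proof is correct and follows essentially the same route as the paper: both arguments come down to the observation that $\psi_\omega(\one_{\A^-_x}\otimes \one_{\A^+_x}) = \psi_\omega(\one_{\A_\Z}) = 1 \neq 0$, so the class of the unit cannot vanish (the paper phrases this as a contradiction with $\psi_\omega$ being the zero functional, you phrase it as directly exhibiting a nonzero class). No gap to report.
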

\begin{proof}
    Suppose there is an $\omega$ for which $\B_{x,\omega}^+$ is zero. Then, for every $a\in \A_x^+$, one has $\psi_\omega(\one \otimes a) = 0$. In particular this means $\psi_\omega(\one_{\A_{\Z}}) = 0$, which means $\psi_\omega$ is the zero linear functional, a contradiction. 
\end{proof}

    The following lemma justifies this choice of notation in~\ref{note:fiber_notation} above. 
\begin{lem}\label{lem:quot_eval}
        For every cut site $x$, and for every $\omega\in \Omega$, there exists a contractive linear map $p_{x,\omega}$ making the following diagram commute: 
        \begin{center}
            \begin{tikzcd}
                {C(\Omega, \mathcal{A}^{+}_x)} \arrow[r, "\mathrm{eval}_\omega"] \arrow[d, "\sim_x"'] & \mathcal{A}^+_x \arrow[d, "\sim_{x,\omega}"] \\
                \B_x^+ \arrow[r, "{p_{x,\omega}}", dashed]                       & {\B^+_{x,\omega}}                
        \end{tikzcd}\,;
    \end{center}
    where the vertical arrows denote the quotient maps.
\end{lem}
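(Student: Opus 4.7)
The natural candidate is
$$
p_{x,\omega}\bigl([A]_x^+\bigr) := [A(\omega),\omega]_x,
$$
so the plan is to check that this assignment is (i) well-defined on equivalence classes and (ii) a contraction with respect to the two quotient norms. Commutativity of the square is then automatic from the definition, since $\pi_\omega$ is literally evaluation at $\omega$ and the right-hand quotient is exactly $\sim_{x,\omega}$.

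For well-definedness, suppose $A,B\in C(\Omega,\mathcal{A}^+_x)$ satisfy $A\sim_x B$; I need to verify $A(\omega)\sim_{x,\omega}B(\omega)$ for each fixed $\omega$. Fix $c\in\mathcal{A}^-_x$ and consider its constant extension $C\in C(\Omega,\mathcal{A}^-_x)$ given by $C(\omega')\equiv c$ for all $\omega'\in\Omega$. This constant map is trivially continuous and lives in the appropriate function algebra (using the identification from Lemma~\ref{lem:ctsinA} that $C(\Omega)\otimes_{\min}\mathcal{A}^-_x\cong C(\Omega,\mathcal{A}^-_x)$, with constants coming from the scalar factor $\C\cdot\mathbbm{1}\subset C(\Omega)$). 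The hypothesis $A\sim_x B$ then yields
$$
\psi_\omega\!\bigl(c\otimes (A(\omega)-B(\omega))\bigr)=\psi_\omega\!\bigl(C(\omega)\otimes (A(\omega)-B(\omega))\bigr)=0,
$$
and since $c\in\mathcal{A}^-_x$ was arbitrary, $A(\omega)-B(\omega)\in I_{x,\omega}$, as required. Thus $p_{x,\omega}$ is a well-defined linear map $\B^+_x\to\B^+_{x,\omega}$.

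For contractivity, recall that $\B^+_x$ and $\B^+_{x,\omega}$ are equipped with the canonical quotient norms
$$
\|[A]_x^+\|=\inf_{A'\sim_x A}\sup_{\omega'\in\Omega}\|A'(\omega')\|,
\qquad
\|[a,\omega]_x\|=\inf_{a'\sim_{x,\omega} a}\|a'\|.
$$
For any representative $A'$ of the class $[A]^+_x$, the preceding step shows $A'(\omega)$ is a representative of $[A(\omega),\omega]_x$, hence
$$
\|p_{x,\omega}([A]^+_x)\|=\|[A(\omega),\omega]_x\|\leq \|A'(\omega)\|\leq \sup_{\omega'\in\Omega}\|A'(\omega')\|.
$$
Taking the infimum over all such $A'$ yields $\|p_{x,\omega}([A]^+_x)\|\leq \|[A]^+_x\|$, i.e., $p_{x,\omega}$ is a contraction.

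There is no real obstacle here; the only subtle point is making sure the constant function $C(\omega')\equiv c$ actually belongs to $C(\Omega,\mathcal{A}^-_x)$ rather than merely to $C(\Omega,\mathcal{A}_\Z)$, which is immediate from the identification in Lemma~\ref{lem:ctsinA}. The contraction estimate for the quotient norm is then a routine consequence of the universal property of the quotient, so the entire argument is short once the constant-extension trick is in hand.
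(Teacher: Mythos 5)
Your proof is correct and follows essentially the same route as the paper: the paper establishes $\pi_\omega(I_x)\subset I_{x,\omega}$ by producing, for each $c\in\mathcal{A}^-_x$, a continuous function through $c$ at $\omega$ (your constant-extension is exactly such a choice), and then derives contractivity from the same comparison of quotient norms that you phrase via representatives. No gaps; the argument matches the paper's in substance.
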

In this way, we can view $p_{x,\omega}$ as ``evaluation at $\omega$" for the equivalence class $[A]_{x}^+$. 

\begin{proof}[Proof of Lemma~\ref{lem:quot_eval}]
    As a preliminary, observe that $\mathrm{eval}_\omega(I_x) \subset I_{x, \omega}$. 
     Define $p_{x,\omega}([A]_{x}^+) = [A(\omega),\omega]_{x}$, which is well defined by the inclusion $\mathrm{eval}_\omega(I_x)\subset I_{x, \omega}$. Now, to see that $p_{x,\omega}$ is contractive, note the following
        \begin{align*}
            \|p_{x,\omega}[A]_x^+\|_{\B^{+}_{x,\omega}} = \inf_{b\in I_{x,\omega}} \|\mathrm{eval}_\omega(A) + b\|_{\A^+_x}&\le \inf_{b\in \mathrm{eval}_\omega(I_x)} \|\mathrm{eval}_\omega(A) + b\|_{\A^+_x}= \inf_{B\in I_x}\|\mathrm{eval}_\omega (A+B)\|_{\A^+_x}\\
            &\le \inf_{B\in I_x}\|A+B\|_{C(\Omega, \A^{+}_x)}= \|[A]^+_x\|_{\B_x}\,,
        \end{align*} where the first inequality follows from our observation that $\mathrm{eval}_\omega(I_x)\subset I_{x,\omega}.$ 
\end{proof}

\begin{rmk}
    Note that one should expect $I_{x,\omega} \supsetneq
    \mathrm{eval}_\omega(I_x)$, in general. Indeed, given $a\in I_{x,\omega}$, it is not clear how to construct a continuous function $A(\omega)$ so that $A(\omega) = a$, and $\psi_\omega(C^{-}(\omega) \otimes A(\omega))=0$ for all $\omega$ and $C^{-}\in C(\Omega, \A^{-}_x)$. 
\end{rmk}

We will now turn to a more detailed discussion of the subspaces $I_{x,\omega}$. For any $\omega\in \Omega$, one can view the $I_{x,\omega}$ as the intersection over the kernels of a family of seminorms given by 
    \begin{equation}\label{eqn:seminorms}
        \begin{split}
            \rho_{\omega,c}&: \A_{x}^+ \to \mathbb{R}^+\,,\\
            &\quad a\longmapsto |\psi_\omega(c \otimes a)|,
        \end{split} 
    \end{equation} so that $\mathfrak{g}_{c,\omega} = \rho_{\omega, c}^{-1}\{0\}.$ It is not difficult to see that the $\rho_{\omega, c}$ are linearly homogeneous in the sense that for any $\lambda \in \C$, one has 
        \[
           |\lambda| \rho_{\omega, c} = \rho_{\omega, \lambda c}\,,
        \] for any $c\in \A_{x}^-$. Thus, we obtain the equalities
    \begin{equation}
             I_{x,\omega} = \bigcap_{c\in \A_{x}^-} \mathfrak{g}_{c,\omega} = \bigcap_{c\in (\A_x^{-})_1} \mathfrak{g}_{c,\omega}\,.
    \end{equation} 

    One can similarly form a class of seminorms on $\A_x^-$ via $\sigma_{\omega,a} = |\psi_\omega(\, \cdot \, \otimes a)|$. By similar arguments one can form $\B_{x,\omega}^-$ as the quotient of $\A_x^-$ by the intersection $J_{x,\omega}:= \bigcap_{a\in \A_x^+} \sigma_{\omega, a}^{-1}(\{0\})\,.$ An analogue of equation~(\ref{eqn:seminorms}) holds in this context as well. Before stating the next result, let us record an elementary fact for convenience.

\begin{lem}\label{lem:equi_sup}
    Let $\Omega$ be a compact Hausdorff space and let $\mathfrak{X}$ be a Banach space. If $\mathcal{F}\subset C(\Omega, \mathfrak{X})$ is a uniformly bounded, equicontinuous family of functions, then the assignment $\omega \mapsto \sup_{f\in \mathcal{F}} \|f(\omega)\|$ is continuous. 
\end{lem}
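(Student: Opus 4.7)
The plan is to show directly that the real-valued function $g(\omega) := \sup_{f \in \mathcal{F}} \|f(\omega)\|_{\X}$ is continuous at each point $\omega_0 \in \Omega$, using only the reverse triangle inequality together with the equicontinuity hypothesis. Uniform boundedness is not needed for the continuity estimate itself; it serves only to guarantee that $g$ is finite-valued, so that the statement $|g(\omega) - g(\omega_0)| < \epsilon$ even makes sense.

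Concretely, fix $\omega_0 \in \Omega$ and $\epsilon > 0$. Equicontinuity of $\mathcal{F}$ at $\omega_0$ produces a neighborhood $U \ni \omega_0$ with the property that
\begin{equation*}
\|f(\omega) - f(\omega_0)\|_{\X} < \epsilon \quad \text{for all } \omega \in U \text{ and all } f \in \mathcal{F}.
\end{equation*}
The reverse triangle inequality upgrades this to $\bigl|\|f(\omega)\|_{\X} - \|f(\omega_0)\|_{\X}\bigr| < \epsilon$, uniformly in $f \in \mathcal{F}$. Thus $\|f(\omega)\| \le \|f(\omega_0)\| + \epsilon \le g(\omega_0) + \epsilon$ for every $f$; taking the supremum over $\mathcal{F}$ gives $g(\omega) \le g(\omega_0) + \epsilon$. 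Swapping the roles of $\omega$ and $\omega_0$ yields $g(\omega_0) \le g(\omega) + \epsilon$, and hence $|g(\omega) - g(\omega_0)| \le \epsilon$ for every $\omega \in U$.

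There is no real obstacle here; the only subtlety worth flagging is that the supremum must be known a priori to be finite, which is exactly what uniform boundedness delivers (otherwise the reverse triangle estimates carry through but $g$ could be constantly $+\infty$). Compactness of $\Omega$ plays no role in the proof beyond providing the ambient setting, though it would, of course, immediately strengthen the pointwise continuity just established to uniform continuity.
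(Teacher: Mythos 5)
Your proof is correct: the uniform (in $f$) estimate $\bigl|\,\|f(\omega)\|-\|f(\omega_0)\|\,\bigr|<\epsilon$ on the equicontinuity neighborhood immediately gives both one-sided bounds on the suprema, and your remarks about the roles of uniform boundedness (finiteness of the supremum) and compactness (not needed for pointwise continuity) are accurate. The paper leaves this lemma as an exercise, and your argument is exactly the standard one it intends.
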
 We leave the proof as an exercise to the reader. With Lemma~\ref{lem:equi_sup} in hand, we aim to show the following.
\begin{lem}\label{lem:n-norm}
    Let $x\in \mathbb{Z}+\frac 12$ be a cut site and fix $\omega\in \Omega$. Let $\mathfrak{n}_\omega:\A_{x}^+ \to \mathbb{R}^+$ via 
    \[
        \mathfrak{n}_\omega (a) = \sup_{c\in (\A_{x}^+)_1} \rho_{\omega, c}(a) = \sup_{c\in (\A_x^+)_1} |\psi_\omega(c\otimes a)| \, .
    \] Then, $\mathfrak{n}$ is a subadditive, linearly homogeneous, positive semi-definite function which is constant on equivalence classes modulo $I_{x,\omega}$. Thus $\mathfrak{n}$  descends to a norm on $\B_{x,\omega}^+$. This norm, $\|\cdot \|_{\mathfrak{n}_\omega}$, is majorized by the quotient norm. Moreover, for all $a\in \A_{x}^+$, the mapping $\omega\mapsto \|[a,\omega]_x\|_{\mathfrak{n}_\omega}$ is continuous in $\omega$. 
\end{lem}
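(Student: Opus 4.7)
The plan is to dispatch the algebraic portions of the lemma quickly and then devote most of the effort to establishing the continuity of $\omega\mapsto\|[a,\omega]_x\|_{\mathfrak{n}_\omega}$, which is the substantive content.

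For the algebraic properties, note that $\mathfrak{n}_\omega$ is a pointwise supremum over $c\in(\A_x^-)_1$ of the seminorms $\rho_{\omega,c}(\cdot)=|\psi_\omega(c\otimes\cdot)|$ on $\A_x^+$; since each $\rho_{\omega,c}$ is the absolute value of a bounded linear functional, subadditivity, absolute homogeneity, and positive semi-definiteness of $\mathfrak{n}_\omega$ follow immediately. Its vanishing set is exactly $I_{x,\omega}$ by the definition of the latter, and subadditivity then forces $\mathfrak{n}_\omega(a+b)=\mathfrak{n}_\omega(a)$ whenever $b\in I_{x,\omega}$, so $\mathfrak{n}_\omega$ descends to a genuine norm on $\B^+_{x,\omega}$. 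Majorization by the quotient norm is obtained from $|\psi_\omega(c\otimes a)|\le\|c\|\|a\|\le\|a\|$, using $\|\psi_\omega\|=1$ and the cross-norm property of the minimal tensor product, combined with constancy on cosets: $\|[a,\omega]_x\|_{\mathfrak{n}_\omega}=\mathfrak{n}_\omega(a+b)\le\|a+b\|$ for any $b\in I_{x,\omega}$, so taking the infimum yields the claim.

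For the continuity statement, my plan is to invoke Lemma~\ref{lem:equi_sup} with $\mathfrak{X}=\C$ applied to the family $\mathcal F:=\{f_c:c\in(\A_x^-)_1\}\subset C(\Omega)$ defined by $f_c(\omega):=\psi_\omega(c\otimes a)$. Each $f_c$ lies in $C(\Omega)$ by Lemma~\ref{lem:weak*_cts_on_C(A)} applied to the constant $C(\Omega,\A_\Z)$-valued function $\omega\mapsto c\otimes a$, and the uniform bound $\|f_c\|_\infty\le\|a\|$ is immediate. The remaining task, and the principal obstacle, is establishing equicontinuity of $\mathcal F$ at each $\omega_0\in\Omega$: one needs $\sup_{c\in(\A_x^-)_1}|\psi_\omega(c\otimes a)-\psi_{\omega_0}(c\otimes a)|$ small for $\omega$ near $\omega_0$. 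Equivalently, this amounts to norm-convergence in $(\A_x^-)^*$ of the functionals $\phi_{a,\omega}(c):=\psi_\omega(c\otimes a)$ as $\omega\to\omega_0$, which is strictly stronger than the weak*-continuity granted by hypothesis.

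To attack the equicontinuity, my strategy is to reduce to the positive cone via $C^*$-decomposition. For a positive $a'\ge 0$, the functional $\phi_{a',\omega}$ is a positive functional on $\A_x^-$, so $\mathfrak{n}_\omega(a')=\phi_{a',\omega}(\one)=\psi_\omega(\one\otimes a')$, which is continuous directly by the weak*-continuity of $\psi_\omega$ applied to the element $\one\otimes a'\in\A_\Z$. For a general $a$, I would decompose $a=a_0+ia_1$ into self-adjoint components and each $a_j=a_j^+-a_j^-$ spectrally via functional calculus, then bound $|\psi_\omega(c\otimes a)-\psi_{\omega_0}(c\otimes a)|$ uniformly in $c\in(\A_x^-)_1$ by combining the positive-case estimates with the Cauchy--Schwarz upper bound $\mathfrak{n}_\omega(a)^2\le\psi_\omega(\one\otimes a^*a)$ (a continuous majorant in $\omega$) and the general lower semi-continuity of $\mathfrak{n}_\omega$ (which is automatic from the supremum structure). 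The hardest part, which I expect to be the main technical obstacle, is that for non-self-adjoint or sign-indefinite self-adjoint $a$, the Jordan decomposition of $\phi_{a,\omega}$ is itself $\omega$-dependent, and one must ensure this $\omega$-dependence does not interfere with the norm-continuity; careful bookkeeping via the positive-cone reduction, combined with the trapped-between-continuous-bounds argument, should allow one to conclude upper semi-continuity and thereby close the proof.
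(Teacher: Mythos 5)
Your treatment of the algebraic assertions (subadditivity, homogeneity, vanishing exactly on $I_{x,\omega}$, constancy on cosets, and majorization by the quotient norm via $|\psi_\omega(c\otimes a)|\le\|c\|\,\|a+b\|$ and an infimum over $b\in I_{x,\omega}$) is correct and coincides with the paper's. Your framing of the continuity claim is also the paper's: show the family $\{f_c(\omega)=|\psi_\omega(c\otimes a)|\colon c\in(\A_x^-)_1\}$ is uniformly bounded and equicontinuous and invoke Lemma~\ref{lem:equi_sup}; and your observation that for $a\ge 0$ the supremum is attained at $c=\one$, so $\mathfrak{n}_\omega(a)=\psi_\omega(\one\otimes a)$ is continuous outright, is a legitimate shortcut for the positive case.

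The gap is in the passage from positive $a$ to general $a$, and the tools you list cannot bridge it. Continuity of $\mathfrak{n}_\omega(a_j)$ for each positive piece $a_j$ does not transfer to $a$, because the supremum does not respect the decomposition; and lower semicontinuity of $\omega\mapsto\mathfrak{n}_\omega(a)$ together with the continuous Cauchy--Schwarz majorant $\psi_\omega(\one\otimes a^*a)^{1/2}$ gives no upper semicontinuity, since that majorant is not tight --- being trapped between a continuous upper bound and lower semicontinuity proves nothing. Your closing sentence (``careful bookkeeping \dots should allow one to conclude'') is precisely where the entire content of the lemma sits, and the obstacle you name (the $\omega$-dependence of the Jordan decomposition of $\phi_{a,\omega}$) is in fact a red herring: the paper never decomposes the functional at all. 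Instead it decomposes the \emph{fixed elements}: write $a=a_1-a_2+ia_3-ia_4$ with $a_j\ge 0$ (independent of $\omega$), split $c$ into Hermitian real and imaginary parts, and use $-\|h\|\one\le h\le\|h\|\one$ together with positivity of $c\mapsto\psi_\omega(c\otimes a_j)$ to obtain the increment bound~(\ref{eqn:f_fam_equi}), namely $\sup_{c\in(\A_x^-)_1}|(\psi_\omega-\psi_{\omega_0})(c\otimes a)|\le 2\sum_{j=1}^4|(\psi_\omega-\psi_{\omega_0})(\one\otimes a_j)|$, whose right-hand side involves only the four fixed observables $\one\otimes a_j$ and is therefore small near $\omega_0$ by weak*-continuity; equicontinuity follows uniformly in $c$, and Lemma~\ref{lem:equi_sup} finishes. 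Without an estimate of this type --- a bound on the increment, uniform over the unit ball of $\A_x^-$, by finitely many weak*-controlled quantities --- the proof does not close, because weak*-continuity of $\omega\mapsto\psi_\omega$ is strictly weaker than the norm-continuity of the partial functionals $c\mapsto\psi_\omega(c\otimes a)$ that you correctly identified as the real target.
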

\begin{proof}
    It follows from the reverse triangle inequality that the seminorms defined in~(\ref{eqn:seminorms}) are constant on the equivalence classes in $\B_{x,\omega}^+$. 
    It follows easily that $\mathfrak{n}_\omega(a) = \mathfrak{n}_\omega(a+b)$ for any $b\in I_{x,\omega}$. Suppose now that $\mathfrak{n}_\omega(a)=0$. This means that $|\psi_\omega(c\otimes a)| = 0$ for all $c\in (\A_{x}^-)_1$. Thus, by linear homogeniety of the seminorms $\rho_{\omega,c}$, we obtain $a\sim_{x,\omega} 0$. Lastly, note that for any $b\in I_{x,\omega}$, and any $c\in \A_{x}^-$, 
    \[
        \rho_{\omega, c}(a) = \rho_{\omega, c}(a +b) = |\psi_\omega(c \otimes (a+b))| \le \|c\| \|a+b\|\,.
    \] Therefore, $\|[a]_{x,\omega}\|_{\mathfrak{n}_\omega} \le \|[a]_{x,\omega}^+\|_{\B_{x,\omega}^+}$ by definition of the quotient norm as an infimum.

       We aim to show that for fixed $a$, the family \[\mathscr{F}_{a}:=\{ g_c:\Omega \to \mathbb{R}^+ \text{ via } g_c(\omega) = |\psi_\omega(c\otimes a)| \colon c\in (\A_{x}^-)_1\}\,,\] is equicontinuous. To this end, let $\omega_0\in \Omega$ and $\epsilon >0$. Recall that for any Hermitian $h\in \A_x^-$, one has $-\|h\| \one \le h \le \|h\| \one$, since $\A_x^-$ is a unital $C^*$-algebra. Now, note that for any $a\in (\A_x)^+$, the linear functional $\psi_\omega((\, \cdot \,)\otimes a^*a)$ is positive on $\A_x^-$. Combining these two facts, we obtain
    \[
        -\|h\| \psi_{\omega}(\one \otimes a^*a) \le \psi_\omega(h\otimes a^*a) \le \|h\| \psi_\omega(\one \otimes a^*a)\,,
    \] for all $h\in (\A_x^-)_{\mathrm{s.a.}}$. Therefore, whenever $0\le a\in \A_x^+$ and $h\in (\A_x^-)_{s.a.}$, one has
    \begin{align*}
        |\psi_{\omega'}(h\otimes a) - \psi_{\omega}(h\otimes a) | \le \|h\| |(\psi_{\omega'}-\psi_{\omega})(\one \otimes a)|\,.
    \end{align*}
    
    In general, if $c\in \A_x^-$ is nonzero, we have that 
    \begin{align*}
        |\psi_{\omega'}(c\otimes a) - \psi_{\omega}(c\otimes a) |&\le |\psi_{\omega'}(\Re(c)\otimes a) - \psi_{\omega}(\Re(c)\otimes a) |+|\psi_{\omega'}(\Im(c)\otimes a) - \psi_{\omega}(\Im(c)\otimes a) |\\
        &\le (\|\Re(c)\| + \|\Im(c)\|) | (\psi_{\omega'}-\psi_{\omega})(\one \otimes a)|\\
        &\le 2\,\| c\|\,\cdot  | (\psi_{\omega'}-\psi_{\omega})(\one \otimes a)|\,.
    \end{align*}

    Lastly, recall that any $a\in (\A_x^+)$ can be decomposed into a sum of positive operators: $a = a_1 -a_2 + ia_3 - ia_4$  which once again is a general fact about functional calculus in $C^*$-algebras (See \cite{Murphy} or \cite[Section 2.2]{BratteliRobinsonI}). Therefore, one has the following estimate for any $c\in \A_x^-$ and any $a\in \A_x^+$
    \begin{equation}\label{eqn:f_fam_equi}
        |\psi_{\omega'}(c\otimes a) - \psi_{\omega}(c\otimes a) |\le \sum_{j=1}^4 |\psi_{\omega'}(c\otimes a_j) - \psi_{\omega}(c\otimes a_j) |\le 2\|c\| \sum_{j=1}^4 |(\psi_{\omega'} - \psi_{\omega})(\one\otimes a_j)|\,.
    \end{equation} Now, since $\psi_\omega$ is weakly*-continuous, each $|(\psi_\omega)(\one\otimes a_j)|$ is a continuous function $\Omega \mapsto [0,\infty)$. Therefore, we can choose a neighborhood $V\subset \Omega$ containing $\omega_0$, so that \[|(\psi_{\omega} - \psi_{\omega_0})(\one\otimes a_j)|< \frac{\epsilon}{8}, \quad \text{for all $\omega \in V$ and all $j=1,2,3,4$.}\]
    
    Hence, for all $\omega \in V$, we have 
    \[
        |\psi_{\omega}(c\otimes a) - \psi_{\omega_0}(c\otimes a) |< \epsilon \|c\|\,.
    \] Therefore, since the choices of $U_j$ did not depend on $c$ at all, it follows that $\mathscr{F}_a$  forms a bounded family of functions that is equicontinuous at $\omega_0$, thus equicontinuous everywhere since $\omega_0$ was arbitrary. Hence, \[
        \|[a]_{x,\omega}\|_{\mathfrak{n}_\omega} = \sup_{c\in (\A_x^-)_1} \rho_{\omega, c} (a) = \sup_{g\in \mathscr{F}_a}g(\omega),\] is continuous as a function of $\omega$ by Lemma~\ref{lem:equi_sup}. 
\end{proof}

\begin{rmk}
    Similarly, the $\sigma_{\omega, a}$ seminorms are constant on equivalence classes in $B_{x,\omega}^-$.
   \end{rmk}

\begin{rmk}\label{rmk:nondegenerate}
    Let $\eta_\omega: \B_{x,\omega}^- \times \B_{x,\omega}^+\to \C$ via $\eta_\omega([c,\omega]_x^-, [a,\omega]_x^+) := \psi_\omega(c\otimes a)$. This is a well-defined, nondegenerate, bilinear form by construction. Indeed, if $\eta_\omega([c,\omega]_x^-, [a,\omega]^+_x) = 0$ for all $a\in \B_{x,\omega}^+$, then, $\psi_\omega(c\otimes a) = 0$ for all $a$, thus, $c \sim_{x,\omega} 0 \in \B_{x,\omega}^-$. One also notes that since $\psi_\omega$ is constant on equivalence classes in $\B_{x,\omega}^{-}$ and $\B_{x,\omega}^+$ in the first and second coordinates respectively, the following holds:
        \[
            |\eta_\omega( [c,\omega]_x^-, [a,\omega]_x^+ )| =|\eta_\omega([c+d,\omega]_x^-, [a+b,\omega]_x^+)| \le \|c+d\| \|a+b\|\,,
        \] for all $d\in J_{x,\omega}$ and $b\in I_{x,\omega}$. Hence 
        \[
            |\eta_\omega( [c,\omega]_x^-, [a,\omega]_x^+ )| \le \|[c,\omega]_x^-\|_{\B_{x,\omega}^-} \|[a,\omega]_x^+\|_{\B_{x,\omega}^+}\,. 
        \] We note that $\eta_\omega([\one, \omega]_x^-, [\one,\omega]_x^+) \equiv 1$, so we conclude that $\|\eta_\omega\| = 1$ for all $\omega\in \Omega$. 
\end{rmk}

\begin{prop}\label{prop:n-norm_equals_quot}
    Let $x$ be a cut site and let $\mathfrak{n}_\omega : \A_x^+ \to \mathbb{R}^+$ be as in Lemma~\ref{lem:n-norm}. Whenever $\omega$ is such that $\B_{x,\omega}^+$ is finite dimensional, we have equality of the $\mathfrak{n}_\omega$ and quotient norms: $\|\cdot \|_{\mathfrak{n}_\omega} = \|\cdot \|_{\B_{x,\omega}^+}$. In particular, suppose that there is an open set $\Omega_0\subset \Omega$ so that for all $\omega_0\in \Omega_0$, the quotient space $\B_{x,\omega_0}^+$ is finite dimensional.  Then, for all $\omega\in \Omega_0$,  the mapping $\Omega_0 \ni \omega \mapsto \|[a,\omega]_x^+\|_{\B_{\omega, x}^+}$ is continuous for all $a\in \A_{x}^+$. 
\end{prop}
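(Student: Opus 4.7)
The strategy is to establish the norm equality $\|\cdot\|_{\mathfrak{n}_\omega} = \|\cdot\|_{\mathcal{B}_{x,\omega}^+}$ first, from which the continuity conclusion follows immediately by Lemma~\ref{lem:n-norm}. The inequality $\|\cdot\|_{\mathfrak{n}_\omega} \le \|\cdot\|_{\mathcal{B}_{x,\omega}^+}$ is already supplied by Lemma~\ref{lem:n-norm}, so the content is the reverse direction, which I would attack via duality. Fix $[a,\omega]_x \in \mathcal{B}_{x,\omega}^+$ and apply Hahn-Banach inside the finite-dimensional Banach space $\mathcal{B}_{x,\omega}^+$ equipped with the quotient norm to produce a functional $\bar\phi \in (\mathcal{B}_{x,\omega}^+)^*$ of dual-quotient norm one satisfying $\bar\phi([a,\omega]_x) = \|[a,\omega]_x\|_{\mathcal{B}_{x,\omega}^+}$.

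The next step is to represent $\bar\phi$ by an element on the left-hand side of the cut. The bilinear pairing $\eta_\omega$ of Remark~\ref{rmk:nondegenerate} induces a map $\Theta\colon \mathcal{B}_{x,\omega}^- \to (\mathcal{B}_{x,\omega}^+)^*$ via $\Theta([c,\omega]^-_x) = \eta_\omega([c,\omega]^-_x,\cdot)$ which is contractive (because $\|\eta_\omega\|=1$) and injective (by nondegeneracy). Its image separates points of the finite-dimensional space $\mathcal{B}_{x,\omega}^+$, and any point-separating subspace of the dual of a finite-dimensional space must equal the entire dual; hence $\Theta$ is surjective, and $\bar\phi = \Theta([c_0,\omega]^-_x)$ for some class $[c_0,\omega]^-_x \in \mathcal{B}_{x,\omega}^-$ represented by some $c_0 \in \mathcal{A}_x^-$.

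The crucial remaining step is to choose the representative $c_0$ with $\|c_0\|_{\mathcal{A}_x^-} \le 1$. Here I would exploit the state-theoretic structure of $\psi_\omega$, specifically the positivity and the normalization $\psi_\omega(\one \otimes \one) = 1$, to show that the image of the closed unit ball $(\mathcal{A}_x^-)_1$ under the map $c \mapsto \psi_\omega(c\otimes\cdot)$ exhausts (rather than being strictly contained in) the closed unit ball of $(\mathcal{B}_{x,\omega}^+)^*$ in the quotient dual norm. A bipolar-theorem argument in the finite-dimensional dual space reduces this to checking that the polar of this image inside $\mathcal{B}_{x,\omega}^+$ coincides with the closed unit ball for the quotient norm. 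Granting this, one can take $c_0 \in (\mathcal{A}_x^-)_1$, giving $\mathfrak{n}_\omega([a,\omega]_x) \ge |\psi_\omega(c_0\otimes a)| = |\bar\phi([a,\omega]_x)| = \|[a,\omega]_x\|_{\mathcal{B}_{x,\omega}^+}$, which completes the reverse inequality.

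The main obstacle is this final norm-sharpening step. The Open Mapping Theorem applied to the surjective contraction $\Theta$ only gives equivalence of the two norms up to a constant on $\mathcal{B}_{x,\omega}^+$, not exact equality; obtaining the sharp constant $1$ genuinely requires the interplay between finite-dimensionality and the positivity/normalization of the state $\psi_\omega$, rather than just the Banach-space surjection. Once the norm equality is in hand, the continuity assertion on the open set $\Omega_0$ is automatic: Lemma~\ref{lem:n-norm} gives continuity of $\omega \mapsto \|[a,\omega]_x\|_{\mathfrak{n}_\omega}$ for every $a \in \mathcal{A}_x^+$, and the norm equality transports this to continuity of $\omega \mapsto \|[a,\omega_0]_x^+\|_{\mathcal{B}_{x,\omega_0}^+}$ on $\Omega_0$.
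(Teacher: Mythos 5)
Your proposal follows the same duality route as the paper: reduce to the reverse inequality, norm $[a,\omega]_x^+$ by a Hahn--Banach functional $\ell$ in the dual unit ball, and represent $\ell$ through the pairing $\eta_\omega$ of Remark~\ref{rmk:nondegenerate} using nondegeneracy and finite-dimensionality. But your writeup stops exactly at the decisive point. The step you defer (``Granting this\dots''), namely that the image of $(\A_x^-)_1$ under $c\mapsto\psi_\omega(c\otimes(\,\cdot\,))$ exhausts the closed unit ball of $(\B_{x,\omega}^+)^*$ in the dual quotient norm, \emph{is} the reverse inequality, and the ingredients you invoke do not deliver it: nondegeneracy, $\|\eta_\omega\|=1$, and finite dimension only make $\Theta$ a contractive bijection, hence give equivalence of the two norms with some constant, and the bipolar reduction merely restates the claim. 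That the sharp constant $1$ cannot come from these abstract facts alone is shown by the pairing $(c,a)\mapsto\tfrac12\tr(c^{T}a)$ on $\M_2\times\M_2$, both factors carrying the operator norm: it is nondegenerate of norm one, yet the functionals arising from the unit ball have dual (trace-class) norm $\tfrac12\|c\|_1\le1$ and do not fill the dual unit ball --- the functional norming a rank-one projection needs $\|c\|_{\mathrm{op}}=2$ --- so the associated supremum norm $\tfrac12\|a\|_1$ is strictly smaller than the operator norm. Any complete proof must therefore inject concrete structural information about $\psi_\omega$ across the cut (beyond positivity and $\psi_\omega(\one\otimes\one)=1$ as stated), and your sketch does not identify what that input is or how it enters.

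For comparison, the paper's own proof is the identical argument and is at its thinnest at the very same spot: having produced $\ell\in((\B_{x,\omega}^+)^*)_1$, it asserts that ``since $\B_{x,\omega}^-\cong(\B_{x,\omega}^+)^*$'' one may choose $[c,\omega]_x^-$ with $\|[c,\omega]_x^-\|_{\B_{x,\omega}^-}\le1$ implementing $\ell$, i.e.\ it treats the contractive canonical isomorphism as if its inverse were also contractive; the remark immediately following the proposition concedes that the whole point is the implementability of $\ell$ by an element of $\A_x^-$. So you have reconstructed the published argument up to and including its most delicate step, and you have correctly diagnosed where the difficulty lies, but you have not closed it: to finish along these lines you would have to prove that the unit ball of $\B_{x,\omega}^-$ (equivalently, of $(\A_x^-)_1$ modulo $J_{x,\omega}$) maps onto a norming subset of the unit ball of $(\B_{x,\omega}^+)^*$, and as the $\M_2$ example shows, this is a genuine additional claim rather than a consequence of finite-dimensionality plus the normalization of the state.
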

\begin{proof}
    Consider the bilinear form $\eta_\omega: \B_{x,\omega}^-\times \B_{x,\omega}^+ \to \C$ defined via $\eta_\omega([a,\omega]_{x}^-, [c,\omega]_x^+)= \psi_\omega(a\otimes c)$. Notice $\eta_\omega$ is nondegenerate for all $\omega \in \Omega_0$ by Remark~\ref{rmk:nondegenerate}. Hence, $\B_{x, \omega}^-$ is finite dimensional and canonically isomorphic to the dual $(\B_{x,\omega}^+)^*$ for all $\omega\in \Omega_0$.   
    
    Now, let $[a,\omega]_x^+\in \B_{x,\omega}^+$. By basic duality theory, we know there is some linear functional $\ell\in ((\B_{x,\omega}^+)^*)_1$, the unit  ball, so that $\ell([a,\omega]_x^+) = \|[a,\omega]_x^+\|_{\B_{x,\omega}^+}$. Therefore, since $\B_{x,\omega}^- \cong (\B_{x,\omega}^+)^*$, there is some $[c,\omega]_x^-\in \B_{x,\omega}^-$ with $\|[c,\omega]_x^-\|_{\B_{x,\omega}^-}\le 1$ so that $\ell = \eta_\omega([c,\omega]_{x}^-,\, \cdot\,)$. Thus, 
    \[
        \|[a,\omega]_x^+\|_{\B_{x,\omega}^+} = |\ell([a,\omega]_x^+)| = |\eta_\omega([c,\omega]_x^-, [a,\omega]_x^+)| = |\rho_{\omega,c}(a)|\le \|[a,\omega]_x^+\|_{\mathfrak{n}_\omega}\,,
    \] as claimed. Thus, we conclude the point-wise equality $\|\cdot \|_{\B_{x,\omega}^+} = \|\cdot \|_{\mathfrak{n}_\omega}$ for all $\omega\in \Omega_0$. If the second hypothesis holds, continuity of the mapping $\omega\mapsto \|[a]_{x,\omega}\|_{\B_{x,\omega}^+}$ in $\Omega_0$ is a consequence of the equality established above and the fact that $\omega\mapsto \|[a,\omega]_x^+ \|_{\mathfrak{n}_\omega}$ is continuous in $\Omega_0$ for all $a\in \A_{x}^+$.
\end{proof}
Some remarks are in order. 
\begin{rmk}
    The key step in order to conclude $\|\cdot \|_{\B_{x,\omega}^+} = \|\cdot \|_{\mathfrak{n}_\omega}$ was that $\ell$ is of a specific form, namely, that there exists some $c\in \A_{x}^-$ implementing $\ell$ via $\psi_\omega$. We used the fact that $\B_{x,\omega}^+$ is finite dimensional to conclude this via the canonical isomorphism. It may not be true in general that such an $\ell$ obtained by, e.g., the Hahn-Banach theorem can be implemented by such an element of $\A_{x}^-$.
\end{rmk}

\begin{rmk}
    In particular, when $\B_{x,\omega}^+$ is finite dimensional for all $\omega$, this means that for all $a\in \A_{x}^+$ the assignment $\omega \mapsto \|[a,\omega]_x^+\|_{\B_{x,\omega}^+}$ is continuous in $\omega$ by Lemma~\ref{lem:n-norm}. 
\end{rmk}

\begin{lem}\label{lem:cts_B_x}
    Suppose there is an open set $\Omega_0 \subset \Omega$ so that for every $\omega\in \Omega_0$, the quotient $\B_{x,\omega}^+$ is finite dimensional. In this case, given any $A\in C(\Omega, \A_x^+)$, the numerical mapping given by 
        \[
            \Omega_0 \ni \omega \mapsto \|[A(\omega),\omega]_x^+\|_{B_{x,\omega}}\,,
        \] is continuous. 
\end{lem}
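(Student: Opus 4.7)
The plan is to combine two ingredients already established: Proposition~\ref{prop:n-norm_equals_quot}, which gives continuity of $\omega \mapsto \|[a, \omega]_x^+\|_{\B^+_{x,\omega}}$ on $\Omega_0$ for a \emph{fixed} element $a \in \A_x^+$, and the fact that the quotient map $\A_x^+ \to \B^+_{x,\omega}$ is contractive, which lets one transfer continuity of $A$ into an estimate on the quotient.

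Concretely, fix $\omega_0 \in \Omega_0$ and $\epsilon > 0$. The reverse triangle inequality gives
\[
\bigl|\|[A(\omega),\omega]_x^+\|_{\B^+_{x,\omega}} - \|[A(\omega_0),\omega_0]_x^+\|_{\B^+_{x,\omega_0}}\bigr| \le T_1(\omega) + T_2(\omega),
\]
where
\[
T_1(\omega) := \bigl|\|[A(\omega),\omega]_x^+\|_{\B^+_{x,\omega}} - \|[A(\omega_0),\omega]_x^+\|_{\B^+_{x,\omega}}\bigr|, \quad T_2(\omega) := \bigl|\|[A(\omega_0),\omega]_x^+\|_{\B^+_{x,\omega}} - \|[A(\omega_0),\omega_0]_x^+\|_{\B^+_{x,\omega_0}}\bigr|.
\]
I would handle these two terms separately.

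For $T_1$, apply the reverse triangle inequality inside the fiber $\B^+_{x,\omega}$ and use the fact that, by definition of the quotient norm as an infimum (taking the representative $b=0$), one has $\|[c,\omega]_x^+\|_{\B^+_{x,\omega}} \le \|c\|_{\A_x^+}$ for every $c \in \A_x^+$. Applied to $c = A(\omega) - A(\omega_0)$, this yields $T_1(\omega) \le \|A(\omega) - A(\omega_0)\|_{\A_x^+}$, which is less than $\epsilon/2$ on some neighborhood $U_1 \ni \omega_0$ because $A \in C(\Omega, \A_x^+)$.

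For $T_2$, the element $a := A(\omega_0) \in \A_x^+$ is now fixed, and Proposition~\ref{prop:n-norm_equals_quot} (combined with the continuity statement in Lemma~\ref{lem:n-norm}, since on $\Omega_0$ the quotient norm agrees with $\|\cdot\|_{\mathfrak{n}_\omega}$) asserts that $\omega \mapsto \|[a,\omega]_x^+\|_{\B^+_{x,\omega}}$ is continuous throughout $\Omega_0$. Hence $T_2(\omega) < \epsilon/2$ on a neighborhood $U_2 \subset \Omega_0$ of $\omega_0$. Choosing $\omega \in U_1 \cap U_2 \cap \Omega_0$ completes the argument.

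There is no real obstacle here; the statement is essentially a stability-under-continuous-variation upgrade of Proposition~\ref{prop:n-norm_equals_quot}, and the hypothesis of finite-dimensional fibers on $\Omega_0$ was absorbed once at the step where $\|\cdot\|_{\mathfrak{n}_\omega}$ was identified with the quotient norm. The only subtle point worth double-checking is that the contractivity $\|[c,\omega]_x^+\|_{\B^+_{x,\omega}} \le \|c\|_{\A_x^+}$ uses no finite-dimensional input and therefore remains available to control $T_1(\omega)$ for all $\omega$, not just those in $\Omega_0$.
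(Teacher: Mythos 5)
Your proof is correct and is essentially the paper's argument in a slightly more modular packaging: the paper performs the same add-and-subtract at the level of the seminorms $\rho_{\omega,c}$, bounding $|\rho_{\omega,c}(A(\omega))-\rho_{\omega_0,c}(A(\omega_0))|$ by $\|c\|\,\|A(\omega)-A(\omega_0)\|$ plus a fixed-element term and then re-deriving equicontinuity of the family $\{|\psi_\omega(c\otimes A(\omega))|\}$ before applying Lemma~\ref{lem:equi_sup} and Proposition~\ref{prop:n-norm_equals_quot}, whereas you take the supremum first and control the two resulting terms by contractivity of the quotient map and by the continuity statement of Proposition~\ref{prop:n-norm_equals_quot} for the fixed element $a=A(\omega_0)$. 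Both arguments rest on exactly the same inputs---continuity of $A$ and the identification of the quotient norm with $\|\cdot\|_{\mathfrak{n}_\omega}$ on $\Omega_0$---so your proposal matches the paper's proof in substance.
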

\begin{proof}
    Again our strategy is to prove that a certain family of functions forms a bounded, equicontinuous family. Let $\omega_0 \in \Omega$ and note  for all $c\in \A_x^-$and all $\omega \in \Omega$ one has
    \begin{align*}
        |\rho_{c,\omega}(A(\omega)) - \rho_{c, \omega_0}(A(\omega_0))| &\le |\psi_\omega(c\otimes A(\omega)) - \psi_{\omega_0}(c\otimes A(\omega_0))|\\
        &\le  \|c\| \|A(\omega) - A(\omega_0)\|+|(\psi_\omega - \psi_{\omega_0})(c\otimes A(\omega_0))| \\
        &\le \|c\| \|A(\omega)-A(\omega_0)\| + \|c\| \sum_1^4 |( \psi_\omega - \psi_{\omega_0})\left[ \one \otimes A_j(\omega_0)\right]|\,,
    \end{align*} by the same argument preceding equation~(\ref{eqn:f_fam_equi}), where we have written $A(\omega_0)$ as its cartesian decomposition $A_1(\omega_0) - A_2(\omega_0) +iA_3(\omega_0) - iA_4(\omega_0)$. Therefore, by a similar argument to the one proceeding equation~(\ref{eqn:f_fam_equi}), the family $\mathcal{F}_A:= \{f_c(\omega) = |\psi_\omega(c\otimes A(\omega))| \colon c\in (\A_x^-)_1\}$ is equicontinuous, and therefore so is their supremum by Lemma~\ref{lem:equi_sup}. By Proposition~\ref{prop:n-norm_equals_quot}, we have 
    \[
        \sup_{c\in (\A_x^-)_1} \rho_{\omega, c}(A(\omega)) = \|[A(\omega),\omega]_x^+\|_{\mathfrak{n}(\omega)} = \|[A(\omega),\omega]_x^+\|_{\B_{x,\omega}^+}\,,
    \] for all $\omega\in \Omega_0$, and this assignment is continuous in $\omega$ there, by equicontinuity of $\mathcal{F}_A$. 
\end{proof}

\begin{thm}\label{thm:quotients_are_bbundled}
    Let $x$ be a cut site and suppose that there is an open subset $\Omega _0 \subset \Omega$ for which every $\B_{x,\omega}^+$ is finite-dimensional whenever $\omega\in \Omega_0$. Let $\mathscr{B}_{x} = \bigsqcup_{\omega \in \Omega_0} \B_{x,\omega}^+$ equipped with the surjection $\pi^x(\B_{x,\omega}^+)= \omega$. Then, taking $\Gamma(\mathscr{B}_x)= \mathcal{B}_x^+$ as the space of sections, $(\mathscr{B}_x, \pi^x)$ may be furnished with a unique topology carrying the structure of a Banach bundle over $\Omega_0$.  
\end{thm}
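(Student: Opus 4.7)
The plan is to invoke the Fell--Doran theorem (Theorem~\ref{thm:Banach_Bundle}): since each fiber $\B_{x,\omega}^+$ for $\omega \in \Omega_0$ is already a (finite-dimensional) complex Banach space and $\varrho$ is a surjection onto $\Omega_0$, it suffices to produce a complex linear space of sections $\Gamma$ satisfying (a) pointwise norm continuity and (b) fiberwise density. The desired bundle topology is then automatic and unique.

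To supply $\Gamma$, I would use the evaluation maps $p_{x,\omega}$ from Lemma~\ref{lem:quot_eval}. For each $[A]_x^+ \in \B_x^+$ (with $A \in C(\Omega, \A_x^+)$), define the section
\[
    s_{[A]_x^+}\colon \Omega_0 \to \mathbb{A}, \qquad s_{[A]_x^+}(\omega) := p_{x,\omega}([A]_x^+) = [A(\omega),\omega]_x^+ \in \B_{x,\omega}^+\,.
\]
Linearity of $p_{x,\omega}$ makes $\Gamma := \{s_{[A]_x^+} : [A]_x^+ \in \B_x^+\}$ a complex linear space of sections of $\varrho$, so the map $\B_x^+ \to \Gamma$ is well-defined.

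For condition (a), the continuity of $\omega \mapsto \|s_{[A]_x^+}(\omega)\|_{\B_{x,\omega}^+} = \|[A(\omega),\omega]_x^+\|_{\B_{x,\omega}^+}$ on $\Omega_0$ is precisely Lemma~\ref{lem:cts_B_x}. For condition (b), density is in fact surjectivity: given $[a,\omega_0]_x^+ \in \B_{x,\omega_0}^+$, the constant function $A_a(\omega) \equiv a$ lies in $C(\Omega, \A_x^+)$ and satisfies $s_{[A_a]_x^+}(\omega_0) = [a,\omega_0]_x^+$, so $\{s(\omega_0) : s \in \Gamma\} = \B_{x,\omega_0}^+$. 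Applying Theorem~\ref{thm:Banach_Bundle} then furnishes the unique bundle topology on $\mathbb{A}$ for which every $s_{[A]_x^+}$ is a continuous section, completing the proof.

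The main obstacle is condition (a), and this is where the finite-dimensionality hypothesis does the real work via Proposition~\ref{prop:n-norm_equals_quot}. The supremum of the seminorms $\rho_{\omega,c}$ against unit-ball elements $c \in \A_x^-$ is continuous in $\omega$ essentially by an equicontinuity argument (as in Lemma~\ref{lem:n-norm} and carried over to variable $A(\omega)$ in Lemma~\ref{lem:cts_B_x}), but this gives continuity of the intrinsic norm $\mathfrak{n}_\omega$, \emph{not} of the quotient norm, a priori. Finite-dimensionality forces the nondegenerate pairing $\eta_\omega$ of Remark~\ref{rmk:nondegenerate} to realize $\B_{x,\omega}^-$ as the full dual $(\B_{x,\omega}^+)^*$, so every norming functional on $\B_{x,\omega}^+$ is implemented by some $c \in \A_x^-$ through $\psi_\omega$; this is what yields $\|\cdot\|_{\mathfrak{n}_\omega} = \|\cdot\|_{\B_{x,\omega}^+}$ and hence (a). In infinite dimensions, the Hahn--Banach functionals that norm a given class need not arise from $\A_x^-$, so the quotient norm could strictly exceed $\mathfrak{n}_\omega$ and the norm-continuity required by Fell--Doran would fail; this is precisely the obstruction the hypothesis circumvents.
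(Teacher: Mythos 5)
Your proposal is correct and follows essentially the same route as the paper: both verify the two hypotheses of the Fell--Doran theorem by citing Lemma~\ref{lem:cts_B_x} for continuity of the pointwise quotient norms of the sections $\omega \mapsto [A(\omega),\omega]_x^+$ and by using constant functions $A_a \equiv a$ to get fiberwise density (in fact surjectivity). Your additional commentary on how finite-dimensionality enters through Proposition~\ref{prop:n-norm_equals_quot} and the pairing of Remark~\ref{rmk:nondegenerate} accurately reflects where the hypothesis is actually used, as the paper itself notes in Remark~\ref{rmk:normed_quotient_bundle}.
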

\begin{proof}
    Lemma~\ref{lem:cts_B_x} demonstrates that $\B_x^+$ fulfills the first requirement of Theorem~\ref{thm:Banach_Bundle}. For the next requirement, observe that $\{[A(\omega),\omega]_x^+ \colon A\in C_{\Omega}(\A_x^+)\}$ is indeed dense in $\B_{x,\omega}$ since the constant functions are continuous maps into $\A_x^+$. 
\end{proof}

\begin{note}
    In the case that  $\B_{x,\omega}^+$ is finite dimensional for all $\omega\in \Omega$, the bundle resulting from Theorem~\ref{thm:quotients_are_bbundled} will be written $\mathscr{B}_x = (\{\B_{x,\omega}\}_\omega, \pi^x, \Omega)$.
\end{note}

We now turn to a discussion of the action of the translation automorphism $\tau_1$ and of the Koopman operator $K_\vartheta$ on $\Gamma(\mathscr{B}_x)$ and $\B_{x,\omega}^+$. As we shall see later, the canonical map associated to $\tau_1$ is linked with the canonical map associated to the Koopman operator $K_\vartheta$ given by pre-composition with the measure-preserving ergodic map $\vartheta$.
 
First, note that for any cut site $x$, the automorphism $\tau_1$ restricts to an isomorphism of $\A_x^+$ onto $\A_{x+1}^+$. One can canonically extend the action of $\tau_1$ to the continuous functions via post-composition of the image. The resulting map is easily shown to be an isometry and in particular, repeating the procedure for $\tau_{-1}$ shows these maps to be bounded-ly invertible. Similarly, by defining $K_{\vartheta}:C(\Omega, \A_x^+) \to C(\Omega, \A_x^+)$ via $K_\vartheta:= A\circ \vartheta$, one obtains a bounded-ly invertible isometry \cite[Section 4.5]{Eisner_et_al}. Let us record this as a lemma.
 \begin{lem}
     Let $\tau_1$ denote the automorphism of translation on $\A_{\mathbb{Z}}$, and let $\vartheta:\Omega \to \Omega$ denote a continuous measure preserving ergodic map. Let $x$ be a cut site. The following hold:
     \begin{enumerate}
         \item Define $T_x:C(\Omega, \A^{+}_{x}) \to C(\Omega, \A^{+}_{x+1})$ via $T_x(A)(\omega):= \tau_1(A(\omega))$. Then, $T_x$ defines a linear isometry with bounded inverse.
         \item Define $K_{x}:C(\Omega, \A_{x}^+) \to C(\Omega, \A_{x}^+)$ via $K_x A := A\circ \vartheta$. The $K_x$ is a boundedly invertible linear isometry of $C(\Omega, \A_x^+)$.
       \item The following commutation relation holds 
        \[
        K_{x+1}T_x = T_x K_{x}\,.
       \]
     \end{enumerate} 
 \end{lem}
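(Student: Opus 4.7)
The plan is to handle each of the three claims in turn. All three boil down to unwinding the definitions, since the hard technical work (continuity of evaluations, the $C^*$-algebra structure on $C(\Omega,\A)$, and the topology on $\A_\Z$) has already been set up in the preliminaries. I do not anticipate any serious obstacle; the main task is simply to verify that everything behaves well under pointwise composition.

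For part 1, I would first observe that $\tau_1$ restricts to a $*$-isomorphism $\A_x^+ \to \A_{x+1}^+$, in particular a linear isometry. Given $A \in C(\Omega, \A_x^+)$, the composite $T_x(A) = \tau_1 \circ A$ is continuous from $\Omega$ into $\A_{x+1}^+$ because $\tau_1$ is norm-continuous, so $T_x$ maps into $C(\Omega, \A_{x+1}^+)$; linearity is immediate from linearity of $\tau_1$. Using the fact (Lemma~\ref{lem:ctsinA}) that the $C^*$-norm on $C(\Omega,\A_\bullet^+)$ is the supremum norm, one gets
\[
\|T_x A\|_{C(\Omega,\A_{x+1}^+)} \;=\; \sup_{\omega \in \Omega}\|\tau_1(A(\omega))\|_{\A_{x+1}^+} \;=\; \sup_{\omega \in \Omega}\|A(\omega)\|_{\A_x^+} \;=\; \|A\|_{C(\Omega,\A_x^+)}.
\]
For the bounded inverse, I would define $T_{x}^{-1}(B)(\omega) := \tau_{-1}(B(\omega))$ and verify it is a two-sided inverse by the same computation applied to $\tau_{-1}\circ\tau_1 = \id_{\A_x^+}$ and $\tau_1\circ\tau_{-1} = \id_{\A_{x+1}^+}$.

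For part 2, the argument is even more mechanical. Since $\vartheta:\Omega \to \Omega$ is a homeomorphism, $K_x A = A \circ \vartheta$ is continuous whenever $A$ is, and linearity is obvious. The isometry property follows from
\[
\|K_x A\| \;=\; \sup_{\omega\in\Omega}\|A(\vartheta\omega)\|_{\A_x^+} \;=\; \sup_{\omega'\in\Omega}\|A(\omega')\|_{\A_x^+} \;=\; \|A\|,
\]
using that $\vartheta$ is a bijection of $\Omega$. For the inverse, $A \mapsto A\circ \vartheta^{-1}$ is defined because $\vartheta^{-1}$ is continuous (this uses that $\vartheta$ is a homeomorphism, not merely a measure-preserving map), and a direct check shows it is the two-sided inverse of $K_x$.

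Part 3 is then just a one-line evaluation: for any $A \in C(\Omega, \A_x^+)$ and any $\omega \in \Omega$,
\[
(K_{x+1} T_x A)(\omega) \;=\; (T_x A)(\vartheta \omega) \;=\; \tau_1\bigl(A(\vartheta\omega)\bigr) \;=\; \tau_1\bigl((K_x A)(\omega)\bigr) \;=\; (T_x K_x A)(\omega).
\]
Since this holds for every $\omega$, the two operators agree on $C(\Omega,\A_x^+)$. The only subtlety worth flagging is the verification in part 1 that $T_x$ lands in the \emph{smaller} subalgebra $C(\Omega,\A_{x+1}^+) \subset C(\Omega,\A_{x+2}^+) \subset \cdots$, but this is forced by the fact that $\tau_1$ shifts supports by exactly one site. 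No deep ideas are needed beyond those already introduced in Section~\ref{sec:C*algs}.
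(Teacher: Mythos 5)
Your proof is correct and follows essentially the same route as the paper, which likewise establishes parts 1 and 2 by noting that post-composition with the isometric isomorphisms $\tau_{\pm 1}$ and pre-composition with the homeomorphism $\vartheta^{\pm 1}$ preserve continuity and the sup-norm, and proves part 3 by the same one-line pointwise evaluation. One inessential slip: in your closing remark the inclusions run the wrong way --- since $\A_{x+1}^+ = \A_{[x+1,\infty)\cap\Z}$, one has $\A_{x+2}^+ \subset \A_{x+1}^+ \subset \A_x^+$, so the relevant point is simply that $\tau_1$ carries $\A_x^+$ onto $\A_{x+1}^+$, which you do use correctly in the proof itself.
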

\begin{proof}
    Parts 1. and 2. are standard. To see the commutation relation in part 3., let $\omega \in \Omega$ and observe that \[(T_x K_x A)(\omega)= \tau_1(A(\vartheta \omega)) = K_{x+1}( \tau_1(A(\omega)) = (K_{x+1}T_x A)(\omega)\,.\qedhere\]
\end{proof}
\begin{rmk}
    Since $T_x$ and $K_x$ are both instances of the restriction of a linear map on $C(\Omega, \A_{\Z})$ to a specific subalgebra, we will slightly abuse notation and drop the subscripts when no confusion can arise. 
\end{rmk}

 Now, we shall study the action of $T$ and $K$ under the quotient. 
 
\begin{lem}\label{lem:bundlemap-covariance}
Let $x$ be a cut site. The following hold: 
\begin{enumerate}
    \item For each $\omega$, one has a canonical isometric isomorphism between $\B^+_{x,\vartheta(\omega)}$ and $\B^{+}_{x+1,\omega}$ given by the map $\tilde T_{x,\vartheta\omega}:[a,\vartheta \omega]_x^+\mapsto [\tau_1(a),\omega]_{x+1}^+$.
    \item  Let $T$ and $K$ be as above. There exists an isometric isomorphism $V_x:\B_x^+ \to \B_{x+1}^+$ so that the following diagrams commute 
    \begin{center}
        \begin{tikzcd}
            {C(\Omega,\A_x^+)} \arrow[r, "\sim"]\arrow[d, "T_x"] & {\B_x^+} \arrow[d, "V_x", dashed] \arrow[r, "p_{x,\vartheta \omega}"]& \B_{x, \vartheta \omega}^+\arrow[d, "\tilde T_{x,\theta \omega}"] \\
            {C(\Omega, \A_{x+1}^+)}\arrow[r, "\sim"]& {\B_{x+1}^+} \arrow[r, "p_{x+1,\omega}"]& \B_{x+1, \omega}^+
        \end{tikzcd}\,.
    \end{center}
\end{enumerate}

\end{lem}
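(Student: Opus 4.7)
The plan is to build both maps from translation covariance, which intertwines the spatial shift $\tau_1$ with the disorder shift $\vartheta$ via $\psi_\omega \circ \tau_1 = \psi_{\vartheta\omega}$. For Part 1, the first step is to verify that $\tau_1$ sends $I_{x,\vartheta\omega}$ bijectively onto $I_{x+1,\omega}$. Since $\tau_1$ restricts to an isometric isomorphism $\A_x^- \to \A_{x+1}^-$, every $c' \in \A_{x+1}^-$ can be written uniquely as $\tau_1(c)$ with $c \in \A_x^-$, and covariance gives
\[
\psi_\omega(c' \otimes \tau_1(a)) \;=\; \psi_\omega(\tau_1(c \otimes a)) \;=\; \psi_{\vartheta\omega}(c \otimes a).
\]
Consequently $a \in I_{x,\vartheta\omega}$ iff $\tau_1(a) \in I_{x+1,\omega}$, so $\tilde T_{x,\vartheta\omega}\colon [a,\vartheta\omega]_x^+ \mapsto [\tau_1(a),\omega]_{x+1}^+$ is a well-defined linear bijection with inverse implemented by $\tau_{-1}$. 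The isometric property then follows from Lemma~\ref{lem:n-norm}: the supremum defining $\|\tilde T_{x,\vartheta\omega}[a,\vartheta\omega]_x^+\|_{\mathfrak{n}_\omega}$ over $c' \in (\A_{x+1}^-)_1$ corresponds bijectively, via the isometry $c' = \tau_1(c)$, to the supremum defining $\|[a,\vartheta\omega]_x^+\|_{\mathfrak{n}_{\vartheta\omega}}$.

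For Part 2, I would interpret the left column of the diagram as the covariant composite $T_x K_x = K_{x+1} T_x$ (equivalent by the commutation relation recorded just above the lemma), which acts on sections by $(T_x K_x A)(\omega) = \tau_1(A(\vartheta\omega))$, and define $V_x[A]_x^+ := [T_x K_x A]_{x+1}^+$. The crucial well-definedness step is $T_x K_x(I_x) \subseteq I_{x+1}$: given $X \in I_x$, $\omega \in \Omega$, and $D \in C(\Omega, \A_{x+1}^-)$, writing $D(\omega) = \tau_1(C(\omega))$ with $C = \tau_{-1} \circ D$ and applying covariance yields
\[
\psi_\omega(D(\omega) \otimes \tau_1(X(\vartheta\omega))) \;=\; \psi_{\vartheta\omega}(C(\omega) \otimes X(\vartheta\omega));
\]
reparametrizing $\omega' = \vartheta\omega$ and $\tilde C := C \circ \vartheta^{-1} \in C(\Omega, \A_x^-)$ rewrites this as $\psi_{\omega'}(\tilde C(\omega') \otimes X(\omega'))$, which vanishes because $X \in I_x$. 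The symmetric argument using $\tau_{-1}$ and $\vartheta^{-1}$ yields the reverse inclusion, so $T_x K_x$ restricts to a bijection $I_x \to I_{x+1}$; since $T_x K_x$ is an isometry of $C(\Omega,\A_x^+)$ onto $C(\Omega,\A_{x+1}^+)$, the induced quotient map $V_x$ is an isometric isomorphism.

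Commutativity of the right square is then a direct substitution: $p_{x+1,\omega}(V_x[A]_x^+) = [\tau_1(A(\vartheta\omega)),\omega]_{x+1}^+ = \tilde T_{x,\vartheta\omega}([A(\vartheta\omega),\vartheta\omega]_x^+) = \tilde T_{x,\vartheta\omega}(p_{x,\vartheta\omega}([A]_x^+))$. The main obstacle is keeping the two shift conventions aligned: $\tau_1$ and $\vartheta$ enter with opposite signs through covariance, so the naive choice $V_x[A]_x^+ = [T_x A]_{x+1}^+$ (obtained by forgetting the $K_x$) makes the left square commute but not the right. Once one sees that $V_x$ must encode both shifts simultaneously---essentially, that the induced bundle map is $\vartheta$-covariant rather than a plain homomorphism---the remaining verifications are algebraic bookkeeping driven by translation covariance.
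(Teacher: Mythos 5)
Your argument is essentially the paper's: in Part 1 you show $\tau_1$ carries $I_{x,\vartheta\omega}$ bijectively onto $I_{x+1,\omega}$ via covariance, and in Part 2 you define $V_x[A]^+_x=[\,\omega\mapsto\tau_1(A(\vartheta\omega))\,]^+_{x+1}$, prove well-definedness by the same reparametrization $\omega'=\vartheta\omega$, $\tilde C=\tau_{-1}\circ D\circ\vartheta^{-1}$, and get the right square by substitution — exactly the route taken in the paper (your observation that the intertwining must involve both $T$ and $K$ is the paper's $V_x[A]_x=[KTA]_{x+1}$).

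One small repair in Part 1: you conclude isometry by matching the suprema defining the $\mathfrak{n}$-norms of Lemma~\ref{lem:n-norm}. That gives isometry for $\mathfrak{n}_{\vartheta\omega}$ and $\mathfrak{n}_{\omega}$, but the Banach-space norm on $\B^+_{x,\omega}$ is the quotient norm, and $\mathfrak{n}_\omega$ is only known to coincide with it when the fiber is finite dimensional (Proposition~\ref{prop:n-norm_equals_quot}); the lemma is stated without that hypothesis. The fix is already in your first step: since $\tau_1\colon\A_x^+\to\A_{x+1}^+$ is an isometric isomorphism mapping $I_{x,\vartheta\omega}$ onto $I_{x+1,\omega}$, the infima defining the two quotient norms correspond term by term, which is precisely how the paper concludes. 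With that substitution (and the analogous remark that your ideal bijection $I_x\to I_{x+1}$ gives quotient-norm isometry of $V_x$, slightly sharper than the paper's boundedness argument), the proof is complete.
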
 
\begin{proof}
    1.  Define the linear map $\tilde T_{x,\vartheta\omega} [a,\vartheta \omega]_x^+=[\tau_1(a),\omega]_{x+1}^+$. To check that this is well defined, suppose $[a,\vartheta \omega]_x^+ = 0$. Then, for all $c\in \A_{x}^-$, we have that 
        \begin{equation}\label{eqn:theta_tau_transfer}
            0 = \psi_{\vartheta \omega}(c\otimes a)= \psi_\omega( \tau_1(c\otimes a) )= \psi_\omega( \tau_1(c) \otimes \tau_1(a))\,.
        \end{equation} Since $\tau$ is an automorphism, in particular we have the above holding for all $d\in \A_{x+1}^-$. Hence $[\tau_1(a)]_{x+1, \omega}=0$ as required. 

    It is not hard to argue that equation~(\ref{eqn:theta_tau_transfer}) implies $I_{x+1,\omega} = \tau_1(I_{x, \vartheta \omega})$. Therefore, 
    \begin{align*}
        \|\tilde T_{x, \vartheta \omega}[a,\vartheta \omega]_x^+ \|_{\B_{x+1, \omega}^+}&= \inf_{b\in I_{x+1,\omega}} \|\tau_1(a) + b\|_{\A_{x+1}}\\
        &= \inf_{\tau_{-1}(b) \in I_{x, \vartheta \omega}}\|a+ \tau_{-1}(b)\|_{\A_x}= \|[a,\vartheta \omega]_x^+\|\,.
    \end{align*} Thus,  $\tilde T_{x,\vartheta \omega}$ is an isometry. One can run this same argument with the canonical map associated to $\tau_{-1}$ to see that $\tilde T_{x, \vartheta \omega}$ has a bounded inverse.

    2. Let $V_x[A]_x = [KTA]_{x+1}$. To see this is well defined, let $[A]_x^+ = 0$. For any $C^{-}_{x+1}\in C(\Omega, \A_{x+1}^{-})$ and for all $\omega\in \Omega$, we have 
    \begin{align*}
        \psi_\omega( C^{-}_{x+1}(\omega) \otimes KTA(\omega)) &= \psi_\omega( C^{-}_{x+1}(\omega) \otimes \tau_1(A(\vartheta \omega))) = \psi_\omega\left( \tau_1\left(\tau_{-1}(C_{x+1}^{-}(\omega)) \otimes A(\vartheta \omega) \right) \right)\\
        &= \psi_{\vartheta \omega}\left( \left(\tau_{-1}(C_{x+1}^{-}(\omega)) \otimes A(\vartheta \omega) \right) \right)= \psi_{\omega'}(D(\omega') \otimes A(\omega')) = 0,
    \end{align*} where we have set $\omega' = \vartheta \omega$, and $D(\omega) = T^{-1}K^{-1}C^{-}_{x+1}(\omega)\in C(\Omega, \A^{-}_{x})$. Notice that the final equality in the above holds for all $\omega ' \in \Omega$, and since $\vartheta$ is a homeomorphism, we conclude that $[KTA]_{x+1}^+=0$ as required. Equivalently, $I_x \supset KTI_{x+1}.$ Using essentially the same argument as in Lemma~\ref{lem:quot_eval}, we conclude that $V_x$ is bounded. Running this same argument with $K^{-1}T^{-1}$, we see that in fact, $V_x$ has a bounded inverse as well.

Now, let $[A]_{x}^+ \in \B_x^+$. We have \[
\tilde T_{x, \vartheta \omega}\,\circ\, p_{x, \vartheta \omega}[A]_x^{+} = [\tau_1(A(\vartheta \omega)]_{x+1, \omega}^+ = p_{x+1, \omega} [KTA]_{x+1}^+ = p_{x+1, \omega} V_x [A]_x^+,\] as required. 
\end{proof}

\begin{cor}\label{cor:small_corrs}
    If $\B_{x,\omega}^+$ is finite dimensional for all $\omega\in \Omega$, then every $\B_{x+1,\omega}^+$ is also finite dimensional. 
\end{cor}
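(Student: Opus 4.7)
The plan is to deduce this corollary essentially immediately from part 1 of Lemma~\ref{lem:bundlemap-covariance}, which supplies, for every $\omega \in \Omega$, a canonical isometric isomorphism
\[
    \tilde T_{x,\vartheta\omega}:\ \B^+_{x,\vartheta\omega} \ \longrightarrow \ \B^+_{x+1,\omega}, \qquad [a,\vartheta\omega]_x^+ \longmapsto [\tau_1(a),\omega]_{x+1}^+ .
\]
An isometric isomorphism between normed spaces is in particular a linear isomorphism, so the two fibers have the same vector-space dimension.

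Next I would use that $\vartheta$ is a homeomorphism of $\Omega$, and in particular a bijection. Given any $\omega \in \Omega$, set $\omega' := \vartheta^{-1}\omega$, so that $\omega = \vartheta\omega'$. Applying the above isomorphism with $\omega'$ in place of $\omega$, one obtains
\[
    \dim \B^+_{x+1,\omega} \ = \ \dim \B^+_{x,\vartheta\omega'} \ = \ \dim \B^+_{x,\omega},
\]
where the last equality is simply relabeling: the hypothesis says $\B^+_{x,\eta}$ is finite dimensional for \emph{every} $\eta \in \Omega$, hence in particular for $\eta = \vartheta\omega'$. Since $\omega \in \Omega$ was arbitrary, this gives the conclusion that $\B^+_{x+1,\omega}$ is finite dimensional for all $\omega$.

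There is no real obstacle here; the content of the statement is entirely packaged into Lemma~\ref{lem:bundlemap-covariance}. The only thing to be careful about is the direction of the isomorphism (it goes from the fiber over $\vartheta\omega$ to the fiber over $\omega$), which is handled by one application of $\vartheta^{-1}$. As a remark one could add: iterating this argument shows $\dim \B^+_{x+k,\omega} = \dim \B^+_{x,\vartheta^k\omega}$ for every $k \in \Z$, so in the bundle setting the dimension function $\omega \mapsto \dim \B^+_{x,\omega}$ transports covariantly along orbits of $\vartheta$, which is the natural translation-covariant analogue of the rank being a conserved quantity in the deterministic case of \cite{FannesNachtergaeleWerner}.
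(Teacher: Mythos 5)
Your proposal is correct and is essentially the argument the paper intends: the corollary is stated without proof precisely because it follows immediately from the isometric isomorphism $\tilde T_{x,\vartheta\omega}\colon \B^+_{x,\vartheta\omega}\to\B^+_{x+1,\omega}$ of Lemma~\ref{lem:bundlemap-covariance}, exactly as you use it. Your care with the direction of the isomorphism (precomposing with $\vartheta^{-1}$) is the only point of substance, and you handle it correctly.
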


\begin{rmk}\label{rmk:normed_quotient_bundle}
    A natural question is whether or not our construction of a Banach bundle out of the quotient spaces $\B_{x,\omega}$ will hold if we relax the condition that $\B_{x,\omega}$ be finite dimensional. This is not clear since Lemma~\ref{lem:n-norm} requires us to produce a linear functional $\ell$ of a specific form, namely, $\ell = \eta_{\omega}([c,\omega], \cdot)$ in order to show that the $\mathfrak{n}_\omega$ norm equals the quotient norm. In infinite dimensions, the Hahn-Banach theorem could be used here, but it will not guarantee that $\ell$ is of the correct form, and it is not clear to us at this time how to verify continuity in $\omega$ of the norms $\B_{x,\omega}$ which is required for the Fell-Doran theorem. 

    However, there is a remarkably similar construction in the functional analysis literature in \cite{ResendeSantos}. A careful review of this paper shows that if we relax the finite-dimensionality requirement, then $\{\B_{x,\omega} \colon \omega \in \Omega\}$ can be topologized to form a continuous normed quotient vector bundle. This is more generality than we need, so we will tend to other results. 
\end{rmk}

\subsection{The Fundamental Theorem}

Let us recall the notation and set-up of the previous section. Fix a cut site $x\in \mathbb{Z}+ \frac12$, and consider the family of Banach spaces $\{\B_{x,\omega}^+\colon \omega \in \Omega\}$ defined fiber-wise by the equivalence relation $a\sim_\omega b$ if and only if $\psi(c\otimes (a-b)) = 0$ for all $c\in \A_x^-$. Consider also the family of selections $\B_x^+:=\{ \omega \mapsto [A(\omega), \omega]_x\}$ where $A\in C(\Omega,\A_x^+)$, which forms a linear space. Theorem~\ref{thm:Banach_Bundle} and Corollary~\ref{cor:small_corrs} motivate the following definition: 

\begin{define}\label{def:small_correlations}
    Let $\Omega$ be a compact Hausdorff topological space equipped with an homeomorphism $\vartheta$. Let $\psi_\omega$ a weakly* continuous mapping into the state space of the quasi-local algebra $\A_{\mathbb{Z}}$ that is translation co-variant with respect to $\vartheta$. If for every $\omega\in \Omega$, each quotient space $\B_{x,\omega}^+$  at the cut site $x\in \mathbb{Z}+\frac{1}{2}$, is finite dimensional, we say that $\psi_\omega$ \textbf{has small correlations}. 
\end{define} 

An example of such a TCVS is as follows.

\begin{exmp}\label{exmp:covariant_product_state}    
Let $\xi_\omega$ be the weakly*-continuous covariant product state from Proposition~\ref{prop:existence_TCPS}. Then, $\xi_\omega$ has small correlations with fiber dimension 1. 
\end{exmp}
\begin{proof}
    To see this, $a,b\in \A_x^+$. We note that if $a\sim_{x,\omega}b$, then, $\psi_\omega(a-b) = 0$ and so $I_{x,\omega} \subset \ker \psi_\omega|_{\A_{x}^+}$. Conversely, if $a\in \ker \psi_{\omega}|_{\A_{x}^+}$, then since $\psi_\omega$ factors over the bond at $x$, we get $\psi_\omega(c\otimes a) = \psi_\omega(c) \psi_{ \omega}(a)=0$ for all $c\in \A_{x}^-$ for all $\omega$. So $I_{x,\omega}$ is the kernel of a linear functional and is therefore of co-dimension one by general functional analysis, hence $\B_{x,\omega}^+ \cong \C$ for all $\omega$. 
\end{proof}

Several remarks are in order. 

\begin{rmk}
    The analysis in Example~\ref{exmp:covariant_product_state} works for any weakly* continuous function for which the expectation $\psi_\omega(a\otimes b) = \psi_\omega(a) \psi_{\omega}(b)$ for all $a\in \A_{x}^-$, all $b\in \A_{x}^+$ and all $\omega \in \Omega$ (i.e. a state with no correlations across the bipartition of the chain at $x$)
\end{rmk}
\begin{rmk}\label{rmk:dimension_may_vary}
    In general, we do not expect that the Bundle fibers are all equi-dimensional as in Example~\ref{exmp:covariant_product_state}. However, we show in Lemma~\ref{lem:cut_independent}  below, that the Banach bundle describing the correlations is independent of the cut-site up to isomorphism.
\end{rmk}

\begin{rmk}\label{rmk:canonical_bundle}
    If $\psi_\omega$ has small correlations across $x$, then the Banach spaces $\{B_{x,\omega}^+\}_{\omega\in \Omega}$ equipped with the linear space of sections given by $\{\omega \mapsto [A(\omega), \omega)]_x\colon A\in C_{\Omega}(A_x^+)\}$ defined in Lemma~\ref{lem:quot_eval} form a Banach bundle over $\Omega$. We denote this bundle by $(\mathcal{C}, \pi, \Omega)$ (reference to a specific cut site is not necessary due to Remark~\ref{rmk:dimension_may_vary}), and we refer to it as the \textbf{small correlation bundle}. We shall see in Theorem~\ref{thm:minimal} below that the use of the definite article is justified since $\mathcal{C}$ is minimal with respect to certain natural conditions. 
\end{rmk}

We now show a TCVS analog of \cite[Proposition 2.1]{FannesNachtergaeleWerner} in Theorem~\ref{thm:Fundamental} below. We split the proof into a few easier but technical lemmas first. 
\begin{lem}\label{lem:pre-fundamental}
    Let $\mathcal{A}$ be a unital $C^*$-algebra and let $\A_{\mathbb{Z}}$ be the associated quasilocal algebra. Let $\Omega$ be a compact Hausdorff space equipped with an homeomorphism $\vartheta$. Let $\psi_\omega$ be a weakly* continuous, translation covariant state on $\A_{\mathbb{Z}}$ with small correlations. Let also $\mathscr{B}_x$ be the Banach bundle at the cut site $x$. Fix $a\in \mathcal{A}$ and $\omega\in \Omega$. Define the mapping 
    \begin{equation}\label{eqn:canonical_transfer}
        \begin{split}
            E^x_{a,\omega}&:\B_{x,\omega}^+ \to \B_{x,\vartheta^{-1}\omega}^+ \text{ via }\\
            &[b,\omega]_x \mapsto [a\otimes \tau_1(b), \vartheta^{-1}\omega]_x\,.
        \end{split}
    \end{equation}
    The following properties hold. 
    \begin{enumerate}[label = (\alph*)]
        \item For each $a\in A$ and $\omega\in \Omega$, the $E^x_{a,\omega}$ mapping is well-defined bounded linear map on the fiber with 
        \begin{equation}\label{eqn:transfer_norm_est}
            \|E^x_{a,\omega}[b,\omega]\| \le \|a\|_A \|[b,\omega]_x\|\,.
        \end{equation} 

        \item For all $a,b\in A$, the mapping 
            \begin{equation}
                \omega \mapsto E^x_{a,\omega}[b,\omega]_x \in B_{x,\vartheta^{-1}\omega}\,,
            \end{equation} is continuous in the (unique) Banach bundle topology. Thus, $E_a \in \hom^\vartheta(\mathscr{B}_x, \mathscr{B}_x)$.
        \item There exists a continuous section $e^x\in \Gamma(\mathscr{B}_x)$ and a homomorphism $\varrho^x\in \hom(\mathscr{B}_{x}, \Omega \times \C)$ so that for any pure tensor $a\otimes b\otimes \cdots \otimes c\in \A_{[x, x+n]}$, one has
        \begin{equation}\label{eqn:transfer_factorization}
            \psi_\omega(a\otimes b\otimes \cdots \otimes c) = \varrho^x_\omega \circ E^x_{a,  \omega}\circ E^{x}_{b,\vartheta \omega}\circ \cdots \circ E^x_{c, \vartheta^{n}\omega}(e^x(\vartheta^n\omega))\,. 
        \end{equation} 
    \end{enumerate}
\end{lem}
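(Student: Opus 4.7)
The plan is to treat the three parts in sequence; the main technical point is to verify that every relevant equivalence class is preserved under the relevant shift, and the main bookkeeping obstacle is the index tracking in (c).

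For (a), I would first verify that $a \otimes \tau_1(I_{x,\omega}) \subseteq I_{x,\vartheta^{-1}\omega}$, which makes $E^x_{a,\omega}$ well defined. Pick $b \in I_{x,\omega}$ and any $d \in \A_x^-$; translation covariance in the form $\psi_{\vartheta^{-1}\omega}(X) = \psi_\omega(\tau_{-1} X)$ gives
\[
\psi_{\vartheta^{-1}\omega}(d \otimes a \otimes \tau_1(b)) \;=\; \psi_\omega\bigl(\tau_{-1}(d \otimes a) \otimes b\bigr) \;=\; 0,
\]
since $\tau_{-1}(d \otimes a) \in \A_x^-$ and $b \in I_{x,\omega}$. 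The bound~\eqref{eqn:transfer_norm_est} then follows from the cross-norm property and the fact that $\tau_1$ is an isometry: for any $b_0 \in I_{x,\omega}$ one has $\|a \otimes \tau_1(b + b_0)\| \le \|a\| \cdot \|b+b_0\|$, and taking the infimum on both sides over $b_0 \in I_{x,\omega}$ gives $\|E^x_{a,\omega}[b,\omega]_x\| \le \|a\| \cdot \|[b,\omega]_x\|$. Linearity is immediate from the construction.

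For (b), I would use the criterion that a covariant homomorphism of Banach bundles is determined by its action on a dense $C(\Omega)$-submodule of sections. The image of $C(\Omega, \A_x^+)$ under the quotient map supplies such a submodule by the Stone-Weierstrass theorem for Banach bundles, since the constants already separate fibers. Given $B \in C(\Omega, \A_x^+)$ with corresponding section $\sigma(\omega) = [B(\omega),\omega]_x$, I write
\[
(E_a \cdot \sigma)(\omega) \;=\; \bigl[a \otimes \tau_1(B(\omega)),\,\vartheta^{-1}\omega\bigr]_x .
\]
Define $C \in C(\Omega, \A_x^+)$ by $C(\omega') := a \otimes \tau_1\!\bigl(B(\vartheta \omega')\bigr)$; then $\omega' \mapsto [C(\omega'), \omega']_x$ is a continuous section of $\mathscr{B}_x$ (it lives in $\B_x^+$ by construction), and by design $(E_a \cdot \sigma)(\omega) = [C(\vartheta^{-1}\omega), \vartheta^{-1}\omega]_x$ is its precomposition with the homeomorphism $\vartheta^{-1}$. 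Hence $E_a \cdot \sigma$ is continuous into the bundle, and combined with (a) this identifies $E_a$ as a bounded $\vartheta^{-1}$-covariant homomorphism from $\mathscr{B}_x$ to itself.

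For (c), I would set $e^x(\omega) := [\one,\omega]_x$, i.e., the class of the constant function $\one \in C(\Omega, \A_x^+)$, which is automatically a continuous section. I define $\varrho^x_\omega([b,\omega]_x) := \psi_\omega(b)$ for $b \in \A_x^+$; well-definedness is the special case $c = \one \in \A_x^-$ of the defining relation of $I_{x,\omega}$, and continuity as a homomorphism into the trivial bundle $\Omega \times \C$ follows from Lemma~\ref{lem:weak*_cts_on_C(A)} applied along each representing section. The factorization is then verified by induction on $n$: each application of $E^x_{a_j,\vartheta^j \omega}$ prepends $a_j$ at the leftmost site and shifts the existing tensor one position to the right through $\tau_1$, decrementing the $\omega$-index by $\vartheta^{-1}$; after all $n+1$ compositions one lands in a class in $\B^+_{x,\vartheta^{-1}\omega}$ whose representative is the pure tensor $a \otimes b \otimes \cdots \otimes c$, and evaluating $\varrho^x$ on this class yields $\psi_\omega(a \otimes b \otimes \cdots \otimes c)$ after absorbing the accumulated shift via $\psi_{\vartheta^{-k}\omega}(\cdot) = \psi_\omega(\tau_{-k}(\cdot))$.

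The main obstacle is not conceptual but notational: the indexing scheme in the statement requires that the fiber in which each $E^x_{a_j,\vartheta^j \omega}$ lands matches the domain of the next map, and that the final $\varrho^x$ is evaluated at the fiber produced after the net shift. Keeping these compatible is precisely the indexing convention carried over to Theorem~\ref{thmx:Fundamental}, and once it is fixed the verification reduces to iteratively applying the covariance identity together with the isometry of $\tau_1$ and the defining properties of $I_{x,\cdot}$ established in (a).
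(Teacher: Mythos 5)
Your parts (a) and (c) run essentially along the paper's own lines: the well-definedness computation via $\psi_{\vartheta^{-1}\omega}=\psi_\omega\circ\tau_{-1}$ and the quotient-norm estimate in (a) are the same, as are the choices $e^x(\omega)=[\one,\omega]_x$ and $\varrho^x_\omega[b,\omega]_x=\psi_\omega(\one_{\A_x^-}\otimes b)$ and the induction in (c). Part (b) is where you genuinely diverge. The paper verifies bundle-continuity of $\omega\mapsto E^x_{a,\omega}[b,\omega]_x$ directly on nets, using the convergence criterion of Lemma~\ref{lem:bundle_convergence} together with continuity of the quotient norm along $C(\Omega,\A_x^+)$-sections (Lemma~\ref{lem:n-norm} and Lemma~\ref{lem:cts_B_x}, which is where small correlations enters explicitly). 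You instead note that $E_a$ sends the canonical dense $C(\Omega)$-submodule of sections $\omega\mapsto[B(\omega),\omega]_x$ to a continuous section precomposed with $\vartheta^{-1}$, and then appeal to the homomorphism criterion from Section~\ref{sec:Bundles} (the specialization of Gutman's theorem, whose implication from continuity on a dense submodule to bundle-continuity requires the uniform bound $\|E_{a,\omega}\|\le\|a\|$ you obtain in (a)). Both routes ultimately rest on the same facts, since the Fell--Doran topology on $\mathscr{B}_x$ exists only because of small correlations, but yours avoids handling general points of the total space and makes explicit why checking on section data suffices, a point the paper's ``thus $E_a\in\hom^\vartheta$'' passes over quickly.

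One caveat on (c): if the subscripts of $E^x$ are read literally as source fibers, as in~(\ref{eqn:canonical_transfer}), the composition in~(\ref{eqn:transfer_factorization}) lands in $\B^+_{x,\vartheta^{-1}\omega}$, exactly as you say; but then ``evaluating $\varrho^x$'' means applying $\varrho^x_{\vartheta^{-1}\omega}$, which returns $\psi_{\vartheta^{-1}\omega}$ of the tensor sitting at the left edge of $\A_x^+$, i.e.\ $\psi_\omega$ of the tensor shifted one site to the \emph{left}, so your ``absorb the accumulated shift by covariance'' step does not literally close the identity as stated. This off-by-one is inherited from the statement itself (the paper's own two-site computation applies $E_{a,\omega}$ to a class over $\vartheta\omega$), and is repaired by advancing every disorder index inside the composition by one power of $\vartheta$, giving $\varrho^x_\omega\circ E^x_{a,\vartheta\omega}\circ\cdots\circ E^x_{c,\vartheta^{n+1}\omega}(e^x(\vartheta^{n+1}\omega))$; so this is a bookkeeping wrinkle shared with the paper rather than a gap in your argument, but it deserves to be fixed explicitly if you write the induction out.
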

\begin{proof}
$(a)$: Linearity in the argument is obvious, so let us check this map is well-defined. Let $c\in \A_{x}^-$ and note that 
    \[
        \psi_{\vartheta^{-1}\omega}(c \otimes a \otimes \tau_1(b)) = \psi_{\vartheta^{-1}\omega}(\tau_1(\, \tau_{-1}(c\otimes a) \otimes b\,)) = \psi_{\omega}( \tau_{-1}(c\otimes a) \otimes b)\,.
    \] Thus, if $[b,\omega]_x = 0$, its image $E^x_{a,\omega}[b,\omega]_x\in \B_{x, \vartheta^{-1}\omega}^+$ is also zero. To show the norm estimate, let $d\in I_{x,\omega}$, and note that 
    \[
        \|E^{x}_{a,\omega}[b,\omega]_x\|_{\B_{x,\vartheta^{-1}\omega}} = \|E^{x}_{a,\omega}[b+d,\omega]_x\|_{\B_{x,\vartheta^{-1}\omega}} = \inf_{f\in I_{x,\vartheta^{-1}\omega}}\| a\otimes \tau_1(b+d) + f\|_{\A_{x}^+}\le \|a\| \|b+d\|_{\A_{x}^+}\,,
    \] since $\tau_1$ is an automorphism. Taking the infimum in $d$, over both sides yields~(\ref{eqn:transfer_norm_est}). 

    $(b)$: Let $(\omega_{\lambda})_{\lambda\in \Lambda}$ be a net in $\Omega$ which converges to $\omega'$. We need to show that $E^x_{a, \omega_{\lambda}}[b, \omega_{\lambda}]_x \to E^x_{a, \omega'}[b, \omega']_x$ in the bundle $\mathscr{B}_x$. To begin, notice that by the Stone-Weierstra{\ss}   theorem for bundles combined with Theorem~\ref{thm:Banach_Bundle}, the linear space of sections given by 
    \[
        \B_x^+ := \{ \omega\mapsto [A(\omega), \omega]_x \colon A\in C(\Omega,\A_{\Z})\}\,,
    \] is dense in $\Gamma(\mathscr{B}_x)$. Therefore, by Lemma~\ref{lem:bundle_convergence} $(iii)$, it is sufficient to show that the base-points converge and that for every $A\in C(\Omega,\A_{\Z})$, the net of norms $\|E^x_{a, \omega_{\lambda}}[b, \omega_{\lambda}]_x - [A(\omega_\lambda), \omega_\lambda]_x\| \to \|E^x_{a, \omega'}[b, \omega']_x - [A(\omega'), \omega']_x\|$. We note $\pi(E^x_{a,\omega_{\lambda}}[b, \omega_{\lambda}]_x ) = \vartheta^{-1}(\omega_\lambda) \to \vartheta^{-1}(\omega') = \pi(E^x_{a,\omega'}[b, \omega']_x)$ since $\vartheta$ is an homeomorphism. 

    Now, set $\tilde A(\omega) = a\otimes \tau_1(b) - A\circ \vartheta^{-1}(\omega_{\lambda}) \in C(\Omega,\A_{\Z}).$ By Lemma~\ref{lem:n-norm}, and the assumption that $\psi_\omega$ has small correlations, we have 
    \[
        \Omega \ni \omega \mapsto \|[\tilde A(\omega), \omega]_x\|_{B_{x,\omega}^+}\,,
    \] is continuous. One computes
    \begin{align*}
        \|E^x_{a, \omega_{\lambda}}[b, \omega_{\lambda}]_x - [A(\omega_\lambda), \omega_\lambda]_x\|&= \|[a\otimes \tau_1(b) - A\circ \vartheta^{-1}(\omega_{\lambda}), \vartheta^{-1}(\omega_\lambda)]_x\|\\
        &= \|[\tilde A(\omega_{\lambda}), \vartheta^{-1}\omega_{\lambda}]_x\|_{B_{x,\vartheta^{-1}\omega_{\lambda}}^+}\\
        &\to \|[\tilde A(\omega'), \vartheta^{-1}\omega']_x\| = \|E^x_{a, \omega'}[b, \omega']_x - [A(\vartheta^{-1}\omega'), \vartheta^{-1}\omega']_x\|\,,
    \end{align*} concluding the proof. 

$(c)$: Take $e^x(\omega) = [\one_{\A_{x}^+}, \omega]_x$ and $\varrho_\omega[b,\omega]_x = \psi_\omega(\one_{\A_x^-} \otimes b)$. The section $e^x$ is continuous by definition of the topology on $\mathscr{B}_x$. Notice also that $\varrho^x_\omega[A(\omega), \omega] = \psi_\omega(\one\otimes A(\omega))$ is continuous by Lemma~\ref{lem:weak*_cts_on_C(A)}. To show~(\ref{eqn:transfer_factorization}), let us first start with two uncorrelated observables $a\otimes b\in \A_{[x, x+1]}$. One has 
\begin{align*}
    \psi_\omega(\one_{\A_x^{-}} \otimes a \otimes b) &= \varrho^x_{\omega}[a\otimes b, \omega]_x= \varrho^x_\omega[a\otimes \tau_{1}( \tau_{-1}(b)), \vartheta^{-1} \vartheta \omega]_x\\
    &= \varrho^x_\omega \circ E_{a,\omega}[\tau_{-1}(b), \vartheta \omega]_x= \varrho^x_\omega \circ E_{a,\omega}[\tau_1(\tau_{-1}(\tau_{-1}(b))), \vartheta^{-1} \vartheta(\vartheta \omega)]_x\\
    &= \varrho^x_\omega E_{a,\omega}^x \circ E_{b, \vartheta \omega}^x[\one_{A_{x}^+}, \vartheta^2\omega]_x = \varrho^x_{\omega} \circ E^x_{a,\omega} \circ E^x_{b, \vartheta \omega}(e^x(\vartheta^2\omega))\,.
\end{align*} Note that since $b$ is supported on a single site $\tau_{-2}(b\otimes \one_{\A_{x+2}^+}) = b \otimes \one_{\A_{x}^+}$, from which the last equality follows. The general formula now follows by induction.
\end{proof}

\begin{lem}[Freedom from a specific cut site]\label{lem:cut_independent}
    Suppose that $\psi_\omega$ is a TCVS with small correlations. Then, for any cute site $x\in \mathbb{Z}+\frac12$ there is an isometric Banach bundle isomorphism $\tilde T$ between $\mathscr{B}_x$ and $\mathscr{B}_{x+1}$ which is covariant in $\vartheta$. Moreover the transfer operators defined in~(\ref{eqn:canonical_transfer}) obey
    \begin{equation}\label{eqn:transfer_automorphism_covariance}
        E_{a,\omega}^x = \tilde T^{-1} \circ E_{a,\vartheta^{-1}\omega}^{x+1} \circ \tilde T\,.
    \end{equation}
\end{lem}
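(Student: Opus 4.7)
The plan is to promote the pointwise isomorphisms $\tilde T_{x,\vartheta\omega}$ from part 1 of Lemma~\ref{lem:bundlemap-covariance} into a bundle morphism and then verify the algebraic relation. Re-indexing by $\omega_0 := \vartheta \omega$, one obtains, for each $\omega_0 \in \Omega$, an isometric isomorphism $\tilde T_{\omega_0} : \B_{x,\omega_0}^+ \to \B_{x+1,\vartheta^{-1}\omega_0}^+$ given by $[a,\omega_0]_x \mapsto [\tau_1(a), \vartheta^{-1}\omega_0]_{x+1}$. Gluing these fibre maps together yields a candidate $\tilde T : \mathscr{B}_x \to \mathscr{B}_{x+1}$ covering the base homeomorphism $\vartheta^{-1}$, which in the terminology of Definition~\ref{def:bundle_hom} is a $\vartheta^{-1}$-covariant homomorphism (the precise sense of ``covariant in $\vartheta$'' appropriate here). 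The analogous construction with $\tau_{-1}$ in place of $\tau_1$ produces a pointwise inverse, so once continuity of both maps is verified, $\tilde T$ is automatically an isometric bundle isomorphism.

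The key step is checking continuity. By the equivalent characterisations of covariant homomorphisms at the end of Section~\ref{sec:Bundles}, it suffices to check that $\tilde T \cdot \sigma$ is continuous for $\sigma$ in a dense $C(\Omega)$-submodule of $\Gamma(\mathscr{B}_x)$. I would take the submodule of sections $\sigma_A(\omega) := [A(\omega),\omega]_x$ with $A\in C(\Omega, \A_x^+)$, whose fibrewise density was exactly the input hypothesis of Theorem~\ref{thm:quotients_are_bbundled}. Then $\tilde T\sigma_A(\omega) = [\tau_1(A(\omega)), \vartheta^{-1}\omega]_{x+1}$, and defining $B(\omega') := \tau_1(A(\vartheta\omega')) \in C(\Omega, \A_{x+1}^+)$, which is continuous because $\vartheta$ is a homeomorphism and $\tau_1$ a bounded isomorphism, one rewrites $\tilde T \sigma_A = \sigma_B \circ \vartheta^{-1}$, continuous as a composition of continuous maps into $\mathscr{B}_{x+1}$. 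The continuity of $\tilde T^{-1}$ is obtained by a symmetric argument using $\tau_{-1}$ and $\vartheta$.

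The covariance relation~(\ref{eqn:transfer_automorphism_covariance}) is then a direct fibrewise calculation. Tracing $[b,\omega]_x \in \B^+_{x,\omega}$ through the right-hand side, one computes in turn $\tilde T[b,\omega]_x = [\tau_1(b), \vartheta^{-1}\omega]_{x+1}$, then $E^{x+1}_{a,\vartheta^{-1}\omega}[\tau_1(b), \vartheta^{-1}\omega]_{x+1} = [a\otimes \tau_2(b), \vartheta^{-2}\omega]_{x+1}$, and finally $\tilde T^{-1}$ (which sends $[c,\omega']_{x+1}\mapsto [\tau_{-1}(c), \vartheta\omega']_x$) produces $[\tau_{-1}(a)\otimes \tau_1(b), \vartheta^{-1}\omega]_x$. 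This equals $E^x_{a,\omega}[b,\omega]_x = [a\otimes \tau_1(b), \vartheta^{-1}\omega]_x$ once one accounts for the convention in~(\ref{eqn:canonical_transfer}) that the one-site operator $a$ in $E^y_{a,\cdot}$ is embedded at the leftmost site of $\A_y^+$, so that shifting $a$ from the leftmost site of $\A_{x+1}^+$ back to the leftmost site of $\A_x^+$ via $\tau_{-1}$ simply restores it to its original place.

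The main technical obstacle is the continuity argument in the middle paragraph. After Lemma~\ref{lem:bundlemap-covariance} does the pointwise bijective work and Lemma~\ref{lem:cts_B_x} guarantees continuity of fibre norms for functions in $C(\Omega,\A_{x+1}^+)$, the one genuinely useful observation is that pre-composing a continuous section of $\mathscr{B}_{x+1}$ with the homeomorphism $\vartheta^{-1}$ on the base preserves continuity in the bundle topology; the rest is bookkeeping with the equivalence relations.
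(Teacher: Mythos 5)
Your proposal is correct and follows essentially the same route as the paper: the same fibrewise map $[b,\omega]_x \mapsto [\tau_1(b),\vartheta^{-1}\omega]_{x+1}$ from Lemma~\ref{lem:bundlemap-covariance}, continuity checked on the dense $C(\Omega)$-submodule of sections coming from $C(\Omega,\A_x^+)$, and the covariance relation~(\ref{eqn:transfer_automorphism_covariance}) verified by the same direct fibrewise computation. The only cosmetic difference is that you route the continuity check through the characterization of covariant homomorphisms (harmless here since $\tilde T$ is fibrewise isometric, hence bounded), whereas the paper invokes Lemma~\ref{lem:bundle_convergence} together with the Koopman operator to identify the image submodule of sections.
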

\begin{proof}
    Recall that possessing small correlations is enough to guarantee that the family $\{B_{x,\omega}^+\}_{\omega\in \Omega}$ forms a Banach bundle $(\mathscr{B}_x, \pi^x, \Omega)$ with a dense $C(\Omega)$-sub-module of sections given by the vector space of functions $\{\omega \mapsto [A(\omega), \omega]_x \colon A\in C(\Omega,\A_x^+)\}\subset \Gamma(\mathscr{B}_x)$. By Corollary~\ref{cor:small_corrs}, we in fact have a family of Banach bundles indexed by the half-integers.

    Let $\tilde T[b,\omega]_x = [\tau_1(b), \vartheta^{-1}\omega]_{x+1}$ as in Lemma~\ref{lem:bundlemap-covariance}. Notice that $\pi^{x+1}\circ \tilde T = \vartheta^{-1} \circ \pi^x$. Thus, $\tilde T$ preserves convergence of the base-points since $\vartheta$ is an homeomorphism. We have already seen that $\tilde T$ is fiber-wise a linear isometry, so all that needs to be done is to show that it is an homeomorphism in the bundle topology. This is immediate by observing that 
    \[
        \|[b,\omega]_{x} - [A(\omega), \omega]_{x}\| = \|\tilde T [ b-A(\omega), \omega]_x\| = \|[b,\vartheta^{-1}\omega]_{x+1} - [(K_{\vartheta}A)(\vartheta^{-1}\omega), \vartheta^{-1} \omega]_{x+1}\|\,.
    \]Since the Koopman operator $K_{\vartheta}$ is a $*$-isomorphism, it follows that $\{\omega \mapsto [K_{\vartheta}A(\omega), \omega]_x \colon A\in C(\Omega,\A_x^+)\}$ is a dense $C(\Omega)$ sub-module of $\Gamma(\mathscr{B}_{x+1})$. Taking these observations together is enough to guarantee that whenever $(b_\lambda) \subset \mathscr{B}_x$ is a convergent net, then so is $(\tilde Tb_\lambda)\subset \mathscr{B}_{x+1}$ by Lemma~\ref{lem:bundle_convergence}. Equation~(\ref{eqn:transfer_automorphism_covariance}) follows from straightforward computation. 
\end{proof}

\begin{note}
    In view of this freedom of choice, we will henceforth drop the dependence on a choice of cut site and instead write all  formulae which follow with respect to the cut $x = -1/2$, so that $\A_x^+ = \A_{\mathbb{N}\cup \{0\}}=:\A^+$ and $\A_x^- = \A_{-\mathbb{N}}:= \A^-$. We will also reserve the letters $\varrho = \varrho^{-1/2}$, $E:=E^{-1/2}$, and $e^{-1/2}:=e$ for the family of objects defined as in Lemma~\ref{lem:bundlemap-covariance} on the canonical bundle $(\mathcal{C}, \pi, \Omega)$. 
\end{note} 

 In \cite{FannesNachtergaeleWerner}, the transfer operators are defined in a slightly different way by considering a map $E$ defined on a tensor product of the on-site $C^*$-algebra $\mathcal{A}$ and the space of correlations $B$. This allows them to define iterates on successive tensor powers of $\mathcal{A}$. Since we prefer point-wise notation for our transfer operators, it is not as obvious that such an extension exists but the repairs are only minimal as the following lemma shows. 

\begin{lem}[Iterates of Transfer Operators]\label{lem:iterate_hom}
    Let $\mathcal{A}$ be a unital $C^*$-algebra and $\A_\Z$ the associated quasilocal algebra. Let $\mathcal{C}$ denote the canonical bundle as in Remark~\ref{rmk:canonical_bundle}. Given a family of transfer operators $T_1:=\{E_a:a\in \mathcal{A}\}\subset \hom^{\vartheta^{-1}}(\mathcal{C}, \mathcal{C})$, and a positive number $p\in \mathbb{N}$, and any finite sum of pure tensors $s:=\sum_{j}a_1^{(j)} \otimes \cdots \otimes a_p^{(j)} \in \mathcal{A}^{\otimes p}$, 
        \begin{equation}
            \|\sum_j E_{a_1^{(j)}}\circ \cdots \circ E_{a_p^{(j)}}\|\le \|\sum_{j}a_1^{(j)} \otimes \cdots \otimes a_p^{(j)}\|_{\min}\,.
        \end{equation} Thus, there exists a family of transfer operators $\{E^{(p)}_{b} : b\in \mathcal{A}^{\otimes p}\} \subset \hom^{\vartheta^{-p}}(\mathcal{C}, \mathcal{C})$ extending length $p$ compositions $E_{a_1}\circ \cdots \circ E_{a_p}$ of elements of $T_1$. 
\end{lem}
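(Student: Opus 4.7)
The plan is to define $E^{(p)}$ on elementary tensors via iterated composition of the $E_a$'s, derive a compact fiberwise formula, deduce the norm estimate directly from the cross-norm property of the minimal tensor product, and then extend to all of $\mathcal{A}^{\otimes p}$ by density.

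For a pure tensor $c = a_1 \otimes \cdots \otimes a_p \in \mathcal{A}^{\odot p}$ I set $E^{(p)}_c := E_{a_1} \circ \cdots \circ E_{a_p}$ and extend linearly to the algebraic tensor product $\mathcal{A}^{\odot p}$. Since each $E_{a_i}$ is $\vartheta^{-1}$-covariant, the composition is $\vartheta^{-p}$-covariant, so every $E^{(p)}_c$ with $c \in \mathcal{A}^{\odot p}$ already lies in $\hom^{\vartheta^{-p}}(\mathcal{C},\mathcal{C})$. A short induction on $p$, using only the definition of $E_{a,\omega}$ from equation~(\ref{eqn:canonical_transfer}) and the automorphism property of $\tau_1$, yields the fiberwise formula
\begin{equation*}
    E^{(p)}_{c,\omega}[b,\omega] \;=\; \left[\,c \otimes \tau_p(b),\; \vartheta^{-p}\omega\,\right],
\end{equation*}
where on the right $c = \sum_j a_1^{(j)} \otimes \cdots \otimes a_p^{(j)}$ is identified with the corresponding element of $\mathcal{A}_{[0,p-1]}$ placing $a_i^{(j)}$ at site $i-1$.

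The norm bound then drops out cleanly. The quotient norm on $\B^+_{-1/2,\vartheta^{-p}\omega}$ is majorized by the $\mathcal{A}^+$-norm of any lift, and since $c$ is supported on $\{0,\ldots,p-1\}$ while $\tau_p(b)$ is supported on $\{p, p+1, \ldots\}$, the element $c \otimes \tau_p(b)$ is genuinely a pure tensor in the minimal tensor product $\mathcal{A}_{[0,p-1]} \otimes_{\min} \mathcal{A}_{[p,\infty)} \subset \mathcal{A}^+$. The cross-norm property of the minimal tensor product (see Section~\ref{sec:C*algs}) then gives $\|c \otimes \tau_p(b)\|_{\mathcal{A}^+} = \|c\|_{\min}\,\|b\|$; passing to the quotient norm on the fiber, then taking a supremum over unit vectors and over $\omega$, yields $\|E^{(p)}_c\|_\infty \le \|c\|_{\min}$ as claimed.

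For the final assertion, observe that $c \mapsto E^{(p)}_c$ is a linear map from $(\mathcal{A}^{\odot p}, \|\cdot\|_{\min})$ into $\hom^{\vartheta^{-p}}_b(\mathcal{C},\mathcal{C})$ of operator norm at most one, so by density of $\mathcal{A}^{\odot p}$ in $\mathcal{A}^{\otimes p}$ it extends uniquely by continuity. The one step requiring genuine care is checking that the target $\hom^{\vartheta^{-p}}_b(\mathcal{C},\mathcal{C})$ is complete under the uniform norm: for a uniformly Cauchy sequence $(H_n)$, the fiberwise operator-norm limits $H_\omega$ exist and are uniformly bounded, and for each $\sigma \in \Gamma(\mathcal{C})$ the sections $H_n \cdot \sigma$ converge uniformly to $H \cdot \sigma$. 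An $\varepsilon/3$ argument combined with criterion (iii) of Lemma~\ref{lem:bundle_convergence} then shows that a uniform limit of continuous sections of a Banach bundle is again continuous, so the limit $H$ is a bounded covariant bundle homomorphism and the extension produces the desired family $\{E^{(p)}_b : b \in \mathcal{A}^{\otimes p}\}$.
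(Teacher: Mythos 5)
Your proposal is correct and follows essentially the same route as the paper: define the iterate on pure tensors, derive the fiberwise formula $E^{(p)}_{c,\omega}[b,\omega]=[c\otimes\tau_p(b),\vartheta^{-p}\omega]$, get the estimate from the cross-norm property together with the representation-independence of the minimal tensor norm, and extend to $\mathcal{A}^{\otimes p}$ by density (your completeness argument for $\hom^{\vartheta^{-p}}_b(\mathcal{C},\mathcal{C})$ is a harmless variant of the paper's explicit identification of the limit operator). One small point to make explicit: the bound $\|c\otimes\tau_p(b)\|\le \|c\|_{\min}\|b\|$ controls the norm of the chosen representative, so to bound the operator norm with respect to the fiber's quotient norm you should apply it to a near-optimal representative $b'$ of the class $[b,\omega]$ (legitimate because $E^{(p)}_c$ is well defined on classes) and take the infimum, exactly as in the single-operator estimate of Lemma~\ref{lem:pre-fundamental}(a).
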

\begin{proof}
    We have already observed that the transfer operators are linear in $a$. It is clear that if the estimate holds, then for any $d\in \mathcal{A}^{\otimes p}$, one can find a sequence of iterates of the form $s$ as above converging to $d$ in $\|\cdot \|_{\min}$, and the resulting operator agrees with $[b,\omega] \mapsto [d\otimes b, \vartheta^{-p}\omega]$. So the proof reduces to showing that the estimate holds. We will show this in the case that $p=2$, and from there the proof follows easily by induction. 

    Let $v:= \sum_j a_j \otimes b_j \in \mathcal{A}\otimes_{\min} \mathcal{A}$, and consider the operator $E^{(2)}_{v, \omega} := \sum_{j} E_{a_j, \vartheta \omega} \circ E_{b_j, \omega}$. We see that for any $[x,\omega] \in \mathcal{C}$, one has
        \begin{align*}
            \|E^{(2)}_{v, \omega}[x, \omega]\| &= \|\sum_{j} E_{a_j, \vartheta \omega} \circ E_{b_j, \omega}[x,\omega] \|= \|\left[\sum_j a_j \otimes b_j \otimes x, \vartheta^{-2}\omega \right] \|\\
            &\le \|\sum_j a_j \otimes b_j\| \|[x, \vartheta^{-2}\omega]\|= \|v\|_{\min} \|[x, \vartheta^{-2}\omega]\|.
        \end{align*} To obtain the last equality, recall that in the expression $a\otimes \tau_1(b)$ we have identified $a$ with its image under the faithful representation $\iota_{0}:\mathcal{A} \to \A_\Z$ via $a\mapsto \one_{(-\infty, -1]} \otimes a \otimes \one_{[1, \infty)},$ and similarly with $b$ and $\iota_1$. Recalling that the value of the minimal tensor norm does not depend on the choice of faithful representation (\cite[Proposition 3.3.11]{BrownOzawa}), we obtain 
        \[
            \| \sum_j a_j \otimes b_j \|_{\A_\Z} = \| \sum_j a_j \otimes b_j\|_{\min} = \|v\|_{\min},
        \] completing the proof. 
\end{proof}

\begin{thm}[Minimality]\label{thm:minimal}
    Suppose that $\psi_\omega$ is a TCVS with small correlations.  Let $(\mathcal{C}, \pi, \Omega)$ be the canonical correlation bundle defined in Remark~\ref{rmk:canonical_bundle} with its native transfer apparatus $(\varrho, E,e)$. Let $(\mathcal{E}, \pi', \Omega)$ be any other Banach bundle satisfying the following:
            \begin{enumerate}[label = (\roman*)]
                \item Each fiber is nontrivial and has finite dimension, 
                \item There is:
                    \begin{enumerate}[label = (ii.\arabic*)]
                        \item a distinguished continuous section $f\in \Gamma(\mathcal{E})$,
                        \item a homomorphism $\varphi \in \hom(\mathcal{E}, \Omega \times \C)$, with $\varphi_\omega(f(\omega)) \equiv 1$ and $\sup_{\omega \in \Omega}\|f_\omega\|\le 1$,
                        \item and, there is a family of \textbf{transfer operators}
        \[
            \{F_{a,\omega}:\mathcal{E}_\omega \to \mathcal{E}_{\vartheta^{-1}\omega}\colon a\in \mathcal{A}, \omega \in \Omega\}\subset \hom^\vartheta(\mathcal{E}, \mathcal{E})
        \] which are bounded and linear in $a\in \mathcal{A}$ and $e\in \mathcal{E}_\omega$, and satisfy $F_{\one, \omega}(f(\omega)) = f(\vartheta^{-1}\omega)$ and $\varphi_{\vartheta^{-1}\omega} \circ F_{\one, \omega} = \varphi_\omega$. 
                    \end{enumerate} so that whenever $n<m$ we have 
        \begin{equation}\label{eqn:transfer_expectation}
            \psi_\omega(a_n \otimes a_{n+1} \otimes \cdots \otimes a_m) = \varphi_{\vartheta^{-(n+1)}\omega}\circ F_{a_n,\vartheta^n \omega}\circ \cdots \circ F_{a_m, \vartheta^m \omega}( f(\vartheta^m \omega))\,.
        \end{equation}\end{enumerate}

    Then the following two conditions are equivalent
    \begin{enumerate}[label = (\alph*)]
        \item There exists a $\vartheta^{-1}$-co-variant isometric isomorphism $\Phi: \mathcal{C} \to \mathcal{E}$. 
        \item For all $\omega$, the span of the transfer operators $\spn\{F_{a,\omega} \circ F_{b,\vartheta\omega} \circ \cdots \circ F_{c, \vartheta^n}\circ f(\vartheta^{n}\omega )\}\subset \mathcal{E}_{\vartheta^{-1}\omega}$ and, $\spn\{\varphi_{\vartheta^{-(n+1)}\omega}\circ F_{a,\vartheta^{-n} \omega} \circ F_{b,\vartheta^{-(n-2)}\omega} \circ \cdots\circ  F_{c, \vartheta^{-1}\omega}\}\subset \mathcal{E}^*_{\vartheta^{-1}\omega}$ are dense for all $\omega \in \Omega$. 
    \end{enumerate}
Moreover, in the case that $(a)$ or $(b)$ holds, the transfer operators satisfy: 
    \begin{equation}
        F_{a,\omega} = \Phi \circ E_{a, \omega} \circ \Phi^{-1}\,.
    \end{equation}

\end{thm}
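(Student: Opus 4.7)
The plan is to prove the two directions separately, with the main work in $(b)\Rightarrow(a)$, and then to verify the ``moreover'' in each case. First, observe that unrolling the definition $E_{a,\omega}[b,\omega] = [a\otimes \tau_1(b),\vartheta^{-1}\omega]$ iteratively against $e(\omega)=[\one,\omega]$ yields
\[
E_{a_0,\omega}\circ E_{a_1,\vartheta\omega}\circ\cdots\circ E_{a_n,\vartheta^n\omega}\bigl(e(\vartheta^n\omega)\bigr) = [a_0\otimes\tau_1(a_1)\otimes\cdots\otimes \tau_1^n(a_n),\vartheta^{-1}\omega],
\]
which ranges over arbitrary local elements of $\A^+$ modulo $I_{\vartheta^{-1}\omega}$. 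By density of the local operators in $\A^+$, together with the construction of the quotient norm, these elements span a dense subspace of $\mathcal{C}_{\vartheta^{-1}\omega}$. For the direction $(a)\Rightarrow(b)$, transporting this density via the isometric intertwining isomorphism $\Phi$ yields the corresponding density in $\mathcal{E}$; the dual density follows by the symmetric argument applied to the left-side bipartition.

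For $(b)\Rightarrow(a)$, I would define
\[
\Phi_\omega\Bigl(E_{a_0,\omega}\circ\cdots\circ E_{a_n,\vartheta^n\omega}(e(\vartheta^n\omega))\Bigr) := F_{a_0,\omega}\circ\cdots\circ F_{a_n,\vartheta^n\omega}(f(\vartheta^n\omega))
\]
on the dense span of $\mathcal{C}_{\vartheta^{-1}\omega}$ described above, and extend by linearity. The first critical step is well-definedness: a vanishing linear combination of left-hand sides in $\mathcal{C}_{\vartheta^{-1}\omega}$ means, by definition of the quotient, that $\psi_{\vartheta^{-1}\omega}(c\otimes\cdot)=0$ on the representative for every $c\in\A^-$. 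Writing $c$ as a norm-limit of local tensors and invoking \eqref{eqn:transfer_expectation} for $\mathcal{E}$, this says that the proposed image in $\mathcal{E}_{\vartheta^{-1}\omega}$ is annihilated by every functional of the form $\varphi\circ F_{c_{-k},\cdot}\circ\cdots\circ F_{c_{-1},\cdot}$; the dual half of hypothesis $(b)$ then forces the image to vanish. Isometry follows similarly: by Proposition~\ref{prop:n-norm_equals_quot} the $\mathcal{C}$-norm equals $\sup_{c\in (\A^-)_1}|\psi_{\vartheta^{-1}\omega}(c\otimes\cdot)|$, and by \eqref{eqn:transfer_expectation} this supremum coincides with the norm of the image computed against the dense family of dual functionals in $\mathcal{E}^*_{\vartheta^{-1}\omega}$. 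Having constructed an isometry on a dense subspace, extend by continuity to all of $\mathcal{C}_{\vartheta^{-1}\omega}$. Continuity of $\Phi$ as a bundle map then follows from Lemma~\ref{lem:bundle_convergence} by comparing $\|\Phi(x)-\sigma(\pi x)\|_{\mathcal{E}}$ with $\|x-\Phi^{-1}\sigma(\pi x)\|_{\mathcal{C}}$ for continuous sections $\sigma$ of $\mathcal{E}$; the sections $\omega\mapsto F_{a_0,\omega}\circ\cdots(f(\cdot))$ are continuous because each $F_{a,\cdot}\in \hom^\vartheta(\mathcal{E},\mathcal{E})$ and $f\in\Gamma(\mathcal{E})$, and these sections are dense by $(b)$. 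The $\vartheta^{-1}$-covariance and the intertwining relation $F_{a,\omega}=\Phi E_{a,\omega}\Phi^{-1}$ are immediate on the spanning elements and then pass to all of $\mathcal{C}$ by density.

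The main obstacle I anticipate is the isometry step in $(b)\Rightarrow(a)$. Matching the $\mathcal{E}$-norm with the $\mathcal{C}$-norm requires not merely density of the functionals $\varphi\circ F_{c,\cdot}\circ\cdots$ in $\mathcal{E}^*_{\vartheta^{-1}\omega}$ but control on their dual norms so that the supremum over $c\in(\A^-)_1$ actually recovers the native norm. This uses the contractivity clauses in hypotheses (iii)--(iv) together with the analog of Lemma~\ref{lem:iterate_hom} for the $\mathcal{E}$-bundle, which bounds the composition norm by $\|c_{-k}\otimes\cdots\otimes c_{-1}\|_{\min}$. A secondary subtlety is the bookkeeping on base-point indices across the four compositions $\Phi$, $\Phi^{-1}$, $E_{a,\omega}$, $F_{a,\omega}$ under the $\vartheta^{-1}$-covariance convention; this is routine but easy to mishandle.
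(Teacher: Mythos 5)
Your proposal is correct and follows essentially the same route as the paper: define $\Phi$ on the spanning elements by sending iterated $E$'s applied to $e$ to the corresponding iterated $F$'s applied to $f$, verify well-definedness and injectivity by pairing against the dense dual family $\varphi\circ F_{c_{-k}}\circ\cdots\circ F_{c_{-1}}$ via \eqref{eqn:transfer_expectation}, obtain isometry from the identification of the quotient norm with $\sup_{c\in(\A^-)_1}|\psi_\omega(c\otimes\cdot)|$ (Lemma~\ref{lem:n-norm}/Proposition~\ref{prop:n-norm_equals_quot}), and get bundle continuity of $\Phi$ and $\Phi^{-1}$ by matching dense $C(\Omega)$-submodules of sections as in Lemma~\ref{lem:bundle_convergence}. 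The isometry subtlety you flag is exactly the point the paper treats tersely (its chain of equalities $\sup_{c}|\psi_\omega(c\otimes b)|=\sup_{d\in D}|\varphi\circ F^{(|\supp d|)}_{d}\circ F_{b,\omega}(f)|=\|F_{b,\omega}(f(\omega))\|$), so your treatment is on par with the paper's.
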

\begin{proof}
    $(a)\Rightarrow (b)$ follows from the definition of a co-variant isometric isomorphism.  To see $(b)\Rightarrow (a)$, define $\Phi([a, \omega]) = F_{a,\omega}(f(\omega))$. In particular, we have that $\Phi(e(\omega)) = \Phi([\one, \omega]) = f(\omega)$. To check that $\Phi$ is well-defined, suppose $[a,\omega] = 0$. In which case, for all $X^-\in \mathcal{A}^-$,
        \begin{align*}
            0 = \psi_\omega(X^- \otimes a)\, .
        \end{align*} In particular, this holds for all $x_{-n} \otimes \cdots \otimes x_{-1}$. But by the transfer conditions, this means
        \[
            0  = \varphi_{\vartheta^{-(n+1)}\omega}\circ F_{x_{-n}, \vartheta^{-n}\omega} \circ F_{x_{-1}, \vartheta^{-1}} \circ F_{a,\omega}(f(\omega))= \varphi_{\vartheta^{-(n+1)}\omega}\circ F_{x_{-n}, \vartheta^{-n}\omega} \circ F_{x_{-1}, \vartheta^{-1}} \circ \Phi([a,\omega])\,. 
        \] By linearity, the above then holds for a dense subspace of $\mathcal{E}^*_{\vartheta^{-1} \omega}$, so it must be that the  $\Phi([a,\omega]) = 0_{\vartheta^{-1}\omega}$, which denotes the zero vector in $\mathcal{E}_{\vartheta^{-1}\omega}$. 
        
        Now, let $a_0 \otimes \cdots \otimes a_n \in \mathcal{A}^+$, and observe that for any $d\in I_{0,\omega}\cap \mathcal{A}^{\loc}_\Z$ (i.e., $d\sim_{0,\omega} 0$)
        \[
            \|\Phi([a_0\otimes \cdots \otimes a_n, \omega])\| = \|\Phi([a_0\otimes \cdots \otimes a_n +d, \omega])\|= \|F_{a + d, \omega} (f(\omega)) \| \le \|F_{a + d, \omega}\| \le \|a+d\|_{\mathcal{A}_\Z}\,.
        \] Whence, 
        \[
            \|\Phi([a_0\otimes \cdots \otimes a_n, \omega])\|\le \|[a_0 \otimes \cdots \otimes a_n, \omega]\|_{\mathcal{C}_\omega}\,,
        \] since $\mathcal{A}^{\loc}_{\Z}$ is dense and $I_{0,\omega}$ is a closed subspace of $\mathcal{A}_\Z$. Therefore, $\Phi$ extends to a linear map on all fibers in $\mathcal{C}$. By construction, one sees that $\Phi$ has a fiber-wise dense range in $\mathcal{E}$. 
    
        Moreover, $\Phi$ is an injection. To see this, suppose that $\Phi([a_0 \otimes \cdots \otimes a_n, \omega])= 0_{\vartheta^{-1}\omega}$. By the transfer property, this means that 
        \[
            0 = \varphi_\omega \circ F_{a_0, \omega}\circ \cdots \circ F_{a_n, \vartheta^n \omega} \circ f(\vartheta^n \omega) = \psi_\omega(a_0\otimes \cdots \otimes a_n)\,.
        \] By Lemma~\ref{lem:n-norm}, this means, 
        \[
            \|[a_0 \otimes \cdots \otimes a_n, \omega ]\| = \sup_{c\in (\mathcal{A}^-)_1} |\psi_\omega (c\otimes a_0 \otimes \cdots \otimes a_n)| \le 8|\psi_\omega (\one \otimes a_0 \otimes \cdots \otimes a_n)| = 0\,,
        \] where the inequality is gotten by appealing to the Cartesian decomposition of operators in a $C^*$-algebra as in the argument leading up to inequality~(\ref{eqn:f_fam_equi}) in Lemma~\ref{lem:n-norm}. In fact, by working a little harder we can see that $\Phi$ is an isometry: appealing again to Lemma~\ref{lem:n-norm}, we have 
    \begin{align*}
        \|[b,\omega]\| &= \sup_{c\in (\A^-)_1}|\psi_\omega(c\otimes b)| = \sup_{d\in D\subset (\A^-)_1} |\varphi_{\vartheta^{\min\{\supp(d)\}-1}\omega} \circ F^{(|\supp(d)|)}_{d, \omega} \circ F_{b,\omega}(f(\vartheta \omega))| \\&= \sup_{g\in \mathcal{E}_{\vartheta^{-1}\omega}^*} |g(F_{b,\omega}(f(\omega)))| = \|F_{b,\omega}(f(\omega))\|\,, 
    \end{align*} where by $D$ we denote intersection of $\mathcal{A}^{\loc}_{\Z}$ and $(A^-)_1$, which is dense in $(\mathcal{A}^-)_1$; and by $E^{(|\supp(d)|)}_{d, \omega}$ we mean the iterated transfer operator as in Lemma~\ref{lem:iterate_hom}.

        We need to check that $\Phi$ is an homeomorphism. This follows since $\Phi$ takes a dense submodule of $\Gamma(\mathcal{C})$ to the dense submodule induced by the transfer operators in $\Gamma(\mathcal{E})$. Indeed, let $\Gamma_0$ be the family of sections gotten by $\spn_{C(\Omega)}\{E_{a, \omega} \circ \cdots \circ E_{b, \vartheta^n \omega}\circ e\}$. Then, $\Phi(\Gamma_0) = \spn_{C(\Omega)}\{ \omega \mapsto F_{a,\omega}\circ \cdots \circ F_{b, \vartheta^n \omega}\circ f(\omega)\}$ which is fiber-wise dense in $\mathcal{E}$ by assumption. or equivalently, it is a dense $C(\Omega)$-submodule of $\Gamma(\mathcal{E})$. Thus, the topologies on $\mathcal{E}$ induced by the dense submodule $\Phi(\Gamma_0)$ sections are the same as that induced by $\Gamma(\mathcal{E})$, so we conclude $\Phi$ is continuous. The same argument works to show that $\Phi^{-1}$ is continuous. 
\end{proof}

\begin{rmk}
    In words, the minimality theorem says that with respect to properties (i)-(iv), having a dense subspaces generated by the transfer operators acting on distinguished sections $\varphi_\omega$ and $e(\omega)$, is enough to ensure that the whole apparatus is in fact isometrically isomorphic to the small correlations bundle. 
\end{rmk}

In Lemmas~\ref{lem:pre-fundamental} through~\ref{lem:iterate_hom} and Theorem~\ref{thm:minimal}, we have extensively explored the properties of the small correlations bundle $\mathcal{C}$. We have seen enough properties that we can now state our first main result.  
\begin{thm}[Theorem~\ref{thmx:Fundamental}]\label{thm:Fundamental}
    Let $\mathcal{A}$ be a unital $C^*$-algebra and $\A_{\Z}$ be the associated quasi-local algebra with canonical group action by translation $\Z \acts{\tau} \A_\Z$. Let $\Omega$ be a compact Hausdorff space equipped with an homeomorphism $\vartheta$. Suppose $\omega \mapsto \psi_\omega$ is a weakly*-continuous mapping into the states $\mathcal{S}(\A_{\Z})$ which is translation covariant. 
    The following are equivalent. 
    \begin{enumerate}[label = (\alph*)]
        \item For all $\omega \in \Omega$, the subspace $\spn \{\psi_\omega(X\otimes (\, \cdot \,))\colon X\in \A^-\}\subset (\A^+)^*$ is finite dimensional. 
        \item There exists a \textbf{transfer apparatus} consisting of a Banach bundle $(\mathcal{E}, \pi, \Omega)$ so that each fiber is nontrivial and finite dimensional, and a triple $(\phi, f, \{F_{a,\omega}\}_{a\in \A^+})$ of covariant homomorphisms implementing $\psi_\omega$ as in~(\ref{eqn:transfer_expectation}).
    \end{enumerate}
\end{thm}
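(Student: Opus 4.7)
The core observation is that condition~$(a)$ is equivalent to $\psi_\omega$ having small correlations in the sense of Definition~\ref{def:small_correlations}; once this is established, both directions follow from the machinery already assembled. To verify this identification, fix the cut site $x = -1/2$. By construction, $I_{-1/2,\omega}$ is the common kernel of the seminorms $\rho_{\omega,c}$ with $c \in \A^-$ from~(\ref{eqn:seminorms}), which are (up to modulus) exactly the functionals $Y \mapsto \psi_\omega(c \otimes Y)$ on $\A^+$. Hence the annihilator of the subspace in~$(a)$ is $I_{-1/2,\omega}$, so the dimension of $\spn\{\psi_\omega(X \otimes \cdot) : X \in \A^-\}$ equals the codimension of $I_{-1/2,\omega}$ in $\A^+$, namely $\dim \mathcal{B}^+_{-1/2,\omega}$. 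Equivalently, via the nondegenerate bilinear form $\eta_\omega$ from Remark~\ref{rmk:nondegenerate}, this span is canonically isomorphic to $\mathcal{B}^-_{-1/2,\omega}$, which is finite-dimensional if and only if $\mathcal{B}^+_{-1/2,\omega}$ is. Thus $(a)$ is equivalent to small correlations.

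For the implication $(a) \Rightarrow (b)$ the plan is then immediate: with small correlations at every $\omega$, Theorem~\ref{thm:quotients_are_bbundled} produces the canonical bundle $\mathcal{C}$ with finite-dimensional fibers, nontrivial by Lemma~\ref{lem:fibers_nonzero}, and Lemma~\ref{lem:pre-fundamental} equips $\mathcal{C}$ with the transfer apparatus $(\varrho, E, e)$ satisfying~(\ref{eqn:transfer_expectation}). This is exactly the transfer apparatus required in $(b)$.

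For the reverse implication $(b) \Rightarrow (a)$ the plan is to show that each functional $\psi_\omega(X \otimes \cdot)$ factors through a single finite-dimensional fiber of $\mathcal{E}$. Given $X \in \A^-$ and $Y \in \A^+$ supported on finite intervals, the factorization~(\ref{eqn:transfer_expectation}) expresses $\psi_\omega(X \otimes Y)$ as $\varphi$ composed with an iterated string of transfer maps applied to the section $f$. Splitting this string at the bond between $\A^-$ and $\A^+$ yields $\psi_\omega(X \otimes Y) = \ell_{X,\omega}(T_{Y,\omega})$, where $T_{Y,\omega}$ is the vector obtained from the $Y$-portion of the composition applied to $f$, and $\ell_{X,\omega}$ is the linear functional obtained from the $X$-portion composed with $\varphi$, both residing in a common finite-dimensional fiber $\mathcal{E}_{\omega'}$ that depends only on $\omega$. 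Extending first to finite sums of pure tensors via (the analogue of) Lemma~\ref{lem:iterate_hom}, whose proof adapts verbatim to the apparatus on $\mathcal{E}$, and then by norm continuity to all of $\A^-$ and $\A^+$, one concludes that $\{\psi_\omega(X \otimes \cdot) : X \in \A^-\}$ lies in the pullback of the finite-dimensional space $\mathcal{E}_{\omega'}^*$, and hence spans a finite-dimensional subspace of $(\A^+)^*$.

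The main obstacle is essentially bookkeeping: one must check that the shifted indices in the iterated compositions and in the $\vartheta$-covariance line up so that a common fiber $\mathcal{E}_{\omega'}$ can be isolated, independent of $X$ and $Y$, and one must verify that the iterated transfer operators extend bounded-ly from pure tensors to $\A^{\otimes p}$ and on to $\A^{\pm}$ in a way that preserves the factorization, mirroring the argument of Lemma~\ref{lem:iterate_hom} together with the cut-independence of Lemma~\ref{lem:cut_independent}.
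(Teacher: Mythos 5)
Your proposal is correct and follows essentially the same route as the paper: you identify condition (a) with small correlations via the quotient fibers $\B^{\pm}_{-1/2,\omega}$ and the nondegenerate pairing of Remark~\ref{rmk:nondegenerate}, obtain $(a)\Rightarrow(b)$ from the canonical bundle together with Lemma~\ref{lem:pre-fundamental} (and Lemmas~\ref{lem:fibers_nonzero},~\ref{lem:cut_independent}), and prove $(b)\Rightarrow(a)$ by the same Fannes--Nachtergaele--Werner-style splitting of the transfer string at the bond into a vector $Q^+_\omega(Y)$ and a functional $Q^-_\omega(X)$ in the common finite-dimensional fiber $\mathcal{E}_{\vartheta^{-1}\omega}$, which is exactly the paper's argument. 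The only (inessential) deviation is your claim that Lemma~\ref{lem:iterate_hom} ``adapts verbatim'' to an abstract apparatus --- its proof uses the concrete form of the canonical transfer operators, and the paper instead controls the extension of $Q^{\pm}$ via Theorem~\ref{thm:minimal} and Lemma~\ref{lem:n-norm}; however, for your route multilinearity of the iterated composition on algebraic tensors plus closedness of finite-dimensional subspaces already suffices, so no gap results.
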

\begin{proof}
    To see the $(a)\Rightarrow (b)$ direction, note that the spanning condition implies that the Banach spaces $B^-_{-1/2, \omega}:= \B^-_\omega$ are all finite dimensional. By our observations in Remark~\ref{rmk:nondegenerate}, there is a perfect bilinear pairing $\eta_\omega :B_\omega^- \times B^+_\omega \to \C$, and thus $\B_\omega^+$ is finite dimensional for all $\omega$ as well. Thus, $\psi_\omega$ has small correlations across every cut by Lemma~\ref{lem:cut_independent}. Further, all the Banach bundles $\mathscr{B}_x^+$ are isomorphic to the canonical bundle $\mathcal{C} = \mathscr{B}_{-1/2}^+$, which we take to be $\mathcal{E}$. The rest of the theorem statement follows from Lemma~\ref{lem:pre-fundamental}.

    The direction $(b)\Rightarrow (a)$ follows essentially the same argument as Proposition 2.1 in \cite{FannesNachtergaeleWerner} with some minor modifications. Define a pair of maps $Q^{+}_\omega : \mathcal{A}^{+} \to \mathcal{E}_\omega$ and $Q^{-}_\omega : \mathcal{A}^- \to \mathcal{E}_\omega^*$ using the homomorphisms $\varphi$, $F_a$ (with $a\in \mathcal{A}$), and the section $f$ as follows:
    \[
        \begin{split}
            Q^+_\omega(a_0 \otimes \cdots \otimes a_n) &= F_{a_0, \omega} \circ F_{a_1, \vartheta \omega} \circ \cdots \circ F_{a_n, \vartheta^n \omega}(f(\vartheta^n\omega))\in \mathcal{E}_{\vartheta^{-1}\omega}\,,\\
            Q^-_{\omega}(a_{-n} \otimes \cdots \otimes a_{-1}) &= \varphi_{\vartheta^{-(n+1)} \omega} \circ F_{a_{-n}, \vartheta^{-n} \omega } \circ \cdots \circ F_{a_{-1}, \vartheta^{-1}\omega}\in \mathcal{E}^*_{\vartheta^{-1}\omega}. 
        \end{split}
    \] 
    
    One checks that for all $\omega\in \Omega$, both $Q^{\pm}$ are linear mappings from $\A^+\to \mathcal{E}_\omega$ and $\A^-\to \mathcal{E}^*_\omega$ respectively. We need to check that $Q_\omega^{\pm}$ extend to bounded operators. Since the span of pure tensors is dense in the minimal $C^*$-norm, it is sufficient to show that 
    \[
        Q_\omega^{+}\left( \sum_{j} a_1^{(j)}\otimes \cdots a_n^{(j)}\right) \le \left\| \sum_{j} a_1^{(j)}\otimes \cdots a_n^{(j)}\right\|_{\min}\,.
    \] To do so, let $V_\omega = \spn\{ \varphi_{\vartheta^{-(n+1)}\omega} \circ F_{a,\vartheta^{-n}\omega} \circ \cdots \circ F_{b, \vartheta^{-1} \omega}\}$, and $W_\omega = \spn\{F_{a_0, \omega} \circ F_{a_1, \vartheta \omega} \circ \cdots \circ F_{a_n, \vartheta^n \omega}(f(\vartheta^n\omega))\}$.  Since every $\mathcal{E}_\omega$ is finite dimensional, so is each dual space $\mathcal{E}_{\omega}^*$, and hence so are $V_\omega$ and $W_\omega$. Now by the Minimality Theorem~\ref{thm:minimal}, we know that $V_\omega$ and $W_\omega$ can be mapped isometrically into the small correlations bundle. By Lemma~\ref{lem:n-norm}, we conclude that $V_\omega$ and $W_\omega$ are norming for one another. 

    Furthermore, one checks that 
    \begin{align*}
        Q^-_\omega\left(\sum_{j'} b_{-m}^{(j')}\otimes \cdots b_0^{(j')}\right)\circ Q_\omega^{+}\left( \sum_{j} a_1^{(j)}\otimes \cdots a_n^{(j)}\right) &= \psi_\omega\left\{\left(\sum_{j'} b_{-m}^{(j')}\otimes \cdots b_0^{(j')}\right)\otimes \left( a_1^{(j)}\otimes \cdots a_n^{(j)}\right)\right\}\,.
    \end{align*} Therefore, by decomposing a $v\in (V_\omega)_1$ into a sum of pure tensors, we may rewrite \[|v(Q^+(\sum a_1 ^{(j)}\otimes \cdots \otimes a_n^{(j)})| = \psi_\omega( v'\otimes \sum a_1 ^{(j)}\otimes \cdots \otimes a_n^{(j)}),\] for some sum of pure tensors $v'\in \A^-$. Thus, we can bound $\|Q^+(a_1 ^{(j)}\otimes \cdots \otimes a_n^{(j)})\|\le 8|\psi_\omega(v' \otimes \one)| \|\sum_j a_1 ^{(j)}\otimes \cdots \otimes a_n^{(j)}\|$ by appealing to Cartesian decompositions if necessary. Therefore, $Q^+$ extends to a bounded linear operator since $|\psi_\omega(v' \otimes \one)|\le \|[v', \omega]^-\|_{\mathcal{C}_\omega^-} = \|v\|$, by Theorem~\ref{thm:minimal} once again. A similar argument works for $Q^-$. Therefore,  
    \[
        \psi_\omega(X^- \otimes X^+) = Q^-_\omega(X^-) \circ Q^+_\omega(X^+), \quad \forall X^-\in \A^-, X^+ \in \A^+\,.
    \]

    Now, since the range of $Q_\omega^-$ is $\spn\{ \varphi_{\vartheta^{-(n+1)}\omega} \circ F_{a,\vartheta^{-n}\omega} \circ \cdots \circ F_{b, \vartheta^{-1} \omega}\}\subset \mathcal{E}_{\vartheta^{-1}\omega}^*$, the latter being finite dimensional, we have that \[\mathcal{E}^*_\omega \supset \spn\{ F^-_\omega(X^-) \colon X^- \in \A^-\}  \cong \spn\{ \psi_\omega(X^- \otimes (\, \cdot \,))\colon X^- \in \A^-\}\subset (\A^+)^*\,,\] is finite dimensional for all $\omega$.  
\end{proof}

\begin{define}\label{def:apparatus}
    In view of Theorems~\ref{thm:Fundamental}, the existence of a triple $(\varphi, F, f)$ where $\varphi \in \hom(\mathcal{E}, \Omega \times \C)$, and $F = \{E_{a}\in \hom^{\vartheta^{-1}}(\mathcal{E}, \mathcal{E})\colon a\in \mathcal{A}\}$, and a section $f\in \Gamma(\mathcal{E})$ satisfying Theorem~\ref{thm:Fundamental}.(b)(i)-(iv) is inherent to the expansion of a TCVS $\psi_\omega$ with small correlations. To make the following more readable, we shall call such a triple a \textbf{transfer apparatus} on the bundle $(\mathcal{E}, \pi, \Omega)$. We shall reserve the letters $(\varrho, E, e)$ for the transfer apparatus defined on the canonical bundle $(\mathcal{C}, \pi, \Omega)$ as in Lemma~\ref{lem:bundlemap-covariance}.  
\end{define}

An immediate corollary is the following characterization of the small correlations property.
\begin{cor}
    Let $\psi_\omega$ be a TCVS on $\mathcal{A}_\Z$. Then $\psi_\omega$ has small correlations if and only if either of the equivalent conditions in Theorem~\ref{thm:Fundamental} are satisfied. 
\end{cor}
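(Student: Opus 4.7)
The plan is to show that the small correlations property is exactly a reformulation of condition~(a) in Theorem~\ref{thm:Fundamental}; after that, Theorem~\ref{thm:Fundamental} itself does the rest of the work.

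First I would unpack condition~(a). Consider the linear evaluation map $\Psi_\omega : \A^- \to (\A^+)^*$ defined by $\Psi_\omega(c) := \psi_\omega(c \otimes (\,\cdot\,))$. By the very definition of $J_{x,\omega}$ (with $x = -1/2$), we have $\ker \Psi_\omega = J_{x,\omega}$, so $\Psi_\omega$ descends to an injective linear map $\widetilde{\Psi}_\omega : \B_{x,\omega}^- \hookrightarrow (\A^+)^*$ whose image is exactly $\spn\{\psi_\omega(X \otimes (\,\cdot\,)) : X \in \A^-\}$. Thus condition~(a) of Theorem~\ref{thm:Fundamental} is equivalent to the assertion that $\dim \B_{x,\omega}^- < \infty$ for every $\omega$.

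Next I would invoke the nondegenerate bilinear pairing $\eta_\omega : \B_{x,\omega}^- \times \B_{x,\omega}^+ \to \C$ from Remark~\ref{rmk:nondegenerate}. Nondegeneracy yields injections $\B_{x,\omega}^- \hookrightarrow (\B_{x,\omega}^+)^*$ and $\B_{x,\omega}^+ \hookrightarrow (\B_{x,\omega}^-)^*$, so finite-dimensionality of one side forces finite-dimensionality of the other (with equality of dimensions in that case). In particular, $\dim \B_{x,\omega}^+ < \infty$ for every $\omega$ if and only if $\dim \B_{x,\omega}^- < \infty$ for every $\omega$, which by the previous paragraph is condition~(a).

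By Definition~\ref{def:small_correlations}, the left-hand side here is precisely the small correlations property (and Corollary~\ref{cor:small_corrs} together with Lemma~\ref{lem:cut_independent} guarantees that choosing the distinguished cut site $x = -1/2$ loses no generality). Applying Theorem~\ref{thm:Fundamental}, condition~(a) is equivalent to condition~(b), completing the chain of equivalences. There is no real obstacle here: the statement is a matter of identifying the finite-dimensional span appearing in Theorem~\ref{thm:Fundamental}(a) with the quotient $\B_{x,\omega}^-$ and then transporting finite-dimensionality across the perfect pairing $\eta_\omega$ to $\B_{x,\omega}^+$. \qed
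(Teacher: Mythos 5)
Your argument is correct and follows essentially the same route the paper takes: the paper treats the corollary as immediate because the proof of Theorem~\ref{thm:Fundamental} already identifies condition~(a) with finite-dimensionality of $\B^-_{-1/2,\omega}$, transfers this to $\B^+_{-1/2,\omega}$ via the nondegenerate pairing of Remark~\ref{rmk:nondegenerate}, and invokes Lemma~\ref{lem:cut_independent} for cut-site independence. Your write-up simply makes these identifications explicit, so there is nothing to add.
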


\subsection{TCVS with Small Correlations are Weakly* Dense}
In \cite[Proposition 2.6]{FannesNachtergaeleWerner}, it is shown that the class of $C^*$-finitely correlated states is weakly* dense inside the translation invariant states. We aim to show an analogue of this for translation covariant states with small correlations. To do so, we need to recall some light details about direct sums and tensor products of Banach bundles. 

Let $X$ and $Y$ be two Banach bundles over $\Omega$. Form the family $\{Z_\omega := X_\omega \oplus Y_\omega\}_{\omega \in \Omega}$ by taking the topological direct sum of the fibers, equipped with the canonical surjection $\pi$ onto the base-points. Taking $\Gamma = \Gamma(X) \oplus \Gamma(Y)$ defines a Banach module of sections into $Z:= \bigsqcup_\omega Z_\omega,$ and therefore there exists a unique topology turning $(Z,\pi, \Omega)$ into a Banach bundle with $\Gamma(X) \oplus \Gamma(Y)$ a dense sub-module of continuous sections (Theorem~\ref{thm:Banach_Bundle}). We say that $Z$ is the \textit{direct sum} of $X$ and $Y$ and write $Z = X\oplus Y$. It is not difficult to check that the analogous definition for homomorphisms is valid.

\begin{note}\label{note:set_of_tcvs}
    Let $\mathcal{A}$ be a unital $C^*$-algebra and let $\A_{\Z}$ be the associated quasilocal algebra. Let $\Omega$ be a compact Hausdorff space equipped with an homeomorphism $\vartheta$. By $\mathfrak{C}$, we mean the set of translation co-variant states on $\A_\Z$. The subset of those translation co-variant states with small correlations is denoted by $\mathfrak{K}\subset \mathfrak{C}$. 
\end{note} 

\begin{lem}[$\mathfrak{C}$ is Convex]\label{lem:convex}
    With notation as in~\ref{note:set_of_tcvs}, both $\mathfrak{C}$ and $\mathfrak{K}$ are convex within the vector space of weakly*-continuous mappings into $\mathcal{S}(\A_\Z)$. 
\end{lem}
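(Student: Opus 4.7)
The plan is to verify the three defining conditions of $\mathfrak{C}$ pointwise on a convex combination, then observe that the small‑correlation characterization from Theorem~\ref{thm:Fundamental} (a) is preserved under sums.

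Fix $\psi^{(1)},\psi^{(2)}\in\mathfrak{C}$ and $t\in[0,1]$, and set $\psi_\omega := t\,\psi^{(1)}_\omega + (1-t)\,\psi^{(2)}_\omega$. First I would check that $\psi_\omega\in\mathcal{S}(\A_\Z)$ for every $\omega$: the state space is convex (it is the intersection of the convex sets $\{\varphi:\varphi(\one)=1\}$ and $\{\varphi:\varphi(a^*a)\ge 0\,\forall a\}$), so $\psi_\omega$ is itself a state. Weak*-continuity of $\omega\mapsto\psi_\omega$ is immediate since a convex combination of two weak*-continuous state-valued functions is weak*-continuous (evaluate against any fixed $a\in\A_\Z$ and invoke continuity of linear combinations in $\C$). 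Translation covariance follows by linearity: for every $k\in\Z$ and every $a\in\A_\Z$,
\begin{equation*}
    \psi_\omega\circ\tau_k(a) = t\,\psi^{(1)}_\omega(\tau_k a) + (1-t)\,\psi^{(2)}_\omega(\tau_k a) = t\,\psi^{(1)}_{\vartheta^k\omega}(a) + (1-t)\,\psi^{(2)}_{\vartheta^k\omega}(a) = \psi_{\vartheta^k\omega}(a),
\end{equation*}
so $\psi\in\mathfrak{C}$.

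For the second claim, suppose additionally $\psi^{(1)},\psi^{(2)}\in\mathfrak{K}$. By Theorem~\ref{thm:Fundamental} this means that for every $\omega$ the subspaces
\begin{equation*}
    V^{(i)}_\omega := \spn\{\psi^{(i)}_\omega(X\otimes(\,\cdot\,)) : X\in\A^-\} \subset (\A^+)^*,\qquad i=1,2,
\end{equation*}
are finite-dimensional. For the convex combination $\psi_\omega$ one has, for every $X\in\A^-$ and $a\in\A^+$,
\begin{equation*}
    \psi_\omega(X\otimes a) = t\,\psi^{(1)}_\omega(X\otimes a) + (1-t)\,\psi^{(2)}_\omega(X\otimes a),
\end{equation*}
so $\psi_\omega(X\otimes(\,\cdot\,)) = t\,\psi^{(1)}_\omega(X\otimes(\,\cdot\,)) + (1-t)\,\psi^{(2)}_\omega(X\otimes(\,\cdot\,)) \in V^{(1)}_\omega + V^{(2)}_\omega$. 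Hence $\spn\{\psi_\omega(X\otimes(\,\cdot\,)):X\in\A^-\}\subset V^{(1)}_\omega+V^{(2)}_\omega$, which is finite-dimensional, so again by Theorem~\ref{thm:Fundamental} $\psi\in\mathfrak{K}$.

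There is no real obstacle here; the lemma is a bookkeeping step whose only subtlety is to phrase the preservation of the small‑correlations condition via the functional-analytic characterization in Theorem~\ref{thm:Fundamental}(a) rather than trying to combine the auxiliary bundles $\mathcal{C}^{(1)},\mathcal{C}^{(2)}$ directly (an apparatus for $\psi_\omega$ could be built on $\mathcal{C}^{(1)}\oplus\mathcal{C}^{(2)}$, but this is unnecessary for convexity and one would then have to justify that the result is still covariant and finite-dimensional fibered, which is precisely what the (a)-formulation gives for free).
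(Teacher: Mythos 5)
Your argument is correct, but it takes a different route from the paper. The paper's proof treats the convexity of $\mathfrak{K}$ via condition (b) of Theorem~\ref{thm:Fundamental}: it takes transfer apparatuses $(\varrho, E, e)$ on $\mathcal{E}$ and $(\varphi, F, f)$ on $\mathcal{F}$ for the two states, forms the direct sum bundle $\mathcal{G} = \mathcal{E}\oplus\mathcal{F}$, and checks that $\bigl(\mu\varrho + (1-\mu)\varphi,\ \{E_a\oplus F_a\},\ e\oplus f\bigr)$ is a transfer apparatus implementing the convex combination, so that smallness of correlations follows from the $(b)\Rightarrow(a)$ direction. You instead work entirely with condition (a): the functionals $\psi_\omega(X\otimes(\,\cdot\,))$ of the convex combination lie in $V^{(1)}_\omega + V^{(2)}_\omega$, which is finite dimensional, and the corollary to Theorem~\ref{thm:Fundamental} (equivalence of (a) with the small-correlations property, which rests on Remark~\ref{rmk:nondegenerate} and Lemma~\ref{lem:cut_independent} to pass between the left span and the fibers at every cut) closes the argument. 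Your route is more economical: it is a one-line linear-algebra observation and avoids verifying the apparatus axioms (i)--(iv) and the factorization~(\ref{eqn:transfer_expectation}) for the direct sum. What the paper's constructive route buys is an explicit apparatus for the mixture, with $\dim\mathcal{G}_\omega = \dim\mathcal{E}_\omega + \dim\mathcal{F}_\omega$, and it is the version that transfers verbatim to settings where one needs extra fiber structure to survive the mixing --- for instance the later claim that the CMPS (where the fibers are finite-dimensional $C^*$-algebras) form a convex set, which the abstract (a)-characterization does not address directly, since finite-dimensionality of the left span says nothing about a $C^*$-algebraic bundle implementing the state. The treatment of $\mathfrak{C}$ itself (states, weak*-continuity, covariance by linearity) coincides with the paper's and is fine.
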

\begin{proof}
    The proof is essentially the same as in \cite[Proposition 2.6(2)]{FannesNachtergaeleWerner}. The nontrivial part is to show that $\mathfrak{K}$ is convex. To this end, let $\psi, \phi \in \mathfrak{K}$ and $\mu \in (0,1)$. Clearly, $g:= \mu \psi + (1-\mu) \phi \in \mathfrak{C}$. Theorem~\ref{thm:Fundamental} tells us there exists Banach bundles $(\mathcal{E}, \pi_{\mathcal{E}}, \Omega)$ and $(\mathcal{F}, \pi_{\mathcal{F}}, \Omega)$, sections $e\in \Gamma(\mathcal{E})$ and $f\in \Gamma(\mathcal{F})$, homomorphisms $\varrho\in \hom(\mathcal{E}, \Omega \times \C)$ and $\varphi\in \hom(\mathcal{F}, \Omega \times \C)$, and transfer operators $\{E_a\}_{a\in \mathcal{A}}\in \hom^\vartheta(\mathcal{E},\mathcal{E})$, and $\{F_a\}_{a\in \mathcal{A}}\subset \hom^\vartheta(\mathcal{F}, \mathcal{F})$, for $\psi$ and $\phi$ respectively. Form $\mathcal{G}:= \mathcal{E}\oplus \mathcal{F}$ as the direct sum of the Banach bundles. One readily checks that with $\gamma:= \mu \varrho \oplus (1-\mu)\varphi \in \hom(\mathcal{G}, \Omega \times \C)$, $h:= e \oplus f$, and transfer operators $\{G_a := E_a \oplus F_a\}_{a\in \mathcal{A}},$ that the bundle $(\mathcal{G}, \pi_{\mathcal{G}}, \Omega)$ satisfies~(\ref{eqn:transfer_expectation}) for the state $g$. 
\end{proof}

\begin{lem}[$\mathfrak{K}$ is robust under grouping]
    Let $p\in \mathbb{N}$ and let $\psi\in \mathfrak{K}$. There exists a weakly* continuous mapping $\psi^{(p)}:\Omega \to \mathcal{S}( (\mathcal{A}^{\otimes p})_\mathbb{Z})$ that is $\vartheta^p$-covariant with respect to translation on $(\mathcal{A}^{\otimes p})_\Z$, and $\psi^{(p)}$ has small correlations. 
\end{lem}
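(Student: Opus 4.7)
Proof plan. The natural construction is \textbf{block-grouping}. I would identify $(\mathcal{A}^{\otimes p})_\Z$ with a unital $C^*$-subalgebra of $\mathcal{A}_\Z$ via the canonical embedding $\iota$ sending grouped site $j$ to the original sites $\{pj, pj+1, \dots, pj+p-1\}$, and set $\psi^{(p)}_\omega := \psi_\omega \circ \iota$. Weak* continuity of $\omega \mapsto \psi^{(p)}_\omega$ is immediate from the weak* continuity of $\psi_\omega$ restricted to the range of $\iota$. For $\vartheta^p$-covariance, if $\tau^{(p)}_1$ denotes translation by one unit on $(\mathcal{A}^{\otimes p})_\Z$, then $\iota \circ \tau^{(p)}_1 = \tau_p \circ \iota$, so
\[
\psi^{(p)}_\omega \circ \tau_1^{(p)} \;=\; \psi_\omega \circ \tau_p \circ \iota \;=\; \psi_{\vartheta^p \omega} \circ \iota \;=\; \psi^{(p)}_{\vartheta^p \omega}\,.
\]

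For the small-correlations claim, I would work at the cut $x = -1/2$ and observe that $\vartheta^p$ remains a homeomorphism of $\Omega$, so Definition~\ref{def:small_correlations} applies verbatim to $(\Omega, \vartheta^p)$. Writing $\mathcal{A}^{\pm}_{(p)} := \iota((\mathcal{A}^{\otimes p})^{\pm}) \subseteq \mathcal{A}^{\pm}$, and letting $I^{(p)}_\omega \subseteq \mathcal{A}^+_{(p)}$ denote the subspace defining the $\psi^{(p)}$-quotient at $\omega$, one has $I_\omega \cap \mathcal{A}^+_{(p)} \subseteq I^{(p)}_\omega$ because the defining seminorms for $I^{(p)}_\omega$ test only against the smaller family $c \in \mathcal{A}^-_{(p)} \subseteq \mathcal{A}^-$. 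This yields a linear surjection $\mathcal{A}^+_{(p)}/(I_\omega \cap \mathcal{A}^+_{(p)}) \twoheadrightarrow \mathcal{A}^+_{(p)}/I^{(p)}_\omega$. On the other hand, the inclusion $\mathcal{A}^+_{(p)} \hookrightarrow \mathcal{A}^+$ descends to an injection $\mathcal{A}^+_{(p)}/(I_\omega \cap \mathcal{A}^+_{(p)}) \hookrightarrow \mathcal{A}^+/I_\omega$. Since $\psi \in \mathfrak{K}$, the quotient $\mathcal{A}^+/I_\omega$ is finite-dimensional for every $\omega$, hence so is $\mathcal{A}^+_{(p)}/I^{(p)}_\omega$, as required.

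As a sanity check, one may alternatively read off the conclusion from the transfer apparatus. Given $(\varrho, \{E_a\}_{a \in \mathcal{A}}, e)$ on the canonical bundle $\mathcal{C}$ for $\psi$, the iterated maps of Lemma~\ref{lem:iterate_hom} produce a family $\{E^{(p)}_b\}_{b \in \mathcal{A}^{\otimes p}} \subseteq \hom^{\vartheta^{-p}}(\mathcal{C}, \mathcal{C})$; paired with the same $\varrho$ and $e$, a telescoping computation shows they implement $\psi^{(p)}$ in the sense of~(\ref{eqn:transfer_expectation}) relative to $(\Omega, \vartheta^p)$. The implication $(b)\Rightarrow(a)$ of Theorem~\ref{thm:Fundamental} applied to the dynamical system $(\Omega, \vartheta^p)$ then gives the small correlations property. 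Either route works and both arrive at the same canonical bundle.

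The main obstacle I anticipate is indexing bookkeeping: one must check carefully that translation by $p$ on the original chain pulls back through $\iota$ to translation by one on the grouped chain, and (for the apparatus-based approach) that iterating $p$ of the $\vartheta^{-1}$-covariant transfer maps produces a single operator whose first argument lives in $\mathcal{A}^{\otimes p}$ identified with an observable on original sites $\{0,\dots,p-1\}$, rather than a shifted window. Once these identifications are laid out with care, both approaches conclude cleanly.
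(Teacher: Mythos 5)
Your proposal is correct, and your second (``sanity check'') route is exactly the paper's proof: there, $\psi^{(p)}$ is simply $\psi$ viewed as taking arguments from $(\mathcal{A}^{\otimes p})_\Z$, and the iterated transfer operators of Lemma~\ref{lem:iterate_hom}, together with the original $\varrho$ and $e$, form a transfer apparatus over $(\Omega,\vartheta^{p})$, so Theorem~\ref{thm:Fundamental} yields small correlations. Your primary route is a more elementary variant that bypasses the apparatus and compares quotient spaces directly; it works, and is in fact simpler than your inequalities suggest. At a cut compatible with the blocking, the grouped local algebras are cofinal among the original ones, so $\iota\bigl((\mathcal{A}^{\otimes p})^{\pm}\bigr)=\mathcal{A}^{\pm}$; consequently $I^{(p)}_\omega=I_\omega$ and the grouped quotient is literally $\mathcal{B}^{+}_{\omega}$, not merely a space sandwiched between quotients of it. The only bookkeeping to add is that Definition~\ref{def:small_correlations} asks for finite dimension at \emph{every} grouped cut site, so you should either run the same identification at each grouped cut (each such cut matches an original half-integer cut, where $\psi$ has a finite-dimensional quotient) or invoke $\vartheta^{p}$-covariance together with the grouped analogues of Lemma~\ref{lem:bundlemap-covariance} and Corollary~\ref{cor:small_corrs}; your remark that the machinery ``applies verbatim to $(\Omega,\vartheta^{p})$'' amounts to the latter. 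As for what each approach buys: your direct route is self-contained at the level of the quotients, while the paper's route (your second) exhibits the grouped transfer operators $E^{(p)}_{b}$ explicitly, keeping the treatment parallel to Fannes--Nachtergaele--Werner and to the apparatus-based constructions used in the rest of Section~\ref{sec:Structure_Theorem}.
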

\begin{proof}
    Just as in \cite[Proposition 2.6(3)]{FannesNachtergaeleWerner}, $\psi^{(p)}$ is just $\psi$ viewed as taking arguments from $(\mathcal{A}^{\otimes p})_{\Z}$, in which case, the iterated transfer maps from Lemma~\ref{lem:iterate_hom} determine the expectation values. 
\end{proof}

\begin{lem}[Aromatic Average]\label{lem:aromatic_average}
    Suppose that $\phi$ is a TCVS on $(\mathcal{A}^{\otimes p})_{\Z}$ with small correlations that co-varies with respect to $\vartheta^p$. That is, $\phi \circ \tau_{kp} = \phi_{\vartheta^{kp}}$ for all $k\in \Z$. If $\phi$ has small correlations, then the state 
        \begin{equation}
            \bar \phi_\omega := \frac{1}{p} \sum_{n=0}^p \phi_{\vartheta^{-n}\omega}\circ \tau_n\,,
        \end{equation} is translation covariant with respect to $\vartheta$ on $\A_{\Z}$ and has small correlations. 
\end{lem}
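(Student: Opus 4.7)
My plan is to verify the defining properties of a TCVS with small correlations in sequence. First, each summand $\phi_{\vartheta^{-n}\omega}\circ\tau_n$ is the pullback of a state under the unital $*$-automorphism $\tau_n$, hence a state on $\A_\Z$; the convex combination $\bar\phi_\omega$ is therefore a state. Weak-$*$ continuity of $\omega\mapsto\bar\phi_\omega$ follows by composing the weak-$*$ continuity of $\omega\mapsto\phi_\omega$ with the continuity of the homeomorphisms $\vartheta^{-n}$, together with Lemma~\ref{lem:weak*_cts_on_C(A)}.

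For $\vartheta$-covariance I would compute $\bar\phi_\omega\circ\tau_1 = \frac{1}{p}\sum_{n=0}^{p-1}\phi_{\vartheta^{-n}\omega}\circ\tau_{n+1}$ and reindex $m=n+1$. For $m\in\{1,\dots,p-1\}$ the term $\phi_{\vartheta^{-(m-1)}\omega}\circ\tau_m$ is precisely $\phi_{\vartheta^{-m}(\vartheta\omega)}\circ\tau_m$, the $m$-th summand of $\bar\phi_{\vartheta\omega}$. The wrap-around term $m=p$ produces $\phi_{\vartheta^{-(p-1)}\omega}\circ\tau_p$, and here the $\vartheta^p$-covariance of $\phi$ is crucial: this equals $\phi_{\vartheta^p\vartheta^{-(p-1)}\omega} = \phi_{\vartheta\omega}$, matching the $m=0$ summand of $\bar\phi_{\vartheta\omega}$.

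The main step is establishing small correlations via Theorem~\ref{thm:Fundamental}(a), i.e.\ that $\mathcal{U}_\omega := \spn\{\bar\phi_\omega(c\otimes\cdot)\colon c\in\A^-\}\subset(\A^+)^*$ is finite-dimensional. By linearity of the convex combination, $\mathcal{U}_\omega\subset\sum_{n=0}^{p-1}\mathcal{V}^{(n)}_\omega$ where $\mathcal{V}^{(n)}_\omega := \spn\{a\mapsto\phi_{\vartheta^{-n}\omega}(\tau_n(c)\otimes\tau_n(a))\colon c\in\A^-\}$. For each $n$ I would align the supports with $\phi$'s block structure by writing $\tau_n(c) = c'\otimes c''$ with $c'\in(\A^{\otimes p})^-$ block-aligned and $c''\in\A_{[0,n-1]}$, and similarly $\tau_n(a) = a'\otimes a''$ with $a'\in\A_{[n,p-1]}$ and $a''\in(\A^{\otimes p})^+$ block-aligned, so that $c''\otimes a'$ fills the straddling $\phi$-block. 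Applying Theorem~\ref{thm:Fundamental}(b) to $\phi$ at the block-aligned cut then factors each summand through $\phi$'s finite-dimensional transfer apparatus on the bundle $\mathcal{E}$.

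The chief technical obstacle is bounding $\dim\mathcal{V}^{(n)}_\omega$ for $n\ge 1$, since $c''$ and $a'$ together parametrize the straddling block while $a'$ depends on the free variable $a\in\A^+$. The key inputs are that $F^\phi$, as a linear map $\A^{\otimes p}\to\mathcal{L}(\mathcal{E},\mathcal{E})$, has image of dimension at most $d^2$ (with $d:=\dim\mathcal{E}$), and that the maps $Q^\pm$ appearing in the proof of Theorem~\ref{thm:Fundamental} have images of dimension at most $d$. Fixing bases for these finite-dimensional images and expanding the factorization displays every functional in $\mathcal{V}^{(n)}_\omega$ as a finite linear combination of elementary functionals governed entirely by $\phi$'s apparatus, giving a uniform bound on $\dim\mathcal{V}^{(n)}_\omega$. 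Summing over $n$ yields a finite bound on $\dim\mathcal{U}_\omega$ and hence small correlations for $\bar\phi$. Equivalently one may assemble a transfer apparatus for $\bar\phi$ directly on the direct-sum bundle $\bigoplus_{n=0}^{p-1}(\vartheta^{-n})^*\mathcal{E}$, whose transfer operators cyclically permute the summands and complete one $\phi$-block transfer every $p$ steps, in the spirit of the direct-sum construction in Lemma~\ref{lem:convex}.
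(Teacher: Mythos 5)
Your covariance computation and the reduction of ``small correlations'' to finite-dimensionality of $\mathcal{U}_\omega=\spn\{\bar\phi_\omega(c\otimes\cdot)\colon c\in\A^-\}$ are fine, but the central step --- the claimed uniform bound on $\dim\mathcal{V}^{(n)}_\omega$ for $1\le n\le p-1$ --- has a genuine gap. Note that $\mathcal{V}^{(n)}_\omega$ is exactly the correlation space of $\phi_{\vartheta^{-n}\omega}$ across a cut lying \emph{in the middle of a block}, and the hypothesis that $\phi$ has small correlations only constrains block-aligned cuts. Your proposed bound factors each functional as $a'\otimes a''\mapsto\sum_i\lambda_i(c''\otimes a')\,\ell_{c'}\bigl(G_i(Q^+(a''))\bigr)$, where the $\lambda_i$ are coordinate functionals of the finite-rank map $b\mapsto F_b$ on $\mathcal{A}^{\otimes p}$. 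Finite-dimensionality of the image of $b\mapsto F_b$ does \emph{not} control the span of the partial slices $\{a'\mapsto\lambda_i(c''\otimes a')\colon c''\in\A_{[0,n-1]}\}$: a linear functional on a tensor product of infinite-dimensional spaces can have infinitely many linearly independent slices. Concretely, take $p=2$, $\mathcal{A}=C[0,1]$, $\rho$ the state $\rho(f\otimes g)=\int fg\,d\mu$, and $\phi=\bigotimes_k\rho$ (deterministic, so trivially a $\vartheta^p$-covariant TCVS with one-dimensional fibers at every block cut). Then $\mathcal{V}^{(1)}$ contains all functionals $a\mapsto\rho(c_{-1}\otimes a_0)\,R(a_{\ge1})$ with $c_{-1}\in\mathcal{A}$, which span an infinite-dimensional space. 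So when $\mathcal{A}$ is infinite dimensional the per-$n$ spaces need not be finite dimensional at all, and the strategy of summing bounds over $n$ collapses. (If $\mathcal{A}$ is a matrix algebra your bound is trivially true, since $\dim\mathcal{V}^{(n)}_\omega\le\dim(\mathcal{A}^{\otimes n})\cdot\dim\mathcal{E}$, but the lemma is stated for general unital $C^*$-algebras.)

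This is precisely the difficulty the paper's proof is engineered to avoid. Rather than decomposing $\mathcal{U}_\omega$ into the $n$-summands, the paper builds the ``aromatic bundle'' whose fiber $\mathcal{R}_\omega$ is the span of \emph{diagonal} tuples $\bigoplus_{n=0}^{p-1}[\one_{\mathcal{A}}^{\otimes n}\otimes\tau_n(\xi),\vartheta^{-n}\omega]$ inside $\bigoplus_n\mathcal{C}^\phi_{\vartheta^{-n}\omega}$: the portion of the straddling block to the left of the cut is always filled with units, never with an arbitrary element of $\A^-$, so finite-dimensionality is inherited from $\phi$'s canonical fibers. The left observable then enters only through iterated transfer operators, and finite-dimensionality of $\mathcal{U}_\omega$ is recovered from Theorem~\ref{thm:Fundamental}(b)$\Rightarrow$(a), not from mid-block correlation spaces. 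The heart of the argument is Claim~\ref{claim:Fa_well_defined}: the maps $F_{a,\omega}$, which replace a unit inside an equivalence class by $a$, must be shown well defined on these diagonal classes, and this rests on positivity estimates (the wrap-around coordinate being handled by $\vartheta^p$-covariance). Your closing alternative --- a direct-sum bundle $\bigoplus_n(\vartheta^{-n})^*\mathcal{E}$ with transfer operators that ``cyclically permute the summands and complete one block transfer every $p$ steps'' --- does not capture this: plain copies of $\mathcal{E}$ cannot store a partially completed block (that would require enlarging the fiber by $\A^{\otimes n}$, again infinite dimensional in general), and the well-definedness issue, which is where the real work lies, is not addressed. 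As it stands, the proposal proves the lemma only under the additional assumption that $\mathcal{A}$ is finite dimensional.
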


\begin{proof}
    Checking translation covariance of $\bar \phi$ is not difficult. Now, we need to construct a bundle satisfying part (b) of Theorem~\ref{thm:Fundamental}. To begin, let $(\mathcal{C}^\phi, \pi, \Omega)$ be the canonical bundle associated to $\phi$ on $(\mathcal{A}^{\otimes p})_\Z$ equipped with transfer apparatus $(\varrho^\phi, \{E^{\phi}_x\}_{x\in \mathcal{A}^p}, e^\phi)$. 

    Let $0<n\le p-1$, consider the vector subspaces of $\mathcal{C}^\phi_\omega$ given by
    \[
        \mathcal{C}^{(n)}_\omega := \spn \{ [\one_{\mathcal{A}}^{\otimes n}\otimes \tau_n(\xi), \omega] \colon \xi \in \mathcal{A}_{[0, \infty)}\}. 
    \] We take $\mathcal{C}^{(0)}_\omega = \mathcal{C}^\phi_\omega$. For each $\omega\in \Omega$ we now define the finite-dimensional vector space 
    \begin{equation}
        \mathcal{R}_\omega := \spn \left\{ \bigoplus_{n=0}^{p-1} [\one_{\mathcal{A}}^{\otimes n} \otimes \tau_n(\xi), \vartheta^{-n}\omega] \colon \xi \in \mathcal{A}_{[0, \infty)}\right\} \subset \bigoplus_{n=0}^{p-1} \mathcal{C}^{(n)}_{\vartheta^{-n} \omega}\,,
    \end{equation} equipped with the graph norm 
    \begin{equation}
        \left\| \bigoplus_{n=0}^{p-1} [\one_{\mathcal{A}}^{\otimes n} \otimes \tau_n(\xi), \vartheta^{-n}\omega]\right\|_{\mathcal{R}_\omega} := \sum_{n=0}^{p-1} \|[\one_{\mathcal{A}}^{\otimes n} \otimes \tau_n(\xi), \vartheta^{-n}\omega]\|_{\mathcal{C}^{\phi}_{\vartheta^{-n}\omega}}\,.
    \end{equation} Note that each element of $\mathcal{R}_\omega$ is a $p$-tuple of equivalence classes of the translates of a single operator $\xi \in \mathcal{A}_{[0, \infty)}$ up to the cut $p$.  

    Now, let $\mathcal{R}:= \bigsqcup_{\omega \in \Omega}\mathcal{R}_\omega$ be equipped with the surjection $\pi(\oplus[\one^{\otimes n}\otimes \tau_n(\xi), \vartheta^{-n}\omega]) = \omega$. Let also $\Gamma$ be the linear space of functions of the form 
        \begin{equation}\label{eqn:aromatic_sections}
            \alpha: \omega \mapsto \bigoplus_{n=0}^{p-1}[\one^{\otimes n}\otimes \tau_n( A(\omega) ), \vartheta^{-n}\omega]\,,
        \end{equation} where $A\in C(\Omega, \mathcal{A}_{[0, \infty)})$. One readily checks that $\Gamma$ is fiber-wise dense in $\mathcal{R}$. Furthermore, one checks that for all $\alpha \in \Gamma$, the assignment $\omega \mapsto \|\alpha(\omega)\|_{\mathcal{R}_\omega}$ is continuous since each summand in the graph norm is the norm of a translate of the continuous section $(\omega \mapsto [A(\omega), \omega])\in \Gamma(\mathcal{C}^\phi)$. Once again, we note that the tuple above is defined using the same function $A$ in each coordinate. 

    Therefore, by the Fell-Doran Theorem (Theorem~\ref{thm:Banach_Bundle} above), there is a unique topology making $\mathcal{R}$ into a Banach bundle with $\Gamma$ a dense $C(\Omega)$-submodule of continuous sections. For brevity, we will refer to $\mathcal{R}$ as the \textbf{aromatic bundle}. 

    We need to equip $\mathcal{R}$ with a transfer apparatus that implements the average $\bar \phi$. To do so, let $a\in \mathcal{A}$ and define the maps
    \begin{equation}
        \begin{split}
            F_{a,\omega} &\left(\oplus_{n=0}^{p-1} [\one_{\mathcal{A}}^{\otimes n} \otimes \tau_n(\xi), \vartheta^{-n} \omega] \right):= \bigoplus_{n=0}^{p-1}[\one_{\mathcal{A}}^{\otimes n}\otimes a \otimes \tau_{n+1}(\xi), \vartheta^{-(n+1)}\omega]\\
            &=[a\otimes \tau_1(\xi), \vartheta^{-1}\omega]\oplus [\one \otimes a \otimes \tau_2(\xi), \vartheta^{-2}\omega] \oplus \cdots \oplus [\one_{\mathcal{A}}^{\otimes (p-1)}\otimes a \otimes \tau_p(\xi), \vartheta^{-p}\omega]\,.
        \end{split}
    \end{equation} 
    \begin{claim}\label{claim:Fa_well_defined}
        The maps $F_{a,\omega}$ are well-defined $\vartheta^{-1}$-covariant homomorphisms of $\mathcal{R}$. 
    \end{claim}

    Postponing the proof of the claim for now, we aim to show the $F_a$'s form a set of transfer operators in a transfer apparatus (see Definition~\ref{def:apparatus}) for $\bar \phi$. To complete the transfer apparatus, let $f(\omega) = \bigoplus [\one, \vartheta^{-n}\omega]$, and let $\mu_\omega = \frac{1}{p} \bigoplus_{n=0}^{p-1} \varrho^\phi_{\vartheta^{-n} \omega}$. One checks that $f$ is a continuous section and that $\mu_\omega\in \hom(\mathcal{R}, \Omega \times \C)$. Furthermore, $\mu_\omega(f(\omega)) \equiv 1$. Now, let $a\in \mathcal{A}$. We will show that $F_{a,\omega}$ implements $\bar\phi$ with $f$ and $\mu$, and the general case will follow by iteration. One has
    \begin{align*}
        \mu_{\vartheta^{-1}\omega}(F_{a, \omega}(f(\omega))) &= \frac{1}{p}\sum_{n=0}^{p-1} \varrho^\phi_{\vartheta^{-(n+1)}\omega} [\one_{\mathcal{A}}^{\otimes n}\otimes a \otimes \one, \vartheta^{-(n+1)}\omega]= \frac{1}{p} \sum_{n=0}^{p-1} \phi_{\vartheta^{-(n+1)}\omega}(\one^{\otimes n}\otimes a)\\
        & = \frac{1}{p} \sum_{n=0}^{p-2} \phi_{\vartheta^{-(n+1)}}(\one^{\otimes n}\otimes a) + \frac{1}{p}\phi_{\omega}(a) = \frac{1}{p} \sum_{n=0}^{p-2} \phi_{\vartheta^{-(n+1)}}(\tau_n(a)) + \frac{1}{p}\phi_{\omega}(a)\\
        &= \bar \phi_\omega (a)\,,
    \end{align*} where the second-to last equation follows since $\phi$ is $\vartheta^p$-covariant, and the result of translating by $-p$ puts $a$ at the zero site. Therefore, $(\mathcal{R}, \pi, \Omega)$ is a Banach bundle whose fibers are finite dimensional with $(\mu, F, f)$ serving as a transfer apparatus for $\bar \phi$. Thus by Theorem~\ref{thm:Fundamental}, we conclude that $\bar \phi$ has small correlations. 
\end{proof}

Below, we record a proof of Claim~\ref{claim:Fa_well_defined}. 
\begin{proof}[Proof of Claim~\ref{claim:Fa_well_defined}]
    Recall that in any $C^*$-algebra, an element is generated by a Cartesian sum of self-adjoint elements. In fact, due to the triangle inequality, to show that the $F_a$ are well-defined, it is sufficient to show that if $\xi \in \mathcal{A}_{[0, \infty)}$ is self-adjoint and $r= \oplus [\one^{\otimes n} \otimes \xi, \vartheta^{-n}\omega]$ is zero, then $F_a(r) = 0$. 
    
    Now, let $a\in \mathcal{A}$. We may write $a = a_0 - a_1 +ia_3 -ia_4$ where each $a_j\ge 0$ and $\sum \|a_j\| \le 2 \|a\|$. Observe that for any $c\in (\mathcal{A}_{(-\infty, 0)})_{s.a.}$, one has the estimate 
    \begin{equation}\label{pf:eqn:long_inequality}
        -\|a_j\| \phi_{\vartheta^{-n}\omega}(c \otimes \one^{\otimes n} \otimes \tau_{n+1}(\xi)) \le \phi_{\vartheta^{-n}\omega}(c\otimes \one^{\otimes (n-1)}a_j\otimes \tau_{n+1}(\xi)) \le \|a_j\| \phi_{\vartheta^{-n}\omega}(c \otimes \one^{\otimes n}\otimes \tau_{n+1}(\xi))\,.
    \end{equation}

    Now, we may write any $d\in \mathcal{A}_{(-\infty, 0)}$ as $d = \Re(d) + i\Im(d)=: d_1 + i d_2$, and $d_1,d_2 \in (\mathcal{A}_{(-\infty, 0)})_{s.a.}$. Therefore, by applying the inequality in~(\ref{pf:eqn:long_inequality}) for both $d_1$ and $d_2$, we obtain for any $0\le n \le p-1$ the inequality
    \begin{align*}
        |\phi_{\vartheta^{-n}\omega}(d \otimes \one^{\otimes (n-1)}&\otimes a \otimes \tau_{n+1}(\xi))|\le |\phi_{\vartheta^{-n}\omega}(d_1\otimes \one^{\otimes (n-1)}\otimes a \otimes \tau_{n+1}(\xi))| +|\phi_{\vartheta^{-n}\omega}(d_2 \otimes \one^{\otimes (n-1)}\otimes a \otimes \tau_{n+1}(\xi))|\\
        &\le 2\|a\|\left[ |\phi_{\vartheta^{-n}\omega}( \frac{d + d^*}{2} \otimes \one^{\otimes n} \otimes \tau_{n+1}(\xi) ) | + |\phi_{\vartheta^{-n}\omega}( \frac{d - d^*}{2} \otimes \one^{\otimes n} \otimes \tau_{n+1}(\xi) ) |\right]\\
        &\le 2\|a\| \left[|\phi_{\vartheta^{-n}\omega}( d \otimes \one^{\otimes n} \otimes \tau_{n+1}(\xi) ) |+|\phi_{\vartheta^{-n}\omega}( d^* \otimes \one^{\otimes (n+1)} \otimes \tau_{n+1}(\xi) ) | \right]\\
        &= 4 \|a\|  |\phi_{\vartheta^{-n}\omega}( d \otimes \one^{\otimes n} \otimes \tau_{n+1}(\xi) ) |\,,
    \end{align*} where the last equality follows since $\xi$ is a self-adjoint. Now, since the fibers $\mathcal{C}^{\phi}_\omega$ are finite dimensional, Lemma~\ref{lem:n-norm} tells us that 

    \begin{equation}\label{pf:eqn:summands}
        \|[\one^{\otimes (n-1)}\otimes a \otimes \tau_{n+1}(\xi), \vartheta^{-(n+1)}\omega]\| \le 4\|a\| \|[\one^{\otimes n} \otimes \tau_{n+1}(\xi), \vartheta^{-(n+1)}\omega]\|\,. 
    \end{equation} Lastly, we note that for $n=p-1$

    \begin{equation}\label{pf:eqn:p-1}
        \|[\one^{\otimes p}\otimes \tau_{p}(\xi), \vartheta^{-p}\omega]\| \le \|[\xi, \omega]\|\,,
    \end{equation} since $\phi$ is $\vartheta^p$-covariant: indeed, we note that
    \[
        \|[\one^{\otimes p}\otimes \tau_{p}(\xi), \vartheta^{-p}\omega]\| = \inf_{x\in I_{\vartheta^{-p} \omega}}\|\one^{\otimes p}\otimes \tau_{p}(\xi) + x\| = \inf_{\tau_{-p}(x)} \|\xi + \tau_{-p}(x)\| \le \inf_{y \in I_{\omega}} \|\xi + y\| = \|[\xi, \omega]\|\,.
    \]
    
    Finally, it follows by the triangle inequality, inequality~(\ref{pf:eqn:summands}), and inequality~(\ref{pf:eqn:p-1})  that for any element $r\in \mathcal{R}$ as defined in the beginning of the proof
    \begin{equation}\label{pf:eqn:estimate_for_s.a._elts_in_R}
        \|F_{a, \pi(r)}(r)\| \le 4 \|a\| \|r\|.
    \end{equation} Whence $F_a$ is well-defined for all $a\in \mathcal{A}$. 
    
    We now work a little harder to extend inequality~(\ref{pf:eqn:estimate_for_s.a._elts_in_R}) to all of $\mathcal{R}$. Notice that if $\xi \in \mathcal{A}_{[0,\infty)}$, then 
    \[
        \|[\xi, \omega]\|_{\mathcal{C}^\phi_\omega} = \|[\xi^*, \omega]\|_{\mathcal{C}^\phi_\omega}\,.
    \] In fact, if $d\in \mathcal{I}_\omega$, then for any $x\in \mathcal{A}_{(-\infty, 0)}$ one has $0 = \phi_\omega(x\otimes d) = \overline{ \phi_\omega(x^*\otimes d^*)}$, whence  $\mathcal{I}_\omega = \mathcal{I}_\omega^*$ by letting $x$ range. Thus, 
    
    \begin{equation}\label{pf:eqn:bundle_norm_self_adjoint}
        \|[\xi, \omega]\|_{\mathcal{C}^\phi_\omega} = \inf_{d\in \mathcal{I}_\omega} \|\xi + d\| = \inf_{d\in \mathcal{I}_\omega} \|\xi^* + d^*\| = \|[\xi^*, \omega]\|_{\mathcal{C}^\phi_\omega}.
    \end{equation} In fact, $\|[\xi, \omega]\| = 0$ implies both $\|[\Re(\xi), \omega]\| = 0$ and $\|[\Im(\xi), \omega]\|=0$ by the triangle inequality. 

    Now, let $s = \oplus_0^{p-1} [\one^{\otimes n} \otimes \eta, \vartheta^{-n}\omega]$ for $\eta \in \mathcal{A}_{[0,\infty)}$, and let $s_1 = \oplus_0^{p-1} [\one^{\otimes n} \otimes \Re(\eta), \vartheta^{-n}\omega]$ and \\$s_2 = \oplus_0^{p-1} [\one^{\otimes n} \otimes \Im\eta, \vartheta^{-n}\omega]$ so that $s = s_1 + i s_2$ in $\mathcal{R}$. We note $\|s_j\| \le \frac{1}{2} ( \|s\| + \|s^*\|) =\|s\|$ by~(\ref{pf:eqn:bundle_norm_self_adjoint}). Therefore, 
    \begin{equation}\label{eqn:bound_for_aromatic_transfer}
        \|F_{a,\omega}(s)\| \le \|F_{a,\omega}(s_1)\| + \|F_a(s_2)\| \le 4\|a\|\left(\|s_1\| + \|s_2\| \right) \le 8\|a\| \|s\|_\infty\,.
    \end{equation} Thus, we have also shown that $\|F_a\|_\infty\le 8 \|a\|$. 

    That $F_a$ is $\vartheta^{-1}$-covariant is now obvious, so the last thing to check is that for any $\alpha \in \Gamma(\mathcal{R})$, the image $F_a\cdot \alpha$ is a continuous function into $\mathcal{R}$. This follows immediately from the observation that $a\otimes \tau_1(A(\omega)) \in C(\Omega, \mathcal{A}_{[0, \infty)})$ for all $A\in C(\Omega, \mathcal{A}_{[0, \infty)})$, so in fact on the dense $C(\Omega)$ submodule $\Gamma$ given by sections as in~(\ref{eqn:aromatic_sections}),  $F_a \cdot \Gamma \subset \Gamma$, and we are done. 
\end{proof}

We now conclude this section with the following theorem. 
\begin{thm}[Theorem~\ref{thmx:w*dense}]\label{thm:w*_dense}
    Let $\mathcal{A}$ be a unital $C^*$-algebra and let $\A_{\Z}$ be the associated quasilocal algebra. Let $\Omega$ be a compact Hausdorff space equipped with an homeomorphism $\vartheta$. By $\mathfrak{C}$, we mean the set of translation co-variant states on $\A_\Z$ and the subset of those translation co-variant states with small correlations is denoted by $\mathfrak{K}\subset \mathfrak{C}$. One has that $\mathfrak{K}$ is weakly-* dense in $\mathfrak{C}$ uniformly in $\omega$. 
\end{thm}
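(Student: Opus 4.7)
The plan is to emulate the classical Fannes--Nachtergaele--Werner argument \cite[Proposition 2.6(4)]{FannesNachtergaeleWerner}, exploiting the three lemmas we have already established: convexity of $\mathfrak{K}$ (Lemma~\ref{lem:convex}), robustness under grouping, and the aromatic average (Lemma~\ref{lem:aromatic_average}). Given an arbitrary $\psi \in \mathfrak{C}$ and any block size $p \in \mathbb{N}$, I will construct a block-product state $\phi^{(p)}$ on $(\mathcal{A}^{\otimes p})_{\mathbb{Z}}$ that is $\vartheta^p$-covariant and has small correlations (with one-dimensional fibers), then apply the aromatic average to produce a state $\bar\phi^{(p)} \in \mathfrak{K}$. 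The remaining task is to show $\bar\phi^{(p)}_\omega \to \psi_\omega$ weakly-$*$, uniformly in $\omega$, as $p \to \infty$.

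First, I would define the one-block state $\eta^{(p)}_\omega \in \mathcal{S}(\mathcal{A}^{\otimes p})$ by $\eta^{(p)}_\omega(b) := \psi_\omega(b)$, viewing $\mathcal{A}^{\otimes p}$ as $\mathcal{A}_{[0,p-1]}$. Since $\psi_\omega$ is weakly-$*$ continuous, so is $\eta^{(p)}_\omega$. Applying (the proof of) Proposition~\ref{prop:existence_TCPS} with the homeomorphism $\vartheta^p$ in place of $\vartheta$ yields a weakly-$*$ continuous $\vartheta^p$-covariant product state $\phi^{(p)}_\omega$ on $(\mathcal{A}^{\otimes p})_{\mathbb{Z}}$ satisfying the product rule $\phi^{(p)}_\omega(b_X c_Y) = \phi^{(p)}_\omega(b_X)\phi^{(p)}_\omega(c_Y)$ whenever $X,Y$ are disjoint finite subsets of the grouped lattice. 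By the same reasoning as in Example~\ref{exmp:covariant_product_state}, $\phi^{(p)}$ has small correlations with one-dimensional fibers when regarded as a $\vartheta^p$-covariant state on the grouped chain. Hence Lemma~\ref{lem:aromatic_average} produces
\[
\bar\phi^{(p)}_\omega := \frac{1}{p} \sum_{n=0}^{p-1} \phi^{(p)}_{\vartheta^{-n}\omega} \circ \tau_n \in \mathfrak{K}.
\]

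Next, for the approximation estimate, fix a local observable $a$ supported on an interval of length $K$; by a translation we may assume $a \in \mathcal{A}_{[0,K-1]}$. For each shift index $n \in \{0, 1, \dots, p-1\}$, the translate $\tau_n(a)$ is supported on $[n, n+K-1]$. When $n + K - 1 \leq p-1$, this support lies entirely inside the first block $[0,p-1]$ of the block partition used to define $\phi^{(p)}$, so by the product structure combined with translation covariance of $\psi$,
\[
\phi^{(p)}_{\vartheta^{-n}\omega}\bigl(\tau_n(a)\bigr) = \psi_{\vartheta^{-n}\omega}\bigl(\tau_n(a)\bigr) = \psi_\omega(a).
\]
There are $p - K + 1$ such ``good'' indices, while for the remaining $K - 1$ indices $\tau_n(a)$ straddles a block boundary and $|\phi^{(p)}_{\vartheta^{-n}\omega}(\tau_n(a))| \leq \|a\|$. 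Combining and comparing with the identity $\psi_\omega(a) = \tfrac{1}{p}\sum_{n=0}^{p-1} \psi_\omega(a)$, I obtain
\[
\bigl| \bar\phi^{(p)}_\omega(a) - \psi_\omega(a) \bigr| \;\leq\; \frac{2(K-1)}{p} \|a\|,
\]
which tends to zero as $p \to \infty$ uniformly in $\omega$. Since $\mathcal{A}^{\loc}_{\mathbb{Z}}$ is norm-dense in $\mathcal{A}_{\mathbb{Z}}$ and all states are contractive, standard approximation extends this to uniform weak-$*$ convergence on every element of $\mathcal{A}_{\mathbb{Z}}$, proving density.

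The main obstacle is not the approximation count itself -- that is a clean boundary-term estimate -- but rather the verification that the block-product construction genuinely produces a $\vartheta^p$-covariant, weakly-$*$ continuous state on the grouped algebra lying in the correct ``$\mathfrak{K}$ for $\vartheta^p$'' class, so that Lemma~\ref{lem:aromatic_average} applies. This requires a careful interplay between translation covariance of $\psi$ on the original chain (with $\vartheta$) and covariance of $\phi^{(p)}$ on the grouped chain (with $\vartheta^p$), and using that restrictions of weakly-$*$ continuous state-valued maps remain weakly-$*$ continuous. Once these bookkeeping facts are in place, the convergence estimate above completes the proof.
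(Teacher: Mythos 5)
Your proposal is correct and follows essentially the same route as the paper: restrict $\psi_\omega$ to a length-$p$ block, build the $\vartheta^p$-covariant product state on the grouped chain via Proposition~\ref{prop:existence_TCPS} (which has small correlations as in Example~\ref{exmp:covariant_product_state}), apply the aromatic average of Lemma~\ref{lem:aromatic_average}, and bound the error by the $O(K/p)\|a\|$ boundary-term estimate, uniformly in $\omega$. The bookkeeping concerns you flag (weak-$*$ continuity of the restriction and the interplay of $\vartheta$- versus $\vartheta^p$-covariance) are handled in the paper exactly as you indicate.
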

\begin{proof}
    We roughly follow the strategy laid out in \cite[Proposition 2.6(d)]{FannesNachtergaeleWerner}. Let $\psi_\omega\in \mathfrak{C}$ be a translation co-variant state. For $p\in \mathbb{N}$, let $\phi_\omega = \psi_\omega |_{\mathcal{A}_{[0, p-1]}}$ denote the restriction to the local algebra on the interval $[0, p-1]$. Note that $\phi_\omega$ is weakly-* continuous as a function of $\omega$. Let $\phi'_\omega = \bigotimes_{n=-\infty}^\infty \phi_{\vartheta^{np}\omega}$ denote the $\vartheta^p$-covariant product state on $(\mathcal{A}^{\otimes p})_{\Z}$ as constructed in Proposition~\ref{prop:existence_TCPS}. 

    As we saw in Example~\ref{exmp:covariant_product_state}, it follows that $\phi'_\omega$ has small correlations on $(\mathcal{A}^{\otimes p})_{\mathbb{Z}}$. Let now $\bar \phi_\omega = \frac{1}{p} \sum_{n=0}^{p-1} \phi_{\vartheta^{-n}\omega}'\circ \tau_n$ be the Aromatic average. By Lemma~\ref{lem:aromatic_average}, we know that $\bar \phi$ is a $\vartheta$-TCVS on $\mathcal{A}_\Z$ with small correlations. 

    Now, let $a\in \mathcal{A}_{[0, \infty)}^{\loc}$. Find $p$ large enough so that $\supp(a) \subset [j, j+m] \subset [0, p-1]$, for some integers $j, m\in \mathbb{N}\cup \{0\}$. Notice that $\psi_\omega(a)$ and $\phi_{\vartheta^{-n}\omega}\circ \tau_n(a)$ agree unless $j+m+n\ge p$. Therefore, 
    \begin{align}\label{pf:eqn:w*_dense}
        |\psi_\omega(a) - \bar \phi_\omega(a)| &\le \frac{1}{p} \sum_{p-1-(j+m)}^{p-1} | \psi_\omega(a) - \phi_{\vartheta^n\omega}'\circ \tau_n(a)|\le 2\|a\| \frac{j+m}{p} \to 0\,,
    \end{align} as $p\to \infty$. 

    To see that this inequality holds on all $\mathcal{A}_{\Z}^{\loc}$, note that if $0\in \supp(a)$, then the translate $\tilde a = \tau_m(a) \subset [0, p-1]$ for $p$ large enough where we set $m = |\min\{\supp(a)\}|+1$. By translation co-variance, we have $|\psi_\omega(a) - \bar \phi_\omega(a)|= |\psi_{\vartheta^m\omega}(\tilde a) - \phi_{\vartheta^m \omega}(\tilde a)|$, so the inequality~(\ref{pf:eqn:w*_dense}) applies to the `$\sim$'-ed quantity. But $\|\tilde a\| = \|a\|,$ and we are done since the estimate is uniform in $\omega$. 
\end{proof}
\subsection{Fiber Structure}
In this section, we present a refinement of Theorem~\ref{thm:Fundamental} that allows one to construct a TCVS from general parts. This result is along the same lines as Proposition 2.4 in \cite{FannesNachtergaeleWerner}, but is slightly more general in that we do not need to assume the single-site algebra is finite dimensional. 

Let us recall some terminology before stating our main theorem. We direct the interested reader to \cite[Chapter 13]{Paulsen} (see also \cite[Chapter 5]{EffrosRuan}) for further information. A complex vector space $V$ is a $*$-vector space if it comes equipped with an anti-linear involution $(\,\cdot\,)^*:V\to V$. Those elements with $v^*=v$ are called \textit{self-adjoint}, and we write $(V)_{s.a.}$ for the vector subspace of all self-adjoint elements. We write $\M_n(V)$ for the linear space of $n\times n$ arrays with entries in $V$ equipped with the involution $(v_{i,j})^* = (v_{j,i}^*)$. A $*$-vector space is said to be \textit{matrix ordered} if both of the following are satisfied:
\begin{enumerate}
    \item For every $n\ge 0$, there exists a collection of convex cones $C_n \subset (\M_n(V))_{s.a}$ with $C_n \cap -C_n = \{0\}$ which induce an ordering on $\M_n(V)$. 
    \item For every rectangular $A\in \M_{n\times m}(\C)$, one has $A^*C_n A \subset C_m$ where 
    \[
        A^*(v_{j,k})_{j,k =1}^n A = \left( \sum_{\ell=1}^m \sum_{k=1}^n a_{s,k}a_{r,\ell}^* v_{k,\ell}\right)_{s, \ell =1}^m\,.
    \]
\end{enumerate} Furthermore, if $V$ is a matrix-ordered $*$-vector space, an element $e\in C_1$ is said to be an \textit{order unit} if for all $x\in (V)_{s.a.}$, there is real $r>0$ with $re+x\in C_1$. The element $e$ is \textit{Archimedean} if $re+x \in C_1$ for all $r>0$ implies $x\in C_1$. The last piece of terminology we recall is that $e$ is said to be an \textit{Archimedean matrix order unit} if $e\otimes \one_{n\times n}$ is an Archimedean order unit in $\M_n(V)$ for all $n\ge0$.  

\begin{thm}[Construction Theorem]\label{thm:construction}
    Let $\mathcal{A}$ be a unital $C^*$-algebra and let $(\mathcal{A})_\Z$ be the associated spin chain. Let $\Omega$ be a compact Hausdorff space equipped with an homeomorphism $\vartheta$. The following are equivalent. 
    \begin{enumerate}[label = (\alph*)]
        \item A state $\psi_\omega$ on $\mathcal{A}_\Z$ is a translation co-variant state with small correlations. 
        \item There is a Banach bundle $(\mathcal{E}, \pi, \Omega)$ with a transfer apparatus $(\varrho, E, e)$ so that each of the following holds: 
            \begin{enumerate}[label = (\roman*)]
                \item Each fiber $\mathcal{E}_\omega$ is a finite-dimensional matrix-ordered $*$-vector space with $e(\omega)$ forming an Archimedean matrix order unit. 
                \item Each transfer operator of the form $E_{a^*a, \omega}$ is completely positive with respect to the matrix ordering, and the norm estimate $\|E_{a^*a,\omega}\|\le \|a^*a\|_{\mathcal{A}}$. \item The linear functional $\varrho_\omega$ is completely positive. 
                \item Both $E_{\one, \omega}(e(\omega)) = e(\vartheta^{-1}\omega)$ and $\varrho_\omega(e(\omega))\equiv 1$. 
                \item One has 
                \begin{equation}\label{eqn:back_fibering}
                    \varrho_\omega \circ E_{\one, \omega}\circ E_{\one, \vartheta\omega}\circ \cdots \circ E_{\one, \vartheta^n \omega}= \varrho_{\vartheta^n \omega}.
                \end{equation}
                \item Lastly, defining $\psi^{[-n,n]}_\omega = \varrho_{\vartheta^{-n}\omega} \circ E^{(2n+1)}\circ e(\vartheta^n \omega)$ on $\mathcal{A}^{\otimes (2n+1)}$, via the iterates, the weak-* limit as $n\to \infty$ exists and is a TCVS with small correlations. 
            \end{enumerate}
    \end{enumerate}
\end{thm}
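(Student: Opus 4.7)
The plan is to prove this construction theorem by establishing both directions and leveraging Theorem~\ref{thm:Fundamental} as the backbone. For $(a) \Rightarrow (b)$, I would start with the canonical correlation bundle $\mathcal{C}$ and its transfer apparatus $(\varrho, E, e)$ guaranteed by Theorem~\ref{thm:Fundamental}. The key additional task is to endow each fiber $\mathcal{C}_\omega = \mathcal{A}^+/I_\omega$ with an operator system structure. Since $I_\omega$ is defined as the simultaneous kernel of the forms $a \mapsto \psi_\omega(c \otimes a)$, one first verifies $I_\omega = I_\omega^*$, which follows from $\psi_\omega$ being hermitian on $\mathcal{A}_\Z$. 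Then, using the operator system quotient construction of Kavruk, Paulsen, and Todorov, I would define the matrix positive cones on $\mathcal{C}_\omega$ as the images of $\M_n(\mathcal{A}^+)_+$ under the coordinatewise quotient. With this choice, $e(\omega) = [\one, \omega]$ is a matrix order unit by construction, and the Archimedean property follows from norm-closedness of $I_\omega$ together with the $C^*$-identity on $\mathcal{A}^+$.

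The complete positivity of $E_{a^*a, \omega}$ then follows from its explicit formula $E_{a^*a, \omega}[b^*b, \omega] = [a^*a \otimes \tau_1(b^*b), \vartheta^{-1}\omega]$: applying the map entrywise to a positive $\M_n$-valued array preserves positivity because the result is a positive element of $\M_n(\mathcal{A}^+)$. Likewise, $\varrho_\omega[a, \omega] = \psi_\omega(\one \otimes a)$ is CP because $\psi_\omega$ is a state. Property (iv) is immediate from $E_{\one, \omega}[\one, \omega] = [\one, \vartheta^{-1}\omega]$ and $\varrho_\omega([\one, \omega]) = \psi_\omega(\one) = 1$. Property (v) follows by iterating the transfer expansion in Lemma~\ref{lem:pre-fundamental}: the left-hand side equals $\psi_\omega(\one \otimes \one^{\otimes n} \otimes (\,\cdot\,))$, which by translation covariance and a re-indexing collapses to $\varrho_{\vartheta^n \omega}$. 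Item (vi) recovers $\psi$ in the weak-* limit by restricting to finite intervals and invoking weak-* density of the local algebra $\mathcal{A}^{\loc}_\Z$ in $\mathcal{A}_\Z$.

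For $(b) \Rightarrow (a)$, I would define the finite-interval linear functionals $\psi^{[-n,n]}_\omega$ via the stated formula and verify that they assemble into a single state. Compatibility in the sense of Takeda's theorem follows from two observations: the left end is covered by (v), so that $\varrho_{\vartheta^{-n}\omega} \circ E_{\one, \vartheta^{-n}\omega} = \varrho_{\vartheta^{-(n-1)}\omega}$ (adding $\one$ on the left does not change the expectation), while the right end is handled by (iv), since $E_{\one, \vartheta^n\omega}(e(\vartheta^n\omega)) = e(\vartheta^{n-1}\omega)$ (adding $\one$ on the right is equally invisible). Theorem~\ref{thm:Takeda} then produces a global linear functional $\psi_\omega$ on $\mathcal{A}_\Z$; normalization $\psi_\omega(\one) = 1$ follows by applying (iv) and (v) repeatedly, and positivity on any local $a^*a$ follows by composing CP maps: the iterated image lies in the positive cone of a fiber of $\mathcal{E}$, after which the CP functional $\varrho$ returns a nonnegative scalar. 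Weak-* continuity in $\omega$ is inherited from continuity of the homomorphisms on sections (Lemma~\ref{lem:hom_action_sections}) and the compactness of $\Omega$, and translation covariance is a direct calculation using the $\vartheta^{-1}$-covariance of $E$ together with (v). Finally, the small-correlations property is immediate from Theorem~\ref{thm:Fundamental}, since we have built the required transfer apparatus by hypothesis.

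The main obstacle will be establishing the operator system structure on the fibers $\mathcal{C}_\omega$ and verifying that the iterates $\varrho \circ E \circ \cdots \circ E \circ e$ produce a \emph{positive} linear functional on $\mathcal{A}^{\otimes (2n+1)}$ in the $(b) \Rightarrow (a)$ direction. The subtlety is that positivity on arbitrary local positive observables (not just those of the form $a_1^*a_1 \otimes \cdots \otimes a_n^*a_n$) requires the full matrix-ordering hypothesis: a positive operator in $\mathcal{A}^{\otimes (2n+1)}$ need not decompose as an elementary tensor of positives, and one must use complete positivity together with the fact that any positive element of $\mathcal{A}^{\otimes k}$ is a norm limit of finite sums of tensor products of positives with a control coming from the Archimedean matrix order unit. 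The remaining technical demands (weak-* limits, continuity in $\omega$, and the consistency check for Takeda's theorem) should follow by assembling the continuity results already developed for Banach bundles in the preliminaries.
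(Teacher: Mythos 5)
Your overall architecture matches the paper's (canonical bundle plus the Kavruk--Paulsen--Todorov quotient for $(a)\Rightarrow(b)$; Takeda consistency via (iv) and (v) for $(b)\Rightarrow(a)$), but two steps as you describe them would not go through. First, in $(a)\Rightarrow(b)$ you propose to take the matrix cones on $\mathcal{C}_\omega$ to be the coordinatewise images of $\M_n(\mathcal{A}^+)_+$ and to get the Archimedean property ``from norm-closedness of $I_\omega$.'' That is not the KPT construction and it skips the real work: to invoke \cite[Proposition 3.4]{KavrukPaulsenTodorov} one must show that $I_\omega$ is a \emph{kernel}, i.e.\ an intersection of kernels of states on $\mathcal{A}^+$ --- self-adjointness of $I_\omega$ is far weaker and does not guarantee that the quotient carries proper, Archimedean matrix cones (the image of the positive cone under a quotient is in general neither closed nor Archimedean, which is exactly why KPT define the quotient cones with the $\epsilon\,\one$-perturbation, as the paper does). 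The paper's proof spends most of its effort here: it constructs the family $\mathcal{T}_\omega$ of positive elements $c\in(\mathcal{A}^-)_+$ with $\psi_\omega(c\otimes\one)=1$, and shows $I_\omega=\bigcap_{c\in\mathcal{T}_\omega}\ker\psi_\omega(c\otimes\cdot)$ using that positive functionals attain their norm at the unit together with the Cartesian decomposition. Your argument for complete positivity of $E_{a^*a,\omega}$ also needs this: the clean route is that the quotient maps $p_\omega$ are CP (a consequence of the KPT kernel structure) and $E_{a^*a,\omega}\circ p_\omega=p_{\vartheta^{-1}\omega}\circ\mu_{a^*a}$ with $\mu_{a^*a}$ CP, not an entrywise ``the image is positive'' claim relative to image cones.

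Second, in $(b)\Rightarrow(a)$ you correctly identify the delicate point --- positivity of $\psi^{[-n,n]}_\omega$ on arbitrary positive elements of $\mathcal{A}^{\otimes(2n+1)}$, not just elementary tensors of positives --- but your proposed fix is based on a false statement: a positive element of $\mathcal{A}^{\otimes k}$ is \emph{not} in general a norm limit of finite sums of tensor products of positive elements (already in $\M_2\otimes\M_2$ the cone generated by such sums consists of the separable operators, which is strictly smaller than the positive cone). So that approximation argument cannot supply positivity. What is needed instead is to use the matrix-ordering/complete-positivity hypotheses to see the iterated map $b\mapsto\varrho\circ E^{(2n+1)}_b(e)$ as a composition of completely positive maps on the tensor product (this is how the paper argues, albeit tersely, and then notes that positivity into $\C$ upgrades to complete positivity for free). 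The remaining ingredients of your proposal --- the Takeda consistency check via (iv) and (v), weak-$*$ continuity from continuity of the homomorphisms on sections, covariance, and the appeal to Theorem~\ref{thm:Fundamental} for small correlations --- do track the paper's proof.
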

 Before we begin, let us make a remark.
\begin{rmk}
    The main content of Theorem~\ref{thm:construction} is that one can assemble all TCVS with small correlations from bundles of Archimedean matrix ordered vector spaces and a family of completely positive transfer operators thereon. Furthermore, such constructions are weak-* dense in $\mathfrak{C}$ since the argument in Theorem~\ref{thm:w*_dense} did not rely upon the fibers having any additional structure. Put another way: these constructions are weak-* dense in $\mathfrak{C}$ since Theorem~\ref{thm:construction} says \textit{every} TCVS with small correlations can be recovered in this way. 
\end{rmk}

\begin{proof}
    The forward direction $(a)\Rightarrow (b)$ follows by verifying each of $(b)(i)-(vi)$ for the canonical bundle. By the work of \cite[Proposition 3.4]{KavrukPaulsenTodorov}, if we can show that there exists a family of states $\alpha \in \mathcal{S}(\mathcal{A}_{\mathbb{N}})$ so that $I_\omega = \bigcap_{\alpha} \ker \alpha$, then $\mathcal{C}_\omega$ is an operator system with cones given by
    \[
        V_\omega^{(n)} := \{([x_{i,j},\omega]) \colon \forall \epsilon>0,\, \exists d_{i,j}\in I_\omega \colon \epsilon \one_{\mathcal{A}_{\mathbb N}}\otimes \one_{n\times n} + (x_{i,j}+d_{i,j}) \in \M_n(\mathcal{A}_{\mathbb N})_+\}\,.
    \] Let $\mathcal{T}_\omega$ denote the set of $c\in (\mathcal{A}_{(-\infty, 0)})_+$ so that $\psi_\omega(c\otimes \one_{[0, \infty)}) = 1$. Each $c\in \mathcal{T}_\omega$ defines a state by setting $f_c(\,\cdot\,) = \psi_\omega(c\otimes (\,\cdot\,))$. The inclusion $\bigcap_{f\in \mathcal{T}_\omega}\ker f \supset I_\omega$ is trivial. To see the reverse inclusion, first, we note that if $h\in (\mathcal{A}_{(-\infty, 0)})_+$ with $\psi_\omega(h) = 0$, and if $x\in \bigcap_{f\in \mathcal{T}_\omega}\ker f$, then the inequality 
    \[
        0\le |\psi_\omega(h\otimes x)| \le |\psi_\omega(h)\| \|x\| =0\,,
    \] holds since $f_h$ is a positive linear functional on a $C^*$-algebra with unit, and therefore achieves its operator norm at the unit. In general, for any $g\in \mathcal{A}_{(-\infty, 0)}$, we may write $g = g_1 - g_2 +i g_3 - i g_4$ where $g_j\ge 0$. Whence 
    \[
        |\psi_\omega(g \otimes x)| \le \sum_{j=1}^4 |\psi_{\omega}(g_j \otimes x)| = \sum \psi_\omega(g_j \otimes \one) |\psi_\omega(\tilde g_j \otimes x)| = 0\,,
    \] where $\tilde g_j = \psi_\omega(g_j)^{-1} g_j$ where defined. 

    Now, we aim to show $b(ii)$ which is the only nontrivial part of the theorem statement. This follows since \cite[Proposition 3.4]{KavrukPaulsenTodorov} says that the quotient map $p_\omega$ (as defined in Lemma~\ref{lem:quot_eval}) is completely positive for all $\omega$. Notice the mapping $\mu_{a^*a}: (x\mapsto a^*a\otimes \tau_1(x))$ is completely positive on $\mathcal{A}_{[0, \infty)}$. We trivially have $E_{a^*a,\omega} \circ p_\omega= p_{\vartheta^{-1} \omega} \circ \mu_{a^*a}$ the latter being a composition of completely positive maps, hence $E_{a^*a,\omega} \circ p_\omega:\mathcal{A}_{[0, \infty)} \to C_{\vartheta^{-1}\omega}$ is completely positive. But the image of $\A_{[0, \infty)}$ under $p_\omega$ is just $C_\omega$.

    The nontrivial part of $(b)\Rightarrow (a)$ is verifying the weak-* limit property. To this end, we employ Takeda's theorem (Theorem~\ref{thm:Takeda} above). Let $\Lambda \subset \Z$ be finite and write $\Lambda = \{x_0, \dots, x_{|\Lambda|}\}$ with $x_0 < x_1 < \cdots < x_{|\Lambda|}$. Our strategy is to fill in the `gaps' in $\Lambda$ with compositions by $E_{\one, \vartheta^p \omega}$ with $p$ being a site satisfying $x_0< p <x_{|\Lambda|}$ with $p\in \Lambda^c$. To this end, let $M = \max\{|x_0|, |x_{|\Lambda|}|\}$ and set 
    \[
        \psi^{\Lambda}_{\omega} = \varrho_{\vartheta^{-M}\omega} \circ E^{(2M+1)} \circ e(\vartheta^M \omega)= \psi^{[-M,M]}_\omega |_{\mathcal{A}_{\Lambda}}.
    \] Therefore, it is sufficient to check that inclusions of intervals are consistent. To see this, suppose that $m>n$. One has \begin{align*}
        \psi_\omega^{[-m, m]}|_{\mathcal{A}_{[-n,n]}} &= \varrho_{\vartheta^{-m}\omega}\circ E_{\one, \vartheta^{-m}\omega} \circ \cdots \circ E_{\one, \vartheta^{-n-1}\omega}\circ E^{(2n+1)} \circ E_{\one, \vartheta^{n+1}}\circ \cdots \circ e(\vartheta^m\omega)\\
        &= \varrho_{\vartheta^{-n}\omega} \circ E^{(2n+1)} \circ e(\vartheta^n \omega) = \psi^{[-n, n]}_\omega\,,
    \end{align*} where we have used $(b)(v)$ and $(b)(iv)$ to obtain the second equality. Thus, there exists a well-defined state $\psi_\omega$ on $(\mathcal{A})_{\mathbb{Z}}$ that restricts to the finite-volume states we have defined. Moreover, by construction, each $\psi^{\Lambda}_\omega$ is weakly*-continuous in $\omega$ on $\mathcal{A}_{\Lambda}$ from which it follows $\psi_\omega$ is weakly-* continuous in $\omega$ on the dense subalgebra $\mathcal{A}^{\loc}_{\Z}$. Verifying translation covariance of $\psi_\omega$ amounts to checking on $\mathcal{A}^{\loc}_{\mathbb{Z}},$ which is a straight-forward calculation. Lastly, $\psi_\omega$ is a bounded linear functional because of the norm estimate in $(b)(ii)$, and $\psi_\omega$ is positive since each $\psi^{\Lambda}$ is a composition of completely positive maps. But $\psi_\omega:\mathcal{A}_{\Z} \to \C$, the latter being an Abelian $C^*$-algebra, whence $\psi_\omega$ is completely positive. To conclude, note $\psi_\omega$ is a state because of the normalization condition in $(b)(iv)$.
\end{proof}

\begin{prop}\label{prop:C*Bundle}
    Let $\mathcal{A}$ be a unital $C^*$ algebra and let $(\mathcal{A})_\mathbb{Z}$ be the associated quasi-local algebra. Let $\Omega$ be a compact Hausdorff space equipped with an homeomorphism $\vartheta$. Suppose that $\psi_\omega$ is a translation co-variant state with small correlations, and denote by $(\mathcal{C}, \pi, \Omega)$ the canonical bundle. There exists a $C^*$-algebraic bundle over $\Omega$ extending $(\mathcal{C}, \Omega, \pi)$. Moreover, there exists a transfer apparatus for $\psi_\omega$ on the $C^*$-bundle. 
\end{prop}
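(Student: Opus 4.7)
The plan is to produce the fibers by a fiberwise GNS construction and then fit them into a Banach bundle via the Fell--Doran theorem (Theorem~\ref{thm:Banach_Bundle}). For each $\omega \in \Omega$, apply GNS to $(\mathcal{A}_{\Z}, \psi_\omega)$ to obtain $(\mathcal{H}_\omega, \pi_\omega, \Omega_\omega)$ with $\psi_\omega(\,\cdot\,) = \langle \Omega_\omega, \pi_\omega(\,\cdot\,)\Omega_\omega\rangle$. Take the fiber to be $\mathcal{D}_\omega := \overline{\pi_\omega(\mathcal{A}^+)}^{\|\cdot\|} \subset B(\mathcal{H}_\omega)$, which is a unital $C^*$-algebra, and let $\pi^\sharp: \bigsqcup_\omega \mathcal{D}_\omega \to \Omega$ be the obvious surjection. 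As candidate sections I would take the image of $C(\Omega, \mathcal{A}^+)$ under the maps $A \mapsto (\omega \mapsto \pi_\omega(A(\omega)))$.

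First I would show that for each $A \in C(\Omega, \mathcal{A}^+)$ the function $\omega \mapsto \|\pi_\omega(A(\omega))\|_{B(\mathcal{H}_\omega)}$ is continuous. The key is the identity
\[
\|\pi_\omega(A(\omega))\|_{B(\mathcal H_\omega)}^2 = \sup_{\substack{b\in \mathcal A^+, c\in \mathcal A^-\\ \psi_\omega(b^*b)\le 1,\ \psi_\omega(cc^*)\le 1}} |\psi_\omega(c \otimes A(\omega)^*A(\omega) \otimes b)|^2/\ldots
\]
presented as a supremum of matrix coefficients $\psi_\omega(c^* \otimes A(\omega) \otimes b)$ indexed by unit vectors of $\pi_\omega(\mathcal{A}^-)\Omega_\omega$ and $\pi_\omega(\mathcal{A}^+)\Omega_\omega$. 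Continuity then reduces to verifying equicontinuity of a family of expectation functionals, which follows by the same decompose-into-positive-parts argument surrounding~(\ref{eqn:f_fam_equi}) in the proof of Lemma~\ref{lem:n-norm}, combined with Lemma~\ref{lem:equi_sup}. With this continuity in hand, Theorem~\ref{thm:Banach_Bundle} equips $\mathcal{D}:=\bigsqcup_\omega \mathcal{D}_\omega$ with a unique Banach bundle topology making the chosen sections continuous, and each fiber is a $C^*$-algebra by construction.

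Next I would construct the transfer apparatus $(\tilde\varrho, \tilde E, \tilde e)$ on $\mathcal{D}$. Set $\tilde e(\omega) := \one_{\mathcal{D}_\omega}$, $\tilde\varrho_\omega(d) := \langle \Omega_\omega, d\,\Omega_\omega\rangle$, and, for $a \in \mathcal{A}$, define $\tilde E_{a,\omega}$ initially on $\pi_\omega(\mathcal{A}^+)$ by
\[
\tilde E_{a,\omega}\bigl(\pi_\omega(b)\bigr) := \pi_{\vartheta^{-1}\omega}\bigl(a \otimes \tau_1(b)\bigr).
\]
Translation covariance $\psi_\omega \circ \tau_k = \psi_{\vartheta^k\omega}$ shows that the kernel of $\pi_\omega|_{\mathcal A^+}$ is mapped into the kernel of $\pi_{\vartheta^{-1}\omega}|_{\mathcal A^+}$, so $\tilde E_{a,\omega}$ descends to the quotient. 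For $a = x^*x$ the map $b \mapsto x^*x \otimes \tau_1(b)$ is completely positive with Schwarz bound $\|x^*x\|$, and this bound transfers to $\tilde E_{x^*x, \omega}$ because applying matrix coefficients against vectors of the form $\pi_{\vartheta^{-1}\omega}(c)\Omega_{\vartheta^{-1}\omega}$ with $c \in \mathcal{A}^-$ recovers a corresponding matrix coefficient for $\psi_\omega$ by translation covariance; a Cartesian decomposition of general $a$ into four positive pieces then yields a uniform operator-norm estimate sufficient to extend $\tilde E_{a,\omega}$ continuously to $\mathcal{D}_\omega$. Continuity of each $\tilde E_a$ as a $\vartheta^{-1}$-covariant bundle homomorphism follows by checking on the dense $C(\Omega)$-submodule of sections $\omega \mapsto \pi_\omega(A(\omega))$, exactly as in Lemma~\ref{lem:pre-fundamental}(b). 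Finally, the factorization~(\ref{eqn:transfer_factorization}) for $(\tilde\varrho, \tilde E, \tilde e)$ is a direct telescoping calculation using the GNS identity.

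The relation to the canonical bundle $\mathcal{C}$ is obtained by noting that the natural map $[a,\omega] \mapsto \pi_\omega(a)\Omega_\omega \mapsto \pi_\omega(a)$ descends from $\mathcal{A}^+$ to a surjective bundle morphism $\Phi: \mathcal{D} \to \mathcal{C}$ intertwining the two transfer apparatuses (with $\varrho_\omega = \tilde\varrho_\omega \circ \Phi^{-1}_\omega$ on the image, etc.); this is the sense in which $\mathcal{D}$ \emph{extends} $\mathcal{C}$. The main obstacle is the operator-norm control needed to extend $\tilde E_{a,\omega}$ from the dense $*$-subalgebra $\pi_\omega(\mathcal{A}^+)$ to all of $\mathcal{D}_\omega$: the naive definition is only a norm-decreasing c.p.\ map on a dense subspace, and one must exploit translation covariance rather than relying on any single GNS representation to relate the norms on $\mathcal{D}_\omega$ and $\mathcal{D}_{\vartheta^{-1}\omega}$. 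A secondary subtlety is that the fibers $\mathcal{D}_\omega$ are typically infinite dimensional, so Lemma~\ref{lem:n-norm} is not directly available and the continuity of the fiber norm must be proved ab initio as indicated above.
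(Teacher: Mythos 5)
Your proposal is essentially the paper's own route: fiberwise GNS representations, the Fell--Doran theorem applied to the section module $\{\omega \mapsto \pi_\omega(A(\omega))\}$ with continuity of the fiber norm obtained by the equicontinuity argument of Lemma~\ref{lem:n-norm}, and a transfer apparatus with $e(\omega)=\one$, $\varrho_\omega = \<\Psi_\omega|\,\cdot\,|\Psi_\omega\>$ and transfer maps implemented through translation covariance. The deviations are minor but worth noting. First, you take the fibers to be $\overline{\pi_\omega(\mathcal{A}^+)}$ rather than $\pi_\omega(\mathcal{A}_\Z)$; both choices work, since covariance lets the apparatus reproduce every local expectation after shifting the observable to start at site $0$. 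Second, you define $E_{a,\omega}$ algebraically on $\pi_\omega(\mathcal{A}^+)$ and then labor over the norm bound; the paper instead conjugates by the unitaries $U_k:\mathcal{H}_\omega \to \mathcal{H}_{\vartheta^k\omega}$ determined by $U_k\pi_\omega(a)\Psi_\omega = \pi_{\vartheta^k\omega}(\tau_k(a))\Psi_{\vartheta^k\omega}$, which makes $\|E_{a,\omega}\|\le \|a\|$ immediate. On this point your specific suggestion is too weak as stated: pairing against vectors $\pi_{\vartheta^{-1}\omega}(c)\Psi_{\vartheta^{-1}\omega}$ with $c\in\mathcal{A}^-$ only controls matrix coefficients on a proper cyclic subspace and does not give the operator norm in $B(\mathcal{H}_{\vartheta^{-1}\omega})$; instead test against all of $\pi_{\vartheta^{-1}\omega}(\mathcal{A}_\Z)\Psi_{\vartheta^{-1}\omega}$ (equivalently, use the unitary), and covariance gives $\|\pi_{\vartheta^{-1}\omega}(\tau_1(b))\| = \|\pi_\omega(b)\|$, hence $\|\pi_{\vartheta^{-1}\omega}(a\otimes\tau_1(b))\|\le\|a\|\,\|\pi_\omega(b)\|$ with no Cartesian decomposition needed. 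Third, to conclude you have a $C^*$-algebraic bundle (not merely a Banach bundle whose fibers happen to be $C^*$-algebras) you must also check that multiplication and involution are continuous in the bundle topology; you omit this, but as in the paper it is immediate because your section module is closed under products and adjoints. Finally, your account of the relation to $\mathcal{C}$ --- the surjective contractive bundle morphism $\pi_\omega(a)\mapsto [a,\omega]$ intertwining the two apparatuses --- is a legitimate, arguably more explicit, substitute for the paper's brief appeal to the minimality theorem at the end of its proof.
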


\begin{proof}
    Let $(\mathcal{H}_\omega, \pi_\omega, \Psi_{\omega})$ be the GNS triple associated to $\psi_\omega$ and $\mathcal{A}_{\Z}$. Let $\mathbb{B}_\omega = \pi_\omega(\mathcal{A}_{\Z})$ denote the image of $\mathcal{A}_\Z$ under $\pi_\omega$ which is a $C^*$-algebra itself for all $\omega\in \Omega$. 

    Now, suppose that $a\in (\mathcal{A}_\Z)_{s.a.}$, which implies $\pi_\omega(a) \in (\mathbb{B}_\omega)_{s.a} \subset B(\mathcal{H}_{\omega})_{s.a.}$ for all $\omega\in \Omega$. Then, 
        \[
            \|\pi_\omega(a)\| = \sup_{\xi, \eta \in (\mathcal H)_1} |\<\xi|\pi_\omega(a) \xi\>| = \sup_{\substack{b\in \mathcal{A}_{\Z}\colon \\ \psi_\omega(b^*b) \le 1}} |\< \pi_{\omega}(b) \Psi_\omega | \pi_\omega(a) \pi_\omega(b) \Psi_{\omega}\>| = \sup_{\substack{b\in \mathcal{A}_\Z} \colon \\ \psi_{\omega}(b^*b) \le 1} |\psi_\omega(b^* a b)|\,.
        \] By arguing similarly to the proof of Lemma~\ref{lem:n-norm}, one finds that this supremum is continuous in $\omega$ whenever $a\in (\mathcal{A}_\Z)_{s.a.}$. Since this set linearly generates the quasilocal algebra, we may apply the Fell-Doran theorem to $\mathbb B_{\omega}$ and see it becomes a Banach bundle when equipped with the continuity structure $\Gamma = \{ \omega \mapsto \pi_\omega(A(\omega)) \colon A\in \caz\}$. 

        To show this is a $C^*$-bundle, we need to check that the multiplication and involution on each fiber is continuous in the bundle topology. Checking continuity of the involution is straightforward.

        Turning toward the multiplication, recall from Remark VII.8.2 in \cite{FellDoran}, we need to check that for each pair of sections $f,g$ the map $\omega \mapsto f(\omega) \otimes g(\omega)$ is continuous. But $f\cdot g \in \Gamma$ by construction, therefore by the Fell-Doran Theorem~\ref{thm:Banach_Bundle}, the topology we have given the $\bigcup_{\omega}\mathbb{B}_{\omega}$ makes the products continuous. 
        
        For all $k\in \Z$, define the maps 
        \[
            U_k: \mathcal{H}_{\omega}\to \mathcal{H}_{\vartheta^k \omega}\,,
        \] via 
        \[
            U_k \pi_\omega(a)\Psi_\omega = \pi_{\vartheta^k \omega}(\tau_k(a)) \Psi_{\vartheta^k \omega}\,.
        \] It is not difficult to show these densely defined linear maps preserve the inner product, and therefore extend to unitaries on $\mathcal{H}_\omega$. Then, the translation may be implemented by the unitary maps 
            \[
                U_1 : \mathcal{H}_\omega \to \mathcal{H}_{\vartheta\omega}\,,
            \] which are independent of $\omega$, and therefore the triple $E_{a,\omega}(\pi_\omega(b)) := U_1^* \pi_{\omega}(\tau_{-1}(a)\cdot b) U_1 $, $\varrho_\omega = \< \Psi_\omega |\, \cdot \,|\Psi_\omega\>$, and $e(\omega) = \one_{B(\mathcal{H}_\omega)}$ defines a transfer apparatus that implements $\psi_\omega$. 
            
            It is sufficient to show that the implementation happens with just two operators by induction. We have
            \begin{align*}
                \psi_{\omega}(a_0 \otimes a_1) &= \langle \Psi_\omega| \pi_\omega(a_0 \otimes a_1)|\Psi_\omega\>= \<\Psi_\omega| \pi_\omega(a_0) \pi_\omega(\tau_1(a_1)) |\Psi_\omega\>\\
                &=\<\Psi_\omega| \pi_\omega(a_0) U_1^* \pi_{\vartheta \omega}(a_1) U_1|\Psi_\omega\>= \<\Psi_\omega| \pi_\omega(a_0) E_{a_1,\vartheta\omega}(\one_{B(\mathcal{H}_{\vartheta\omega})})|\Psi_\omega\>\\
                &= \<\Psi_{\vartheta^{-1} \omega}| U_1^* \pi_\omega(a_0) E_{a_1,\vartheta \omega}(\one)U_1|\Psi_{\vartheta^{-1} \omega}\>= \varrho_{\vartheta^{-1} \omega}\circ E_{a_0,\omega} \circ E_{a_1, \vartheta\omega}(e(\omega))\,,
            \end{align*} as required. By minimality of the small correlations bundle, the extension property is verified. 
\end{proof}

\begin{rmk}
    We are not aware of any way to guarantee that the minimal small correlations bundle embeds into a $C^*$-algebraic bundle whose fibers are finite-dimensional. In fact, Fannes Nachtergaele and Werner proceed by requiring finite-dimensionality in the definition of a $C^*$-Finitely Correlated State (see \cite[Definition 2.4]{FannesNachtergaeleWerner}). The general question of how to embed an operator system in a $C^*$-algebra beyond our scope, but we point the reader to some primary sources. Apparently, the two canonical constructions are the \emph{universal} \cite{KirchbergWassermann} and \emph{minimal} \cite{Hamana} $C^*$-\emph{envelopes}. In each case, there are simple examples of finite-dimensional operator systems where the universal/minimal $C^*$-algebras are necessarily infinite dimensional (see \cite[Example 5.3]{KirchbergWassermann} or \cite[Example 5.2]{Hamana}, respectively). In a related note, it is an outstanding open question as to whether or not every FCS is a $C^*$FCS (see \cite{vanLuijk_Schmidt_Rank, vanLuijk_Review}). 
\end{rmk}

We have shown that any TCVS admits a bundle with a transfer apparatus where the fibers are $C^*$-algebras. Following \cite{FannesNachtergaeleWerner}, let us define the \emph{Covariant Matrix Product States (CMPS)} $\psi_\omega$ to be those for which there is a $C^*$-algebraic bundle which implements $\psi_\omega$ through a transfer apparatus where each fiber is finite-dimensional.

\begin{prop}
    The set of CMPS is a weakly-* dense convex subset of $\mathfrak{C}$ as in Lemma~\ref{lem:convex}.
\end{prop}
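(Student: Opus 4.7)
The proof proceeds in two steps: convexity, then density.

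For \emph{convexity}, we argue exactly as in Lemma~\ref{lem:convex} but at the level of $C^*$-algebraic bundles. Given CMPS $\psi^{(1)}_\omega, \psi^{(2)}_\omega$ implemented by $C^*$-bundles $\mathbb{B}^{(1)}, \mathbb{B}^{(2)}$ with finite-dimensional $C^*$-fibers and transfer apparatuses $(\varrho^{(i)}, E^{(i)}, e^{(i)})$, I form the fiber-wise direct sum $\mathbb{B}^{(1)} \oplus \mathbb{B}^{(2)}$. Since a direct sum of two finite-dimensional $C^*$-algebras is again a finite-dimensional $C^*$-algebra, this remains a $C^*$-algebraic bundle. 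The triple $(\mu\varrho^{(1)} \oplus (1-\mu)\varrho^{(2)}, E^{(1)} \oplus E^{(2)}, e^{(1)} \oplus e^{(2)})$ is a transfer apparatus on this bundle implementing the convex combination $\mu\psi^{(1)}_\omega + (1-\mu)\psi^{(2)}_\omega$.

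For \emph{density}, I reuse the approximation of Theorem~\ref{thm:w*_dense}. Given $\psi_\omega \in \mathfrak{C}$ and $p \in \mathbb{N}$, form the product state $\phi'_\omega = \bigotimes_{k\in \mathbb{Z}} (\psi_{\vartheta^{kp}\omega}|_{\mathcal{A}_{[0,p-1]}})$ on $(\mathcal{A}^{\otimes p})_\mathbb{Z}$ and its aromatic average $\bar\phi_\omega = \tfrac{1}{p}\sum_{n=0}^{p-1}\phi'_{\vartheta^{-n}\omega}\circ\tau_n$ on $\mathcal{A}_\mathbb{Z}$. Theorem~\ref{thm:w*_dense} already yields uniform weak-* convergence $\bar\phi_\omega \to \psi_\omega$ as $p\to\infty$, so it suffices to produce a CMPS implementation of each $\bar\phi_\omega$. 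By Example~\ref{exmp:covariant_product_state}, the small-correlation fibers of $\phi'$ are all $\cong \mathbb{C}$, so the aromatic bundle $\mathcal{R}$ of Lemma~\ref{lem:aromatic_average} has fibers $\mathcal{R}_\omega \subseteq \bigoplus_{n=0}^{p-1}\mathcal{C}^{(n)}_{\vartheta^{-n}\omega} \subseteq \mathbb{C}^p$, a finite-dimensional commutative $C^*$-algebra. I then take the trivial $C^*$-bundle $\Omega \times \mathbb{C}^p$ and extend the linear transfer operators of Lemma~\ref{lem:aromatic_average} from $\mathcal{R}_\omega$ to all of $\mathbb{C}^p$ (for instance by projecting onto $\mathcal{R}_\omega$ first). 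Because the transfer formula~(\ref{eqn:transfer_expectation}) is only evaluated on iterates starting from the distinguished section $f(\omega)$, which lives inside $\mathcal{R}_\omega$ by construction, any such extension produces the same values $\bar\phi_\omega$.

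The chief obstacle is reconciling the aromatic bundle's operator-system fibers with a $C^*$-algebraic structure. For a general CMPS construction one would be forced to appeal to $C^*$-envelopes (e.g.\ those of \cite{KirchbergWassermann, Hamana}), which may enlarge the fibers to infinite-dimensional $C^*$-algebras. The product-state starting point collapses each small-correlation fiber to $\mathbb{C}$, so that $\mathbb{C}^p$ serves as a canonical commutative $C^*$-ambient, sidestepping that difficulty entirely. The residual verifications—that $(\mu, F, f)$ extended as above constitutes a transfer apparatus on $\Omega \times \mathbb{C}^p$ in the sense of Definition~\ref{def:apparatus}—are routine: the transfer operators need only be bounded linear maps (not $*$-homomorphisms), the bound $\|F_{a,\omega}\| \le 8\|a\|$ from~(\ref{eqn:bound_for_aromatic_transfer}) is preserved under the extension, and $\vartheta^{-1}$-covariant continuity over $\Omega$ descends from the original aromatic sections via the Fell--Doran topology.
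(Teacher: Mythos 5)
Your route is the same as the paper's: convexity via direct sums of $C^*$-bundles exactly as in Lemma~\ref{lem:convex} (this half is fine), and density by reusing the block-restriction/product-state/aromatic-average approximation from Theorem~\ref{thm:w*_dense}, so that the only genuinely new point is equipping the approximants $\bar\phi_\omega$ with a $C^*$-algebraic bundle implementation. The gap is in how you do that last step. Having observed (correctly) that the aromatic fibers satisfy $\mathcal{R}_\omega\subseteq\C^p$, you pass to the trivial bundle $\Omega\times\C^p$ and ``extend'' the transfer operators of Lemma~\ref{lem:aromatic_average}, e.g.\ by pre-composing with the fiberwise projection $P_{\mathcal{R}_\omega}$. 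For this to yield a transfer apparatus in the sense of Definition~\ref{def:apparatus}, each extended $\tilde F_a$ must lie in $\hom^{\vartheta^{-1}}(\Omega\times\C^p,\Omega\times\C^p)$, i.e.\ $\omega\mapsto\tilde F_{a,\omega}v(\omega)$ must be continuous for \emph{every} continuous $\C^p$-valued section $v$ of the trivial bundle, not merely for the aromatic sections~(\ref{eqn:aromatic_sections}), which take values only in $\mathcal{R}_\omega$. But $\omega\mapsto P_{\mathcal{R}_\omega}$ is discontinuous wherever $\dim\mathcal{R}_\omega$ jumps, and such jumps do occur for general $\psi\in\mathfrak{C}$: $\mathcal{R}_\omega$ is the range of $\xi\mapsto(\phi'_{\vartheta^{-n}\omega}(\one^{\otimes n}\otimes\tau_n\xi))_{n=0}^{p-1}$, whose rank can drop on a closed set of $\omega$'s (for instance, where $\psi_\omega$ degenerates so that all the shifted blocked marginals coincide, $\mathcal{R}_\omega$ collapses to the diagonal line in $\C^p$); fiber dimension is only lower semicontinuous (cf.\ Remark~\ref{rmk:dimension_may_vary}). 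So the claim that covariant continuity ``descends via the Fell--Doran topology'' is not justified for your extension, and ``any extension'' certainly does not work.

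A second, independent problem is that Definition~\ref{def:apparatus} demands more than the factorization~(\ref{eqn:transfer_expectation}): in particular $\varphi_{\vartheta^{-1}\omega}\circ F_{\one,\omega}=\varphi_\omega$ must hold as an identity of functionals on the whole fiber. With $\tilde F_{\one,\omega}=F_{\one,\omega}\circ P_{\mathcal{R}_\omega}$ one only obtains $\mu_{\vartheta^{-1}\omega}\circ\tilde F_{\one,\omega}=\mu_\omega\circ P_{\mathcal{R}_\omega}$, which differs from $\mu_\omega$ off $\mathcal{R}_\omega$. Your observation that the expansion only ever evaluates iterates applied to $f(\omega)\in\mathcal{R}_\omega$ secures the numerical values $\bar\phi_\omega$, but not this structural identity and not the homomorphism property above; these are precisely the ``residual verifications'' you declare routine, and they are where the actual work lies. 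By contrast, the paper avoids any extension: it notes that the aromatic construction of Lemma~\ref{lem:aromatic_average}, applied to a $C^*$-algebraic input bundle, again returns a $C^*$-algebraic bundle, so the proof of Theorem~\ref{thm:w*_dense} runs verbatim without enlarging fibers or extending the $F_a$'s. To repair your version you would need either to prove $\mathcal{R}_\omega=\C^p$ for all $\omega$ (false in general), or to exhibit an extension that is simultaneously continuous in $\omega$ and compatible with all the conditions of Definition~\ref{def:apparatus}, neither of which is addressed in the proposal.
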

\begin{proof}
    The proof is a corollary of the proof of Theorem~\ref{thm:w*_dense} gotten by noticing that the aromatic bundle of a $C^*$-algebraic bundle is still a $C^*$-algebraic bundle. The rest of the proof of Theorem~\ref{thm:w*_dense} does not depend on the multiplication on the fibers. 
\end{proof}

\section{An Independent, Identically Distributed AKLT Model}\label{sec:application}
In this section, we construct and analyze an explicit model of a Translation Co-variant State with small correlations. Certainly, one can recover from our theory the deterministic translation-invariant MPS by taking all the transfer operators to be constant in the disorder. Here, we reify a TCVS with `strong' on-site disorder. We accomplish this by beginning with the one-parameter family of states introduced in \cite{FannesNachtergaeleWerner}, and pick a random value of the parameter at each site. In general, the dimension of the small correlation bundle need not be constant, but our model admits a transfer apparatus on a trivial Banach bundle. We build on the existing theory to show that our IID-AKLT model admits a nearest-neighbor parent Hamiltonian that is measurable in the parameter and translation co-variant. Finally we use this to show the surprising fact that with probability one, the bulk gap of our IID-AKLT model vanishes, yet with probability one, spatial correlations decay exponentially almost surely. We close with a discussion of time reversal symmetry and calculate that our IID-AKLT model has a nontrivial Tasaki index almost surely.

\subsection{Preliminary Observations}\label{subsubsec:example_set_up}
An interesting family of examples comes from the one-parameter family of $C^*$-Finitely Correlated States studied in \cite[Examples 1-5, 7-8]{FannesNachtergaeleWerner}. Many of the proofs given in section 5 of \cite{FannesNachtergaeleWerner} for the MPS expansions extend to the disordered setting in a straightforward way \cite{EkbladMorenoNadalesRoonSchenker}, so we will only record the proofs when necessary. Let us recall the setting, and then examine its properties from our point of view. 

Consider the spin-1 chain where $\mathcal{A} = \M_3$ consists of $3\times 3$ matrices at every site. Equip the local Hilbert spaces $\C^3$ with the basis of spin states $\ket{-1}, \ket{0}, \ket{1}$. Similarly, equip $\C^2$ with the spin-$\tfrac{1}{2}$ basis $\ket{-\tfrac{1}{2}}, \ket{+\tfrac{1}{2}}$. Following \cite[Example 1]{FannesNachtergaeleWerner}, for every $\omega \in [0, \pi)$, define the mapping $V_\omega: \C^2 \to \C^3 \otimes \C^2$ via
    \begin{equation}
        \begin{split}
            V_{\omega}\ket{\tfrac{1}{2}} &= \cos(\omega) \ket{1, - \tfrac12} - \sin(\omega) \ket{0,\tfrac12}\,,\\
            V_\omega \ket{-\tfrac{1}{2}} &= \sin(\omega) \ket{0, -\tfrac12} - \cos(\omega) \ket{ -1, \tfrac 12}\,.
        \end{split}
    \end{equation} 

We may realize $V_\omega$ as a $6\times 2$ matrix as (with respect to the convention that $\ket{+\frac{1}{2}} = [1\, 0]^T, \ket{-\frac{1}{2}} = [0\, 1]^T, \ket{+} = [1\, 0\,0]^T$, etc.)
\begin{equation}\label{eqn:V_matrix}
    V_\omega^* = \begin{bmatrix}
         0 & \cos(\omega) & -\sin(\omega) & 0 & 0 & 0 \\
        0 & 0 & 0 & \sin(\omega) & -\cos(\omega) & 0
    \end{bmatrix}\,. 
\end{equation} We will sometimes write $s = \sin(\omega)$ and $c= \cos(\omega)$ for convenience. We may define the transfer operators 
    \begin{equation}\label{eqn:AKLT_transfer}
    E_{a, \omega} = V^*_\omega ( a \otimes (\, \cdot\,))V_\omega\,,
    \end{equation} which have continuously varying entries in $\omega$ and are completely positive whenever $a\in (\M_3)_+$. One checks \[
            E_{\one, z}(\one) = \one = E_{\one,z}^{\dagger}(\one)\,,
        \] where the dagger denotes the Hilbert-Schmidt adjoint of $E_{\one, z}$. Thus $E_{\one, \omega}$ is unit and trace preserving for all $\delta\in [0, \pi)$.
    
\begin{note}\label{note:pre_disorder_AKLT_state}
By slight abuse of notation, we will let $\omega = (\omega_n)_{n=-\infty}^\infty$ denote a bi-infinite sequence with values in $[0, \pi)$. Later, we will define a probability measure making the space of such sequences into a probability space which justifies the notation. Now, define the state $\nu_{\omega}$ on $\mathcal{A}_{\Z} = (\M_3)_\Z$ by the equations: 
        \begin{equation}\label{eqn:delta_state}
            \nu_{\omega}(a_n \otimes \cdots \otimes a_m) = \frac{1}{2}\tr\left[E_{a_n, \omega_n}\circ \cdots \circ E_{a_m, \omega_m}(\one)\right] \,,
        \end{equation} where the $E_{a,\bullet}$ are defined in equation~(\ref{eqn:AKLT_transfer}) for $a\in \M_2$. 
\end{note}

Since $E_{\one, \bullet}$ is unital and trace-preserving, the formula~(\ref{eqn:delta_state}) is consistent with $*$-homomorphisms associated to volumetric inclusions. By Takeda's theorem~(Theorem~\ref{thm:Takeda}), we have a well-defined bulk state on all $\mathcal{A}_\Z$, whose coefficients in the computational basis are given by a matrix product state expansion with site-dependent angles. The map $E_{\one, \bullet}$ plays a distinguished role in the following and so we shall abbreviate it as $\phi_\bullet$. 

Recall that with respect to the Hilbert-Schmidt ortho-normal basis $\{\one, \sigma^z, \sigma^+, \sigma^-\} \subset \M_2$ formed from the Pauli Matrices, that the matrix $\llbracket \phi_\omega \rrbracket$ defining $\phi_\omega$ is  with respect to the basis of $\M_2$ as above: 
    \begin{equation}\label{eqn:diagonal}
        \llbracket \phi_\omega \rrbracket = \diag(1, \sin^2(\omega) - \cos^2(\omega), -\sin^2(\omega),-\sin^2(\omega)).
    \end{equation}
\begin{lem}\label{lem:translation_operator_strictly_positive}
    Let $\phi_\omega = E_{\one, \omega}\in B(\M_2)$ be the transfer operator defined above. Let $\omega\in ((0, \frac{\pi}{2})\cup(\frac{\pi}{2}, \pi))^\Z$. Put $\Phi_{m,n} = \phi_{\omega_m} \circ \cdots \circ \phi_{\omega_n}$ and let $\mathcal{T}:\M_2 \to \M_2$ be the replacement channel $\mathcal{T}(b) = \tfrac12 \tr(b) \one$. For every $\mu \in (0,1)$ and $x\in [m,n]$, there is  a constant $c_{\mu, x}>0$ so that  
        \begin{equation}
            \|\Phi_{m,n} - \mathcal{T}\| = \max\left\{\prod_{j=m}^n |\cos(2\omega_j)|, \prod_{j=m}^n \sin^2(\omega_j)\right\} \le c_{\mu,x}\, \cdot \mu^{n-m}\,.
        \end{equation}
\end{lem}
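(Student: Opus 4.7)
The plan is to read off the conclusion directly from the explicit diagonal form of $\phi_z$ displayed just before the lemma, and then translate the resulting spectral bound into an exponential estimate.

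For the identity, I would note that the orthogonal Hilbert--Schmidt basis $\{\one, \sigma^z, \sigma^+, \sigma^-\}$ of $\M_2$ simultaneously diagonalizes every $\phi_z$, with eigenvalues $1,\ -\cos(2z),\ -\sin^2(z),\ -\sin^2(z)$ (using $\sin^2(z) - \cos^2(z) = -\cos(2z)$). The replacement channel $T(b) = \tr_2(b)\one$ is the orthogonal projector onto $\spn\{\one\}$, and hence diagonal in the same basis with eigenvalues $(1,0,0,0)$. Composition of mutually diagonal operators gives
\[
    \Phi_{m,n} - T \;\cong\; \diag\!\Bigl(0,\ \prod_{j=m}^{n}(-\cos(2z_j)),\ \prod_{j=m}^{n}(-\sin^2(z_j)),\ \prod_{j=m}^{n}(-\sin^2(z_j))\Bigr),
\]
and the operator norm on $\M_2$ (equipped with the HS inner product) equals the largest absolute value of its eigenvalues, yielding the claimed equality $\|\Phi_{m,n}-T\| = \max\bigl\{\prod_j|\cos(2z_j)|,\ \prod_j \sin^2(z_j)\bigr\}$.

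For the exponential bound I would interpret the constants $0 < c_1 < c_2 < \pi/2$ as confining the $z_j$ (at least those in the product window, or up to finitely many exceptions near $x$) to the compact subset $[c_1, c_2] \cup [\pi-c_2, \pi-c_1]$, on which both $|\cos(2z)|$ and $\sin^2(z)$ are uniformly bounded by some
\[
    \rho \;:=\; \sup_{z \in [c_1,c_2]\cup[\pi-c_2,\pi-c_1]} \max\bigl\{|\cos(2z)|,\ \sin^2(z)\bigr\} \;<\; 1.
\]
Then both products are dominated by $\rho^{n-m+1}$, and for $\mu \in (\rho, 1)$ the trivial estimate $\rho^{n-m+1} \le \rho\, \mu^{n-m}$ gives the bound, with $c_{\mu,x}$ absorbing $\rho$ together with an $O(1)$ contribution from any finite window of exceptional sites about $x$ where $z_j$ may approach the singular set $\{0, \pi/2, \pi\}$.

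The main obstacle is the apparent quantifier ``for every $\mu \in (0,1)$'': read literally, this forces faster-than-exponential decay, which contradicts the spectral identity whenever a single factor is bounded away from zero. I would therefore interpret the statement as requiring $\mu$ sufficiently close to $1$ (concretely $\mu > \rho$), under which interpretation the proof is the essentially immediate combination above of diagonalization and uniform compactness. Any broader reading would require strengthening the hypothesis on $\vec{z}$ to force the products to vanish super-exponentially, which is inconsistent with the spectral identity just established, so the content of the lemma rests on the compact confinement implicit in $c_1, c_2$.
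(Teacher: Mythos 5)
Your treatment of the equality is the same as the paper's: the paper simply remarks that it is immediate from diagonalizability of $\Phi_{m,n}$ in the Hilbert--Schmidt basis, which is exactly your computation with eigenvalues $1,\,-\cos(2z),\,-\sin^2(z),\,-\sin^2(z)$ and $T$ the projection onto $\spn\{\one\}$. For the upper estimate, however, you take a genuinely different route. The paper does not use any compactness of the angle set: it verifies that each $\phi_\omega$ with $\omega\in(0,\tfrac{\pi}{2})\cup(\tfrac{\pi}{2},\pi)$ is a \emph{strictly positive} map, by computing $\det\phi_\omega(A)$ explicitly for a nonzero positive semidefinite $A$ and checking it is strictly positive, so that the product $\Phi_{m,n}$ satisfies the ``eventually strictly positive'' hypothesis (Assumption 1) of Movassagh--Schenker, and then imports the contraction estimates from the analysis leading to their Theorem 2 to get $\|\Phi_{m,n}-T\|\le c\,\mu^{n-m}$. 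That machinery is built for ergodic sequences of random channels and produces realization-dependent constants, which is where the (admittedly garbled) ``for every $\mu\in(0,1)$, constant $c_{\mu,x}$'' phrasing comes from; the strict-positivity computation is also reused later in the paper. Your argument instead reads the otherwise unused constants $c_1<c_2$ as confining the $z_j$ to a compact set away from $\{0,\tfrac{\pi}{2},\pi\}$, extracts a uniform bound $\rho<1$ on the subleading eigenvalues, and obtains decay only for $\mu>\rho$. This is more elementary and self-contained, but it proves a different (and in one direction weaker, in another stronger) statement: it adds a hypothesis the lemma does not state, and it gives a uniform deterministic constant rather than the full range of $\mu$. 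You are right that the literal quantifier ``for every $\mu\in(0,1)$'' cannot hold for an arbitrary fixed $\vec z$ in the stated range (constant small angles already defeat it); the paper's proof sidesteps this not by restricting $\mu$ but by leaning on the ergodic-quantum-process framework, so your diagnosis of the statement is sound even though your repair differs from the authors' intended one.
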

\begin{proof}
The first equality is essentially immediate from the fact that $\llbracket\Phi_{n,m}\rrbracket$ is diagonal by equation~(\ref{eqn:diagonal}) above. To obtain the upper estimate, one first checks that $\Phi_{m,n}$ is \emph{eventually strictly positive} in the sense of \cite[Assumption 1]{MovassaghSchenker}.  In fact, we aim to show that $\phi_\omega$ is a strictly positive map itself, in the sense that $A\in (\M_2)_+$ implies $\phi_\omega(A)>0$.  

Fix an $\omega \in (0, \frac{\pi}{2})\cup (\frac{\pi}{2}, \pi)$. Recall \cite{Bhatia_MA, Bhatia_PDM}, a nonzero $2\times 2$ matrix is positive semidefinite if and only if $\tr A = a+c >0$ and $\det A \ge 0$, and moreover if $A$ is positive definite if and only if $\det A >0$. Let $A = \begin{bmatrix}
    a& b\\b^* &c
\end{bmatrix}$ be a positive semidefinite matrix.  
One can compute explicitly 
\[
    \det \phi_\omega(A) = (a^2 + c^2)\sin^2 \omega\,  \cos^2\omega + a \cos^4\omega + (ac-|b|^2) \sin^4 \omega >0\,,
\] for any value of $\omega$ in range provided one of $a,c$ is nonzero. Since $\phi_\omega$ is trace preserving, we know that $\phi_\omega$ is a strictly positive map for any $\omega$ in range. Moreover, since $\phi_\omega$ is trace-preserving, it is faithful, thus $\ker \phi_\omega \cap (\M_2)_+ = \{0\}$, and similarly since $\phi_\omega$ is unital the Schr\"odinger picture adjoint satisfies $\ker \phi_\omega^\dagger \cap (M_2)_+ = \{0\}$. The upper estimate follows from the analysis leading up to \cite[Theorem 2]{MovassaghSchenker}.
\end{proof}

\begin{note}\label{note:Matrices_for_Gamma}
    We note that with respect to the canonical bases, one may define the matrices $X^-(z), X^0(z), X^+(z)$ as follows
        \[
            [X^{(j)}(z)]_{k,\ell = 1}^2 := \<j, k| V_z|\ell\>; \quad j\in \{+, 0, -\} \quad \forall z\in [0, \pi)\,.
        \] One obtains the expressions:
        \begin{equation}\label{eqn:AKLT_Kraus}
    X^{+}(z) := \begin{bmatrix}
        0 & 0\\ \cos(z) & 0\\
    \end{bmatrix}\,\,\,\,\,\,\,\,\,\,  X^{0}(z):= \begin{bmatrix}
        -\sin(z) & 0 \\
        0 & \sin(z)\\
    \end{bmatrix}\,\,\,\,\,\,\,\,\,\,   X^{-}(z) := \begin{bmatrix}
        0 & -\cos(z)\\
        0 & 0\\
    \end{bmatrix} \,.
\end{equation} We remark that \cite{FannesNachtergaeleWerner} define matrices $v^{(j)}$ similarly, and the relation between our notations is that $X^{(j)}(z) = [v^{(j)}(z)]^*$. When convenient, we will often write $X^{(j)}_z$ instead of using function notation. 
\end{note}

\begin{note}\label{note:Gamma_map}
Consider an interval $[m,n] \subset \Z$. We define a family of mappings from $\M_2$ into the local Hilbert spaces $\mathcal{H}^{[m,n]} \cong (\C^3)^{\otimes (n-m)}$ via 
 \begin{equation}\label{eqn:gamma_mn}
    \Gamma^{[m,n]}_{\omega}(b) = \sum_{i_m, \dots, i_n \in \{-, 0, +\}}\, \tr[b X^{i_n}_{\omega_n} X^{i_{n-1}}_{\omega_{n-1}}\cdot \cdots \cdot X^{i_{m}}_{\omega_m}]\, \ket{i_m, \dots, i_n}\,,
 \end{equation} where $\ket{i_m, \dots, i_n}$ is the notation for the computational basis in $\mathcal{H}^{[m,n]}$. 
\end{note}
We remark that the convention of some authors is to find matrices $Y^{(j)}$ that represent the state where the multiplication inside the trace is taken in the order $Y^{i_m} Y^{i_{m-1}} \cdots Y^{i_n}$ (see e.g. \cite{Perez-Garcia_et_al, Tasaki_book}). Since we appeal to formulas in \cite{FannesNachtergaeleWerner}, we shall conform to their conventions for the convenience of the reader.

Notice that for all $\omega\in [0, \pi)^{\Z}$ the composition of transfer operators given by 
 \begin{equation}\label{eqn:z_iterate_transfer}
     E_{a_n \otimes \cdots \otimes a_m, \omega}^{[n,m]}:= E_{a_n, \omega_n}\circ \cdots \circ E_{a_m, \omega_m}:\M_2 \to \M_2\,,
 \end{equation} for $a_n, \dots, a_m \in \M_3$ extends to a linear operator on $\mathcal{A}_{[n,m]}$ (see Lemma~\ref{lem:iterate_hom}, the proof is essentially the same). We have the following lemmas. 
\begin{lem}\label{lem:gamma_formulas}
    Let $z \mapsto (V_z : \C^k \otimes \C^k \to \C^k)$ be the continuous isometry-valued function defined in Equation~(\ref{eqn:V_matrix}). Let $m<n$ be integers and $\omega\in [0, \pi)^{\Z}$.  Let $\{\ket{j}\}_{j\in \{-,0,+\}}$ denote the canoncal basis of $\C^3$ and $\{\ket{k}\}_{k\in \{-\frac12, +\frac12\}}$ denote the canonical basis of $\C^2$. With notation as in~\ref{note:Matrices_for_Gamma} and~\ref{note:Gamma_map}, and equation~(\ref{eqn:z_iterate_transfer}) above, the following hold. 
    \begin{enumerate}[label = \alph*.)]
        \item For all $m<n\in \Z$ and every $a\in \mathcal{A}_{[m,n]} \cong (\M_3)^{\otimes (n-m)}$ the transfer operator $E^{[m,n]}_{a,\omega}: \M_2 \to \M_2$ may be expressed as 
        \begin{equation}
            E^{[m,n]}_{a,\omega}(b) = \sum_{\substack{i_m, \cdots, i_n\\ j_m, \dots, j_n}} \<i_m, \dots, i_n|\, a\, |j_m, \dots, j_n\> (X^{i_m}_{\omega_m})^* \cdots (X^{i_n}_{\omega_n})^*\,\cdot  b\,\cdot X^{j_n}_{\omega_n} \cdots X^{j_m}_{\omega_m}\,.
        \end{equation} 
        \item One has that for all $a\in \mathcal{A}_{[m, n]}$ and all $x, y\in \M_2$, the following expectation value on $\mathcal{H}^{[m,n]}$ has the expansion
        \begin{equation}
            \<\Gamma^{[m,n]}_{\omega}(x)|\, a\,|\Gamma_{\omega}^{[m,n]}(y)\>_{\mathcal{H}^{[m,n]}} = \sum_{j_1,\,j_2\in \{\pm \frac 12\}} \<j_1|\, E^{[m, n]}_{a,\omega}\left(\,x^* \ket{j_1}\bra{j_2}y\, \right)\, | j_2\>_{\C^2}\,.
        \end{equation} In particular, putting $\rho = \tfrac{1}{2}\one_{2\times 2}$, and letting $\langle x, y \rangle_\rho = \tr[\rho x^*y]$, we have the estimate 
    \begin{equation}\label{eqn:rho_product_estimate}
           |\<\Gamma^{[m,n]}_{\omega}(x), \Gamma^{[m,n]}_{\omega}(y)\>_{\mathcal{H}} - \<x,y\>_\rho|\le 2\|\Phi_{m,n}-\mathcal{T}\|\,\|x\|_\rho \|y\|_\rho.
        \end{equation}
        \item For all $\omega\in ((0,\frac{\pi}{2})\cup(\frac{\pi}{2},\pi))^{\Z}$ and all $m\le n$, and all $x,y\in \M_2$, we have 
        \begin{equation}
            |\<\Gamma^{[m,n]}_\omega(x)| a | \Gamma^{[m,n]}_{\omega}(y)\> - \nu(a) \<x|y\>_{\rho}| \le   2\|\Phi_{m,n} - \mathcal{T}\|\, \|a\| \|x\|_{\rho}\|y\|_{\rho} \,.  \end{equation}
            More generally, if $\supp(a) = [m',n']\subset [m,n]$, then, for all $x,y\in \M_2$ with $\|x\|_{\rho}, \|y\|_{\rho} \le 1$, we have
        \begin{equation}\label{eqn:approximate_nu}
            |\<\Gamma^{[m,n]}_\omega(x)|  a\otimes \one_{[m,n]\setminus [m',n']} | \Gamma^{[m,n]}_{\omega}(y)\> - \nu(a) \<x|y\>_{\rho}| \le  2( \|\Phi_{m,m'-1} - \mathcal{T}\| + \|\Phi_{n'+1, n} - \mathcal{T}\|) \, \|a\| \,.
        \end{equation}
    \end{enumerate}
\end{lem} 

The proofs are straightforward extensions of Lemma 5.1 of \cite{FannesNachtergaeleWerner}, but we direct the reader to \cite{EkbladMorenoNadalesRoonSchenker} for full details. We now aim to show that the $\Gamma^{[m,n]}$-maps are injective for every $\omega\in ((0, \frac{\pi}{2})\cup(\frac{\pi}{2}, \pi))^\Z$ whenever $n-m\ge 2$. In the language of \cite{FannesNachtergaeleWerner}, we will show that $\nu_{\omega}$ has an interaction length of $2$ for all $\omega$ which avoid $0$ and $\frac{\pi}{2}$.

Recall \cite{FannesNachtergaeleWerner, Perez-Garcia_et_al} that the subspace \begin{equation}\label{eqn:ground-state-space}
        \Gamma^{[j, j+1]}_{\omega}(\M_2) := \mathcal{G}_{j,j+1} = \left\{ \sum_{i_j, i_{j+1}} \tr(bX_{\omega_{j+1}}^{i_{j+1}}X^{i_{j}}_{\omega_{j}}) |i_j, i_{j+1}\rangle \colon b\in \M_2 \right\} \subset (\C^3)_{j} \otimes (\C^3)_{j+1}\,.
    \end{equation} The subspace $\mathcal{G}_{j, j+1}$ determines the support of a positive operator $P_{j, j+1}$ so that the two-point restriction of $\nu_{\omega}$ satisfies the condition $\nu_{\omega}(P_{j, j+1})=0$. We aim to show that for any $\omega \in ((0, \frac{\pi}{2})\cup(\frac{\pi}{2}, \pi))^\Z$ that $\Gamma^{[j, j+1]}_{\omega}$ is injective. That is, the rank of $\mathcal{G}_{j, j+1}$ is equal to four. 

    Rearranging equation~(\ref{eqn:ground-state-space})  a bit, we see that \[
        \mathcal{G}_{j,j+1} = \underset{0\le k, \ell \le 2}{\spn}\left\{ \sum_{i_j, i_{j+1}} \<k|X^{i_{j+1}}_{\omega_{j+1}}X^{i_j}_{\omega_j}|\ell\>\, \cdot |i_j, i_{j+1}\rangle\right\}\,.
    \]

    Therefore, a basis for $\mathcal{G}_{j, j+1}$ can be constructed from the corresponding entries of the $3\times 3$ array of matrix products $(X^{i_{j+1}}_{\omega_{j+1}}X^{i_j}_{\omega_j})_{i_j,i_{j+1}\in \{-, 0, +\}}$. This results in the following basis vectors which vary as functions in $\omega$ and $j$:
    \begin{equation}\label{eqn:gj_basis}
        \begin{split}
            |\phi_{11}(j)\> &= s_js_{j+1} |00\> - c_jc_{j+1}|+-\>\,,\\
            |\phi_{12}(j)\> &= c_j s_{j+1} \ket{-0} - s_j c_{j+1} \ket{0-}\,,\\
            |\phi_{21}(j)\> &= c_j s_{j+1}|+0\> - s_j c_{j+1} |0+\>\,,\\
            |\phi_{22}(j)\> &= s_j s_{j+1} |00\> - c_j c_{j+1} |-+\>\,.\\
        \end{split}
    \end{equation} where we have written $c_j = \cos( \omega_j)$ and $s_j = \sin(\omega_j)$ for convenience. We have therefore shown the following. 
    \begin{lem}
        Let $\omega$ be any bi-infinite sequence taking values in in $(0, \frac{\pi}{2})\cup(\frac{\pi}{2}, \pi)$ and let $\mathcal{G}_{j, j+1}$ be the ground state space associated to $\Gamma^{[j, j+1]}_{\omega}$. Then the rank of $\mathcal{G}_{j, j+1}$ is identically four for $\omega$ in range. 
    \end{lem}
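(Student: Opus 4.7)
The plan is to show directly that the four explicit vectors $\ket{\phi_{11}(j)}, \ket{\phi_{12}(j)}, \ket{\phi_{21}(j)}, \ket{\phi_{22}(j)}$ listed in equation~(\ref{eqn:gj_basis}) are linearly independent for every $\vec{z}$ in range. Since these vectors manifestly span $\mathcal{G}_{j,j+1}$, linear independence will immediately give $\dim \mathcal{G}_{j,j+1} = 4$ and hence injectivity of $\Gamma^{[j,j+1]}_{\vec{z}}$.

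The first step is to record that for $z \in (0,\tfrac{\pi}{2}) \cup (\tfrac{\pi}{2},\pi)$ both $\sin(z)$ and $\cos(z)$ are nonzero, so all four scalar coefficients $s_j s_{j+1}$, $c_j c_{j+1}$, $c_j s_{j+1}$, $s_j c_{j+1}$ appearing in~(\ref{eqn:gj_basis}) are nonzero. The second step is the key structural observation: reading off the supports of the four vectors in the computational basis of $\C^3 \otimes \C^3$, one sees
\begin{align*}
\ket{\phi_{11}(j)},\, \ket{\phi_{22}(j)} &\in \spn\{\ket{00},\ket{+-},\ket{-+}\}, \\
\ket{\phi_{12}(j)} &\in \spn\{\ket{-0},\ket{0-}\}, \\
\ket{\phi_{21}(j)} &\in \spn\{\ket{+0},\ket{0+}\}.
\end{align*}
These three subspaces have pairwise disjoint computational supports, hence are mutually orthogonal. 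Consequently the linear independence of the full family reduces to (i) $\ket{\phi_{12}(j)} \neq 0$, (ii) $\ket{\phi_{21}(j)} \neq 0$, and (iii) linear independence of the pair $\ket{\phi_{11}(j)}, \ket{\phi_{22}(j)}$ inside the three-dimensional subspace spanned by $\ket{00},\ket{+-},\ket{-+}$.

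Conditions (i) and (ii) are immediate from step one: their two coefficients $c_j s_{j+1}$ and $s_j c_{j+1}$ are both nonzero and attached to distinct basis vectors. For (iii), I would note that $\ket{\phi_{11}(j)} = s_js_{j+1}\ket{00} - c_jc_{j+1}\ket{+-}$ has a nonzero component along $\ket{+-}$ while $\ket{\phi_{22}(j)} = s_js_{j+1}\ket{00} - c_jc_{j+1}\ket{-+}$ has no component along $\ket{+-}$ (and a nonzero one along $\ket{-+}$), so no scalar multiple of one equals the other. Combining (i)--(iii) with the orthogonality of the three supporting subspaces yields the desired linear independence.

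There is no substantive obstacle here; the argument is a direct bookkeeping check. The only point that requires care is verifying that the excluded parameter values $\{0, \tfrac{\pi}{2},\pi\}$ are exactly the ones that would cause one of the coefficients in~(\ref{eqn:gj_basis}) to vanish, so that the hypothesis $\vec{z}\in ((0,\tfrac{\pi}{2})\cup(\tfrac{\pi}{2},\pi))^{\Z}$ is both necessary and sufficient for the argument to go through simultaneously at every bond $(j,j+1)$.
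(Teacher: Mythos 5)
Your proof is correct and follows essentially the same route as the paper: the paper derives exactly the four vectors $\ket{\phi_{k\ell}(j)}$ of equation~(\ref{eqn:gj_basis}) as a spanning set and then asserts the lemma, while you simply make explicit the linear-independence check (via the disjoint computational supports and the nonvanishing of $\sin$ and $\cos$ on $(0,\tfrac{\pi}{2})\cup(\tfrac{\pi}{2},\pi)$) that the paper leaves implicit. No gap; the added bookkeeping is exactly what is needed to justify the paper's ``we have therefore shown'' step.
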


As noted in \cite{FannesNachtergaeleWerner}, injectivity of the $\Gamma^{[n,m]}$ maps at a short distance does not guarantee they remain injective as $|n-m|$ increases. In order to check this, we must verify that the frustration free intersection property holds. Recall that the three-site ground-state space is given by
    \begin{equation}\label{eqn:G3_subspace}
        \mathcal{G}_{j, j+1, j+2} := \spn\left\{ \sum_{i_j, i_{j+1}, i_{j+2}} \tr\left[b X^{i_{j+2}}_{z_{j+2}}\, X^{i_{j+1}}_{z_{j+1}}\, X^{i_j}_{z_j}\right] |i_j, i_{j+1}, i_{j+2} \rangle \right\}\,.
    \end{equation}
\begin{lem}\label{lem:delta_MPS_injectivity}
    Let $\omega = (\omega_{m})_{m\in \Z}$ be a bi-infinite sequence taking values in $(0, \frac{\pi}{2})\cup(\frac{\pi}{2}, \pi)$ and let $\nu_{\omega}$ be the corresponding state constructed as in~(\ref{eqn:delta_state}). Then, for any site $j\in \Z$,
        \[
            \mathcal{G}_{j, j+1}\otimes (\C^3)_{j+2}\cap (\C^3)_j\otimes \mathcal{G}_{j+1, j+2} = \mathcal{G}_{j, j+1, j+2}.
        \] Thus, the interaction length of $\nu_{\omega}$ is $\ell =2$. Furthermore, for any $\omega$ in range and $n-m>2$, one has 
        \begin{equation}\label{eqn:intersect_ground_state}
            \mathcal{G}_{[m,n]} = \bigcap_{j=m}^{n-2} \mathcal{H}_{[m,j]}\otimes \mathcal{G}_{j, j+1}\otimes \mathcal{H}_{[j+2, n]}\,,
        \end{equation} where we adopt the convention that when $j=m$, the symbol $\mathcal{H}_{[m,m]}$ is interpreted to mean $(\C^3)^{\otimes 0}=\C$.
\end{lem}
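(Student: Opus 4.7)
The plan is to handle the inclusions separately, with the hard direction leveraging the bijectivity of the two-site $\Gamma$ maps already established via (\ref{eqn:gj_basis}). The forward inclusion $\mathcal{G}_{j,j+1,j+2} \subseteq \mathcal{G}_{j,j+1}\otimes(\C^3)_{j+2} \cap (\C^3)_j\otimes \mathcal{G}_{j+1,j+2}$ is immediate from associativity inside the trace defining $\Gamma^{[j,j+1,j+2]}_{\vec z}$: the parameter $b\in \M_2$ realizing an element of $\mathcal{G}_{j,j+1,j+2}$ produces, upon fixing $i_{j+2}$, a slice with parameter $b X^{i_{j+2}}_{z_{j+2}} \in \M_2$ lying in $\mathcal{G}_{j,j+1}$; and symmetrically the slice at fixed $i_j$ has parameter $X^{i_j}_{z_j} b$ in $\mathcal{G}_{j+1,j+2}$.

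For the reverse inclusion, I would exploit that the previous lemma gives $\dim\mathcal{G}_{j,j+1}=\dim\mathcal{G}_{j+1,j+2}=4$ under our restriction on $\vec z$, so both $\Gamma^{[j,j+1]}_{\vec z}$ and $\Gamma^{[j+1,j+2]}_{\vec z}$ are bijections from $\M_2$ onto their respective ranges. Any $v$ in the intersection thus determines unique matrix families $\{b^{(i_{j+2})}\},\{c^{(i_j)}\}\subset \M_2$ representing the two slicings, and matching coefficients of $|i_j,i_{j+1},i_{j+2}\rangle$ gives the compatibility system
\[
\tr\bigl[b^{(i_{j+2})} X^{i_{j+1}}_{z_{j+1}} X^{i_j}_{z_j}\bigr] = \tr\bigl[c^{(i_j)} X^{i_{j+2}}_{z_{j+2}} X^{i_{j+1}}_{z_{j+1}}\bigr] \quad\forall\, i_j,i_{j+1},i_{j+2}.
\]
The goal is to show this forces the existence of a common $b\in \M_2$ with $b^{(i_{j+2})}=bX^{i_{j+2}}_{z_{j+2}}$ and $c^{(i_j)}=X^{i_j}_{z_j}b$, which exhibits $v$ as $\Gamma^{[j,j+1,j+2]}_{\vec z}(b)\in \mathcal{G}_{j,j+1,j+2}$.

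This factorization step is the main obstacle. The cleanest route, given the very explicit form in (\ref{eqn:AKLT_Kraus}), is to observe that since $\sin z_k, \cos z_k\neq 0$ the triple $\{X^i_{z_{j+2}}\}_i$ is linearly independent (spanning the traceless subspace of $\M_2$), and right-multiplication by this triple is an injection $\M_2\hookrightarrow \M_2^{\oplus 3}$ with a linearly prescribed image. Testing the compatibility system against a basis of $\M_2$ for the free indices and using injectivity of $\Gamma^{[j,j+1]}_{\vec z}$, $\Gamma^{[j+1,j+2]}_{\vec z}$, one reads off a linear system whose unique solution space is parametrized by a single $b\in \M_2$, as required. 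An equivalent route is to verify $\dim\mathcal{G}_{j,j+1,j+2}=4$ directly by a basis computation analogous to (\ref{eqn:gj_basis}) and to bound the dimension of the intersection from above by $4$ via a rank count using that the $X^{i_{j+2}}_{z_{j+2}}$ act injectively on the right.

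For the full formula (\ref{eqn:intersect_ground_state}), I would induct on $n-m$ with the three-site case as the base. In the inductive step, apply the hypothesis on the subintervals $[m,n-1]$ and $[m+1,n]$ to rewrite the multi-site intersection as
\[
\bigl(\mathcal{G}_{[m,n-1]}\otimes (\C^3)_n\bigr) \cap \bigl((\C^3)_m\otimes \mathcal{G}_{[m+1,n]}\bigr),
\]
and use the bijections $\Gamma^{[m,n-1]}_{\vec z}$, $\Gamma^{[m+1,n]}_{\vec z}$ (which stay injective on the nose thanks to Lemma~\ref{lem:translation_operator_strictly_positive} and iteration of $\phi_z$) to reduce the boundary compatibility to exactly the three-site matching already proved. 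This yields $\mathcal{G}_{[m,n]}$ on the right, establishing the interaction length $\ell=2$.
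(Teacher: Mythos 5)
Your overall architecture (easy inclusion by slicing, a hard upper bound on the three-site intersection, then induction to longer intervals) is sound, and your observations that $\{X^{+}_z,X^{0}_z,X^{-}_z\}$ spans the traceless part of $\M_2$ when $\sin z\cos z\neq 0$ and that right multiplication by this triple is injective are correct. But the step you yourself call ``the main obstacle'' is exactly the content of the lemma, and your proposal does not actually overcome it. From the compatibility system $\tr[b^{(i_{j+2})}X^{i_{j+1}}X^{i_j}]=\tr[c^{(i_j)}X^{i_{j+2}}X^{i_{j+1}}]$, letting $i_{j+1}$ range and using the spanning property yields only
\[
X^{i_j}_{z_j}\,b^{(i_{j+2})}-c^{(i_j)}\,X^{i_{j+2}}_{z_{j+2}}\in\C\one \qquad \text{for every pair } (i_j,i_{j+2}),
\]
because the trace-orthogonal complement of the traceless matrices is $\C\one$, not $\{0\}$. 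Solving the $i_j=0$ equations (using that $X^{0}$ is invertible with $(X^{0})^{-1}\propto X^{0}$) gives only $b^{(i)}=b\,X^{i}_{z_{j+2}}+\mu_i X^{0}_{z_j}$ for some $b\in\M_2$ and scalars $\mu_i$, so a priori there are extra degrees of freedom beyond the desired four; eliminating them requires working through the remaining equations with the explicit matrices, i.e.\ a genuine rank/null-space computation. Your sentence ``one reads off a linear system whose unique solution space is parametrized by a single $b$'' asserts precisely what has to be proved. That this cannot be finessed by soft linear algebra is visible from the degenerate cases the paper points out (e.g.\ $z_j=0$, $z_{j+1}=\tfrac{\pi}{2}$), where the two- and three-site dimensions jump; any correct argument must quantitatively use nondegeneracy of all three angles, which is what the paper's proof does by decomposing the intersection into total spin-$z$ sectors and computing the null spaces sector by sector (dimensions $0,1,2,1,0$, hence $\le 4$, matched against the lower bound $4$).

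The same vagueness affects your alternative ``rank count,'' and the inductive step for equation~(\ref{eqn:intersect_ground_state}) is also only gestured at: after blocking, the products $X^{i_{n}}\cdots X^{i_{m+1}}$ span all of $\M_2$ rather than just the traceless part, so the boundary-compatibility problem is not literally ``exactly the three-site matching already proved'' and needs its own (standard, FNW-style) argument using injectivity of the longer $\Gamma$ maps. In short: the frame of your proof is fine and the forward inclusion is correct, but the decisive upper bound on $\dim\bigl(\mathcal{G}_{j,j+1}\otimes(\C^3)_{j+2}\cap(\C^3)_j\otimes\mathcal{G}_{j+1,j+2}\bigr)$ is missing, and that is where all the work in the paper's proof lies.
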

\begin{proof}
    After possibly re-indexing, it is sufficient to show that the intersection property holds for three consecutive sites $x, x+1$ and $x+2$ which we will abbreviate as $1,2$ and $3$ for convenience, and let us abbreviate $c_1 = \cos(\omega_j)$, $c_2 = \cos(\omega_{j+1})$, $c_3 = \cos(\omega_{j+2})$, and similarly for the sine values. 
    
Repeating the process leading to the basis~(\ref{eqn:gj_basis}) above, except for $\mathcal{G}_{1,2,3}$, we observe two facts: first, $\mathcal{G}_{1,2,3} \subset \mathcal{G}_{1,2}\otimes (\C^3)_{3}\cap (\C^3)_1\otimes \mathcal{G}_{2,3}$; and second:  $\dim(\mathcal{G}_{1,2,3})= 4$. Indeed since, $\mathcal{G}_{1,2,3}$ is the range of a linear map on $\M_2$, its dimension is bounded by four, but we also can compute exactly four linearly independent vectors from the equality~(\ref{eqn:G3_subspace}). Thus, to show equality of $\mathcal{G}_{1,2,3}$ and $\mathcal{G}_{1,2}\otimes \C^3 \cap \C^3 \otimes \mathcal{G}_{2,3}$ it is sufficient to show $\dim(\mathcal{G}_{1,2}\otimes \C^3 \cap \C^3 \otimes \mathcal{G}_{2,3}) \le 4$.  To do so, we find a convenient decomposition of each space that will allow us to directly compute a basis for the intersection. 
    
    First, notice $S^z_{3}$ commutes with the $z$-component of the spin on $\{1,2\}$ $S^z_{1, 2} = S_1^z+S_2^z$, and similarly with $S^z_1$ and $S_{2, 3}^z$. Thus, we may decompose $\mathcal{G}_{1,2} = \bigoplus_{k=-2}^2 \mathcal{G}_{1, 2}^{(k)}$ where each subspace $\mathcal{G}_{1, 2}^{(k)}$ is contained in the eigensubspace of $S^z_{1, 2}$ corresponding to eigenvalue $k$. However, we note that from~(\ref{eqn:gj_basis}) that $\mathcal{G}_{1, 2}^{(\pm 2)} = \{0\}$. Thus, 
    \[
        \mathcal{G}_{1, 2} \otimes \C^3 = \bigoplus_{k=-1}^1 \bigoplus_{m=-1}^1 \mathcal{G}_{1,2}^{(k)} \otimes \C|m\> = \bigoplus_{k+m = -2}^2 \mathcal{G}_{1,2}^{(k)} \otimes \C|m\>\,.
    \] We may perform a similar decomposition on $\C^3 \otimes \mathcal{G}_{2, 3}$. 

    We have therefore written $\mathcal{G}_{1, 2} \otimes \C^3$ and $\C^3 \otimes \mathcal{G}_{2, 3}$ in the eigenbasis of $S_1^z + S_{2}^z$ and $S^z_2 +S^z_3$ respectively. Both operators commute with the total spin-z $S^z_{1, 2, 3}= S_1^z + S_2^z + S_3^z$ and with each other. Therefore, we can write the intersection as a direct sum over the total spin-z quantum number $\sigma$. All this to say that the intersection admits a direct sum decomposition 
    \[
        \mathcal{G}_{1,2}\otimes \C^3 \cap \C^3 \otimes \mathcal{G}_{2, 3} = \bigoplus_{\sigma=-2}^2 \left(\bigoplus_{k+m = \sigma} \mathcal{G}_{1,2}^{(k)} \otimes |m\>\, \cap \, \bigoplus _{k'+m' = \sigma} |m'\> \otimes \mathcal{G}_{2,3}^{(k')}\right). 
    \] where $k,k',m,m' \in \{-, 0, +\}$.

    Now, to estimate the dimension of $\mathcal{G}_{1,2}\otimes \C^3 \cap \C^3 \otimes \mathcal{G}_{2, 3}$ it is sufficient to estimate the dimension of each direct summand corresponding to one of the total spin-z values $\sigma = \pm 2, \pm 1, 0$. To do so, we will explicitly compute a basis in both the left-hand sum $L(\sigma):= \bigoplus_{k,m\colon k+m=\sigma }\mathcal{G}_{1, 2}^{(k)} \otimes |m\>$ and the right-hand summand $R(\sigma):= \bigoplus_{k',m'\colon k'+m'=\sigma}|m'\>\otimes \mathcal{G}_{2, 3}^{(k')}$. We use these basis expansions to determine the largest possible dimension of $L(\sigma)\cap R(\sigma)$ (namely, by symbolically computing the null space of an appropriate matrix). We will relegate the computations to the Appendix~\ref{apdx:mathematica} below and instead state the results. 
    
    In the case of spin $\sigma =\pm2$,  both the left and right-hand subspaces are one-dimensional and spanned vectors which are linearly independent, hence their intersection is trivial (see Appendix~\ref{apdx:spin2} for the calculations). In the case that $\sigma=1$, we find that the subspaces involved in the intersection are 
    \[
        \begin{split}
            L(1) &= (\mathcal{G}_{1,2}^{(1)}\otimes |0\>)\, \oplus\, (\mathcal{G}_{1,2}^{(0)}\otimes |1\>)\,,\\
            R(1) &= (|0\>\otimes \mathcal{G}_{2,3}^{(1)})\, \oplus\, (|1\>\otimes \mathcal{G}_{2,3}^{(0)})\,.
        \end{split}
    \]
    The resulting basis elements of $L(+1)$ with total spin-z equal to 1 can be written as
    \begin{equation}
        \begin{split}
        v_1&:= |\phi_{21}(1), 0\> = -s_j c_{j+1} |0+0\> + c_j s_{j+1} |+00\>\,,\\
        v_2&: = |\phi_{11}(1), +\> = s_j s_{j+1} |00+\> - c_jc_{j+1}|+-+\>\,,\\
        v_3 &:= |\phi_{22}(1), +\> = s_j s_{j+1} |00+\> - c_j c_{j+1} |-++\>\,.
        \end{split}
    \end{equation} where we note that $v_2, v_3$ are contained in the $m=1$ component, and $v_1$ belongs to the $m=0$ component. Similarly, the right-hand subspace $R(+1)$ is spanned by 
    \begin{equation}
        \begin{split}
            w_1&:= |0, \phi_{21}(2)\> = -s_{j+1} c_{j+2} |00+\> + c_{j+1} s_{j+2} |0+0\>\,,\\
            w_2&: = |+,\phi_{11}(2)\> = s_{j+1} s_{j+2} |+00\> - c_{j+1}c_{j+2}|++-\>\,,\\
            w_3 &:= |+,\phi_{22}(2)\> = s_{j+1} s_{j+2} |+00\> - c_{j+1} c_{j+2} |+-+\>\,.
        \end{split}
    \end{equation} where similarly $w_2$ and $w_3$ belong to the $m'=1$ component and $w_1$ belongs to the $m'=0$ component. Now, if there is a vector in $L(+1) \cap R(+1)$, it is a linear combination of $\{v_1, v_2, v_3\}$ and $\{w_1, w_2, w_3\}$. Whence it is a vector in the kernel of the matrix $Q$, whose columns are formed from the aformentioned basis vectors as follows \[
        \llbracket Q \rrbracket = [v_1,\, v_2,\, v_3,\, -w_1,\, -w_2,\, -w_3]\,,
    \]and whose rows are formed by expanding in the according to the total spin-z basis vectors with $\sigma =1$: \[\ket{-++}, \ket{0+0}, \ket{00+}, \ket{+00}, \ket{+-+}, \ket{++-}.\] The exact entries of $Q$ are given in Table~\ref{tab:s_+1-matrix} in Appendix~\ref{apdx:spin1} below. Calculating the null space of $Q$, (see Appendix~\ref{apdx:spin1}), we find the intersection is one dimensional and spanned by the vector
        \begin{equation}
            n_1 = -2s_1c_2s_3\ket{0+0} + 2c_1s_2s_3 \ket{+00} +s_1s_2c_3\ket{00+} - c_1c_2c_3\ket{+-+}\,,
        \end{equation} The only way for the dimension to collapse to zero is if each coefficient equals zero, which is not possible for our choices of $\omega_1,\omega_2,\omega_3$. 

        Repeating this process in the case of total spin $-1$, we see that the intersection is spanned by
        \begin{equation}
            n_{-1} = c_1c_2c_3\ket{-+-} -s_1s_2c_3 \ket{00-} -c_1s_2s_3\ket{-00}+s_1c_2s_3\ket{0-0}\,,
        \end{equation} for which we  see the only way to make all four coefficients vanish is if $\omega_1 = \omega_2 = \omega_3 = \frac{\pi}{2}$. Therefore $\dim(L(-1) \cap R(-1)) = 1$ except when there is a triple of $\pi/2$'s

    Finally, the case of spin 0 results in a two-dimensional subspace spanned by the vectors
        \begin{equation}
            \begin{split}
                n_0^{1}&=s_1^2c_2c_3 \ket{0+-} - c_1s_1s_2c_3\ket{+0-} -s_1^2 s_2s_3\ket{000}+s_1c_1 c_2 s_3 \ket{+-0}\,, \\
                n_0^{2}&= c_1s_1s_2s_3\ket{000} -c_1^2 c_2 s_3 \ket{-+0} +c_1^2s_2c_3\ket{-0+}-c_1s_1c_2c_3\ket{0-+}\,.
          \end{split}
        \end{equation}
        
Therefore, in the case that $\omega_1,\omega_2,\omega_3 \not \in \{0,\frac{\pi}{2}\}$, we have shown $\mathcal{G}_{1,2,3} = \mathcal{G}_{1,2}\otimes \C^3 \cap \C^3 \otimes \mathcal{G}_{2,3}$
\end{proof}

\begin{rmk}
    In the event that there is a degenerate sequence of values, e.g. $\omega_j=0, \omega_{j+1}=\frac{\pi}{2}, \omega_{j+2}\in(0,\frac{\pi}{2})$, then the dimension of the two-point ground state spaces and the three point ground state spaces can all be different. In the example we just mentioned, one calculates that $\dim(\mathcal{G}_{j, j+1}) = 0$ and $\dim(\mathcal{G}_{j+1,j+2}) = 4$, while $\dim(\mathcal{G}_{j, j+1, j+2})=3$. Therefore, in the degenerate case the interaction length must be greater than 2, if it exists at all. 
\end{rmk}

\begin{cor}\label{cor:Parent_Hamiltonian}
    Let $\omega$ be a bi-infinite sequence with entries in $(0, \pi/2) \cup (\pi/2, \pi)$, and let $\nu_{\omega}$ be the corresponding state on $\mathcal{A}_\Z$. Then, a parent Hamiltonian, h, of $\nu_{\omega}$ is given by
    \begin{equation}\label{eqn:Parent_Hamiltonian}
        h_{j,j+1} = \mathrm{proj}(\mathcal{G}_{j, j+1}^\perp) \,.
    \end{equation} One can explicitly calculate the vectors whose outer products define $h_{j, j+1}$ and we do so in Appendix~\ref{apdx:mathematica}.
\end{cor}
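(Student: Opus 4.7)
The plan is to verify that the specified local terms $h_{j,j+1}$ form a frustration-free parent Hamiltonian by showing $\nu_{\vec{z}}(h_{j,j+1}) = 0$ for every $j \in \Z$. Since $h_{j,j+1} \ge 0$ by construction (it is an orthogonal projection), this identity characterizes $\nu_{\vec{z}}$ as a frustration-free ground state of the formal sum $H_{\vec{z}} = \sum_j h_{j,j+1}$, which then extends to the GNS Hamiltonian in the usual way.

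First I would establish an explicit expansion of the two-site marginal $\nu_{\vec{z}}|_{\mathcal{A}_{[j,j+1]}}$ in terms of the $\Gamma$-map. Starting from the definition
\[
    \nu_{\vec{z}}(a) = \tfrac{1}{2}\tr[E^{[j,j+1]}_{a,\vec{z}}(\one)] = \tfrac{1}{2}\sum_{k,l\in\{\pm\frac12\}} \<k|\,E^{[j,j+1]}_{a,\vec{z}}(|l\>\<l|)\,|k\>
\]
for $a\in \mathcal{A}_{[j,j+1]}$, I would apply Lemma~\ref{lem:gamma_formulas}(b) with the choice $x = y = |k\>\<l|$. A direct calculation gives $x^*|j_1\>\<j_2|y = \delta_{k,j_1}\delta_{j_2,k}|l\>\<l|$, so the lemma specializes to
\[
    \<\Gamma^{[j,j+1]}_{\vec{z}}(|k\>\<l|)\,|\,a\,|\,\Gamma^{[j,j+1]}_{\vec{z}}(|k\>\<l|)\> = \<k|E^{[j,j+1]}_{a,\vec{z}}(|l\>\<l|)|k\>.
\]
Summing over $k,l$ yields the representation
\[
    \nu_{\vec{z}}(a) = \tfrac{1}{2}\sum_{k,l\in\{\pm\frac{1}{2}\}}\<\Gamma^{[j,j+1]}_{\vec{z}}(|k\>\<l|)\,|\,a\,|\,\Gamma^{[j,j+1]}_{\vec{z}}(|k\>\<l|)\>,
\]
exhibiting the two-site marginal as a positive combination of vector states supported in $\mathcal{G}_{j,j+1} = \Gamma^{[j,j+1]}_{\vec{z}}(\M_2)$.

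Next I would specialize $a = h_{j,j+1}$. Since $h_{j,j+1}$ projects onto $\mathcal{G}_{j,j+1}^\perp$, every inner product in the above sum vanishes, giving $\nu_{\vec{z}}(h_{j,j+1}) = 0$ for all $j$. Combined with $h_{j,j+1}\ge 0$, this is the frustration-free condition. Finally, I would invoke Lemma~\ref{lem:delta_MPS_injectivity}: the intersection formula~(\ref{eqn:intersect_ground_state}) identifies the joint kernel of the $\{h_{j,j+1}\}_{j=m}^{n-1}$ on $\mathcal{A}_{[m,n]}$ with $\mathcal{G}_{[m,n]}$, the finite-volume MPS ground space. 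Because this kernel is nonzero (of rank four) for all $\vec{z}$ in range, $\nu_{\vec{z}}$ restricts to a ground state of every finite-volume Hamiltonian, hence the cyclic vector in the GNS representation of $\nu_{\vec{z}}$ is annihilated by the image of each $h_{j,j+1}$, making $\{h_{j,j+1}\}$ a bona fide nearest-neighbor parent Hamiltonian.

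Given the substantial preparation already done, the argument is essentially bookkeeping and there is no serious technical obstacle; the only nontrivial ingredient is the frustration-free intersection property established in Lemma~\ref{lem:delta_MPS_injectivity}, which is needed to ensure that the ground-space dimension does not collapse as the volume grows. The explicit form of the projection $h_{j,j+1}$ (or equivalently of the orthonormal basis of $\mathcal{G}_{j,j+1}^\perp$) is deferred to Appendix~\ref{apdx:mathematica}, as it is computational rather than conceptual.
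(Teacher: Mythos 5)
Your argument is correct, but it takes a more self-contained route than the paper, whose entire proof of this corollary is to cite the standard parent-Hamiltonian construction of P\'erez-Garc\'ia et al.\ and remark that the basis of $\mathcal{G}_{j,j+1}$ (hence the projection $h_{j,j+1}$) can be computed explicitly from the vectors $|\phi_{k\ell}(j)\rangle$, as carried out in the Appendix. What you do instead is verify frustration-freeness by hand: your specialization of Lemma~\ref{lem:gamma_formulas}(b) to $x=y=|k\rangle\langle l|$ is correct ($x^*|j_1\rangle\langle j_2|y=\delta_{k,j_1}\delta_{j_2,k}|l\rangle\langle l|$, and summing over $k,l$ recovers $\tr[E^{[j,j+1]}_{a,\vec z}(\one)]$), so the two-site marginal is exhibited as a positive combination of vector states $\langle \Gamma^{[j,j+1]}_{\vec z}(|k\rangle\langle l|)|\,\cdot\,|\Gamma^{[j,j+1]}_{\vec z}(|k\rangle\langle l|)\rangle$ supported in $\mathcal{G}_{j,j+1}$, whence $\nu_{\vec z}(h_{j,j+1})=0$ and, since $h_{j,j+1}\ge 0$, the GNS vector is annihilated by each $\pi(h_{j,j+1})$. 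This buys a transparent, citation-free proof that also makes explicit which ingredients are actually used where: the support inclusion gives the ground-state property, while the intersection property of Lemma~\ref{lem:delta_MPS_injectivity} is only needed for the global statement that the finite-volume zero-energy space is exactly $\mathcal{G}_{[m,n]}$ (and, downstream, for purity in Theorem~\ref{thm:vec_state_pure}); your decomposition of the marginal is in fact the same one the paper deploys later in that purity proof. The paper's citation-based proof is shorter and leans on the well-established MPS machinery, at the cost of leaving the disorder-adapted (site-dependent) verification implicit.
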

\begin{proof}
    The existence of the parent Hamiltonian is well established in \cite{Perez-Garcia_et_al}. One may explicitly compute the basis of $\mathcal{G}_{j, j+1}$ and the basis of $\mathcal{G}_{j+1, j+2}$ in terms of the $\{|\phi_{k\ell}(j)\rangle\}_{k,\ell = 1,2}$.
\end{proof}

Now that we have established that the interaction length for the $\nu_{\omega}$ is two for suitable $\omega$, we aim to show that  $\Gamma^{[m,n]}_{\omega}$ maps in remains injective when $|m-n|$ increases in analogy to Lemma 5.3 of \cite{FannesNachtergaeleWerner}. 

Consider the following quantity \begin{equation}\label{def:condition_number_gamma}
    \kappa_{[m,n]}(\omega) := \inf_{x\in \M_2} \frac{\|\Gamma^{[m,n]}_{\omega}(x)\|^2}{\|x\|_{\rho}^2}\,,
\end{equation} where $\|x \|_\rho = \frac{1}{2} \|x\|_{\HS}$, which is the smallest singular value of $\Gamma^{[m,n]}_\omega$. 

\begin{lem}
    For a fixed $\omega$ in range, the following inequality holds
        \begin{equation}
            \kappa_{[m,n]}(\omega) \ge (1-\|\Phi_{m,n} -\mathcal{T}\|)\,,
        \end{equation}
    and furthermore, 
        \begin{equation}
            \kappa_{[m,n]}(\omega) \le \min\{ \kappa_{[m-1,n]}(\omega),\, \kappa_{[m,n+1]}(\omega)\}\,,
        \end{equation} That is, $\kappa_{[m,n]}$ is non-decreasing with respect to the interval length $|m-n|$. 
\end{lem}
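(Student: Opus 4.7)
My plan is to establish both bounds by exploiting the tensor structure of the $\Gamma^{[m,n]}_{\vec z}$ maps together with the Kraus decomposition of the transfer operator $\phi_z = E_{\one,z}$. The lower bound is essentially a rewriting of the inner-product approximation already established in Lemma~\ref{lem:gamma_formulas}(b), while the upper bound (the monotonicity statement) will follow from a direct factorization of $\Gamma^{[m-1,n]}_{\vec z}$ and $\Gamma^{[m,n+1]}_{\vec z}$ in terms of $\Gamma^{[m,n]}_{\vec z}$, combined with the unitality and trace-preservation of $\phi_z$.

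For the lower estimate, I would set $y = x$ in the inner-product approximation~(\ref{eqn:rho_product_estimate}), using its natural homogeneity in $\|x\|_\rho$, to obtain $\|\Gamma^{[m,n]}_{\vec z}(x)\|^2 \ge \|x\|_\rho^2 - \|\Phi_{m,n}-T\|\,\|x\|_\rho^2 = (1 - \|\Phi_{m,n}-T\|)\|x\|_\rho^2.$ Dividing by $\|x\|_\rho^2$ and taking the infimum over nonzero $x\in \M_2$ in the definition~(\ref{def:condition_number_gamma}) recovers $\kappa_{[m,n]}(\vec z) \ge 1 - \|\Phi_{m,n}-T\|$.

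For the monotonicity bound, I would first derive the factorizations
$\Gamma^{[m-1,n]}_{\vec z}(b) = \sum_{i_{m-1}} \ket{i_{m-1}}\otimes \Gamma^{[m,n]}_{\vec z}\bigl(X^{i_{m-1}}_{z_{m-1}} b\bigr), \quad \Gamma^{[m,n+1]}_{\vec z}(b) = \sum_{i_{n+1}} \Gamma^{[m,n]}_{\vec z}\bigl(b X^{i_{n+1}}_{z_{n+1}}\bigr)\otimes \ket{i_{n+1}}$
directly from the defining formula~(\ref{eqn:gamma_mn}). The first identity uses cyclicity of the trace to move $X^{i_{m-1}}_{z_{m-1}}$ past $b$; the second is immediate. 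Because the two sums decompose into orthogonal components indexed by the extra site, taking norms squared yields, for example, $\|\Gamma^{[m-1,n]}_{\vec z}(b)\|^2 = \sum_i \|\Gamma^{[m,n]}_{\vec z}(X^i_{z_{m-1}} b)\|^2 \ge \kappa_{[m,n]}(\vec z) \sum_i \|X^i_{z_{m-1}} b\|_\rho^2.$ To close the estimate I invoke the Kraus conditions $\sum_i (X^i_z)^* X^i_z = \one$ (equivalent to unitality of $\phi_z$) and $\sum_i X^i_z (X^i_z)^* = \one$, the latter verified either by direct computation from~(\ref{eqn:AKLT_Kraus}) or by recognizing it as trace-preservation of $\phi_z$. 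With $\rho = \tfrac{1}{2}\one$ commuting with everything, these give $\sum_i \|X^i_z b\|_\rho^2 = \sum_i \|b X^i_z\|_\rho^2 = \|b\|_\rho^2$, yielding $\kappa_{[m-1,n]}(\vec z), \kappa_{[m,n+1]}(\vec z) \ge \kappa_{[m,n]}(\vec z)$.

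I do not anticipate a serious obstacle: the argument is short and essentially mechanical once the factorization identities and the two Kraus sum rules are in hand. The only delicate point is bookkeeping of normalization constants so that the $1$ in front of $\|\Phi_{m,n} - T\|$ in the lower bound comes out correctly; this should just require invoking~(\ref{eqn:rho_product_estimate}) in its sharpest form.
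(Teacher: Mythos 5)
Your proposal is correct and follows essentially the same route as the paper: the paper likewise reduces the monotonicity to the factorization $\|\Gamma^{[m-1,n]}_{\vec z}(x)\|^2=\sum_{i}\|\Gamma^{[m,n]}_{\vec z}(X^{i}_{z_{m-1}}x)\|^2$ (via cyclicity of the trace) and then invokes unitality and trace preservation of the Kraus matrices, only treating the left extension explicitly and deferring the remaining details, including the lower bound $\kappa_{[m,n]}\ge 1-\|\Phi_{m,n}-T\|$, to the argument of Lemma 5.3 of Fannes--Nachtergaele--Werner. Your flagged bookkeeping on the constant in front of $\|\Phi_{m,n}-T\|$ is exactly the point the paper also leaves to the cited FNW estimate, so nothing essential is missing.
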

\begin{proof}
    Once again the proof of \cite[Lemma 5.3]{FannesNachtergaeleWerner} is robust in this setting, so we will only emphasize the one place where our proof differs. Consider the interval $[m-1, n]$. By Lemma~\ref{lem:gamma_formulas}, we have 
    \begin{align*}
         \|\Gamma^{[m-1, n]}_{\omega}(x)\|^2 &= \sum_{i_{m-1}, \dots, i_{n}} |\tr(x X^{i_{n}}_{\omega_n}\, \cdots X^{i_{m-1}}_{\omega_{m-1}}))|^2=\sum_{i_{m-1}} \sum_{i_m, \dots, i_n} |\tr( [X^{i_{m-1}}_{\omega_{m-1}}x]X^{i_{n}}_{\omega_n} \cdots X^{i_{m}}_{\omega_{m}}) )|^2\\
         &= \sum_{i_{m-1}}\|\Gamma^{[m,n]}_{\omega}(X^{i_{m-1}}_{\omega_{m-1}}\, x)\|^2 \ge \sum_{i_{m-1}} \kappa_{m,n}(\omega) \tr(\tfrac{1}{2}\one [X^{i_{m-1}}_{\omega_{m-1}}x]^*[X^{i_{m-1}}_{\omega_{m-1}}x])\\
        &= \kappa_{m,n}(\omega) \sum_{i_{m-1}} \tr[\tfrac{1}{2}\one x^* X_{i_{m-1}}^* X_{i_{m-1}}x]= \kappa_{m,n}(\omega) \|x\|_\rho^2,
    \end{align*} where we have used the fact that the Kraus matrices defining $E_{\one, \omega}$ are trace preserving and unital both. 
\end{proof}

Let us now conclude with a proof that the $\nu_{\omega}$ are pure states for suitable choices of $\omega$. Recall that a state $\varphi$ on a $C^*$-algebra $C$ is pure if and only if whenever $\rho$ is a positive linear functional for which $\rho(c^*c) \le \varphi(c^*c)$ for all $c\in C$, there necessarily exists a $\lambda \in [0, 1]$ so that $\lambda \varphi = \rho$ (see \cite[Chapter 5]{Murphy}). 

\begin{thm}\label{thm:vec_state_pure}
    Let $\omega$ be any bi-infinite sequence taking values in $(0, \frac{\pi}{2})\cup(\frac{\pi}{2}, \pi)$ and let $\nu_{\omega}$ be the state as generated in equation~(\ref{eqn:delta_state}). Then $\nu_{\omega}$ is a pure state on $\mathcal{A}_{\Z}$. 
\end{thm}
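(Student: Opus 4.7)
The plan is to verify the standard purity criterion: if $\rho$ is a positive linear functional on $\mathcal{A}_{\Z}$ with $0\le \rho\le \nu_{\vec{z}}$, then $\rho=\lambda\,\nu_{\vec{z}}$ for some $\lambda\in[0,1]$. The idea is to combine the frustration-free structure of the parent Hamiltonian with the asymptotic isometry property of the $\Gamma^{[m,n]}_{\vec{z}}$ maps, reducing the analysis of $\rho$ to a finite-dimensional computation on $\M_2$ that becomes trivial in the thermodynamic limit by Lemma~\ref{lem:translation_operator_strictly_positive}.

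First, Corollary~\ref{cor:Parent_Hamiltonian} gives $h_{j,j+1}\ge 0$ and $\nu_{\vec{z}}(h_{j,j+1})=0$ for all $j$; together with $\rho\le\nu_{\vec{z}}$ this forces $\rho(h_{j,j+1})=0$. Identifying $\mathcal{A}_{[m,n]}\cong B(\mathcal{H}_{[m,n]})$, I will represent $\rho|_{\mathcal{A}_{[m,n]}}(\cdot)=\tr(\sigma_{[m,n]}\cdot)$ by a PSD operator $\sigma_{[m,n]}$, which by the frustration-free condition and the intersection identity~(\ref{eqn:intersect_ground_state}) from Lemma~\ref{lem:delta_MPS_injectivity} must be supported in $\mathcal{G}_{[m,n]}=\Gamma^{[m,n]}_{\vec{z}}(\M_2)$. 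Since $\vec{z}\in\bigl((0,\tfrac{\pi}{2})\cup(\tfrac{\pi}{2},\pi)\bigr)^{\Z}$, the map $\Gamma^{[m,n]}_{\vec{z}}:\M_2\to\mathcal{H}_{[m,n]}$ is injective for all intervals of length $\ge 2$ (its condition number $\kappa_{[m,n]}(\vec{z})$ is bounded below), so I can polar-decompose $\Gamma^{[m,n]}_{\vec{z}}=U_{[m,n]}\bigl((\Gamma^{[m,n]}_{\vec{z}})^*\Gamma^{[m,n]}_{\vec{z}}\bigr)^{1/2}$, with $U_{[m,n]}:\M_2\to\mathcal{H}_{[m,n]}$ a partial isometry onto $\mathcal{G}_{[m,n]}$. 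Setting $T_{[m,n]}:=U_{[m,n]}^*\sigma_{[m,n]}U_{[m,n]}\ge 0$, one has $\sigma_{[m,n]}=U_{[m,n]}T_{[m,n]}U_{[m,n]}^*$ and $\tr(T_{[m,n]})=\tr(\sigma_{[m,n]})=\rho(\one)=:\lambda$, a constant in $[0,1]$ independent of the interval.

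The key analytic step is to show that for every $a\in\mathcal{A}^{\mathrm{loc}}_{[m',n']}$,
\[
U_{[m,n]}^*(\one\otimes a\otimes\one)U_{[m,n]}\ \longrightarrow\ \nu_{\vec{z}}(a)\,I_{\M_2}
\]
in operator norm as $m\to-\infty$, $n\to +\infty$. This combines two inputs from Lemma~\ref{lem:gamma_formulas}: equation~(\ref{eqn:rho_product_estimate}) specialized to $a=\one$ gives $(\Gamma^{[m,n]}_{\vec{z}})^*\Gamma^{[m,n]}_{\vec{z}}\to\tfrac{1}{2}I$, and equation~(\ref{eqn:approximate_nu}) gives $(\Gamma^{[m,n]}_{\vec{z}})^*(\one\otimes a\otimes\one)\Gamma^{[m,n]}_{\vec{z}}\to\tfrac{1}{2}\nu_{\vec{z}}(a)I$; the exponential decay $\|\Phi_{m,n}-T\|\to 0$ from Lemma~\ref{lem:translation_operator_strictly_positive} makes both limits effective. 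Given the convergence, for any $a\in\mathcal{A}^{\mathrm{loc}}_{\Z}$ and any $[m,n]\supseteq\supp(a)$ one has $\rho(a)=\tr\bigl(T_{[m,n]}\,U_{[m,n]}^*(\one\otimes a\otimes\one)U_{[m,n]}\bigr)$ with $\|T_{[m,n]}\|_1=\lambda$ uniformly bounded, so passing to the limit gives $\rho(a)=\lambda\,\nu_{\vec{z}}(a)$; since the left-hand side is independent of $[m,n]$ this equality is exact, and by density it holds on all of $\mathcal{A}_{\Z}$.

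The main technical obstacle will be upgrading the matrix-element estimates in Lemma~\ref{lem:gamma_formulas} (which are stated with respect to the Hilbert-Schmidt inner product on $\M_2$) to genuine operator-norm convergence of $U_{[m,n]}^*(\one\otimes a\otimes\one)U_{[m,n]}$. Finite-dimensionality of $\M_2$ makes all norms on $B(\M_2)$ equivalent, but one has to control the $U_{[m,n]}$'s, which involves the lower bound $\kappa_{[m,n]}(\vec z)\ge 1-\|\Phi_{m,n}-T\|$ together with Lemma~\ref{lem:translation_operator_strictly_positive}; ensuring that the residues in (\ref{eqn:rho_product_estimate})--(\ref{eqn:approximate_nu}) can be transported through the polar-decomposition inverse $((\Gamma^{[m,n]}_{\vec{z}})^*\Gamma^{[m,n]}_{\vec{z}})^{-1/2}$ in a uniform way is where most of the work lies.
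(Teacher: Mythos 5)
Your proposal is correct and follows essentially the same route as the paper's proof: reduce the support of $\rho|_{\mathcal{A}_{[m,n]}}$ to $\mathcal{G}_{[m,n]}=\Gamma^{[m,n]}_{\vec z}(\M_2)$ via frustration-freeness and the intersection property of Lemma~\ref{lem:delta_MPS_injectivity}, then use the estimates of Lemma~\ref{lem:gamma_formulas} together with the decay of $\|\Phi_{m,n}-T\|$ and the condition-number lower bound to force $\rho=\lambda\nu_{\vec z}$ in the thermodynamic limit. The only difference is packaging: the paper decomposes the local density matrix into vector states $\ket{\Gamma^{[m,n]}_{\vec z}(x_k)}$ and estimates termwise, whereas you compress through the polar isometry of $\Gamma^{[m,n]}_{\vec z}$ — a cosmetic variation on the same argument.
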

\begin{proof}
    We largely follow the proof strategy of \cite[Theorem 5.7]{FannesNachtergaeleWerner}. Suppose that $\rho$ is a state majorized by $\lambda \nu_{\omega}$ for some $\lambda\in (0,1)$. Then for every interval $[m,n]$, one has $\rho|_{\mathcal{A}_{[m,n]}} \le \lambda \nu_{\omega}|_{\mathcal{A}_{[m,n]}}$. In particular, this means that $\rho|_{\mathcal{A}_{[m,n]}}(H^{[m,n]}) = \sum_{j=m}^{n-1} \rho|_{\mathcal{A}_{[m,n]}}(h_{j, j+1}) = 0$. Therefore $\rho|_{\mathcal{A}_{[m,n]}}$ is supported on the intersection in the right-hand side of equation~(\ref{eqn:intersect_ground_state}). But by choice of $\omega$, we know equation~(\ref{eqn:intersect_ground_state}) holds, and therefore $\rho|_{\mathcal{A}_{[m,n]}}$ is supported by $\mathcal{G}_{[m,n]}$. 

    Thus, the density matrix $\hat 
    \rho$ representing $\rho|_{\mathcal{A}_{[m,n]}}$ may be decomposed as 
    \[
        \hat \rho = \sum_k \ket{\Gamma^{[m,n]}_{\omega}(x_k)}\bra{\Gamma^{[m,n]}_{\omega}(x_k)}\,,
    \] with $\sum\|\Gamma^{[m,n]}_{\omega}(x_k)\|^2 = \tr\hat\rho= 1$\,. Now, we see that for any $a$ with finite support $[m',n']\subset [m,n]$, and $|m-n|$ large enough so that $m'\neq m$ and $n'\neq n$, we have  \begin{align*}
        \left|\rho|_{\mathcal{A}_{[m,n]}}(a) - \nu_{\omega}(a) \right|&=  \left|\sum_{k}\bra{\Gamma^{[m,n]}_{\omega}(x_k)}a \ket{\Gamma^{[m,n]}_{\omega}(x_k)} - \nu_{\omega}(a)\<\Gamma^{[m,n]}_{\omega}(x_k)|\Gamma^{[m,n]}_{\omega}(x_k)\>\right| \\
        &\le  \left|\sum_k\bra{\Gamma^{[m,n]}_{\omega}(x_k)}a \ket{\Gamma^{[m,n]}_{\omega}(x_k)} - \tfrac{1}{2}\nu_{\omega}(a)\<x_k |x_k\>_{\HS}\right| + 2\|\Phi_{m,n} - \mathcal{T}\| \|a\| \sum_k \|x_k\|^2_{\rho}\\
        &\le (\|\Phi_{m, m'-1}-\mathcal{T}\| + \|\Phi_{n'+1, n} -\mathcal{T}\|+\|\Phi_{m,n} - \mathcal{T}\|) 2 \|a\| \sum_k \|x_s\|_{\rho}^2 \\
        &\le 2\|a\| \, \frac{\|\Phi_{m, m'-1}-\mathcal{T}\| + \|\Phi_{n'+1, n} -\mathcal{T}\|+\|\Phi_{m,n}-\mathcal{T}\|}{\kappa_{[m,n]}(\omega)} \,,
    \end{align*} where we have used the triangle inequality combined with inequality~(\ref{eqn:rho_product_estimate}) from Lemma~\ref{lem:gamma_formulas} part b to obtain the first inequality and part c from the same lemma for the second. On a set with probability one, the right-hand side of this can be made arbitrarily small by sending $m\to -\infty$ and $n\to \infty$. Therefore, it must be that $\rho|_{\mathcal{A}_{[m',n']}}$ agrees with $\nu_{\omega}|_{\mathcal{A}_{[m',n']}}$. But this forces $\lambda = 1$, and so it must be that $\nu_{\omega}$ is pure.  
\end{proof}

\subsection{IID AKLT}\label{subsec:IID_AKLT}
In this section, we explore the consequences of having IID angles in the parameter space $[0,\pi)$ on the parent Hamiltonian $\{h_{j, j+1}(\omega) \colon j\in \mathbb{Z}, \omega \in [0, \pi)\}$ that we obtained in Corollary~\ref{cor:Parent_Hamiltonian}. To set this up, let $\{\theta_j: j\in \Z\}$ be a family of IID random variables taking values in $[0,\pi)$ with a non-atomic distribution.  By the Lemma~\ref{lem:IID_cts}, we can replace the $\theta_j$'s with an IID family of continuous random variables $\Theta_j:\Omega \to [0, \pi)$ which are related by an ergodic homeomorphism $\vartheta:\Omega \to \Omega$ (i.e. the right shift operator), and $\Omega$ is a compact Hausdorff space. For convenience, we shall write $\Theta_j(\omega) = \omega_j = \vartheta^j \omega_0 = \Theta_0(\vartheta^j(\omega))$. 

By considering $\omega:=(\omega_j)_{j\in \Z}$ as some sequence taking values in $((0,\frac{\pi}{2})\cup(\frac{\pi}{2}, \pi))^\Z$, define the state $\psi_\omega:= \nu_{(\omega_j)_{j\in \Z}}$ as in equation~(\ref{eqn:delta_state}). Then, by Corollary~\ref{cor:Parent_Hamiltonian}, there is a nearest neighbor parent Hamiltonian $h_{j,j+1}(\omega)$ for which $\psi_\omega$ is the unique bulk ground state by Theorem~\ref{thm:vec_state_pure}. 

\begin{lem}\label{lem:parent_covariance}
    In the above situation, the parent Hamiltonian is translation co-variant in the following sense: 
        \begin{equation}
            h_{j,j+1}(\omega) = h_{j+1, j+2}(\vartheta \omega)\,.
        \end{equation}
\end{lem}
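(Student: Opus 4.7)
The plan is to reduce the identity to the observation that $h_{j,j+1}(\omega)$ depends on the disorder $\omega$ only through the pair $(\omega_j, \omega_{j+1})$, via a recipe that does not otherwise refer to the site index $j$.

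First I would invoke Corollary~\ref{cor:Parent_Hamiltonian}, which identifies $h_{j,j+1}(\omega)$ with the orthogonal projection onto $\mathcal{G}_{j,j+1}(\omega)^{\perp}$ inside $(\C^{3})_{j} \otimes (\C^{3})_{j+1}$. Reading off equation~(\ref{eqn:gj_basis}), the four spanning vectors $\ket{\phi_{k\ell}(j)}$ of $\mathcal{G}_{j,j+1}(\omega)$ depend on $\omega$ only through the numbers $\sin(\omega_j), \cos(\omega_j), \sin(\omega_{j+1}), \cos(\omega_{j+1})$, and the combinatorial structure of the coefficients does not otherwise reference the label $j$. Consequently there is an intrinsic map $P:[0,\pi)^{2} \to \mathcal{B}(\C^{3} \otimes \C^{3})$ into rank-$5$ projections such that, under the canonical identification of $(\C^{3})_{j} \otimes (\C^{3})_{j+1}$ with $\C^{3} \otimes \C^{3}$, one has $h_{j,j+1}(\omega) = P(\omega_j, \omega_{j+1})$ for every $j$ and every $\omega$.

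Next I would use the covariance of the coordinate functions under $\vartheta$. By the IID construction (Lemma~\ref{lem:IID_cts}), the $\Theta_k$'s satisfy a shift relation that yields $(\vartheta\omega)_{j+1} = \omega_j$ and $(\vartheta\omega)_{j+2} = \omega_{j+1}$. Substituting this into the intrinsic recipe,
\begin{equation*}
h_{j+1,j+2}(\vartheta\omega) \;=\; P\bigl((\vartheta\omega)_{j+1},(\vartheta\omega)_{j+2}\bigr) \;=\; P(\omega_j,\omega_{j+1}) \;=\; h_{j,j+1}(\omega),
\end{equation*}
which is the claimed identity (understood under the canonical identification of the tensor factors, i.e., as translation covariance of a site-independent interaction).

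The main obstacle is really only notational: pinning down the precise direction in which $\vartheta$ shifts the coordinates $\Theta_k$, and verifying that the projection recipe $P$ is manifestly site-independent. Both are transparent from equation~(\ref{eqn:gj_basis}) and the definition of the IID model, so the lemma reduces to careful bookkeeping once the intrinsic structure of the ground-state subspace $\mathcal{G}_{j,j+1}$ and the $\vartheta$-covariance of the evaluation maps $\Theta_k$ are in place. No deeper analytic input is needed.
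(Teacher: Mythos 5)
Your core argument is the same as the paper's: by Corollary~\ref{cor:Parent_Hamiltonian} and the explicit basis~(\ref{eqn:gj_basis}), the interaction $h_{j,j+1}(\omega)$ is the orthogonal projection onto $\mathcal{G}_{j,j+1}^{\perp}$, and this projection is produced by a site-independent recipe $P(\omega_j,\omega_{j+1})$ applied to the two local angles; the covariance identity is then pure bookkeeping with the shift, which is exactly what the paper's one-line proof (``follows from the coordinate expression of $h_{j,j+1}(\omega)$'') means. In that sense the substance of your proof is correct and complete.

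The one point you should fix is the direction of the shift relation, which you assert rather than derive. With the convention used in Section~\ref{subsec:IID_AKLT}, $\omega_j=\Theta_j(\omega)=\Theta_0(\vartheta^{j}\omega)$ (equivalently Lemma~\ref{lem:IID_cts}.(c), $Y_z=Y_{z-1}\circ\vartheta$), one computes $(\vartheta\omega)_k=\Theta_k(\vartheta\omega)=\Theta_0(\vartheta^{k+1}\omega)=\omega_{k+1}$, not $(\vartheta\omega)_{j+1}=\omega_j$ as you claim; this is also the direction forced by the translation covariance $\nu_\omega\circ\tau_1=\nu_{\vartheta\omega}$ of the state itself. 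Substituting the correct relation gives $h_{j+1,j+2}(\vartheta\omega)=P\bigl((\vartheta\omega)_{j+1},(\vartheta\omega)_{j+2}\bigr)=P(\omega_{j+2},\omega_{j+3})$, so under the paper's coordinate convention the covariance identity reads $h_{j,j+1}(\omega)=h_{j+1,j+2}(\vartheta^{-1}\omega)$, equivalently $h_{j+1,j+2}(\omega)=h_{j,j+1}(\vartheta\omega)$ --- which is how it appears in the introduction, whereas the displayed statement of the lemma uses $\vartheta$; the paper is not internally consistent on this point, and your proof silently adopts the reversed convention that makes the displayed identity literally true. The mathematics you use (site-independence of $P$ plus the shift of coordinates) is exactly right; just derive the shift relation from the stated convention rather than positing it, and state the identity with the direction of $\vartheta$ that convention actually produces.
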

\begin{proof}
    This follows from the coordinate expression of $h_{j, j+1}(\omega)$ in equation~(\ref{eqn:Parent_Hamiltonian}).
\end{proof}

We recall some notation from \cite{NachtergaeleSims}. Let $\mathscr{P}_0(\Z)$ denote the finite subsets of $\Z$ and $\Phi:\mathscr{P}_0(\Z) \to \mathcal{A}_\Z^{\loc}$ be a mapping with  $\Phi(Z)=\Phi(Z)^*\in \mathcal{A}_Z$ for all $Z\in \mathscr{P}_0(\Z)$. Recall that for our model, the single site algebra is $\mathcal{A} = \M_3$. For some $\lambda>0$, the $\lambda$-norm of $\Phi$ is the quantity 
    \begin{equation}
        \|\Phi\|_{\lambda}:= \sup_{x\in \Z} \sum_{X\Subset \Z} 9^{|X|}|X|\|\Phi(X)\| e^{\lambda \diam(X)}\,.
    \end{equation} We remark that the numerical quantity comes from the fact that the dimension of the single-site Hilbert space for the AKLT model is $3$. In \cite{NachtergaeleSims}, this is replaced with a more general bound. 

Note that since our parent Hamiltonian $h$ is nearest neighbor, for all $x\in \Z$, the sum in the $\lambda$-norm collapses to a sum over two sets, namely $\{x,x+1\}$ and $\{x-1,x\}$. Then since each $h_{j,j+1}(\omega)$ is a projector, $\|h_{j, j+1}(\omega)\| \equiv 1$. Therefore, the $\lambda$-norm is finite and in fact we have the following.
\begin{lem}
    Let $\{h_{j, j+1}(\omega) \colon j\in \Z, \omega \in \Omega\}$ be the IID parent Hamiltonian given by equation~(\ref{eqn:Parent_Hamiltonian}). Then, 
    \begin{equation}
        \sup_{\omega \in \Omega}\|h\|_{\lambda} = 324e^{2\lambda}\,,\quad \forall\lambda >0\,.
    \end{equation}
\end{lem}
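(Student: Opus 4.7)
The computation is essentially direct, so the plan is simply to unpack the definition of the $\lambda$-norm at a fixed site and verify that only two finite subsets of $\Z$ contribute nontrivially.

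First, I would observe that because $\Phi(X) := h_X(\omega)$ is prescribed by equation~(\ref{eqn:Parent_Hamiltonian}) only for bonds $X = \{j, j+1\}$, the mapping $\Phi$ is \emph{supported on nearest-neighbor pairs}: if $|X|\neq 2$ or $X$ is not of the form $\{j,j+1\}$, then $\Phi(X)=0$ and the term drops out of the sum. Hence, for any fixed $x\in \Z$, only the two sets $X_- := \{x-1,x\}$ and $X_+ := \{x, x+1\}$ produce nonzero summands.

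Next, I would check that each $h_{j,j+1}(\omega)$ is a nonzero orthogonal projection, so $\|h_{j,j+1}(\omega)\|_{\mathcal{A}}=1$ uniformly in $\omega$. This follows from the formula $h_{j,j+1}(\omega) = \text{proj}(\mathcal{G}_{j,j+1}^\perp)$ in Corollary~\ref{cor:Parent_Hamiltonian}: since $\omega_j,\omega_{j+1}\in (0,\tfrac{\pi}{2})\cup(\tfrac{\pi}{2},\pi)$, Lemma~\ref{lem:delta_MPS_injectivity} and the explicit basis~(\ref{eqn:gj_basis}) give $\dim \mathcal{G}_{j,j+1}=4$ in the ambient $9$-dimensional space $\C^3\otimes \C^3$, so the orthogonal complement is a proper nontrivial subspace and the projector onto it has operator norm $1$.

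Finally, I would substitute $|X_{\pm}|=2$, $\diam(X_{\pm})=2$, and $\|\Phi(X_{\pm})\|=1$ into the definition of the $\lambda$-norm:
\[
\sum_{X\ni x} 9^{|X|}\,|X|\,\|\Phi(X)\|\, e^{\lambda \diam(X)} \;=\; 2\cdot 9^{2}\cdot 2 \cdot 1 \cdot e^{2\lambda} \;=\; 324\, e^{2\lambda}.
\]
Since this value is independent of both $x\in\Z$ and $\omega\in\Omega$, taking the supremum over $x$ and then over $\omega$ leaves it unchanged, yielding $\sup_{\omega}\|h\|_\lambda = 324 e^{2\lambda}$. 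There is no real obstacle here; the only point requiring a pointer back to earlier results is the uniform bound $\|h_{j,j+1}(\omega)\|=1$, which rests on the parameter restriction $\omega_j\notin\{0,\tfrac{\pi}{2}\}$ enforced throughout Section~\ref{subsec:IID_AKLT}.
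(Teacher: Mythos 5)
Your proposal is correct and follows essentially the same route as the paper: the interaction is supported only on the two bonds $\{x-1,x\}$ and $\{x,x+1\}$ containing a given site, each term is a nontrivial orthogonal projector (norm exactly $1$, justified as you do via $\dim\mathcal{G}_{j,j+1}=4<9$), and plugging into the $\lambda$-norm gives $2\cdot 9^{2}\cdot 2\cdot e^{2\lambda}=324\,e^{2\lambda}$ uniformly in $x$ and $\omega$. The only caveat is your step $\diam(X_{\pm})=2$: under the paper's stated metric $d(n,m)=|n-m|$ a nearest-neighbor pair has diameter $1$ (which would give $324\,e^{\lambda}$), so the exponent $2\lambda$ is a convention inherited from the paper's own statement rather than an error introduced by you.
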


Now, we consider the quantum spin system with local Hamiltonians given by 
    \begin{equation}
        H^{\Lambda}(\omega) = \sum_{Z\subset \Lambda} h_{Z}(\omega) = \sum_{\substack{j\in \Z\colon\\ \{j, j+1\}\subset \Lambda}} h_{j,j+1}(\omega)\in \mathcal{A}_{\Lambda} \text{ where } \Lambda\Subset\Z \,.
    \end{equation}
Let $\alpha^{\Lambda}_{t;\omega}:= e^{itH^{\Lambda}(\omega)}(\,\cdot\,)e^{-itH^\Lambda(\omega)}$ be the dynamics generated by $H^{\Lambda}(\omega)$. We recall that $\alpha^{\Lambda}_{t;\omega}$ is a one-parameter group of automorphisms of $\mathcal{A}_\Lambda$ with respect to $t$. We now come to an interesting observation. 
\begin{lem}\label{lem:dynamics_covariance}
    Let $H^{\Lambda}(\omega)$ be the family of local Hamiltonians given by the IID parent Hamiltonians. Let $\alpha^{\Lambda}_{t;\omega}$ denote the local dynamics generated by $H^{\Lambda}(\omega)$. Then, for all $k\in \Z$, we have
    \begin{equation}\label{eqn:dynamics_covariance}
        \alpha^{\Lambda}_{t;\omega} = \tau_k \circ \alpha^{\Lambda - k}_{t,\vartheta^k\omega} \circ \tau_{-k}\,.
    \end{equation}
\end{lem}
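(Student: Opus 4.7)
The plan is to reduce the claim at the level of automorphisms to an equality at the level of Hamiltonians, namely
\[
    \tau_k\bigl(H^{\Lambda-k}(\vartheta^k\omega)\bigr) \;=\; H^{\Lambda}(\omega),
\]
and then pass to the exponentials. First I would iterate Lemma~\ref{lem:parent_covariance} to obtain the site-shifted covariance
\[
    \tau_k\bigl(h_{j,j+1}(\vartheta^k \omega)\bigr) \;=\; h_{j+k,\,j+k+1}(\omega)\qquad (j\in\mathbb{Z},\, k\in\mathbb{Z}),
\]
which is simply the statement that the projector onto $\mathcal{G}_{j,j+1}^\perp$ transforms under translation exactly the way the ground state space itself does, thanks to translation-covariance of $\nu_\omega$ (Theorem~\ref{thm:vec_state_pure} and Corollary~\ref{cor:Parent_Hamiltonian}).

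Next I would sum this identity over the bonds of $\Lambda-k$. Since $\tau_k$ is a $*$-automorphism it passes through finite sums, so
\[
    \tau_k\bigl(H^{\Lambda-k}(\vartheta^k\omega)\bigr) \;=\; \sum_{\{j,j+1\}\subset \Lambda-k} \tau_k\bigl(h_{j,j+1}(\vartheta^k\omega)\bigr) \;=\; \sum_{\{j,j+1\}\subset \Lambda-k} h_{j+k,j+k+1}(\omega).
\]
The change of variable $j' = j+k$ identifies the set $\{\{j,j+1\}\subset \Lambda-k\}$ with $\{\{j',j'+1\}\subset \Lambda\}$, whence the right-hand side equals $H^\Lambda(\omega)$ as desired.

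Finally, since $\tau_k \in \Aut(\mathcal{A}_\Z)$ is in particular a continuous $*$-homomorphism, it intertwines the functional calculus: $\tau_k(e^{itH^{\Lambda-k}(\vartheta^k\omega)}) = e^{it\,\tau_k(H^{\Lambda-k}(\vartheta^k\omega))} = e^{itH^\Lambda(\omega)}$. Conjugating an arbitrary $a\in\mathcal{A}_\Lambda$ and using $\tau_{-k} = \tau_k^{-1}$ yields
\[
    \alpha^{\Lambda}_{t;\omega}(a) \;=\; e^{itH^\Lambda(\omega)}\, a\, e^{-itH^\Lambda(\omega)} \;=\; \tau_k\!\left( e^{itH^{\Lambda-k}(\vartheta^k\omega)}\,\tau_{-k}(a)\,e^{-itH^{\Lambda-k}(\vartheta^k\omega)} \right) \;=\; \bigl(\tau_k\circ\alpha^{\Lambda-k}_{t;\vartheta^k\omega}\circ\tau_{-k}\bigr)(a),
\]
which is (\ref{eqn:dynamics_covariance}). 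No genuine obstacle arises; the only subtlety is bookkeeping the index shift in Lemma~\ref{lem:parent_covariance} (which as stated asserts equality of operators with the same ``shape'' at translated sites and at shifted disorder parameters), so I would spend one line being explicit that iterating this $k$ times produces the displayed site-$k$ version used above.
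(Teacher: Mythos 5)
Your proof is correct and follows essentially the same route as the paper: reduce to the Hamiltonian-level identity obtained by summing the bond covariance from Lemma~\ref{lem:parent_covariance} over the bonds of $\Lambda-k$, push $\tau_k$ through the exponential (the paper does this via the Taylor series, using that $\mathcal{A}_\Lambda$ is a matrix algebra), and then conjugate. If anything, your explicit form $\tau_k\bigl(h_{j,j+1}(\vartheta^k\omega)\bigr)=h_{j+k,j+k+1}(\omega)$ is the precise version of the bond covariance that the paper records only loosely (it states an equality of bond terms without writing the translation automorphism, with the disorder shift placed on the other argument), and it is exactly the identity needed to produce~(\ref{eqn:dynamics_covariance}) as stated.
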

This essentially follows from the relation 
\[
    h_{j,j+1}(\omega) = h_{j+k, j+k+1}(\vartheta^k \omega).
\] For details, see our preprint \cite{RoonSchenker_BGA}. 

Furthermore, since $h_{j,j+1}(\omega)$ is nearest neighbor with a constant operator norm for all $\omega$, the thermodynamic limit $\alpha^{Z}_{t;\omega}\in \Aut(\mathcal{A}_{Z})$ exists strongly and forms a strongly continuous group of automorphisms by \cite{NachtergaeleOgataSims}. In fact, given any increasing sequence of finite volumes $\Lambda_1 \subset \Lambda_2 \subset \cdots \subset \Lambda_n\subset \cdots $ so that $\bigcup_n \Lambda_n = \Z$, then the following limit exists for all $a\in \mathcal{A}_{\Z}^{\loc}$
\begin{equation}
    \alpha^{\Z}_{t;\omega}(a) = \lim_{n\to \infty} \alpha^{\Lambda_n}_{t;\omega}(a)
\end{equation} but by the co-variance property in equation~\ref{eqn:dynamics_covariance} we obtain the following corollary: 
\begin{lem}\label{lem:thermo_covariance}
    The thermodynamic limit of the local dynamics $\alpha^{\Lambda}_{t;\omega}$ is a strongly continuous semigroup of automorphisms in $t$, and the following co-variance condition holds for all $k\in \Z$ all $t\in \mathbb R$ and all $\omega \in \Omega$:
        \begin{equation}\label{eqn:thermo_covariance}
            \alpha^{\Z}_{t;\omega} = \tau_k\circ \alpha^{\Z}_{t;\vartheta^k \omega} \circ \tau_{-k}\,.
        \end{equation}
    Moreover, the state $\nu_\omega$ obtained by applying equation~(\ref{eqn:delta_state}) to $\omega$ is a pure state satisfying \[
        \nu_\omega \circ \alpha^{\Z}_{t;\omega} = \nu_\omega,\, \forall \omega\in \Omega\,.
    \]
\end{lem}
\begin{proof}
    All we need to show is the `moreover' part of the statement. Purity follows immediately from Theorem~\ref{thm:vec_state_pure}. The fact that $\nu_\omega$ is invariant for the dynamics follows by the fact that $\nu_\omega(h_{j,j+1}(\omega)) = 0$ for all $j$, by construction. 
\end{proof}

\begin{cor}\label{cor:spec_const}
    Let $\alpha^{\Z}_{t;\omega}$ be the disordered thermodynamic limit of the local dynamics $\{\alpha^{\Lambda}_{t;\omega}\}$ defined with respect to the IID nearest neighbor Hamiltonian from~(\ref{eqn:Parent_Hamiltonian}). Let $\nu_{\omega}$ be the invariant state gotten by applying equation~(\ref{eqn:delta_state}) to $\omega$ and let also $(\mathcal{H}_\omega, \pi_\omega, \Psi_{\omega})$ be the GNS representation of $\mathcal{A}_{\Z}$ with respect to $\nu_{\omega}$. Let also $S_{t;\omega}$ denote the one-parameter group induced on $\mathcal{H}_\omega$ by $\alpha^{\Z}_{t;\omega}$. There exist unitaries $\{U_k:\mathcal{H}_{\vartheta^k \omega} \to \mathcal{H}_{ \omega}\}_{k \in \Z}$ so that the GNS Hamiltonian $H^{GNS}_\omega$ generating $S_{t;\omega}$ is $\vartheta$-covariant. That is, 
    \begin{equation}\label{eqn:covariant_GNS_gen}
        H_\omega^{GNS} = U_k^* H_{\vartheta^k \omega}^{GNS}U_k\,;\, \forall \omega\in \Omega, k\in \Z\,.
    \end{equation} In particular, $\spec(H^{GNS}_{\omega})$ spectrum is constant for almost every $\omega$. 
\end{cor}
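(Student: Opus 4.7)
The strategy is to construct $U_k$ via the standard GNS intertwining trick afforded by translation covariance of $\psi_\omega$, and then to convert the dynamical covariance of Corollary~\ref{cor:thermo_covariance} into a conjugation of the GNS Hamiltonians by means of Stone's theorem. Since $\psi_\omega \circ \tau_k = \psi_{\vartheta^k\omega}$, the identity
$$\|\pi_{\vartheta^k\omega}(\tau_{-k}(a))\Psi_{\vartheta^k\omega}\|^2 = \psi_{\vartheta^k\omega}(\tau_{-k}(a^*a)) = \psi_\omega(a^*a) = \|\pi_\omega(a)\Psi_\omega\|^2$$
shows that the densely defined assignment $\pi_\omega(a)\Psi_\omega \mapsto \pi_{\vartheta^k\omega}(\tau_{-k}(a))\Psi_{\vartheta^k\omega}$ is an isometry with dense range, and extends to a unitary; its adjoint is the desired $U_k$.

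For the covariance of the generators, $\alpha^{\Z}_{t;\omega}$-invariance of $\psi_\omega$ (Corollary~\ref{cor:thermo_covariance}) lets me define $S_{t;\omega}\pi_\omega(a)\Psi_\omega := \pi_\omega(\alpha^{\Z}_{t;\omega}(a))\Psi_\omega$, which extends to a strongly continuous one-parameter unitary group on $\mathcal{H}_\omega$; Stone's theorem then furnishes the self-adjoint generator $H^{GNS}_\omega$. Feeding the covariance identity $\alpha^{\Z}_{t;\omega} = \tau_k \circ \alpha^{\Z}_{t;\vartheta^k\omega} \circ \tau_{-k}$ into a direct computation on the dense GNS cyclic subspace yields the intertwining of $S_{t;\omega}$ and $S_{t;\vartheta^k\omega}$ by $U_k$; differentiating at $t=0$ produces the claimed conjugation $H^{GNS}_\omega = U_k^* H^{GNS}_{\vartheta^k\omega} U_k$.

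Finally, since unitary conjugation preserves spectrum, the covariance forces the set-valued map $\omega \mapsto \spec(H^{GNS}_\omega)$ to be $\vartheta$-invariant at every point of $\Omega$. To upgrade to almost sure constancy it suffices, for each open interval $I \subset \mathbb{R}$ with rational endpoints, to observe that $A_I := \{\omega : \spec(H^{GNS}_\omega) \cap I \neq \emptyset\}$ is a $\vartheta$-invariant event, so ergodicity of $\vartheta$ gives $\P(A_I) \in \{0,1\}$; intersecting over the countable collection of such rational intervals pins down a closed set $\Sigma \subset \mathbb{R}$ with $\spec(H^{GNS}_\omega) = \Sigma$ almost surely. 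The main obstacle is measurability of each $A_I$ in the Borel $\sigma$-algebra on $\Omega$, since the GNS Hilbert spaces $\mathcal{H}_\omega$ themselves vary with $\omega$ and so the resolvents and spectral projections of $H^{GNS}_\omega$ must be organized inside a measurable bundle rather than compared fiber-by-fiber. My plan is to address this by leveraging the $C^*$-algebraic bundle from Proposition~\ref{prop:C*Bundle} together with the norm-continuity of $h_{j,j+1}(\omega)$ in $\omega$ (immediate from continuity of the matrix entries of $V_\omega$), obtaining strong resolvent continuity of the finite-volume Hamiltonians and, by passage to the thermodynamic limit via the bundle structure, measurability of each $A_I$, at which point the ergodic dichotomy closes the argument.
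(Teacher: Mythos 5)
Your construction of the unitaries and the covariance of the generators is correct and coincides with the paper's: the paper defines $U_k$ directly on the cyclic subspace by $\pi_{\vartheta^k\omega}(a)\Psi_{\vartheta^k\omega}\mapsto\pi_\omega(\tau_k(a))\Psi_\omega$, which is exactly the adjoint of the isometry you write down, and the conjugation relation for the generators follows from Corollary~\ref{cor:thermo_covariance} together with Stone's theorem, as you say. The endgame also differs only in flavor: you run the ergodic dichotomy over the events $A_I=\{\omega:\spec(H^{GNS}_\omega)\cap I\neq\emptyset\}$ for rational intervals, whereas the paper follows Kirsch--Martinelli and shows that the traces of the covariant spectral projections $\chi_{(\lambda,\mu)}(H^{GNS}_\omega)$ are $\vartheta$-invariant measurable functions; either variant closes the argument once measurability in $\omega$ of the spectral data is in hand.

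The genuine gap is precisely the point you defer. Measurability of $A_I$ is the only nontrivial ingredient behind ``for almost every $\omega$,'' and your proposed mechanism --- strong resolvent continuity of the finite-volume Hamiltonians plus ``passage to the thermodynamic limit via the bundle structure'' of Proposition~\ref{prop:C*Bundle} --- does not deliver it as stated. The finite-volume Hamiltonians act on fixed matrix algebras, and their spectra do not control the spectrum of $H^{GNS}_\omega$, which acts on the $\omega$-dependent space $\mathcal{H}_\omega$ and depends on the state $\psi_\omega$, not only on the interaction (the delicate relation between finite-volume data and the bulk GNS spectrum is the whole point of Theorem~\ref{thm:IID_AKLT_Gapless}); strong resolvent convergence between operators on different Hilbert spaces is not even defined without specifying identifications; and the bundle of Proposition~\ref{prop:C*Bundle} organizes the represented algebras $\pi_\omega(\mathcal{A}_\Z)$, not the spectral subspaces of the unbounded generator. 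What actually closes the gap, and what the paper invokes, is weak measurability of the GNS data: for local $a,b$ the matrix elements $\psi_\omega\bigl(a^*\alpha^{\Z}_{t;\omega}(b)\bigr)=\lim_{\Lambda}\psi_\omega\bigl(a^*\alpha^{\Lambda}_{t;\omega}(b)\bigr)$ are pointwise limits of $\omega$-continuous (hence measurable) functions, by weak* continuity of $\psi_\omega$ and continuity of $\omega\mapsto H^{\Lambda}(\omega)$ on the nondegenerate parameter region; resolvents and then the spectral projections of $H^{GNS}_\omega$ inherit measurability of their matrix elements on the canonical dense vectors $\pi_\omega(a)\Psi_\omega$ via Laplace transform and Stone's formula. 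With that established, either your $A_I$ dichotomy or the paper's trace argument (where norm separability of $\mathcal{A}_\Z$ supplies a measurably varying orthonormal family) finishes the proof; so the architecture is right, but the measurability step must be rerouted through the GNS matrix elements rather than through finite-volume resolvent continuity.
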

This is an immediate consequence of the existence of the thermodynamic limit and the fact that $\nu$ is a covariant ground state (see \cite{RoonSchenker_BGA} for more details). 

To recap, up to this point we have shown that with respect to our nearest neighbor Hamiltonian $h_{j, j+1}(\omega)$, where the parameter values are IID random variables that change spatially in $j$, then $h_{j, j+1}(\omega)$ admits a translation co-variant thermodynamic limit state which is a ground state of a co-variant automorphism group of dynamics on the quasilocal algebra of observables. In this case, the spectrum of the GNS Hamiltonian is deterministic. Recall that the GNS Hamiltonian has a \emph{spectral gap} if there is $\delta>0$ so that the interval $(0,\delta)\subset \res(H_\omega^{GNS})$ the resolvent set. The \emph{bulk gap} is then the deterministic constant 
    \[
        \gamma:= \sup\{\delta >0 \colon (0,\delta)\subset \res(H_\omega^{GNS})\}\,.
    \] since $\spec(H_{\omega}^{GNS})$ is deterministic. We have arrived at the following theorem
\begin{thm}[Bulk Gap Alternative]
    The bulk gap of the IID AKLT model is deterministic.
\end{thm}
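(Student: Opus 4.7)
The plan is to observe that the bulk gap is an explicit Borel function of the spectrum of the GNS Hamiltonian, so constancy of the spectrum almost surely (which was already obtained in Corollary~\ref{cor:spec_const}) immediately implies constancy of the bulk gap almost surely.

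More concretely, I would first invoke Corollary~\ref{cor:spec_const} to fix a $\vartheta$-invariant set $\Omega_0\subset \Omega$ of full $\P$-measure on which $\spec(H_\omega^{GNS})$ equals some deterministic closed subset $\Sigma\subset [0,\infty)$. Next I would record that $0\in \Sigma$: indeed $H_\omega^{GNS}\Psi_\omega = 0$ by construction of the GNS dynamics from the invariant ground state $\nu_\omega$ (Corollary~\ref{cor:thermo_covariance}), where $\Psi_\omega$ is the cyclic vector. Since $\res(H_\omega^{GNS}) = \C\setminus \spec(H_\omega^{GNS})$, the defining formula
\[
\gamma(\omega) \;:=\; \sup\{\delta > 0 : (0,\delta) \subset \res(H_\omega^{GNS})\} \;=\; \sup\{\delta > 0 : (0,\delta)\cap \Sigma = \emptyset\}
\]
depends on $\omega$ only through $\spec(H_\omega^{GNS})$. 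Since $\spec(H_\omega^{GNS})=\Sigma$ for every $\omega\in \Omega_0$, the right-hand side is a constant on $\Omega_0$, namely the distance from $0$ to $\Sigma\setminus\{0\}$ (with the convention $\sup\emptyset = 0$).

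There is no substantial obstacle here; the real work was already carried out in the proof of Corollary~\ref{cor:spec_const}, which used the $\vartheta$-covariance of the GNS Hamiltonian (equation~(\ref{eqn:covariant_GNS_gen})) to transport spectral projections between fibers and thereby showed that traces of spectral projections onto rational intervals are $\vartheta$-invariant measurable functions, hence constant almost surely by ergodicity. The present theorem is the immediate corollary obtained by specializing to spectral projections of intervals of the form $(0,\delta)$ with $\delta\in \mathbb Q_{>0}$, and then taking a supremum. The only mild point worth stating explicitly is that $0$ is a.s.\ in the spectrum so that the supremum is taken over a meaningful family of intervals; this is automatic from the existence of the cyclic ground-state vector $\Psi_\omega$.
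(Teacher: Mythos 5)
Your proposal is correct and takes essentially the same route as the paper: the theorem is stated there as an immediate consequence of Corollary~\ref{cor:spec_const}, the bulk gap $\gamma$ being defined purely as a function of the almost surely constant set $\spec(H_\omega^{GNS})$, exactly as you argue. Your added remark that $0\in\spec(H_\omega^{GNS})$ via the cyclic ground-state vector is a harmless elaboration rather than a genuinely different approach.
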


In fact, if $0$ and $\tfrac{\pi}{4}$ are in the support of $\P$, the bulk gap closes with probability one as we will show now. We write $S_{x}^z$ the spin-z matrix for a spin-1 particle at site $x\in \mathbb{Z}$. 

\begin{lem}\label{lem:tee_up_contradiction}
    If $H^{GNS}_{\omega}$ has a nonzero spectral gap $\gamma >0$ (which is necessarily deterministic), then for any $\ell \ge 1$, $\lambda>0$ and any $x\in \Z$, we have
    \[
        |\nu_\omega(S_x^z S_{x+\ell}^z))| \le 3e^{-\mu_{\lambda} \ell}\,.
    \] where $\mu = \frac{\gamma \lambda}{4\|h\|_{\lambda} + \gamma}$
\end{lem}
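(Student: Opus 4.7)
The plan is a direct appeal to the Exponential Clustering Theorem of Nachtergaele--Sims, whose hypotheses have essentially been assembled in the preceding lemmas. Specifically, I would identify the input data required: a short-range interaction with finite $\lambda$-norm, a spectral gap above the ground state energy of the GNS Hamiltonian, and the vanishing condition $P_0 \pi_\omega(S^z_x)\Psi_\omega = 0$. The first is the bound $\|h\|_\lambda \le 324 e^{2\lambda}$ already established; the second is the standing assumption of this lemma; the third is the content of the immediately preceding lemma, which moreover follows from the purity of $\nu_\omega$ (Theorem~\ref{thm:vec_state_pure}) together with the one-site computation $\nu_\omega(S^z_x) = \tfrac12 \cos^2(\omega_x) \tr[\sigma^z] = 0$.

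With those ingredients in place, I would invoke \cite[Theorem 2]{NachtergaeleSims} applied in the GNS representation with the observables $A = \pi_\omega(S^z_x)$ and $B = \pi_\omega(S^z_{x+\ell})$, which are supported on singletons at distance $\ell$. The theorem yields an estimate of the form
\[
    \bigl|\<\Psi_\omega,\, \pi_\omega(S^z_x)\,\pi_\omega(S^z_{x+\ell})\,\Psi_\omega\>\bigr| \;\le\; C\, \|S^z_x\|\,\|S^z_{x+\ell}\|\, e^{-\mu_\lambda \ell},
\]
with exponent $\mu_\lambda = \gamma \lambda / (4\|h\|_\lambda + \gamma)$ and a constant $C$ coming from the Lieb--Robinson bound combined with the decay rate of the single-particle function. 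Since $\|S^z_x\| = 1$, $\nu_\omega(S^z_x S^z_{x+\ell}) = \<\Psi_\omega, \pi_\omega(S^z_x S^z_{x+\ell}) \Psi_\omega\>$, and $\nu_\omega(S^z_x) = 0$ eliminates the disconnected piece typically present in the clustering statement, this reduces directly to the desired bound $|\nu_\omega(S^z_x S^z_{x+\ell})| \le 3 e^{-\mu_\lambda \ell}$.

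The main (and essentially the only) obstacle is bookkeeping: one must check that the constants $C$ and $\mu$ produced by \cite[Theorem 2]{NachtergaeleSims} match the claimed values, which requires tracking the single-site Hilbert space dimension (the factor $9^{|X|}$ in the $\lambda$-norm) and the nearest-neighbor structure through the Lieb--Robinson estimate. There is no conceptual difficulty beyond citing the theorem and verifying that the vanishing condition on $\pi_\omega(S^z_x)\Psi_\omega$ -- which comes from purity, hence irreducibility of the GNS representation -- puts the pair $(A,B)$ in the hypotheses of the clustering bound.
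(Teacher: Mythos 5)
Your proposal matches the paper's argument: the paper likewise treats this lemma as an immediate application of \cite[Theorem 2]{NachtergaeleSims}, having just verified the three hypotheses you list (the finite $\lambda$-norm of the nearest-neighbor parent Hamiltonian, the assumed gap $\gamma>0$, and the condition $P_0\pi_\omega(S^z_x)\Psi_\omega=0$ from the preceding lemma, which rests on purity and the one-site computation $\nu_\omega(S^z_x)=0$). The only remaining work in both versions is the constant bookkeeping from the clustering theorem, so your proposal is correct and takes essentially the same route.
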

\begin{proof}
    We aim to verify the conditions of \cite[Theorem 2]{NachtergaeleSims}. That is, we need to show that the element $\pi_\omega(S^z_x)\Psi_\omega$ in the GNS representation of $\A_{\Z}$ with respect to $\nu_\omega$, is orthogonal to the kernel of the GNS Hamiltonian, $H^{GNS}_\omega$. By Theorem~\ref{thm:vec_state_pure}, we note that for any $\omega$ in range, $\nu_\omega$ is pure. This implies, the GNS representation is irreducible hence $\pi_\omega(\A_{Z})' = \C\one$ (\cite[Theorem 5.1.6]{Murphy}). 
    
    Notice that for any $\delta\in (0, \frac{\pi}{2})\cup(\frac{\pi}{2}, \pi)$, the operator  $E_{S_x^z, \delta}(\one_{2\times 2}) =  \cos^2(\delta)\sigma^z$ where $\sigma^z$ is the Pauli-z matrix. Examine the inner product
    \begin{align*}
        \< \, \Psi_{\omega}| \pi_{\omega}(S_x^z) \, \Psi_{\omega}\> &= \nu_{\omega}(S_x^z)= \frac{1}{2}\tr[ E_{S_x^z, \omega_x}(\one_{2\times 2})]= \frac{1}{2}\cos^2(\omega_x)\tr[\sigma^z]=0\,.
    \end{align*} This implies that $\pi_{\omega}(S^z_x)\,\Psi_{\omega}$ is orthogonal to the cyclic vector $\Psi_{\omega}$. Since $\nu_\omega$ is pure,  $\ker{H^{GNS}}$ is one-dimensional which easily follows from the commutant relation above. Thus the projection condition of Theorem 2 in \cite{NachtergaeleSims} is satisfied.  
\end{proof}

We are now ready to state our third main result
\begin{thm}[Theorem~\ref{thmx:Gapless_Hamiltonian}]\label{thm:IID_AKLT_Gapless}
    Suppose that $\{\Theta_z\}_{z\in \Z}$ is a bi-infinite family of IID random variables on a probability space $(\Omega, \P)$. Assume the following
    \begin{enumerate}[label = \roman*.)]
        \item The family $\Theta_z$ has a nonatomic distribution supported in $[0, \tfrac{\pi}{4}]$ which avoids the endpoints. That is, additionally assume $\P[\Theta_x = 0] = 0$, second $\P[\Theta_x = \tfrac{\pi}{4}] = 0$. 
        \item For all $\delta>0$ small enough, assume that $p_0(\delta) := \P[\Theta_0 <\delta] >0$ and that $\P[\Theta_0 = 0] = 0$.
    \end{enumerate}
    Let $\nu_\omega = \nu_{(\omega_j)_{j\in \Z}}$ where $\omega_j = \Theta_j(\omega) = \vartheta^j \Theta_0(\omega)$ be the state constructed as in equation~(\ref{eqn:delta_state}) where $\vartheta$ is the ergodic shift. Then the following hold.
        \begin{enumerate}[label = \alph*.)]
            \item The state $\nu_\omega$ is a translation covariant state with small correlations. In particular, the triple $(\frac{1}{2}\tr,E_{a,\omega} ,\one)$ defines a transfer apparatus for $\nu_\omega$ on the trivial bundle $[0, \frac{\pi}{4}] \times \M_2$. 
            \item Furthermore $\nu_\omega$ admits a nearest neighbor parent Hamiltonian $h$ which is translation co-variant with respect to the ergodic shift. Thus Corollary~\ref{cor:spec_const} implies that the GNS Hamiltonian has a spectrum which is deterministic on a set $\Omega_s$ with $\P(\Omega_s) =1$. 
            \item Let $(\delta_n)_{n=1}^\infty$ be a sequence in $[0, \frac{\pi}{4})$ so that $\delta_n \downarrow 0$ as $n\to \infty$. Then, 
                \begin{equation}
                    \P\left(\forall \ell\in \mathbb N,\, \exists x\in \Z\colon\, |\cos(2\delta_\ell)|^\ell \le |\nu_\omega(S_x^z S_{x+\ell}^z)|\right) =1\,.
                \end{equation}
            \item Therefore, taking $a-c.)$ together implies that the spectral gap of $H^{GNS}$ vanishes on a set with probability one.  
        \item By Birkhoff's Ergodic theorem, the Lyapunov exponent \[
            \Lambda:=\lim_{\ell \to \infty}\log\left(|\nu_\omega(S_x^z\, S_{x+\ell}^z)|^{\tfrac{1}{\ell}}\right)<0\,.
        \] In particular, if $\Theta_x \sim f \tfrac{1}{\pi}dx$ is an $L^\infty$ density function, then we have the following quantitative estimate.
        \begin{equation}
            \Lambda = \E\log(\cos(2\omega)) \le -\frac{\pi \|f\|_\infty}{4}\,\log(2)\,.       
        \end{equation}
        \end{enumerate} 
\end{thm}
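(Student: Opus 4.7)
For (a), my plan is to invoke the Construction Theorem~\ref{thm:construction} on the trivial Banach bundle $\Omega\times\M_2$ with the apparatus $(\tfrac{1}{2}\tr,\{E_{a,\omega}\}_{a\in\M_3},\one)$. The maps $E_{a^*a,\omega}(b)=V_\omega^*(a^*a\otimes b)V_\omega$ are completely positive by their Kraus form with $\|E_{a^*a,\omega}\|\le\|a^*a\|$, the map $E_{\one,\omega}$ is simultaneously unital and trace preserving (as noted after~(\ref{eqn:AKLT_transfer})), $\tfrac{1}{2}\tr$ is completely positive with $\tfrac{1}{2}\tr(\one)=1$, and the back-fibering identity~(\ref{eqn:back_fibering}) is immediate from trace preservation; continuity in $\omega$ is clear from~(\ref{eqn:V_matrix}). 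Part (b) then follows by assembling already established results: assumptions (i)--(ii) combined with IID force $\omega_j\in(0,\pi/4]$ for every $j$ almost surely, so Corollary~\ref{cor:Parent_Hamiltonian} supplies the nearest-neighbor parent Hamiltonian; translation covariance is Lemma~\ref{lem:parent_covariance}; and almost sure constancy of $\spec(H^{GNS}_\omega)$ follows from Corollary~\ref{cor:spec_const}.

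For (c), the first step is to compute the two-point correlator explicitly. A direct Kraus computation gives $E_{S^z,\omega}(\one)=\cos^2(\omega)\,\sigma^z$, while $\sigma^z$ is an eigenvector of $E_{\one,\omega}$ with eigenvalue $-\cos(2\omega)$ (recorded in the diagonal matrix $M$ displayed after~(\ref{eqn:AKLT_transfer})). Iterating will yield
\begin{equation*}
    \nu_\omega(S_x^z\,S_{x+\ell}^z)\;=\;(-1)^\ell\cos^2(\omega_x)\cos^2(\omega_{x+\ell})\prod_{j=x+1}^{x+\ell-1}\cos(2\omega_j).
\end{equation*}
Setting $c=\cos^2(\delta)$, the inequality $c^2\ge 2c-1$ reduces to $(c-1)^2\ge 0$, so $\cos^4(\delta)\ge\cos(2\delta)$ on $[0,\pi/4]$. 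Therefore on the event $A_\ell(x):=\{\omega_{x+j}<\delta_\ell:0\le j\le \ell\}$ the correlator satisfies $|\nu_\omega(S_x^z S_{x+\ell}^z)|\ge\cos^4(\delta_\ell)\cos(2\delta_\ell)^{\ell-1}\ge\cos(2\delta_\ell)^\ell$. Assumption (ii) and IID give $\P(A_\ell(x))=p_0(\delta_\ell)^{\ell+1}>0$, and the events $\{A_\ell(k(\ell+1)):k\in\N\}$ are independent with equal probability. The second Borel--Cantelli lemma forces infinitely many realizations almost surely, and intersecting the countably many resulting full-measure events over $\ell$ establishes (c).

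Part (d) will then follow by contradiction. Suppose the almost sure constant bulk gap is $\gamma>0$. Lemma~\ref{lem:tee_up_contradiction} (a consequence of the Nachtergaele--Sims exponential clustering theorem) applies to our nearest-neighbor Hamiltonian, and indeed to any translation-covariant finite-range Hamiltonian with $\nu_\omega$ as ground state, to yield $|\nu_\omega(S_x^z S_{x+\ell}^z)|\le 3 e^{-\mu\ell}$ uniformly in $x$, for some $\mu>0$. Since $\delta_\ell\downarrow 0$ implies $\cos(2\delta_\ell)\to 1$, we have $|\cos(2\delta_\ell)|^\ell > 3e^{-\mu\ell}$ for all sufficiently large $\ell$, contradicting the lower bound from (c). Hence $\gamma=0$ almost surely.

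Finally, (e) is Birkhoff's ergodic theorem applied to the function $g(\omega):=\log\cos(2\omega_0(\omega))$. The integrability check $\mathbb{E}|g|<\infty$ is the only nontrivial point: $|\log\cos(2x)|$ has only a logarithmic singularity at $x=\pi/4$, so with $f$ continuous on the compact support $[0,\pi/4]$ one obtains $\mathbb{E}|g|\le\|f\|_\infty\int_0^{\pi/4}|\log\cos(2x)|\,dx=\|f\|_\infty\cdot\tfrac{\pi\log 2}{4}$, where the final value uses the classical identity $\int_0^{\pi/2}\log\cos u\,du=-\tfrac{\pi\log 2}{2}$. Dividing the logarithm of the explicit correlator from (c) by $\ell$ leaves the ergodic average $\tfrac{1}{\ell}\sum_{j=x+1}^{x+\ell-1}\log\cos(2\omega_j)$ plus boundary terms of order $1/\ell$ that vanish almost surely, and Birkhoff delivers the limit $\mathbb{E}\log\cos(2\omega_0)$. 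The quantitative bound then follows directly from the integrability estimate together with $g\le 0$. The principal technical obstacle across the argument is the rare-block construction in (c), which succeeds only because the exact eigenstructure of $E_{\one,\omega}$ on $\sigma^z$ converts a local smallness bound on the angles into a long-range correlator strong enough to defeat the exponential clustering bound in (d).
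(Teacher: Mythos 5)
Your proposal is correct and, except for one sub-step, follows the paper's own route: parts (a) and (b) are assembled from the same ingredients (the explicit transfer apparatus $(\tfrac12\tr, E_{a,\omega},\one)$ on the trivial bundle, Corollary~\ref{cor:Parent_Hamiltonian}, Lemma~\ref{lem:parent_covariance}, Corollary~\ref{cor:spec_const}), part (d) is the same contradiction against Lemma~\ref{lem:tee_up_contradiction}, and part (e) is the same Birkhoff argument with the same $L^1$ check of $\log\cos(2\,\cdot\,)$. The genuine difference is in (c): the paper fixes the positive-probability event $E_{0,\ell}(\delta)$, notes that the union of its translates is shift-invariant, and invokes ergodicity to conclude probability one, whereas you partition $\Z$ into disjoint blocks of length $\ell+1$ and apply the second Borel--Cantelli lemma to the resulting independent events. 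Both are valid; the paper's argument uses only ergodicity of the driving shift (so it survives weakening the IID hypothesis), while yours uses independence, which is available here, and yields infinitely many witnesses $x$ for each $\ell$. Your inequality $\cos^4(\delta)\ge\cos(2\delta)$ is in fact a slightly tighter handling of the two boundary factors than the paper's displayed bound $|\nu_\omega(S_x^zS_{x+\ell}^z)|\ge\prod_{j=0}^{\ell}\cos(2\omega_{x+j})$, which on the rare event only gives exponent $\ell+1$ as written.

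One caveat on (e): your statement that the numerical bound ``follows directly from the integrability estimate together with $g\le 0$'' mirrors the paper's own assertion, but the integrability estimate controls $|\E\log\cos(2\Theta_0)|$, hence gives the lower bound $\E\log\cos(2\Theta_0)\ge-\tfrac{\pi\|f\|_\infty}{4}\log 2$ (up to the normalization of the density), not the upper bound printed in the theorem; so neither your argument nor the paper's derives the inequality in the stated direction, and the limit identification via Birkhoff is the substantive content of that part.
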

\begin{proof}
    a.) The covariance condition follows by construction of the probability space. Similarly, the fact that the trivial bundle implements $\psi_\omega$ here is by construction. b.) is immediate from Corollary~\ref{cor:Parent_Hamiltonian}, and Lemma~\ref{lem:parent_covariance}. For part c.) first, we note that by construction of the $\nu_{\omega}$ state, one has 
    \begin{align}
        \nonumber |\nu_{\omega}(S_x^z S_{x+\ell}^z)| &= \frac{1}{2}|\tr[E_{S^z_x, \omega_x}\circ E_{\one, \omega_{x+1}}\circ\cdots \circ E_{\one, \omega_{x+\ell -1}}\circ E_{S^z, \omega_{x+\ell}}(\one_{2\times 2})]|\\
        \nonumber &= \frac{1}{2}\cos^2(\omega_{x+\ell})|\tr[E_{S^z_x, \omega_x}\circ E_{\one, \omega_{x+1}}\circ\cdots \circ E_{\one, \omega_{x+\ell -1}}(\sigma^z))]|\\
        \nonumber &=\frac{1}{2}\cos(\omega_{x+\ell})^2\, \prod_{j=1}^{\ell-1} |\sin^2(\omega_{x+j}) - \cos^2(\omega_{x+j})|\, | \tr[E_{S^z_x, \omega_x}(\sigma^z)]|\\
        \label{pf:eqn:product_of_cosines}&=\cos^2(\omega_x)\cos^2(\omega_{x+\ell})\prod_{j=1}^{\ell-1} |\sin^2(\omega_{x+j}) - \cos^2(\omega_{x+j})| \,.
    \end{align} 
    
Now, since we assumed $\omega_0 \in [0, \frac{\pi}{4}]$, we know that $1\ge\cos(2\omega_0) \ge 0$ deterministically. Therefore, 
    \begin{equation}\label{pf:eqn:product_of_cosines_lower}
        |\nu_{\omega}(S_x^zS_{x+\ell}^z)| = \left(\frac{1 + \cos(2\omega_x)}{2}\right)\left(\frac{1 + \cos(2\omega_{x+\ell})}{2}\right) \prod_{j=1}^{\ell -1} |\cos(2\omega_{x+j})|\ge \prod_{j=0}^\ell \cos(2\omega_{x+j})\,.
    \end{equation}

    Now, consider the event $E_{x,\ell}(\delta) = [|\cos(2\delta)|^\ell \le |\nu_\omega(S_x^z S_{x+\ell}^z)|]$. Notice that the IID assumption implies the probability $\P E_{x, \ell}(\delta) \ge \P(\Theta_{x}, \Theta_{x+1}, \dots, \Theta_{x+\ell} <\delta) = p_0(\delta)^\ell>0$ for all $x\in \Z$, $\ell \in \mathbb{N}$ and $\delta>0$ sufficiently small. Since we have 
    \[
        \nu_\omega (S_x^z S_{x+\ell}^z) = \nu_\omega( \tau_x \tau_{-x} S_x^z S_{x+\ell}^z) = \nu_{\vartheta^x \omega}(S_0^z S_\ell^z)\,,
    \] the identity $E_{\ell, x}(\delta) = \vartheta^x E_{0,\ell}(\delta)$ holds. Let 
    \begin{equation}
        A_{\ell}(\delta) = \bigcup_{x\in \Z} E_{\ell,x}(\delta)\,,
    \end{equation} which we note is a $\vartheta$ invariant set with $\P(A_{\ell}(\delta))\ge p_0(\delta)^\ell >0$. So $\P A_{\ell}(\delta) = 1$ for all $\ell \ge 0$ and $\delta>0$ by ergodicity. Now, for any sequence $\delta_n \downarrow 0$ we have that 
    \[
        \P \bigcap_{\ell=1}^\infty A_\ell(\delta_\ell) =1\,,
    \] since this is a countable intersection of sets with full probability. 

    d.) Suppose that $H^{GNS}$ has a positive spectral gap on a set with probability one, say $\Omega_G$. Let \[\Omega_s = \bigcap_{\ell}(\Omega_G \cap A_{\ell}(2^{-\ell -1})).\] Then, by part c.), we have that 
        \[
            \P\Omega_s = \P[ \Omega_G \cap \bigcap_{\ell=0}^\infty A_\ell(2^{-\ell-1})] = 1\,.
        \] By Lemma~\ref{lem:tee_up_contradiction} combined with the definition of the $A_\ell$'s we get that for all length scales $\ell$ and almost every $\omega\in \Omega$, there is some $x\in \Z$ depending on $\omega$ and $\ell$ so that 
        \begin{equation}
            |\cos(2^{-{\ell}})|^\ell \le |\nu_\omega(S_{x}^z S_{x+\ell})| \le 3e^{-\mu_{\lambda}\ell}\,.
        \end{equation} Taking $\limsup$ as $\ell \to \infty$ of both sides (noting that the upper and lower bound are deterministic) yields a contradiction. Therefore it must be that $\mu_{\lambda} = 0$ on $\Omega_s$, but this is equivalent to $\gamma = 0$ almost surely, since $\spec(H^{GNS})$ is deterministic by Corollary~\ref{cor:spec_const}. 

    Part e.) follows by utilizing Birkhoff's ergodic theorem \cite[Theorem 6.2.1]{Durrett}. Explicitly, consider the function $g(x) = \log |\cos(2x)|$ for $x\in [0, \frac{\pi}{4}]$. Under the assumption that $\omega_0 \sim f(x) dx$ where $f$ is a continuous density function, then we note $g\in L^1(\P)$ since $f\in L^\infty(dx)$. Indeed, 
    \[
        \int_0^{\pi/4} |\log (\cos(2x))| f (x) dx \le \|f\|_{\infty} \int_0^{\pi/4} -\log|\cos(2x)| dx = \|f\|_{\infty} \frac{\pi}{4} \log(2) < \infty\,.
    \]
    Then, by~(\ref{pf:eqn:product_of_cosines}) and~(\ref{pf:eqn:product_of_cosines_lower}), we obtain the estimate
    \begin{align*}
         \frac{1}{\ell+1}\sum_{j=0}^{\ell}g(\omega_{x+j}) \le \log\left( |\nu_\omega(S_x^z\,S_{x+\ell}^z)|^{\tfrac{1}{\ell}}\right) \le \frac{1}{\ell-1}\sum_{j=1}^{\ell-1} g(\omega_{x+j})\,.
    \end{align*} which holds almost surely. Now, Birkhoff's ergodic theorem implies that the right- and left-hand sides of the above converge almost surely to a constant $\E[g(\omega)]\le - \frac{\|f\|_{\infty}\pi}{4}\log(2)$, as claimed. 
\end{proof}

\begin{rmk}
    We reiterate that the mechanism by which the spectral gap closes is the fact that the probability distribution of $\omega_0$ is allowed to take arbitrarily small values. A reasonable next question to ask is whether or not the spectral gap closes in the case when $\P[\pi/4 >\delta_1 \ge \omega_0 \ge \delta_2>0] = 1$, that is, if $\omega_0$ is valued in $[\delta_2, \delta_1]$ and is not allowed to venture toward degenerate values. It seems likely in this scenario that one could leverage that $gap(E_{\one, \omega})$ can  be explicitly computed in order to show that $gap(H^{GNS}_\omega)$ is open with large probability. However we shall defer a detailed discussion of this idea for later work \cite{EkbladMorenoNadalesRoonSchenker}. 
\end{rmk}

\subsection{Time Reversal Symmetry and an Index Calculation}

Another interesting question raised by our disordered AKLT model is whether or not the disorder affects the values of  topological indices. Many authors have worked on the classification of symmetry protected states via topological indices in recent years (see the sampling of works \cite{Bols_et_al, CarvalhodeRoeckJappens, KapustinSopenkoYang, Ogata_cmp, Ogata_cdm, Ogata_icm,  Sopenko_2d, Tasaki_prs, Tasaki_jmp, Tasaki_HeisenbergChain}). Many (if not all) of the aforementioned works are concerned with unique \emph{gapped} ground states of local Hamiltonians which are protected against the action of some group on a spin system. Due to their explicit nature, Matrix Product State representations have played a crucial role in many of the aforementioned works, and specifically, the AKLT model is known to be protected by several different on-site symmetries. In the following we will show that the matrix product approximation to our disordered AKLT state $\nu_\omega$ is invariant under \emph{Time Reversal} symmetry which can be represented as an action of $\Z_2$ on the spin chain (see \cite{Ogata_z2, Pollman_et_al_z2, Tasaki_book} for more details). Furthermore, we aim to show that $\nu_\omega$ has a nontrivial $\Z_2$-index with high probability. 

Recall that the matrices which generate $\nu_\omega$ at site $j$ can be rewritten in terms of the canonical $2\times 2$ Pauli matrices $\sigma^z, \sigma^x, \sigma^y$ in the following way:
    \[
      X_{-1}(\omega_j) = - \cos(\omega_j) \sigma^-\,;\,\,\,\,\,\,\,\,\,  X_{0}(\omega_j)=- \sin(\omega_j) \sigma^z\,;\,\,\,\,\,\,\,\,\,   X_{1}(\omega_j)  =\cos(\omega_j) \sigma^+ ;
    \] where we recall $\sigma^\pm = \sigma^x \pm i\sigma^y$. To show that our disordered AKLT state is time-reversal invariant, we find it convenient to compute the coefficients explicitly first. 
\begin{lem}\label{lem:coefficient_expansion}
    Let $\nu_{\omega}$ be the state from Notation~\ref{note:pre_disorder_AKLT_state}, where $\omega$ is a bi-infinite sequence in $(0, \frac{\pi}{2})\cup(\frac{\pi}{2}, \pi)$. Let $L\ge 2$, and let Let $\Sigma = (s_{-L}, \dots, s_{L+1})$ be a spin-z configuration on the interval $[-L, L+1]$ (i.e. a choice of basis vector at each site).  Let $J_\Sigma\subset [-L, L+1]$ denote the set 
    \begin{equation}
        J_\Sigma = \{x\in [-L, L+1] \colon s_x \neq 0\}\,.
    \end{equation} Let it be understood that $J_\Sigma^c = [-L, L+1] \setminus J_\Sigma$.  Then, the vector approximates $\Gamma^{[-L, L+1]}_\omega(\one)$ from \eqref{eqn:gamma_mn} have the following coefficients:
    \begin{enumerate}
        \item In the case that there are no excitations (i.e. $\zeta = (0, 0, \dots, 0)$), then the associated coefficient is
        \begin{equation}\label{eqn:zero_configuration}
            \tr[X_0(\omega_{L+1})\cdots X_0(\omega_{-L})] = 2\, \prod_{x=-L}^{L+1} \sin(\omega_x)\,.
        \end{equation}
        \item More generally, if there is $k\ge 1$ so that $|J_\Sigma|=2k$ and the signs of $\sigma_x$ alternate as $x\in J_\Sigma$ increases, then
        \begin{equation}\label{eqn:admissible_configuration}
            \tr[X_{s_{L+1}}(\omega_{L+1}) \cdots X_{s_{-L}} (\omega_{-L})] =  
                (-1)^{P_\Sigma} \prod_{x\in J_\Sigma}s_x\cos(\omega_{x}) \prod_{y\in J_\sigma^c} \sin(\omega_y)\,, 
        \end{equation} where $P_\Sigma$ is the signed sum of the lengths between excitations given by
        \begin{equation}
            P_\Sigma = \delta(+, s_{x_N})(L+1 - x_N) + \sum_{j=1}^{N-1}\delta_{+,j+1}(x_{j+1} - x_{j}) + \delta_{+,1} (x_1 - (-L))\,,
        \end{equation}  where $\delta_{+,j}=\delta(+1, s_{x_j})$ is the Kronecker delta.
        \item Otherwise, the coefficient vanishes. 
    \end{enumerate} 
\end{lem}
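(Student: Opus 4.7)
The strategy is a direct computation based on tracking the action of the product $X^{\sigma_{L+1}}(z_{L+1})\cdots X^{\sigma_{-L}}(z_{-L})$ on the canonical basis $\{\ket{+\tfrac{1}{2}}, \ket{-\tfrac{1}{2}}\}$ of the virtual space $\C^2$. From the explicit forms in equation~(\ref{eqn:AKLT_Kraus}), $X^{\pm}(z)$ are strictly off-diagonal, with $X^+(z)\ket{+\tfrac{1}{2}} = \cos(z)\ket{-\tfrac{1}{2}}$, $X^-(z)\ket{-\tfrac{1}{2}} = -\cos(z)\ket{+\tfrac{1}{2}}$, and each annihilating the other basis vector; while $X^0(z)$ is diagonal with eigenvalue $-\sin(z)$ on $\ket{+\tfrac{1}{2}}$ and $+\sin(z)$ on $\ket{-\tfrac{1}{2}}$. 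Hence when evaluating the product right-to-left on a fixed starting basis vector, the state remains a scalar multiple of a single basis vector at every intermediate step, switching parity exactly at the positions $x\in J_\sigma$.

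This observation immediately yields Parts 1 and 3. For Part 1, the matrix $\prod_x X^0(z_x)$ is diagonal with both entries equal to $\prod_x \sin(z_x)$, since the total number of sites $2L+2$ is even, so its trace is $2\prod_x \sin(z_x)$. For Part 3, if $|J_\sigma|$ is odd then the product maps $\ket{\pm\tfrac{1}{2}}$ into a scalar multiple of $\ket{\mp\tfrac{1}{2}}$ and contributes nothing to the trace; if two consecutive elements of $J_\sigma$ carry the same $\pm$ label, then some $X^{\pm}$ factor is applied to the wrong basis vector and the whole matrix product vanishes. Thus a nonvanishing trace forces $|J_\sigma|=2k$ and forces the labels $(\sigma_{x_j})_{j=1}^{2k}$ to alternate in sign as $j$ increases.

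For Part 2, under the alternation hypothesis exactly one of the two diagonal entries $\bra{\pm\tfrac{1}{2}}\prod X^{\sigma_x}(z_x)\ket{\pm\tfrac{1}{2}}$ contributes, the other being zero. Along the nonzero trajectory, each $X^{\sigma_{x_j}}(z_{x_j})$ at a site $x_j\in J_\sigma$ contributes a factor of $\pm\cos(z_{x_j})$, with the minus sign arising exactly from the explicit $-$ in the matrix $X^-(z)$, i.e.\ whenever $\sigma_{x_j}=-$; and each $X^0(z_y)$ with $y\in J_\sigma^c$ contributes $\sin(z_y)$ together with a $-1$ precisely when the running basis vector at that step is $\ket{+\tfrac{1}{2}}$. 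Since the running basis vector is constant along each maximal subinterval of $J_\sigma^c$ delimited by $-L$, $L+1$, and consecutive elements of $J_\sigma$, one can count the total number of $X^0$-induced sign flips as a sum of interval lengths over the subintervals whose running state is $\ket{+\tfrac{1}{2}}$. By the alternation, these are precisely the subintervals whose right endpoint $x_j$ satisfies $\sigma_{x_j}=+$, together with the left-boundary segment when $\sigma_{x_1}=+$ and the right-boundary segment when $\sigma_{x_N}=-$; the Kronecker-delta indicators in the displayed expression for $P_\sigma$ select exactly these contributions.

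The main obstacle is the sign bookkeeping. Negative signs enter from two independent sources --- the $-\cos(z)$ in the matrix $X^-(z)$ and the $-\sin(z)$ in the action of $X^0(z)$ on $\ket{+\tfrac{1}{2}}$ --- and combining their contributions over the interior subintervals and the two boundary segments $[-L, x_1-1]$ and $[x_N+1, L+1]$ into the single closed-form expression for $P_\sigma$ requires a case analysis separating $\sigma_{x_1}=+$ from $\sigma_{x_1}=-$ and verifying that both cases are packaged uniformly via the indicator $\delta(+,\cdot)$. Once that accounting is done, the remaining steps are mechanical matrix arithmetic together with the identity $\tr(\,\cdot\,) = \bra{+\tfrac{1}{2}}\cdot\ket{+\tfrac{1}{2}} + \bra{-\tfrac{1}{2}}\cdot\ket{-\tfrac{1}{2}}$.
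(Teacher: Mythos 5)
Your strategy is the same computation as the paper's: the paper writes $X^0(z)=\sin(z)\,D$ with $D=E_{22}-E_{11}$ and $X^{\pm}(z)=\pm\cos(z)\,E_{21/12}$, and pushes powers of $D$ through the matrix units via $D^pE_{21}=E_{21}$, $D^pE_{12}=(-1)^pE_{12}$; your right-to-left trajectory tracking on $\{\ket{+\tfrac12},\ket{-\tfrac12}\}$ is that argument in different clothing. Your treatment of Parts 1 and 3 is correct and matches the paper (both diagonal entries survive for the all-zero configuration; an odd number of excitations makes the product off-diagonal, and two consecutive equal signs kill both trajectories).

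The gap is exactly at the step you yourself flag as ``the main obstacle'': the identification of your flip count with the displayed $P_\sigma$ is asserted, not carried out, and it is in fact inconsistent with your own (correct) rule. Your trajectory analysis gives a flip on the left segment iff $\sigma_{x_1}=+$ (length $x_1+L$), on the interior gap between $x_j$ and $x_{j+1}$ iff $\sigma_{x_{j+1}}=+$ (but of length $x_{j+1}-x_j-1$, the number of zero sites there), and on the right segment iff $\sigma_{x_N}=-$ (length $L+1-x_N$); the displayed $P_\sigma$ instead carries $\delta(+,\sigma_{x_N})$ on the right and gap length $x_{j+1}-x_j$, and these two exponents are not congruent mod $2$ in general. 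Concretely, for $L=1$ and $\sigma=(-,+,0,0)$ on $[-1,2]$, direct multiplication gives $\tr[X^{0}(z_2)X^{0}(z_1)X^{+}(z_0)X^{-}(z_{-1})]=-\cos(z_{-1})\cos(z_0)\sin(z_1)\sin(z_2)$, which agrees with your rule (zero flips) but not with the displayed formula, whose $P_\sigma=3$ produces the opposite sign. So you cannot conclude by saying the Kronecker deltas ``select exactly these contributions'': either carry out the promised case analysis and record the exponent your count actually yields, namely $\delta(+,\sigma_{x_1})(x_1+L)+\sum_{j=1}^{N-1}\delta(+,\sigma_{x_{j+1}})(x_{j+1}-x_j-1)+\delta(-,\sigma_{x_N})(L+1-x_N)$ mod $2$, or explicitly reconcile it with the stated $P_\sigma$. (The paper's own derivation shows the same instability: its expression for $Y_J$ mixes exponents $x_N-x_{N-1}-2$ and $x_2-x_1$, so the displayed $P_\sigma$ appears to be mis-stated; the discrepancy only affects the overall sign and not the magnitudes or the $\sigma\mapsto-\sigma$ pairing used later, but a complete proof must not gloss over it.)
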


\begin{proof}
    First, consider the case of the zero configuration $\zeta = (0, \dots, 0)$. Then each matrix in the trace $\tr[X_{\zeta_{L+1}}(\omega_{L+1}) \dots X_{\zeta_{-L}}(\omega_{-L})]$ is proportional to $-\sigma^z$. Therefore, we may write 
    \[
        \tr[X_{\zeta_{L+1}}(\omega_{L+1}) \dots X_{\zeta_{-L}}(\omega_{-L})]  = \tr[X_{0}(\omega_{L+1}) \dots X_{0}(\omega_{-L})]= \tr[(-\sigma^z)^{2L+2}] \prod_{x=-L}^{L+1} \sin(\omega_x) = 2  \prod_{x=-L}^{L+1} \sin(\omega_x)\,.
    \]
    More generally,  for the sites $y\in J_\Sigma^c$, the associated matrix $X_{s_y}$ is proportional to $-\sigma^z$, or proportional to one of  $\sigma^-$ or $\sigma^+$.  Write $J_{\Sigma} = \{x_1<x_2 < \cdots < x_N\}$ in increasing order.  One can factor out the constants to obtain 
        \[
         \tr[X_{s_{L+1}}(\omega_{L+1}) \cdots X_{s_{-L}} (\omega_{-L})] = \left(\prod_{y\in J^c_\Sigma} \sin(\omega_y)\right) \left( \prod_{x\in J_\Sigma} s_x \cos(\omega_x)\right)\tr Y_{J_\Sigma}\,,\] where we we have set 
         \begin{equation}\label{eqn:calc_coeff_Y_J}
         Y_J = (-\sigma^z)^{L+1 - x_N}\,\sigma^{s_{x_n}}\, (-\sigma^z)^{x_{N}-1 - (x_{N-1}+1)}\sigma^{s_{x_{N-1}}}\cdots (-\sigma^z)^{x_2 - x_1} \sigma^{s_{x_1}}(-\sigma^z)^{x_1 - (-L)}\,.
         \end{equation}

    Now, notice that for any power $p$, one has 
    \[
        \begin{split}
            (-\sigma^z)^p \sigma^+ &= (-1)^p \sigma^+\,,\\
            (-\sigma^z)^p \sigma^-& = \sigma^-\,.
        \end{split}
    \] Therefore, it is immediate that $Y_{J_\Sigma}$ vanishes if there is a pair of subsequent equal indices $s_{x_j} = s_{x_{j+1}}$, since this would imply an $(\sigma^-)^2$ or $(\sigma^+)^2$ appears in $Y_{J_\Sigma}$. 

    Continuing in our reduction, we group the matrices in equation~(\ref{eqn:calc_coeff_Y_J}) from left to right to see that
    \begin{equation}\label{eqn:Y_Jeval}
        Y_{J_\Sigma} = (-1)^{\delta(+, \sigma_{x_N})(L+1 - x_N) + \sum_{j=1}^{N-1}\delta(+1, \sigma_{x_{j+1}})(x_{j+1} - x_{j})} \tr[E_\Sigma (-\sigma^z)^{x_1 - L}]\,,
    \end{equation} where $E_\Sigma$ is some alternating product of $\sigma^+$ and $\sigma^-$. By straightforward computation, we see that $|J_\Sigma|$ cannot be odd, otherwise, $E_\Sigma$ is the zero matrix. 

    To recapitulate, we have shown that $\tr[X_{i_{L+1}}(\omega_{L+1}) \cdots X_{i_{-L}}(\omega_{-L})]$ equals zero unless $|J_\Sigma|=2k$ for some $0\le k \le L+1$, and the nonzero values of $\Sigma$ must be alternating. 

    This means there are only two options for the value of $\tr[E_\Sigma (-\sigma^z)^{x_1 - (-L)}]$. Either 
    \[ \tr[E_\sigma (\sigma^z)^{x_1-(-L)}]= \left\{ 
        \begin{array}{cc}
           \tr[E_{11}(-\sigma^z)^{x_1-(-L)}] = (-1)^{x_1-(-L)} & \text{ if $x_1 = +1$} \\
            \tr[E_{22}(-\sigma^z)^{x_1-(-L)}] = 1 & \text{ if $x_1 = -1$}\\
        \end{array}\right.\,,
    \] where $E_{11} = \llbracket \delta_{1,j}\delta_{1,k}\rrbracket_{j,k=1}^2$ and $E_{22} = \llbracket \delta_{2,j}\delta_{2,k}\rrbracket_{j,k=1}^2$ are the canonical matrix units.
\end{proof}

\begin{cor}\label{cor:TR_symmetry}
    Let $\nu_\omega$ be the disordered AKLT state we construct above. Then $\nu_\omega$ is invariant under time-reversal symmetry. 
\end{cor}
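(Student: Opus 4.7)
The plan is to exhibit a \emph{virtual} time-reversal symmetry of the Kraus matrices $X^{(j)}(z)$ that holds uniformly in $z$, propagate that bond-level symmetry through the finite-volume vectors $\Gamma^{[-L,L+1]}_{\omega}(\one)$, and then pass to the thermodynamic limit via Lemma~\ref{lem:gamma_formulas}(c). The core identity I will verify is
\begin{equation*}
    \sum_i u_{ji}\,X^{i}(z) \;=\; U^{*}\,X^{j}(z)\,U,\qquad j\in\{-,0,+\},
\end{equation*}
where $u = e^{i\pi S^y}\in U(3)$ is the physical $\pi$-rotation about the $y$-axis, acting in the $S^z$-basis by $\ket{\pm}\mapsto\ket{\mp}$ and $\ket{0}\mapsto-\ket{0}$, and $U = i\sigma^y \in U(2)$. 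The crucial features are that all entries of the $X^{(j)}(z)$ in~(\ref{eqn:AKLT_Kraus}) are real, and that $U$ is independent of $z$; the identity then reduces to three short $2\times 2$ matrix computations, one for each $j$.

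Given the virtual symmetry, I introduce the antiunitary $T = u^{\otimes(2L+2)}\circ K$ on $(\C^3)^{\otimes(2L+2)}$, where $K$ denotes complex conjugation in the $S^z$-basis, and which implements the time-reversal automorphism on $\A_{[-L,L+1]}$ by $\Theta(a) = T a T^{-1}$. Because the expansion coefficients $\tr[X^{i_{L+1}}(\omega_{L+1})\cdots X^{i_{-L}}(\omega_{-L})]$ of Lemma~\ref{lem:coefficient_expansion} are real, $K$ fixes $\Gamma^{[-L,L+1]}_\omega(\one)$. Applying $u^{\otimes(2L+2)}$ and using the virtual symmetry at every site replaces each $X^{i_x}(\omega_x)$ by $U^{*}X^{j_x}(\omega_x)U$; adjacent $U U^{*}$ pairs collapse and the last surviving pair is closed up by cyclicity of the trace, yielding $T\,\Gamma^{[-L,L+1]}_\omega(\one) = \Gamma^{[-L,L+1]}_\omega(\one)$. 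Antiunitarity of $T$ together with this invariance produces
\begin{equation*}
    \langle\Gamma^{[-L,L+1]}_\omega(\one)|\Theta(a)|\Gamma^{[-L,L+1]}_\omega(\one)\rangle \;=\; \overline{\langle\Gamma^{[-L,L+1]}_\omega(\one)|a|\Gamma^{[-L,L+1]}_\omega(\one)\rangle}
\end{equation*}
for every $a\in \A_{[-L,L+1]}$.

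Finally, by Lemma~\ref{lem:gamma_formulas}(c), for any local $a$ of fixed support the normalized vector expectation $\langle\Gamma|a|\Gamma\rangle/\langle\Gamma|\Gamma\rangle$ converges to $\nu_\omega(a)$ as $L\to\infty$, at a rate controlled by $\|\Phi_{m,n}-T\|$, which tends to zero almost surely by Lemma~\ref{lem:translation_operator_strictly_positive}. Passing to the limit and extending by density to $\A_{\Z}$ yields $\nu_\omega(\Theta(a)) = \overline{\nu_\omega(a)}$ for every $a\in\A_{\Z}$ and almost every $\omega$, which is the defining condition of time-reversal invariance for a state. The principal bookkeeping difficulty will be keeping the conventions for $T$ and $\Theta$ consistent --- in particular the sign from $u\ket{0} = -\ket{0}$ and the power of $T^{2}$ --- since a miscounting here would only shift the scalar by which $T$ acts on $\Gamma$ but will be relevant for the subsequent $\Z_2$-index calculation in Theorem~\ref{thm:Z2-index}.
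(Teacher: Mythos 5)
Your proposal is correct and takes essentially the same route as the paper: your pull-through identity with $U = i\sigma^y$ is precisely the paper's relations $Z^*X_{\pm1}(\omega)Z=-X_{\mp1}(\omega)$, $Z^*X_0(\omega)Z=X_0(\omega)$, and both arguments then combine the realness and parity structure of the coefficients from Lemma~\ref{lem:coefficient_expansion}, cyclicity of the trace, and the approximation in Lemma~\ref{lem:gamma_formulas}(c) to pass to $\nu_\omega$. The only discrepancy is a harmless global sign in your stated identity under your chosen conventions (with $u\ket{0}=-\ket{0}$ one gets $\sum_i u_{ji}X^i(z) = -U^*X^j(z)U$), which, as you anticipate, cancels because the chain length $2L+2$ is even.
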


\begin{proof}
Recall that the weak* limit $\lim_{L \to \infty}  \< \Gamma^{[-L, L+1]}_\omega(\one)| \cdot | \Gamma^{[-L, L+1]}_\omega(\one)\> =\nu_\omega$ almost surely by Theorem~\ref{lem:gamma_formulas} part c.). Therefore, it is sufficient to show that the coefficients of the MPS $\Gamma^{[-L, L+1]}_\omega(\one)$ is time-reversal symmetric for any $L$.

 Let $F:\M_2 \to \M_2$ be the quantum channel given by $F(x) = Z^*xZ$ where the unitary $Z= \begin{bmatrix}
        0 & 1\\ -1 & 0\\
    \end{bmatrix}.$ One readily checks that 
    \[
        \left\{ \begin{array}{c}
        F (X_{\pm 1}(\omega)) = - X_{\mp 1}(\omega)\\
        F(X_0(\omega)) =  X_0(\omega)
        \end{array} \right.\,,
    \] for all $\omega$ in range. 
    
    Lemma~\ref{lem:coefficient_expansion} shows that the nonzero coefficients of $\Gamma^{[-L, L+1]}_\omega(\one) \in \mathcal{H}^{[-L, L+1]}$ are functions of spin configurations $\Sigma$ which have alternating sign and obey $|J_\Sigma|=2k$ for some $0\le k \le L+1$. For such a $\Sigma$, one has
    \begin{align*}
        \tr\left[X_{s_{L+1}}(\omega_{L+1}) \cdots X_{s_{-L}}(\omega_{-L})\right] &= \tr\left[Z^*ZX_{\sigma_{L+1}}(\omega_{L+1})\cdot Z^*Z X_{s_{L}}(\omega_{L}) \cdots Z^*ZX_{s_{-L}}(\omega_{-L})\right]\\
        &= \tr[F(X_{s_{L+1}}(\omega_{L+1})) \cdots F(X_{s_{-L}}(\omega_{-L}))] \\
        &= (-1)^{|J_\Sigma|} \tr[X_{-s_{L+1}}(\omega_{L+1}) \cdots X_{-s_{-L}}(\omega_{-L})]\\
        &= \tr[X_{-s_{{L+1}}}(\omega_{L+1}) \dots X_{-s_{{-L}}}(\omega_{-L})]\,.
    \end{align*} 

Under the action of the time-reversal symmetry, any vector in the canonical basis expansion of $\Gamma^{[-L, L+1]}_\omega(\one)$ is mapped to 
    \[
        \tr\left[X_{i_{L+1}}(\omega_{L+1}) \cdots X_{i_{-L}}(\omega_{-L})\right] |i_{-L} \dots i_{L+1}\> \longmapsto \tr\left[X_{-i_{L+1}}(\omega_{L+1}) \cdots X_{-i_{-L}}(\omega_{-L})\right]|(-i_{-L}), \dots ,(-i_{L+1})\>\,,
    \] but as we demonstrated above the coefficients are equal, hence either both or neither vector appears in the sum. Thus $\Gamma^{[-L, L+1]}_\omega(\one)$ is invariant under time-reversal symmetry for every nondegenerate bi-infinite sequence of values $(\omega_j)_{j\in \Z}$ in $(0, \pi/2) \cup (\pi/2, \pi)$. 
\end{proof}


Recent works of Ogata \cite{Ogata_z2} generalized and extended the works of Pollmann et al. \cite{Pollman_et_al_z2, Pollmann_et_al_entanglement} and Tasaki \cite{Tasaki_prs, Tasaki_book} showing that for time-reversal protected pure states on the spin chain, there is a $\Z_2$-topological index. A key step in Ogata's method is the fact that injective matrix product states have a unique MPS representations (up to unitary conjugation and a phase)\cite{FannesNachtergaeleWerner_pure, Perez-Garcia_et_al}, and that furthermore, they obey Matsui's \emph{split property} \cite{Matsui_split, Matsui_entropy}.  Ogata shows in \cite[Theorem 6.1]{Ogata_cmp} that her index agrees with the index of \cite{Pollman_et_al_z2} in the case of a matrix product state. 

While it is generally known that string order parameters correspond to local symmetries of MPS \cite{PG_StringOrder} (which again utilizes the uniqueness of injective MPS representations) we are not currently aware if there is a uniqueness theorem for \emph{ergodically disordered} injective matrix product states. It is also an open question as to whether the index calculation in \cite{Tasaki_prs, Tasaki_HeisenbergChain} is the same as the index from \cite{Ogata_z2} (though both Tasaki's index and Ogata's index numerically agree for the AKLT model). For these reasons, we seek to explicitly compute the $\Z_2$-index of $\nu_\omega$ using the methods presented in \cite{Tasaki_prs}, without appealing to the methods of \cite{Ogata_z2,Pollman_et_al_z2,Pollmann_et_al_entanglement}, and \cite[Section 8.3.4]{Tasaki_book}.

We shall take a few lines to recall Tasaki's method from \cite{Tasaki_prs}. Recall that the \emph{Affleck-Lieb twist operator} is a pure tensor of local unitary operators in the spin chain defined as follows: for $L>1$ an integer, set 
\begin{equation}
    T_L := \bigotimes_{j\in [-L, L+1]} \exp\left\{-2\pi i\, \frac{j+L}{2L+1}\, S^{z}_j\right\}\,,
\end{equation} with $S^z_j$ the $j$-site spin-z matrix for a spin 1. Tasaki then goes on to show the following
\begin{thm}[Theorem 1 \cite{Tasaki_prs}]
    Suppose $\varphi$ is the unique \emph{gapped} ground state of a local Hamiltonian on the spin-1 chain. Then, the value $\varphi(T_\ell)$ is real and has constant sign for $\ell$ sufficiently large. In particular, the quantity $\mathscr{O}(\varphi):= \lim_{\ell \to \infty} \frac{\varphi(T_\ell)}{|\varphi(T_\ell)|}\in \{\pm 1\}\,.$
\end{thm}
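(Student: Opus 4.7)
The strategy is to combine time-reversal symmetry of $\varphi$, which yields reality of $\varphi(T_L)$, with the spectral gap and exponential clustering, which force $|\varphi(T_L)| \to 1$ and preclude sign oscillation for large $L$.

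For the reality step, one observes that $T_L$ is invariant under the on-site time-reversal automorphism $\Theta = \bigotimes_j \Theta_j$ acting by $\Theta_j(S^w_j) = -S^w_j$: antilinearity of $\Theta_j$ yields $\Theta_j(e^{-2\pi i \alpha_j S^z_j}) = e^{-2\pi i \alpha_j S^z_j}$ since the sign flips from $i \mapsto -i$ and $S^z \mapsto -S^z$ cancel. If $\varphi$ is time-reversal invariant (the standing hypothesis of Tasaki's Theorem 1, though tacit in the statement above), then $\overline{\varphi(T_L)} = \varphi(\Theta(T_L)) = \varphi(T_L)$, so $\varphi(T_L) \in \mathbb{R}$.

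For the constant-sign step, the key observation is that $T_L$ almost commutes with every local term of $H$: a Taylor expansion in $1/(2L+1)$ gives $\|[T_L, h_{j,j+1}]\| \le C/L$ for each nearest-neighbor interaction. Hence $T_L \Psi_{\mathrm{GS}}$ is concentrated in a narrow low-energy window of $H$, and using the exponential clustering theorem for gapped unique ground states (Hastings--Koma, Nachtergaele--Sims) one upgrades this to the quantitative estimate $\|T_L \Psi_{\mathrm{GS}} - \varphi(T_L) \Psi_{\mathrm{GS}}\| \to 0$ as $L \to \infty$. Uniqueness of $\Psi_{\mathrm{GS}}$ is essential here: it permits the conclusion that any vector in the low-energy window is approximately a scalar multiple of $\Psi_{\mathrm{GS}}$. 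Unitarity of $T_L$ then forces $|\varphi(T_L)| \to 1$.

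To pin down the constant sign, I would compare $\varphi(T_L)$ and $\varphi(T_{L+1})$: since the twist angles $\alpha^L_j = (j+L)/(2L+1)$ and $\alpha^{L+1}_j$ differ by $O(1/L^2)$ in the bulk with $O(1)$ boundary corrections, a Lieb--Robinson bound combined with the gap gives $|\varphi(T_{L+1}) - \varphi(T_L)| \to 0$. Combined with reality and $|\varphi(T_L)| \to 1$, this prevents any sign flip for $L$ sufficiently large. The main obstacle will be making the approximation $T_L \Psi_{\mathrm{GS}} \approx \varphi(T_L)\Psi_{\mathrm{GS}}$ quantitative enough to rule out sign oscillation between consecutive $L$: one needs the defect bound to decay faster than $1 - |\varphi(T_L)|$, which requires sharp constants in the exponential clustering theorem and careful treatment of the boundary contributions to $[T_L, H]$.
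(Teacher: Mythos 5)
This statement is quoted in the paper directly from Tasaki \cite{Tasaki_prs} without proof, so I am comparing your sketch with the known argument rather than with anything in the text. Your skeleton — reality from time reversal, $|\varphi(T_L)|\to 1$ from a twist--energy estimate plus the gap, and a consecutive-$L$ comparison to freeze the sign — is the right one, and the reality step is fine (you correctly flag the tacit antiunitary symmetry hypothesis). But the two quantitative steps have genuine gaps. First, the claim $\|[T_L,h_{j,j+1}]\|\le C/L$ is false for a general local Hamiltonian: the twist rotates site $j$ by the angle $2\pi(j+L)/(2L+1)$, which is $O(1)$; only the \emph{difference} of angles on adjacent sites is $O(1/L)$. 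The bound therefore requires each $h_{j,j+1}$ to commute with the uniform rotation $e^{i\theta(S^z_j+S^z_{j+1})}$, i.e.\ the $U(1)$ symmetry that is also tacit in Tasaki's hypotheses and which you never invoke. Moreover, even granted $U(1)$ invariance, summing the per-term bound over the $O(L)$ interaction terms gives only $\|[H,T_L]\Psi\|=O(1)$ — an $O(1)$ energy window, not a narrow one. What is needed is the Affleck--Lieb estimate $\langle\Psi, T_L^*(H-E_0)T_L\Psi\rangle\le C/L$, whose whole point is that the first-order terms have vanishing ground-state expectation (via $\langle\Psi,[A,H]\Psi\rangle=0$, or a symmetrization); once that is in hand, the spectral gap alone gives $1-|\varphi(T_L)|^2\le C/(\gamma L)$ by the variational argument — exponential clustering plays no role in this step, contrary to your sketch.

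Second, the constant-sign step is not established as proposed. The pointwise angle differences between $T_L$ and $T_{L+1}$ are of order $|x|/L^2$, hence sum to $O(1)$, so there is no useful norm bound on $T_{L+1}-T_L$, and a Lieb--Robinson bound (a statement about dynamics) has no evident bearing on comparing two fixed unitaries. A working repair is to write $\varphi(T_{L+1})-\varphi(T_L)=\varphi\bigl(T_L(e^{-iB_L}-1)\bigr)$ where $B_L=2\pi\sum_x\beta_x S^z_x$ is the generator of $T_L^*T_{L+1}$, with $|\beta_x|=O(1/L)$ on $O(L)$ sites, and to estimate $|\varphi(T_{L+1})-\varphi(T_L)|\le\|(e^{-iB_L}-1)\Psi\|\le\|B_L\Psi\|$, with $\|B_L\Psi\|^2=\sum_{x,y}\beta_x\beta_y\,\varphi(S^z_xS^z_y)=O(\xi/L)$ by exponential clustering (this is where Hastings--Koma/Nachtergaele--Sims genuinely enters) together with $\varphi(S^z_x)=0$ from time reversal. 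Note also that any $o(1)$ bound here suffices: once $|\varphi(T_L)|\to1$ and $\varphi(T_L)\in\mathbb{R}$, a sign flip between consecutive $L$'s would force a jump of size $2-o(1)$, so no sharp constants and no comparison with $1-|\varphi(T_L)|$ are required — your assessment of the obstacle is more pessimistic than the actual difficulty, but the step does need an argument you have not supplied.
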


Even though our disordered state $\nu_\omega$ is gapless, we show in Theorem~\ref{thm:Z2-index} below, that the $\Z_2$-index $\mathscr{O}(\nu_\omega) = -1$ almost surely. 
\begin{thm}[Theorem~\ref{thmx:TasakiIndex}]\label{thm:Z2-index}
    Let $\nu_\omega$ be the disordered AKLT state as above. Then, for $\P$-a.e. $\omega$, we have 
        \begin{equation}
            \lim_{L\to \infty} \nu_\omega(T_L)=-1\,.
        \end{equation}
\end{thm}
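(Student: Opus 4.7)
The plan is to reduce $L^1$-convergence to the expected inner product $\E\<\Gamma_L|T_L+\one|\Gamma_L\>$, where $\Gamma_L := \Gamma^{[-L,L+1]}_\omega(\one)$, and then assemble the six-regime decomposition already prepared in Lemmas~\ref{lem:small_k_close}--\ref{lem:large_k_close}. The triangle inequality gives
\[
\|\nu_\omega(T_L)+1\|_{L^1(\P)} \le \E|\nu_\omega(T_L)-\<\Gamma_L|T_L|\Gamma_L\>| + \E\bigl|\<\Gamma_L|T_L|\Gamma_L\>+\|\Gamma_L\|^2\bigr| + \E\bigl|1-\|\Gamma_L\|^2\bigr|.
\]
Choosing $x=y=\one$ in Lemma~\ref{lem:gamma_formulas}(c) bounds the first and third summands by $2\E\|\Phi_{-L,L+1}-T\|$; by Lemma~\ref{lem:translation_operator_strictly_positive} and independence this is at most $2(\E|\cos(2\omega_0)|)^{2L+2}+2(\E\sin^2(\omega_0))^{2L+2}$, which decays exponentially since both expectations are strictly less than $1$ under the hypotheses. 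It remains to show the middle summand vanishes.

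Time-reversal invariance (Corollary~\ref{cor:TR_symmetry}) forces $\<\Gamma_L|T_L|\Gamma_L\>$ to be real, and since $T_L$ is unitary, $|\<\Gamma_L|T_L|\Gamma_L\>|\le \|\Gamma_L\|^2$. Hence $\<\Gamma_L|T_L+\one|\Gamma_L\>\ge 0$ pointwise and the absolute value may be dropped. Lemma~\ref{lem:simulation} then recasts
\[
\E\<\Gamma_L|T_L+\one|\Gamma_L\> = 4\sum_{k=0}^{L+1}\cale_\calq\bigl[\one_{\{\sum_1^{2k-1}\lambda_j\le 2L+1\}}\,f_{k,L}(\vec\lambda)\bigr].
\]
Now fix $\epsilon\in(0,\mu/7)$, $\delta\in(0,\mu-\epsilon)$, and $\eta>\tfrac{1}{2}(1+(\mu-\epsilon)^{-1})$, and split the outer sum over the three $k$-ranges $[0,L/(\mu+\epsilon)]$, $(L/(\mu+\epsilon),\,L/(\mu-\epsilon)+\eta)$, and $[L/(\mu-\epsilon)+\eta,\,L+1]$; partition each range by the concentration events $M_{k,\delta}$, $\tilde M_{k,\epsilon}$, $M_{k,\epsilon}$ (respectively) and their complements. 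Applying Lemmas~\ref{lem:small_k_close}--\ref{lem:large_k_close} to the resulting six pieces, five of them decay to zero as $L\to\infty$ at fixed $\epsilon$: the small-$k$ close and far pieces via Lemmas~\ref{lem:small_k_close} and~\ref{lem:small_k_far}; the two far-from-mean pieces in the middle and large ranges via Lemma~\ref{lem:k_middle_far}; and the large-$k$ close piece is identically zero by Lemma~\ref{lem:large_k_close}.

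Only the middle-$k$, close-to-mean regime survives the $L\to\infty$ limit. For this contribution Lemma~\ref{lem:middle_k_close} provides the upper bound
\[
4\Bigl(\cos\!\Bigl(\tfrac{\pi}{2}\bigl[\tfrac{2\mu}{\mu+\epsilon}-\tfrac{\epsilon}{\mu-\epsilon}\bigr]\Bigr)+1\Bigr).
\]
Letting $\epsilon\to 0^+$ the bracketed quantity tends to $2$, so the bound tends to $4(\cos\pi+1)=0$. Combining this with the reduction above gives $\limsup_{L\to\infty}\|\nu_\omega(T_L)+1\|_{L^1(\P)}=0$, proving the theorem. The main obstacle is precisely this middle-close regime: it is the only contribution that does not decay at fixed $\epsilon$, so the $\Z_2$-index value $-1$ (rather than any other real number in $[-1,1]$) emerges only from the precise phase cancellation encoded in Lemma~\ref{lem:middle_k_close} together with the two-stage limit---first $L\to\infty$, then $\epsilon\to 0^+$.
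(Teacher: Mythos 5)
Your proposal is correct and follows essentially the same route as the paper: reduce $\E|\nu_\omega(T_L)+1|$ to $\E\<\Gamma_L|T_L+\one|\Gamma_L\>$ via Lemma~\ref{lem:gamma_formulas}(c), invoke the simulation identity of Lemma~\ref{lem:simulation}, dispose of five of the six regimes by Lemmas~\ref{lem:small_k_close}--\ref{lem:large_k_close}, and extract $0$ from the surviving middle-close cosine bound by the two-stage limit $L\to\infty$ then $\epsilon\downarrow 0$. The only (harmless) deviations are cosmetic: you control $\E\|\Phi_{[-L,L+1]}-T\|$ by an explicit exponential bound using independence rather than the paper's dominated-convergence argument, and you justify nonnegativity of $\<\Gamma_L|T_L+\one|\Gamma_L\>$ via time-reversal symmetry and unitarity rather than the paper's direct coefficient computation.
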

\begin{proof}
To see how the proof works, let us start with a technical claim which we will prove in Section~\ref{sec:proof_of_THM_D_claim} below. 
\begin{claim}\label{claim:THM_D}
Let $\epsilon, L>0$. There is a positive random variable $g(\omega, \epsilon, L)$ so the following upper bound holds
\begin{equation}
    |\langle \Gamma_L |T_L |\Gamma_L\> + \|\Gamma_L\|^2| \le 4\pi \epsilon + g(\omega, \epsilon, L)\,,
\end{equation} and for any $0<\epsilon <\E[\sin^2(\omega)]$ fixed, $g$ almost surely vanishes as $L\to \infty$. 
\end{claim}

With this in place, the argument is as follows. Use the triangle inequality to obtain 
\begin{align*}
    |\nu_\omega(T_L) +1| \le |\nu_\omega(T_L) - \<\Gamma_L|T_L|\Gamma_L\rangle| + |\<\Gamma_L|T_L|\Gamma_L\rangle + \|\Gamma_L\|^2_{\mathcal{H}}|+|1- \|\Gamma_L\|_\mathcal{H}^2|\,.
\end{align*} Recall from Lemma~\ref{lem:gamma_formulas} part c.) that the following inequalities hold almost surely,

\begin{equation}
   \left. \begin{array}{c}
        | \<\Gamma_L|T_L|\Gamma_L\> - \nu_\omega(T_L)|\\
        |\|\Gamma_L\|^2 - 1| 
    \end{array} \right\}\le \|\Phi_{[-L,L+1]} - \mathcal{T}\|\,.
\end{equation} Thus, letting $L \to \infty$ we get that $\limsup_{L\to \infty}|\nu_\omega(T_L)+1| \le 4\pi \epsilon$ for any $\epsilon >0$ sufficiently small. Letting $\epsilon \downarrow 0$ now completes the proof. 
\end{proof}


\subsection{Proof of Claim~\ref{claim:THM_D}}\label{sec:proof_of_THM_D_claim}

We will now construct a disordered classical stochastic process that models the coefficients of $\<\Gamma_L|T_L|\Gamma_L\>$. We take a few lines to discuss the heuristics behind our argument. Owing to equation~(\ref{eqn:stoch_expansion}) below, we observe that for a given \emph{un-signed spin configuration} $\Sigma\in \{0,1\}^{2L+2}$, an excitation at $x\in [-L, L+1]$ picks up a coefficient of $\cos^2(\omega_x)$, whereas a zero picks up $\sin^2(\omega_x)$. We can think of $\Sigma$ as a record of (e.g.) a string of light-bulbs arranged on $\Z$ where the (random) probability that the bulb at site $x$ is ``on," is $c_x^2 := \cos^2(\omega_x)$. Let $\mathcal{M}_0 = \{0,1\}$ and denote by $q^{(x)}_\omega$ the random measure given by $q^{(x)}\{1\} = c_x^2$. Then the collection $\{q^{(x)}\}_{x\in \Z}$ defines a bi-infinite sequence of random, binary measures. 

Now, set $\mathcal{M} = \{0,1\}^\Z$ and define $\mathcal{M}_L = \{0,1\}^{2L+2}$. Define the sequence of measures 
\begin{equation}
    \mathcal{Q}_\omega^{(L)}[x_{-L}, \dots, x_{L+1}] := \prod_{j=-L}^{L+1} q^{(x_j)}_\omega(\{x_j\})= \prod_{\substack{j\in [-L,L+1] \colon\\ x_j = 1}} c_{x_j}^2(\omega)\, \prod_{\substack{j\in [-L,L+1] \colon\\ x_j = 0}} s_{x_j}^2(\omega)\,,
\end{equation} and observe this is a consistent family of product measures. By Kolmogorov's Theorem, there is a unique measure $\mathcal{Q}_\omega$ on $\mathcal{M}$ whose finite dimensional distributions agree with the $\mathcal{Q}^{(L)}_\omega$'s. Denote by $\cale_\omega$ expectation with respect to $\calq_\omega$. Notice that with respect to the random measure $\mathcal{Q}_\omega$, the coordinate projections $\pi_x(\Sigma) := s_x$ are independent but \emph{not} identically distributed. However, identical distribution is restored when we consider the coordinate projections with respect to the annealed measure $\E \circ \calq_\omega$.

\begin{note}
Given a $\Sigma$ in $\mathcal{M}$, we are interested in the number of excitations occurring in a finite interval $[-L, L+1]$. We write
\begin{equation}
    N_L(\Sigma) = \sum_{j=-L}^{L+1} \pi_j(\Sigma)\,. 
\end{equation} We will also require information about the location of the excitations to the right of $-L$ (provided $\Sigma$ is not identically zero). In symbols we can recursively define
\begin{equation}
    \begin{split}
        x_1(\Sigma) &= \min\{ j\ge -L \, \colon \pi_j(\Sigma) = 1\}\,,\\
        x_k(\Sigma) &= \min\{ j\ge -L\, \colon \pi_j(\Sigma) = 1 \text{ and } x_k > x_{k-1}\}\quad \forall 1\le k \le N_L\,.
    \end{split}
\end{equation}
\end{note}

We now show a technical lemma relating the expectation $\langle \Gamma_L |T_L |\Gamma_L\rangle $ to the quenched process $\calq_\omega$. The upshot of Lemma~\ref{lem:pre_simulation} (below) is that $\<\Gamma^{[-L, L+1]}(\one)|T_L|\Gamma^{[-L,L+1]}(\one) \rangle$ can be expanded in terms of a $\calq_\omega$-measurable random variable. By a slight abuse of notation, we will write $\Sigma$ for a signed configuration in part ii.) of the Lemma below. 
\begin{lem}\label{lem:pre_simulation}
    Let $(\omega_j)_{j\in \Z}$ be a bi-infinite sequence of IID random variables taking values in $[0, \frac{\pi}{4}]$. Assume that $\P[\delta< \omega_0] >0$ for all $\delta>0$ in range, and that $\P$ is non-atomic.  
    Let $L>0$ and denote by $\Sigma = (s_x)_{x\in [-L, L+1]} \in \{-1, 0, +1\}^{2L+2}$ an arbitrary signed spin configuration. Let $k\in [-L, L+1]$ be an integer and let $A_{2k}$ be the set of those signed spin configurations in $[-L, L+1]$ with alternating signs and $2k$-many excitations. Then the following hold. 
    \begin{enumerate}[label = \roman*.)]
        \item Let $c^2 := \E \cos^2(\omega_0)$ and $s^2:= \E \sin^2(\omega_0)$. Then, $\frac{1}{2}<c^2<1$ and $0<s^2<\frac{1}{2}$ and furthermore $s^2+c^2 = 1$. 
        \item Let $\Sigma\in A_{2k}$, and let $J_\Sigma = \{x\in [-L, L+1] \colon s_x \neq 0\}$. We shall write $J_\Sigma = \{x_1 < x_2 < \cdots< x_{2k}\}$ in increasing order with the understanding that the $x_m$ depend on $\Sigma$. Then, Tasaki's formula holds:
        \begin{equation*}
            \sum_{x=-L}^{L+1} \frac{x+L}{2L+1} s_x = - s_{x_1}\sum_{m=1}^{k} \frac{x_{2m}- x_{2m-1}}{2L+1}\,.
        \end{equation*}
        \item Let $L>0$ and set $\Gamma_L = \Gamma^{[-L, L+1]}_\omega(\one)$ for convenience. Then the following holds almost surely:
        \begin{equation}\label{eqn:simulation}
            \langle \Gamma_L |T_L|\Gamma_L\rangle = 2 \cale_\omega\left( \chi_{\{N_L \text{ even}\}}(\Sigma) \cos\left(2\pi \sum_{x=-L}^{L+1} \frac{x_{2m}(\Sigma) - x_{2m-1}(\Sigma)}{2L+1}\right)\right)\,.
        \end{equation} A similar expression holds for $\<\Gamma_L|\Gamma_L\>$. Note that here the expansion is in terms of unsigned spin configurations $\Sigma\in \mathcal{M}$.
    \end{enumerate}
\end{lem}
\begin{proof}
    The proofs of parts i.) and ii.) are straightforward. To see part iii.) use equation~(\ref{eqn:admissible_configuration}) to compute directly the following:
    \[
        \<\Gamma_L|T_L|\Gamma_L\> = \sum_{\substack{\Sigma\in A_{2k}\\  k=0, \dots, L+1}} \prod_{x\in J_\Sigma} \cos^2(\omega_x) \prod_{y\in J^c_\Sigma} \sin^2(\omega_y)\, \exp\left\{-2\pi i \sum_{x=-L}^{L+1} \frac{s_x(x+L)}{2L+1}\right\} \,.
    \] But notice that the summand associated to a given configuration $\Sigma$ has a coefficient equal to the complex conjugate of the summand associated to its time-reversal, $-\Sigma$. Thus, 
    \begin{equation}\label{eqn:stoch_expansion}
        \begin{split}
            \<\Gamma_L|T_L|\Gamma_L\>         &=2\sum_k\sum_{\substack{\Sigma \in A_{2k}\colon\\ s_{x_1} = 1}} \prod_{x\in J_\Sigma} \cos^2(\omega_x) \prod_{y\in J^c_\Sigma} \sin^2(\omega_y)\, \cos\left(2\pi \sum_{m=0}^{k} \frac{x_{2m}(\Sigma) - x_{2m-1}(\Sigma)}{2L+1}\right)\,.
        \end{split}
    \end{equation} In the equation above, we see that the signed components of $\Sigma$ do not play any role, and so we may regard $\Sigma$ instead as an unsigned configuration. Taking quenched expectation and using the definition of $\calq_\omega$, and the fact that cosine is even yields equation~(\ref{eqn:simulation}).
\end{proof}

 Now we want to show that the average length between excitations occupies about half the length of the interval $[-L, L+1]$. To do so, we will develop an ``Annealed" Large Deviations estimate after the style of Cram\'er's theorem. Since the lengths between unsigned excitations depend on the random variable $N_L$, we first show that we can control the deviations of $N_L$ from its annealed mean. 
 
\begin{lem}\label{lem:annealed_est_M}
    Let $\epsilon >0$ and write $c^2 = \E \cos^2(\omega_0)$. Consider the event \begin{equation}
        M_{\epsilon, L} := \left[ \left|\frac{1}{2L+2} \sum_{x=-L}^{L+1}\pi_x(\Sigma) - c^2\right| > \epsilon \right]\,.
    \end{equation} Then, there is some $\alpha>0$ and a coefficient $C_{\epsilon,L}$ satisfying $\lim_{\epsilon \downarrow 0} C_{\epsilon,L} = 1$ for all $L>0$, so that 
    \begin{equation}\label{eqn:annealed_LDP}
        \E[\mathcal{Q}_\omega(M_{\epsilon, L})] \le 2C_{\epsilon,L}  e^{-\alpha \epsilon^2(2L+2)}\,.
    \end{equation} Moreover, one may take $\alpha = (2c^2 s^2)^{-1}$ and one may find a function $h(\epsilon)$ with the property that $\lim_{\epsilon \downarrow 0} h(\epsilon) = 0$ for which $C_\epsilon = e^{-(2L+2) \epsilon^3 h(\epsilon)}$.
\end{lem}
\begin{proof} The proof strategy is largely similar to the well-known proofs of Cramer's theorem. The non-standard part is dealing with the fact that the coordinate projections are not $\calq_\omega$-identically distributed. 

    First, let $N_L(\Sigma)  = \sum_{j=-L}^{L+1} \pi_j(\Sigma)$ denote the number of excitations in the interval $[-L, L+1]$ for the configuration $\Sigma$. We then see that $M_{\epsilon, L}$ can be regarded as the event that $N_L>(c^2 + \epsilon)(2L+1)$ or $N_L < (c^2 - \epsilon) (2L+1)$. By symmetry, it is sufficient to prove~(\ref{eqn:annealed_LDP}) for $\E\calq_\omega [N_L > (c^2 + \epsilon) (2L+1)]$ 

    For any $t>0$, we may use Chebychev's inequality to obtain 
    \begin{align*}
        \calq_\omega[N_L>(c^2 + \epsilon) (2L+2) ] &= \calq_\omega[ tN_L > (c^2+\epsilon) t (2L+2)]\le e^{-(c^2 + \epsilon) t (2L+2)} \cale_\omega(\exp\{ t \sum_{j=-L}^{L+1} \pi_j(\Sigma)\})\\
        &=e^{-(c^2 + \epsilon) t (2L+2)} \prod_{j=-L}^{L+1}(s_j^2 + e^t c_j^2)\,.
    \end{align*} Using the fact that the $c_j^2$ are mutually independent in $j$ with respect to $\P$, we obtain 
    \[
        \E\calq_\omega[ N_L > (c^2+\epsilon) (2L+1)] \le \inf_{t>0} e^{-(2L+1)[(c^2 + \epsilon) t - \log(s^2 + e^t c^2) } =: \inf_{t>0} e^{-(2L+1) \Lambda(t, \epsilon)}\,,
    \] where we have set $\Lambda(t, \epsilon) = (c^2 + \epsilon) t - \log(s^2 + e^t c^2)$. Basic computation shows that the right-hand side of the above is achieved at the critical point for $\Lambda(t)$ depending on $\epsilon$. That is, $\inf_{t>0} e^{-(2L+1) \Lambda(t, \epsilon)} =  e^{-(2L+1) \Lambda(t_*, \epsilon)}$ where  
    \[
        t_*(c+\epsilon) := \log\left(\frac{s^2 (c^2 + \epsilon)}{c^2(s^2 + \epsilon) } \right)\,,
    \] and furthermore that 
    \[
        \begin{split}
            \Lambda^*(\epsilon)&:= \Lambda(t_*,\epsilon) = (c^2 + \epsilon)\log\left(\frac{s^2 (c^2 + \epsilon)}{c^2(s^2 + \epsilon) } \right)  - \log\left( \frac{s^2}{s^2-\epsilon} \right)\,,\\
            \frac{d}{d\epsilon} \Lambda^*(\epsilon) &= \log\left(\frac{s^2 (c^2 +\epsilon)}{c^2 (s^2 - \epsilon) } \right)\,.
        \end{split}
    \] Notice that $\Lambda^*(0) = 0$ and $(\Lambda^*)'(0) = 0$. Therefore, by Taylor's theorem, we can write $\Lambda^*(\epsilon) = \alpha \epsilon^2 + o(\epsilon^3)$ as $\epsilon \downarrow 0$, with $\alpha = (\Lambda^*)''(0) = (2c^2 s^2)^{-1}$. Putting all the above steps together yields 
    \begin{equation}
        \E \calq_\omega[M_{\epsilon, L}] \le C_{\epsilon,L}\, e^{-\alpha \epsilon^2 (2L+2)}\,,
    \end{equation} where $C_{\epsilon,L} = \exp\{ -(2L+2)\, o(\epsilon^3)\}$ hence $\lim_{\epsilon \downarrow 0} C_{\epsilon,L} =1$
\end{proof}
Using the Lemma above, we can prove that the limit as $L\to \infty$ of the `quenched' measure $\calq_\omega(M_{\epsilon,L})$ vanishes almost surely.
\begin{cor}\label{cor:quenched_limit_M}
    For every $\epsilon >0$ and $\P$-a.e. $\omega$, the limit $\lim_{L \to \infty} \calq_\omega(M_{\epsilon, L}) = 0$. 
\end{cor}
\begin{proof}
    By the Dominated Convergence Theorem combined with~(\ref{eqn:annealed_LDP}), we find that $\E \sum_{L=1}^\infty \calq_\omega [M_{\epsilon, L}] <\infty$. It is a standard trick using Fubini's theorem to convert $L^1(\P)$ convergence of this series into almost sure convergence of the summands. 
\end{proof}

To continue our analysis, we introduce a related set of events which control the average distance between excitation-pairs. Let
\begin{equation}
    D_{\epsilon, L} = \left\{\Sigma \colon \left|\frac{1}{2L+1} \sum_{k=1}^{N_L/2} x_{2k}(\Sigma) - x_{2k-1}(\Sigma) - \frac{1}{2} \right|>\epsilon \right\}\,.
\end{equation} In words, the event $D_{\epsilon, L}$ is the set of configurations where the average distance between every other excitation-pair is far from $\tfrac{1}{2}$. Of course, with respect to even the annealed measure $\P \circ \calq_\omega$, the $x_j$ are not independent. However, the differences are uncorrelated with respect to $\E \circ \cale_\omega$, and this will allow us to perform the same annealed Chebyshev trick as in the proof of Lemma~\ref{lem:annealed_est_M}. Now, we need to show that $\E\circ \calq_\omega (D_{\epsilon,L})$ decays sufficiently fast in $L$, and we do so in Lemma~\ref{lem:annealed_est_D} below. Note that by Lemma~\ref{lem:annealed_est_M}, it is sufficient to show that $\E \circ \calq_\omega( D_{\epsilon,L} \cap M_{\epsilon, L}^c)$ is decaying. 

An instructive calculation to perform before we prove our next Lemma is to evaluate $\E\circ \cale_\omega \{t(x_2 - x_1)\}$ which can be done as follows. First, notice that for any $y\ge -L$ and any $k\ge 1$ we may rewrite the event \[[x_1 = y, x_2 - x_1 = k]= [\pi_{-L}(\Sigma) = 0, \pi_{-L+1}(\Sigma) = 0, \dots, \pi_y(\Sigma) = 1, \pi_{y+1}(\Sigma) = 0, \dots, \pi_{y+k}(\Sigma) = 1],\] the latter being an intersection of $\P \circ \calq_\omega$-independent events. Therefore, 
    \begin{align*}
        \E\circ \cale_\omega\, \exp\{ t(x_2 - x_1)\} &= \sum_{y\ge -L} \sum_{k_1 \ge 1} e^{tk_1} \P\circ \calq_\omega [y =1, y+k_1 = 1] 
        = \sum_{y\ge -L} \sum_{k_1 \ge 1} e^{tk_1} (s^2)^{y+L}\, c^2\, (s^2)^{k_1 -1}\, c^2\\
        &= \frac{c^2  e^t}{1-s^2 e^t}\,,
    \end{align*} whenever $0<t<-\log(s^2)$ (recall that $s^2 \in (0, \tfrac12)$). Notice that this is the moment generating function of a geometric random variable with success parameter $c^2$ and mean $\mu = (c^2)^{-1}$. Recall that its logarithm is the cumulant generating function. 
    \begin{lem}\label{lem:legendre}
        Let $\kappa(t) = \log(c^2 e^t (1-s^2 e^t)^{-1})$ be the geometric CGF as above, which is defined for all $t\in (\log(s^2), -\log(s^2))$. Let 
        \begin{equation}
            \mathcal{L}_*(x) = \sup_t\, tx - \kappa(t)\,.
        \end{equation} Then the following hold. 
        \begin{enumerate}[label = \alph*.)]
            \item The function $\mathcal{L}_*$ is defined for all $x\in (1, \infty)$. Furthermore, one can explicitly compute 
            \begin{equation}
                \mathcal{L}_*(x) = x \log(\tfrac{1}{s^2}(1-x^{-1})) - \log(\tfrac{c^2}{s^2}
(x-1)) \,.            \end{equation}
            \item One has that $\mathcal{L}_*(\mu) = 0$, and $\mathcal{L}_*(x)$ is increasing and strictly positive whenever $x>\mu$. On the other hand, $\mathcal{L}_*(x)$ is strictly positive and decreasing whenever $1<x<\mu$. 
        \end{enumerate}
    \end{lem}
The proof follows by some elementary calculus. Turning toward our application, we have the following.

\begin{lem}\label{lem:annealed_est_D}
     For any $s^2>\epsilon>0$, any $L>0$, there are functions $\xi^\pm(\epsilon, L)$ so that following upper bound holds: 
    \begin{equation}\label{eqn:annealed_D_est}
        \E \calq_\omega(D_{\epsilon,L}) \le e^{-(c^2+\epsilon)(L+1)\mathcal{L}_*(\mu + \xi^+)} + e^{-(c^2-\epsilon)(L+1) \mathcal{L}_*(\mu - \xi^-)} \to 0 \text{ as $L\uparrow \infty$}\,.
    \end{equation} Moreover, the functions $\xi^{\pm}(\epsilon,L)$ are increasing in $L$ and for every $\epsilon$ in range, $\lim_{L\to \infty} \xi^{\pm} >0$.
\end{lem}
\begin{proof}
    We note as in the proof of Lemma~\ref{lem:annealed_est_M}, that the event $M_{\epsilon, L}^c$ induces a deterministic upper and lower bound on $N_L$. Observe that $D_{\epsilon, L}$ is a disjoint union of events $D^{\pm}_{\epsilon, L}$ where the average distance between excitation pairs is lower bounded by $(\tfrac12 + \epsilon)(2L+1)$ (respectively, upper bounded by $(\tfrac12 - \epsilon)(2L+1)$). Putting these two facts together, we see that 
    \begin{equation}
        \begin{split}
            D_{\epsilon,L}^+\cap M_{\epsilon, L}^c &\subset \left\{ \frac{1}{2L+1} \sum_{m=1}^{(c^2 + \epsilon)(L+1)} (x_{2m}-x_{2m-1}) > \epsilon + \tfrac{1}{2} \right\} =: F_{\epsilon, L}^+\,,\\
            D_{\epsilon, L}^- \cap M_{\epsilon, L}^c &\subset \left\{\frac{1}{2L+1} \sum_{m=1}^{(c^2 - \epsilon)(L+1)} (x_{2m}-x_{2m-1}) <  \tfrac{1}{2} - \epsilon \right\} =: F^-_{\epsilon, L}\,.
        \end{split}
    \end{equation} The aim now is to establish summable upper estimates for $\E \calq_\omega(F^{\pm}_{\epsilon, L} )$. 
    
     Using the same reasoning as in the lines preceding the statement of Lemma~\ref{lem:legendre}, we obtain
        \begin{align*}
            \E \calq_\omega (F_{\epsilon,L}^+) &\le e^{-t\,(2L+1) (\epsilon + \tfrac 12)}\E\, \cale_\omega \exp \left\{ t \sum_1^{(c^2+\epsilon)(L+1)} x_{2m} - x_{2m-1} \right\} = e^{-t\, (2L+1) (\epsilon + \tfrac 12)}\left( \frac{c^2  e^t}{1-s^2 e^t}\right)^{(c^2 + \epsilon)(L+1)} \,.
         \end{align*} Rewrite this as 
         \begin{equation}
             \E\circ \calq_\omega(F^{+}_{\epsilon, L}) \le \exp\{ -(c^2+\epsilon)(L+1)\ \mathcal{L}(t, \mu + \xi^+)\}\,,
         \end{equation} where we have set
         \begin{equation}\label{eqn:defofM}
             \mathcal{L}(t, x) = tx  - \log\left(\frac{c^2 e^t}{1-s^2e^t}\right)\, \quad\text{ for all $x>1$ }\,,
        \end{equation}
        and, 
        \begin{equation}
             \xi^+ := \frac{(2c^2 -1)\epsilon L - c^2(\tfrac12 -\epsilon)-\epsilon}{c^2(c^2+\epsilon)(L+1)} = \frac{(2L+1)(\epsilon + \tfrac12)}{(L+1)(c^2+\epsilon)} - \mu\,.
        \end{equation} Notice that $\xi^+$ is increasing in $L$ for every fixed $\epsilon >0$ with a strictly positive limit. We may optimize over $t$ to find that there is $t_* \in (0, - \log(s^2))$ depending on $\xi^+$ so that 
         \begin{equation}
             \E\circ \calq_\omega [F^+_{\epsilon, L}] \le e^{-(c^2+\epsilon)(L+1)\mathcal{L}_*(\mu + \xi^+)}\,, 
         \end{equation} where $\mathcal{L}_*= \mathcal{L}(t_*, \mu+\xi^+)$ as in Lemma~\ref{lem:legendre} above. Since $\lim_{L \to \infty} \xi^+ >0$ for every $\epsilon$, we can apply Lemma~\ref{lem:legendre} again to see that the resulting bound obeys the following.
         \begin{equation}
             \E\circ \calq_\omega(F^{+}_{\epsilon,L}) \le e^{-(c^2+\epsilon)(L+1) \mathcal{L}_*(\mu + \xi^+)} \to 0 \text{ as }L\to \infty\,. 
         \end{equation}

         In the $F_{\epsilon,L}^-$ case one may play the same game with Chebyshev's inequality to obtain 
         \begin{equation}
             \E\circ \calq_\omega(F_{\epsilon,L}^-) \le \exp\{ -(c^2 - \epsilon)(L+1) \mathcal{L}(-t, \mu - \xi^-)\}\,,
         \end{equation} where we have set 
         \[
            \xi^- := \frac{(2c^2 -1)\epsilon L - s^2\epsilon+ \tfrac{c^2}{2}}{c^2(c^2-\epsilon)(L+1)} = \mu - \frac{(\tfrac12 - \epsilon)(2L+1)}{(c^2-\epsilon)(L+1)} \,.
         \] Notice that $\xi^-$ is increasing in $L$ and that whenever $\epsilon<s^2$, the limit $\lim_{L\to \infty} \xi^- < 1-\mu$. Hence $\mu - \xi^- >1$ for all $L$ large enough. Optimizing in $-t$ this time, we find 
         \begin{equation}
             \E \circ \calq_\omega(F^{-}_{\epsilon ,L}) \le e^{-(c^2-\epsilon)(L+1) \mathcal{L}_*(\mu - \xi^-)}\,,
         \end{equation} where $\mathcal{L}_*$ is as before. By Lemma~\ref{lem:legendre} once again, we find that $\mathcal{L}_*(\mu - \xi^-)$ monotonically increases to some nonzero limit. Hence, 
         \begin{equation}
             \E \circ \calq_\omega(F_{\epsilon, L}^-) \le e^{-(c^2 - \epsilon)(L+1) \mathcal{L}_*(\mu - \xi^-)} \to 0 \text{ as } L \to \infty\,,
         \end{equation} as claimed. 
\end{proof}

\begin{cor}\label{cor:quenched_limit_D}
    For any $c^2>\epsilon>0$ and $\P$-a.e. $\omega$, the limit $\lim_{L\to \infty} \calq_\omega D_{\epsilon, L} =0$.
\end{cor}
\begin{proof}
    Using the same notation as in Lemma~\ref{lem:annealed_est_D}, we employ Lemma~\ref{lem:legendre} part b.) to conclude that the following constants are strictly positive: \[
       K_\epsilon^\pm := \lim_{L\to \infty} \mathcal{L}_*(\mu \pm \xi^{\pm}) 
    \] Now, let $\eta >0$ be small. Since the convergence $\mathcal{L}_*(\mu \pm \xi^{\pm}) \to K_\epsilon ^{\pm}$ is monotone increasing in both cases, there is some $L_0$ sufficiently large so that the following bounds hold for all $L\ge L_0$: 
    \[
            e^{-(c^2\pm \epsilon)(L+1)\mathcal{L}_*(\mu\pm \xi^{\pm})} \le e^{-(c^2\pm\epsilon)(L+1)(K_{\epsilon}^{\pm} - \eta)}\,.
    \]  
    The claim now follows because we have shown that the upper bound in~(\ref{eqn:annealed_D_est}) is summable in $L$ for all $s^2>\epsilon>0$. Using the same trick as in Corollary~\ref{cor:quenched_limit_M} yields the almost sure convergence. 
\end{proof}

We are now ready to prove Claim~\ref{claim:THM_D}.
\begin{proof}[Proof of Claim~\ref{claim:THM_D}]
    Let us write $f = \chi_{\{N_L \text{ even }\}}\cos\left(\frac{2\pi}{2L+1}\sum_{m=1}^{N_L/2} [x_{2m}-x_{2m-1}] \right)$.  Utilizing our representation of $ \langle \Gamma_L |T_L|\Gamma_L\rangle$ as the expectation of $f$ with respect to $\calq_\omega$, we may split the quenched expected values into into a sum of three terms 
    \[
        |\langle \Gamma_L|T_L|\Gamma_L\rangle + \|\Gamma_L\|^2| \le  |\cale_\omega(f+1; M_{\epsilon,L})| + |\cale_\omega( f+1; M_{\epsilon, L}^c, D_{\epsilon, L})| + |\cale_\omega(f+1; M_{\epsilon, L}^c, D_{\epsilon, L}^c)|\,.
    \] By Corollaries~\ref{cor:quenched_limit_M} and~\ref{cor:quenched_limit_D}, the first two summands decay to zero almost surely as $L\to \infty$. Turning to the last term, we see that
    \begin{align*}
        |\cale_\omega(f +1; D_{\epsilon, L}^c, M_{\epsilon, L}^c)| =|\cale_\omega(f - \cos(\pi); D_{\epsilon, L}^c, M_{\epsilon, L}^c)|& \le  \cale_\omega(|f - \cos(\pi )|; D_{\epsilon,L}^c, M_{\epsilon, L}^c)\\
        &\le 2\pi \cale_\omega \left(| \sum_{m=1}^{N_L/2} \frac{x_{2m} - x_{2m-1}}{2L+1} - \frac{1}{2}|;D_{\epsilon, L}^c, M_{\epsilon, L}^c \right) < 2 \pi \epsilon\,.
    \end{align*} where we have used the mean value theorem to estimate the integrand in the second to last inequality. Therefore, 
    \[
        | \langle \Gamma_L | T_L | \Gamma_L\rangle + \|\Gamma_L\|^2 | \le 4\pi \epsilon + 4\calq_\omega(M_{\epsilon,L}) + 4\calq_\omega(M_{\epsilon, L}^c, D_{\epsilon, L})\,,
    \] as claimed. 
\end{proof}

\appendix
\section{Appendix: Computing the intersection of Ground State Spaces}\label{apdx:mathematica}
In this appendix, we record the Wolfram code we used to compute the null spaces of  several (relatively) large matrices in the proof of Lemma~\ref{lem:delta_MPS_injectivity}. We will also compute the orthonormal basis comprising the Parent Hamiltonian from Corollary~\ref{cor:Parent_Hamiltonian}.

To make the exposition more legible, we adopt the convention that $\ell_{x,y} := \phi_{x,y}(j)$ and $r_{x,y} := \phi_{x,y}(j+1)$, where $\phi_{x,y}$ is defined in equation~(\ref{eqn:gj_basis}) above. We will list the computations in descending order of the total spin. Then, we consider the possibilities for the additional vector in $\C^3\ket{m}$ allowing $m\in\{-1,0,+1\}$.  So there are only certain allowed combinations with the basis we found for $\mathcal{G}_{12}$ in Lemma~\ref{lem:delta_MPS_injectivity}. Namely, those whose spin indices in the computational basis add to $0$ (e.g. $\ket{-0+}$). 

\subsection{Spin $2$ Intersection}\label{apdx:spin2}
First, for spin $+2$, the quantum numbers $k$ and $m$ and $k'$, $m'$ which yield $+2$ are $k=k'=m=m'=+1$. The resulting subspaces are therefore $L(2) = \mathcal{G}_{1,2}^{(1)}\otimes \C\ket{+}$ and $R(2) = \C\ket{+}\otimes \mathcal{G}_{2,3}^{(1)}$. Each is one-dimensional spanned by $\ket{\ell_{21},+}= \ket{\phi_{21}(j),+}$ and $\ket{+,r_{21}}=\ket{+, \phi_{21}(j+1)}$ respectively. Expanding both vectors as in equation~(\ref{eqn:spin_+2_basis}), we obtain
    \begin{equation}\label{eqn:spin_+2_basis}
        \begin{split}
            \ket{\phi_{21}(1), +} & = c_1 s_2 \ket{+0+} - s_1 c_2 \ket{0++}\,, \\
            \ket{+, \phi_{21}(2)}&= c_2 s_3 \ket{++0}- s_2 c_3 \ket{+0+},\,
        \end{split}
    \end{equation} which are linearly independent since $\omega_1, \omega_2, \omega_3 \in (0, \frac{\pi}{2})\cup (0, \pi)$ whence  $c_1s_2 \neq 0$, and $c_2s_3\neq 0$. The case of spin $-2$ is similar: the left-hand side is spanned by $\ket{\phi_{12}(1), -} = c_1s_2\ket{-0-} - s_1 c_2 \ket{0--}$ and the right hand side is spanned by $\ket{-, \phi_{12}(2)} = c_2s_3\ket{--0}-s_2c_3\ket{-0-}$, which are linearly independent.

\subsection{Spin 1 Intersection}\label{apdx:spin1}

Begin by initializing several variables and the associated assumptions. 
\begin{lstlisting}[language=Mathematica, style=mystyle]
In[1]:= $Assumptions = b1>= 0 && b1<= Pi && b2>=0 && b2<= Pi && b3>=0 && b3<= Pi;
In[2]:= c1 = Cos[b1]; c2 = Cos[b2]; c3 = Cos[b3] s1 = Sin[b1]; s2 = Sin[b2]; s3 = Sin[b3];
\end{lstlisting}

    Now, in the case that $k+m= +1$, the allowed values of are as follows: $(k,m) = (1,0), (0,1)$. We re-organize this information by $m$-value in the following tables. \begin{center}
        \begin{tabular}{c|c}
            $m$& basis vector of $L(+1)$ with $\sigma=1$\\
            \hline \hline
            $-$&  $\emptyset$\\\hline
            $0$&  $v_1:= |\phi_{21}(1), 0\> = -s_j c_{j+1} |0+0\> + c_j s_{j+1} |+00\>$\\\hline
            $+$ &  \begin{tabular}{@{}c@{}}
                    $v_2: = |\phi_{11}(1), +\> = s_j s_{j+1} |00+\> - c_jc_{j+1}|+-+\>$\\
                    $v_3 := |\phi_{22}(1), +\> = s_j s_{j+1} |00+\> - c_j c_{j+1} |-++\>$
            \end{tabular}\\
        \end{tabular}
    \end{center}

    Similarly for $R(+1)$, we get a family of basis vectors: 
    \begin{center}
            \begin{tabular}{c|c}
            $m'$& basis vector of $R(+1)$ with $\sigma=1$ \\
            \hline \hline
            $-$&  $\emptyset$\\\hline
            $0$&  $w_1:= |0, \phi_{21}(2)\> = -s_{j+1} c_{j+2} |00+\> + c_{j+1} s_{j+2} |0+0\>$\\\hline
            $+$ &  \begin{tabular}{@{}c@{}}
                    $w_2: = |+,\phi_{11}(2)\> = s_{j+1} s_{j+2} |+00\> - c_{j+1}c_{j+2}|++-\>$\\
                    $w_3 := |+,\phi_{22}(2)\> = s_{j+1} s_{j+2} |+00\> - c_{j+1} c_{j+2} |+-+\>$
            \end{tabular}\\
        \end{tabular}
    \end{center}
        Now, expanding in the total spin-z basis and writing down the corresponding entries yields to Table~\ref{tab:s_+1-matrix}.
    \begin{table}[h]
        \centering
        \begin{tabular}{c||c|c|c|c|c|c}
             & $v_1$ & $v_2$ & $v_3$ & $-w_1$ & $-w_2$ & $-w_3$\\
             \hline \hline
             $\ket{-++}$& & & $-c_jc_{j+1}$ & & \\
             $\ket{0+0}$& $-s_jc_{j+1}$ & & & $-c_{j+1}s_{j+2}$ &\\
             $\ket{00+}$& & $s_js_{j+1}$ & $s_js_{j+1}$ & $s_{j+1}c_{j+2}$ & \\
             $\ket{+00}$& $c_js_{j+1}$ & & & & $-s_{j+1}s_{j+2}$ & $-s_{j+1}s_{j+2}$ \\
             $\ket{+-+}$& & $-c_jc_{j+1}$ & & & &$c_{j+1}c_{j+2}$\\
             $\ket{++-}$& & & & & $c_{j+1}c_{j+2}$ &\\
             \hline
        \end{tabular}
        \caption{Spin +1 overlap matrix}
        \label{tab:s_+1-matrix}
    \end{table}  
We now enter this into Mathmatica by defining a $6\times 6$ matrix in the order of the entries we listed above:
\begin{lstlisting}[language=Mathematica, style=mystyle]
In[5]:= Q = {{0,0,-c1*c2, 0, 0,0},
	           {-s1*c2,0,0,-c2*s3,0,0},
	           {0, s1*s2, s1*s2, s2*c3,0,0},
	           {c1*s2,0,0,0, -s2*s3, -s2*s3},
	           {0,-c1*c2,0,0,0,c2*c3},
	           {0,0,0,0, c2*c3, 0}
            };
In[6]:= NullSpace[Q]
Out[6]= {{{Sec[b1] Sin[b3],Cos[b3] Sec[b1],0,-Tan[b1],0,1}}
\end{lstlisting}  In symbols, the null space of the matrix (i.e. the intersection) is spanned by the vector: 
        \begin{equation}
            \nu_1:= 2s_{3} v_1 + c_{3}v_2 = 2s_1 w_1 - c_1 w_3\,.
        \end{equation} This shows the dimension of this subspace is one provided that the angles are nondegenerate.

Working with the spin $-1$ case is similar: the resulting $6\times 6$ matrix given in Table~\ref{tab:s_-1-matrix} is the result of taking the appropriate tensor products corresponding to quantum numbers $k$ and $m$ adding to $-1$ (resp. $k'+m'=-1$ for the right hand side), then expanding according to the total spin-z basis.  
\begin{table}[h]
    \centering
        \begin{tabular}{c||c|c|c|c|c|c}
             & $\ket{\ell_{11}, -}$ & $\ket{\ell_{22}, -}$ & $\ket{\ell_{21},0}$ & $-\ket{-, r_{11}}$ & $-\ket{-, r_{22}}$ & $-\ket{0,r_{21}}$\\
             \hline \hline
             $\ket{--+}$& & & & & $c_{j+1}c_{j+2}$  \\
             $\ket{-00}$& & & $c_js_{j+1}$ & $-s_{j+1}s_{j+2}$ & $-s_{j+1}s_{j+2}$\\
             $\ket{-+-}$&  &$-c_jc_{j+1}$ & & $c_{j+1}c_{j+2}$ &  & \\
             $\ket{0-0}$&  & & $-s_jc_{j+1}$ & &  &  $-c_{j+1}s_{j+2}$\\
             $\ket{00-}$& $s_{j}s_{j+1}$ & $s_js_{j+1}$ & & & & $s_{j+1}s_{j+2}$\\
             $\ket{+--}$& $-c_{j}c_{j+1}$ & & & & &\\
             \hline
        \end{tabular}
        \caption{Spin -1 overlap matrix}
        \label{tab:s_-1-matrix}
        \end{table} 
        
        We employ the same commands to compute the null space:
\begin{lstlisting}[language=Mathematica, style=mystyle]
In[7]:= L =  {
	       {0, 0, 0, 0, c2*c3, 0},
	       {0, 0, c1*s2, -s2*s3, -s2*s3, 0},
	       {0, -c1*c2, 0, c2*c3, 0, 0},
	       {0, 0, -s1*c2, 0, 0, -c2*s3},
	       {s1*c2,s1*s2, 0, 0, 0, s2*c3},
	       {-s1*c2, 0, 0, 0, 0, 0}
	       };
In[8]:= NullSpace[L]
Out[8]= {{0, -Cos[b3] Csc[b1], -Csc[b1] Sin[b3], -Cot[b1], 0, 1}}
\end{lstlisting}

Once again, this matrix has a one-dimensional null space generated by the vector 
        \begin{equation}
            \nu_{-1} = -c_{3} |\ell_{22}, -\> - s_{3}|\ell_{12}, 0\>  = -c_1|-, r_{11}\> - s_1|0,r_{12}\>\,.
        \end{equation}  

Therefore, we have shown that $L(+1) \cap R(+1)$ is one-dimensional for all the allowed values of $\omega_j$. Similarly, the dimension of $L(-1) \cap (R(-1)$ is one dimensional for all the values of $\omega$ we allow. 

We now turn toward computing the intersection of the subspaces $L(0) \cap R(0)$ which is slightly trickier. 

\subsection{Spin 0 Intersection}
 Below, we compute the null space of the vectors corresponding to the spin 0 subspace of the intersection. There are seven kets that appear in the expansions, due to the allowed values of $k$ and $m$. They are: \[\ket{-0+},\ket{-+0},\ket{0-+},\ket{000},\ket{0+-},\ket{+-0},\ket{+0-}\] We label the rows of a $7 \times 8$ matrix $M$ (in this order) by these kets given in Table~\ref{tab:s_0-matrix}. 

         \begin{table}[h]
            \centering
            \begin{tabular}{c||c|c|c|c|c|c|c|c}
                  & $\ket{\ell_{12},-}$ & $\ket{\ell_{11},0}$ & $\ket{\ell_{22},0}$ & $\ket{\ell_{12},+}$ & $-\ket{-, r_{12}}$ &$-\ket{0, r_{11}}$ & $-\ket{0,r_{22}}$ &$ -\ket{+, r_{21}}$ \\
                  \hline \hline
                 $\ket{-0+}$&  &  &  & $c_1c_2$ & $c_2s_2$ & &  &  \\
                 $\ket{-+0}$&  &  & $-c_1c_2$ &  & $-c_2s_3$ & &  &  \\
                 $\ket{0-+}$&  &  &  & $-c_2s_1$ &  & & $c_2c_3$ &  \\
                 $\ket{000}$&  & $s_1s_2$ & $s_1s_2$ &  & & $-s_2s_3$ & $-s_2s_3$ &  \\
                 $\ket{0+-}$& $-c_{2}s_1$ &  &  &  & & $c_2c_3$ &  &  \\
                 $\ket{+-0}$&  & $-c_1c_2$ &  &  & & &  & $-c_2s_3$ \\
                 $\ket{+0-}$& $c_1s_2$ &  &  &  & & &  & $c_3s_2$ \\
                 \hline
            \end{tabular}
            \caption{Spin 0 overlap matrix}
            \label{tab:s_0-matrix}
        \end{table}

Once again, we set up the Null-space command in Mathematica and enter the data from Table~\ref{tab:s_0-matrix}. Keeping careful track of the rows leads us to conclude that the null-space is spanned by two vectors. 
\begin{lstlisting}[language=Mathematica, style=mystyle]
In[3]:= M = {
	   {0, 0, 0,c1*s2, s2*c3,  0, 0, 0},
	   {0, 0, -c1*c2, 0, -c2*s3, 0, 0, 0},
	   {0, 0, 0, -s1*c2, 0, 0, c2*c3, 0},	
	   {0, s1*s2, s1*s2, 0, 0, -s2*s3, -s2*s3, 0},
	   {-s1*c2, 0,  0, 0, 0, c2*c3, 0, 0},
	   {0, -c1*c2, 0, 0, 0, 0, 0, -c2*s3},
	   {c1*s2, 0, 0, 0, 0, 0, 0, s2*c3}
    };
In[4]:= NullSpace[M]
Out[4]= {
    {-Cos[b3] Sec[b1],-Sec[b1] Sin[b3],0,0,0,-Tan[b1],0,1},
    {0,0,Csc[b1] Sin[b3],Cos[b3] Csc[b1],-Cot[b1],0,1,0}
    }
\end{lstlisting}
We see that the nullspace is two-dimensional with the following vectors which span the intersection:
        \begin{equation}
            \begin{split}
                v_0^{1}&:= -c_3s_1\ket{\ell_{21},-} -s_1s_3\ket{\ell_{11},0}  = s_1^2\ket{0,r_{11}} - c_1s_1\ket{+,r_{12}}\,\\
                v_0^{2}&:= c_1s_3\ket{\ell_{22},0} +c_1c_3\ket{\ell_{12},+} = c_1^2\ket{-,r_{21}} - c_1s_1 \ket{0,r_{22}}\,.
          \end{split}
        \end{equation} Whenever the angles are not zeroes of the trigonometric functions, the dimension does not drop. Therefore, the dimension of the intersection is two. 

\subsection{Basis for the Parent Hamiltonian}
Now, we turn toward computing the parent Hamiltonian in terms of the basis of the ground state space. This described in \cite{Perez-Garcia_et_al}: we need to compute a basis for the ortho-compliment of $G_{12}$ as defined above. So we shall use the basis vectors from before, and then compute their null space.

\begin{lstlisting}[language=Mathematica, style=mystyle]
In[9]:= L =  {
	       {0, 0, 0, 0, c2*c3, 0},
	       {0, 0, c1*s2, -s2*s3, -s2*s3, 0},
	       {0, -c1*c2, 0, c2*c3, 0, 0},
	       {0, 0, -s1*c2, 0, 0, -c2*s3},
	       {s1*c2,s1*s2, 0, 0, 0, s2*c3},
	       {-s1*c2, 0, 0, 0, 0, 0}
	       };
In[10}:= V = NullSpace[P]
Out[10]:= {{0,0,0,0,0,0,0,0,1},
            {0,0,0,0,0,Cot[b1] Tan[b2],0,1,0},
            {0,0,1,0,Cot[b1] Cot[b2],0,1,0,0},
            {0,Cot[b2] Tan[b1],0,1,0,0,0,0,0},
            {1,0,0,0,0,0,0,0,0}
        }
\end{lstlisting}
Now that we have the vectors which span the null space, we compute the Grahm-Schmidt Orthogonalization below. The \url{FullSimplify} command reduces the algebraic expressions as much as the Wolfram Language will allow. 

\begin{lstlisting}[language=Mathematica, style=mystyle, escapeinside={(*}{*)}]
In[11]:=Orthogonalize[V] // FullSimplify
Out[11]:= {{0, 0, 0, 0, 0, 0, 0, 0, 1},
           {0, 0, 0, 0, 0, (*$\frac{\cot (\text{b1}) \tan (\text{b2})}{\sqrt{| \cot (\text{b1}) \tan (\text{b2})| ^2+1}}$*), 0, (*$\frac{1}{\sqrt{| \cot (\text{b1}) \tan (\text{b2})| ^2+1}}$*), 0},
            {0, 0, (*$\frac{1}{\sqrt{| \cot (\text{b1}) \cot (\text{b2})| ^2+2}}$*), 0, (*$\frac{\cot (\text{b1}) \cot (\text{b2})}{\sqrt{| \cot (\text{b1}) \cot (\text{b2})| ^2+2}}$*), 0, (*$\frac{1}{\sqrt{| \cot (\text{b1}) \cot (\text{b2})| ^2+2}}$*), 0, 0},
            {0, (*$\frac{\tan (\text{b1}) \cot (\text{b2})}{\sqrt{| \cot (\text{b2}) \tan (\text{b1})| ^2+1}}$*), 0, (*$\frac{1}{\sqrt{| \cot (\text{b2}) \tan (\text{b1})| ^2+1}}$*), 0, 0, 0, 0, 0},
            {1, 0, 0, 0, 0, 0, 0, 0, 0}
            }
\end{lstlisting} The outer product of these five vectors (after they have been matched with the appropriate vectors in the spin-z basis), forms the projection that defines the nearest neighbor Hamiltonian. 

\section*{Conflict of Interest Statement}
On behalf of all authors, the corresponding author states there are no conflicts of interest.

\section*{Data Availibility Statement}
This manuscript has no associated data.

\bibliographystyle{plain}
\bibliography{refs}
\end{document}